\documentclass[11pt,a4paper]{article}

\usepackage[margin=1in]{geometry}
\usepackage{amsmath,amssymb,amsthm}
\usepackage{mathtools}
\usepackage{bm}
\usepackage{booktabs}
\usepackage{array}
\usepackage{enumitem}
\usepackage{tabularx}
\usepackage{longtable}
\usepackage{makecell}
\usepackage{caption}
\usepackage{float}
\usepackage{placeins}
\usepackage[round]{natbib}
\usepackage{microtype}
\usepackage[colorlinks=true,linkcolor=blue,citecolor=blue,urlcolor=blue]{hyperref}
\hypersetup{
  pdftitle={The Mathematics of Heuristic Portfolio Optimization (HPO)},
  pdfauthor={Miquel Noguer i Alonso}
}
\numberwithin{equation}{section}
\setlist[itemize]{leftmargin=*, topsep=2pt}
\setlist[enumerate]{leftmargin=*, topsep=2pt}
\newtheorem{theorem}{Theorem}[section]
\newtheorem{proposition}[theorem]{Proposition}
\newtheorem{lemma}[theorem]{Lemma}
\newtheorem{corollary}[theorem]{Corollary}
\newtheorem{definition}[theorem]{Definition}
\newtheorem{assumption}[theorem]{Assumption}
\newtheorem{remark}[theorem]{Remark}

\newcommand{\E}{\mathbb{E}}
\newcommand{\R}{\mathbb{R}}
\newcommand{\Var}{\mathrm{Var}}
\newcommand{\Cov}{\mathrm{Cov}}
\newcommand{\Corr}{\mathrm{Corr}}
\newcommand{\CVaR}{\mathrm{CVaR}}
\newcommand{\diag}{\mathrm{diag}}
\newcommand{\w}{\mathbf{w}}
\newcommand{\ones}{\mathbf{1}}
\newcommand{\bmu}{\bm{\mu}}
\newcommand{\bsig}{\bm{\sigma}}
\newcommand{\bSigma}{\bm{\Sigma}}
\newcommand{\bbeta}{\bm{\beta}}
\DeclareMathOperator*{\argmax}{arg\,max}
\DeclareMathOperator*{\argmin}{arg\,min}

\title{\Large\textbf{The Mathematics of Heuristic Portfolio Optimization (HPO)}}

\author{Miquel Noguer i Alonso\\
\small Artificial Intelligence Finance Institute}

\date{\today}

\begin{document}

\maketitle

\begin{abstract}
\noindent
Practitioners often allocate capital with forecast-light rules such as equal weight, inverse volatility, risk parity, HRP, and return-adjusted HRP (RA-HRP). This paper develops \emph{Heuristic Portfolio Optimization} (HPO): portfolio construction as an information-restricted projection of the Markowitz/tangency solution onto a stable rule class. The implied-return principle, $\w$ is maximum-Sharpe iff $\bmu_e\propto\bSigma\w$, yields closed-form optimality sets for the main heuristics and exposes the Schur-complement substitutions behind HRP. For RA-HRP we add fixed-tree cluster-Sharpe recursion, unit-free HRP--RA-HRP interpolation, nonlinear tangency-coincidence conditions, Schur-RA-HRP conditional-risk splits, and pathwise/KL decompositions of HRP--RA-HRP weight distortion; a first-order Sharpe calculus along the unit-free homotopy expresses the marginal value of return information as a sum of nodewise alphas against HRP and certifies a KL trust budget linear in the interpolation weight. We then formalize generic HPO maps, define the implied-return defect, prove that it equals squared Sharpe inefficiency, characterize tree-HPO coincidence by nodewise mass ratios, and give an exact bias--variance decomposition for estimated heuristic rules. The final contribution embeds HPO into Reinforcement Learning Portfolio Optimization (RLPO): every HPO map induces a deterministic stationary policy; static HPO is the $\gamma=0$ no-friction face of the Bellman problem; RA-HRP provides a hierarchical policy prior through node-level split probabilities; and dynamic policy improvement is justified exactly when continuation value exceeds myopic HPO defect and trading frictions; a performance-difference identity prices the total value gap of acting myopically, yielding an $\varepsilon/(1-\gamma)$ myopia bound, and the nodewise alphas reappear as the myopic policy-gradient coordinates of the hierarchical actor at its prior. The resulting framework treats HPO as the static optimality layer and RLPO as the dynamic control layer. The conditions remain GRS-testable, transfer to mean--CVaR and expected utility under ellipticity, and become Kelly-growth conditions in diffusion limits.
\\[0.5em]
\textbf{Keywords:} heuristic portfolio optimization, HPO, reinforcement learning portfolio optimization, RLPO, hierarchical risk parity, return-adjusted HRP, Schur complement, implied returns, Sharpe efficiency, policy priors, model ambiguity.
\end{abstract}

\newpage
\tableofcontents
\newpage

\section{Introduction}

Seventy years after \citet{markowitz1952}, a striking share of institutional capital is allocated by rules that refuse to optimize. The equal-weight portfolio ignores both means and covariances; inverse-volatility weighting uses only marginal volatilities; equal-risk-contribution (ERC, ``risk parity'') portfolios use the full covariance matrix but no means \citep{maillard2010,roncalli2013,qian2006}; maximum diversification uses volatilities and covariances in a specific ratio \citep{choueifaty2008}; hierarchical risk parity (HRP) uses the covariance matrix only through a clustering tree and a sequence of within-cluster variances \citep{lopezdeprado2016}; and return-adjusted HRP (RA-HRP) keeps the tree but lets local cluster-Sharpe scores tilt the recursive budget \citep{noguer2026rahrp}. The key refinement is not merely that RA-HRP uses expected returns, but that return information is consumed locally, after clustering, through dimensionless split fractions that can be interpolated continuously from HRP without mixing inverse variances and Sharpe ratios. The standard defense of these rules is empirical: optimized portfolios inherit estimation error in $\bmu$ at first order \citep{merton1980,michaud1989,bestgrauer1991,chopraziemba1993}, and naive rules are hard to beat out of sample \citep{demiguel2009}.

We use the term \emph{Heuristic Portfolio Optimization} (HPO) for this discipline: the mathematical study of portfolio maps that deliberately consume less information than the full Markowitz optimizer, together with the exact optimality, efficiency, estimation, and belief conditions under which that deletion of information is harmless or beneficial. In this sense HPO is not the opposite of optimization. It is optimization after restricting the admissible decision maps to those that are stable, interpretable, low-inversion, low-turnover, or compatible with a reduced information set.

This paper develops the theory behind that practice at five levels of depth.

\paragraph{Level 1: exact optimality conditions.} The implied-return principle (Theorem~\ref{thm:foc}) states that a fully invested portfolio with positive expected excess return is maximum-Sharpe if and only if expected excess returns are proportional to $\bSigma\w$. Consequently \emph{every} heuristic is optimal for \emph{some} parameter configuration, and the substantive object is the exact shape of its \emph{optimality set}
\[
\mathcal{O}(\w^\circ) = \{(\bmu_e,\bSigma) : \bmu_e = a\bSigma\w^\circ,\; a>0\},
\]
because adopting the rule is behaviorally equivalent to asserting membership. We compute this set in closed form for every rule in common use (Sections~\ref{sec:flat}--\ref{sec:hrp}): zero alpha against itself for equal weight (Theorem~\ref{thm:ew}); Sharpe ratios proportional to correlation row sums for inverse volatility (Theorem~\ref{thm:iv}); equal premia for minimum variance (Theorem~\ref{thm:gmv}); equal Sharpe ratios, with no restriction on correlations, for maximum diversification (Theorem~\ref{thm:md}); and, for risk parity, the new characterization that the equal \emph{risk} budget must double as an equal \emph{performance} budget, $w_i\mu_{e,i}$ constant (Theorem~\ref{thm:ercchar}), of which the equal-Sharpe/constant-correlation condition of \citet{maillard2010} is a corollary and demonstrably far from necessary. For HRP we first derive the exact Schur-complement bisection identity satisfied by the true minimum-variance portfolio (Theorem~\ref{thm:schur}), which isolates the three substitutions the algorithm makes --- raw block for Schur complement, flat budget for tilted budget, inverse variance for within-cluster minimum variance --- and then prove the two regimes in which all three are simultaneously costless: diagonal covariance under \emph{any} dendrogram (Theorem~\ref{thm:hrpdiag}) and uncorrelated exchangeable blocks under balanced tree-aligned bisection (Proposition~\ref{prop:hrpblocks}), together with the minimal two-asset counterexample marking the boundary (Remark~\ref{rem:hrpcounter}). For RA-HRP we show that adding returns changes the optimality set from a ray to a nonlinear fixed-point set: the exact condition is $\bmu_e = a\bSigma\Phi^{\mathrm{RA}}_{\mathcal T,\varepsilon}(\bmu_e,\bSigma)$, equivalently node-by-node equality between the RA-HRP floored cluster-Sharpe split and the exact Schur-tangency split (Theorem~\ref{thm:rahrp-coincidence}). We also prove a two-asset boundary result, a balanced equal-volatility diagonal exactness result, and a pathwise/KL decomposition of the distortion from HRP to RA-HRP (Corollary~\ref{cor:rahrp-twoasset}, Proposition~\ref{prop:rahrp-equalvol}, Theorem~\ref{thm:rahrp-pathwise}). A first-order Sharpe calculus along the homotopy (Section~\ref{sec:homotopy-calculus}) then computes the derivative of efficiency in the interpolation weight as a sum of nodewise alphas against HRP itself (Theorem~\ref{thm:homotopy-alpha}), making the marginal value of return information a GRS-estimable quantity, and bounds the induced reallocation by a KL trust budget linear in $\lambda$ (Proposition~\ref{prop:homotopy-kl}). The interpolation from HRP to RA-HRP is made at the level of local branch probabilities $r_v,s_v\in(0,1)$, not by adding inverse-variance scores to Sharpe scores; this unit-free homotopy is the mathematically clean way to control how much local return information enters the hierarchy. We then add Schur-RA-HRP, which replaces standalone cluster risk by conditional Schur risk and therefore targets exactly the cross-cluster covariance gap diagnosed by the HRP Schur identity.

\paragraph{Level 2: the geometry and cost of suboptimality.} Conditions that hold on a measure-zero set would be a curiosity unless violations were cheap. Both halves are theorems. The Sharpe ratio of \emph{any} portfolio obeys the exact angular identity $S(\w) = S_{\max}\cos\theta_{\bSigma}(\w, \w_T)$ in the $\bSigma$-inner product (Theorem~\ref{thm:cosine}): inefficiency is purely angular. Each optimality set is a ray --- Lebesgue-null in $\R^N$ --- so under any diffuse belief, exact optimality of a fixed heuristic is a probability-zero event (Proposition~\ref{prop:generic}). Yet the efficiency map is \emph{stationary} on the set: perturbing $\bmu_e$ off the coincidence ray by $\varepsilon v$ degrades squared Sharpe efficiency by an explicitly computable $O(\varepsilon^2)$ term whose coefficient is a generalized residual variance (Theorem~\ref{thm:secondorder}). Never optimal, rarely costly --- with the rate and the constant. Each condition is moreover a zero-intercept restriction in time-series regressions on the heuristic itself, hence directly testable by the $F$-test of \citet[\textbf{C}]{gibbons1989} (Remark~\ref{rem:grs}). Under one-factor structure, every condition becomes a statement about observable betas and idiosyncratic variances (Section~\ref{sec:onefactor}), including a self-contained derivation of the minimum-variance threshold-beta representation of \citet{clarke2011} and the result that maximum diversification is optimal in a priced one-factor world precisely when all assets are equally correlated with the factor.

\paragraph{Level 3: in what sense the heuristics are right.} Two transfer theorems show the conditions are not artifacts of the quadratic objective: under elliptical returns they govern mean--CVaR optimality at every confidence level and expected-utility optimality for every risk-averse investor (Theorem~\ref{thm:elliptical}), and in the continuous-time limit they are simultaneously the conditions for (fractional) Kelly growth optimality (Proposition~\ref{prop:kelly}). The deepest reading then inverts the question: each heuristic is the \emph{exactly optimal} policy of an investor whose information set is smaller than $(\bmu_e,\bSigma)$, with ignorance resolved by symmetry. We prove Bayes-optimality theorems at three rungs (Section~\ref{sec:bayes}): full exchangeability of beliefs forces the Bayes portfolio to be $1/N$ (Theorem~\ref{thm:bayesew}); exchangeability of beliefs about \emph{standardized} returns, volatilities known, forces inverse volatility (Theorem~\ref{thm:bayesiv}); and conjugate mean uncertainty with an uninformative premium prior forces minimum variance (Proposition~\ref{prop:bayesgmv}). The $1/N$ rule is also the exact limit of distributionally robust optimization as the ambiguity radius diverges \citep[\textbf{C}]{pflug2012}, and constraint-induced shrinkage \citep[\textbf{C}]{jagannathanma2003} plays the analogous role for minimum variance. Comparable foundations for ERC and HRP are, to our knowledge, open; we say so explicitly rather than assert them.

\paragraph{Level 4: implementation and statistical diagnostics.} The population theory is extended to mandate constraints and trading frictions. For any closed convex feasible set, exact optimality becomes a normal-cone implied-return condition: residual alphas must be shadow prices of active constraints (Theorem~\ref{thm:normalcone}); with quadratic transaction costs, implied returns acquire an explicit turnover wedge (Proposition~\ref{prop:costimplied}). We then derive the first differential of the tangency map (Theorem~\ref{thm:tangencydiff}), showing exactly why mean errors dominate covariance errors, convert every optimality condition into a projection diagnostic against its implied-return ray (Proposition~\ref{prop:projection}), state the usable GRS formula, and add the high-dimensional random-matrix inversion tax for plug-in GMV (Proposition~\ref{prop:rmtgmv}). Finally, we isolate the exact place where the elliptical transfer fails: nonconstant standardized tail shape adds a first-order downside-risk wedge.

\paragraph{Level 5: HPO as the static policy face of RLPO.} The final layer connects the static theory to Reinforcement Learning Portfolio Optimization (RLPO). An HPO rule $\Phi$ is a statewise portfolio map; in a Markov decision problem it therefore induces the deterministic stationary policy $\pi_\Phi(\cdot\mid s)=\delta_{\Phi(s)}$. When the discount factor is zero, trading frictions are absent, and the one-period reward is the same static objective used by HPO, the Bellman problem collapses exactly to the HPO problem. For $\gamma>0$, state-dependent costs, regime transitions, and feedback effects, RLPO is the dynamic policy-improvement operator over the HPO prior. The paper formalizes this bridge in Section~\ref{sec:rlpo}: the HPO implied-return defect becomes a reward-shaping penalty, RA-HRP becomes a hierarchical actor prior through node-level branch probabilities, and dynamic deviations from HPO are justified precisely when continuation value exceeds myopic HPO inefficiency and turnover costs. A performance-difference identity (Proposition~\ref{prop:hpo-value-gap}) prices the total value gap of the myopic policy as an occupancy-weighted HPO advantage with an $\varepsilon/(1-\gamma)$ myopia bound (Corollary~\ref{cor:eps-myopic}), and the nodewise alphas of the homotopy calculus reappear as the myopic policy-gradient coordinates of the hierarchical actor at its prior (Corollary~\ref{cor:actor-gradient-alpha}).

\paragraph{Claim tiering.} Following the discipline used throughout our research program, every result is labeled. \textbf{P} marks results proved in this paper with self-contained proofs; \textbf{C} marks results proved in the cited literature and stated without proof; \textbf{E} marks empirical regularities reported in cited work. We make no conjectural claims and report no numerics of our own.

\section{Scope, Standing Assumptions, and Paper Map}\label{sec:scope}

This section fixes the boundary of the results. The paper does not claim that any single heuristic dominates out of sample, nor that a neural policy can replace portfolio theory. It claims something narrower and stronger: for each information-restricted allocation map, one can write the exact population condition under which the rule is optimal, the exact geometric loss when it is not, the statistical diagnostics that test the condition, and the dynamic Bellman inequality that decides whether an RL policy should deviate from it.

\subsection{Standing assumptions}

Unless explicitly stated otherwise, the paper uses the following standing assumptions.
\begin{enumerate}[label=\textbf{A\arabic*.}]
\item \textbf{Finite asset universe.} There are $N<\infty$ risky assets and a riskless asset with rate $r_f$. Expected excess returns are $\bmu_e=\bmu-r_f\ones$ and the covariance matrix satisfies $\bSigma\succ 0$.
\item \textbf{Population versus implementation.} Population statements use the true pair $(\bmu_e,\bSigma)$. Estimated rules replace this pair by statewise estimates $(\widehat{\bmu}_{e,t},\widehat{\bSigma}_t)$ and are analyzed separately through sampling, turnover, and diagnostic terms.
\item \textbf{Feasibility.} The baseline tangency and mean--variance results are stated on the fully invested affine hyperplane. Long-only constraints, caps, benchmark bands, and transaction-cost wedges are introduced explicitly in Section~\ref{sec:constrained} through normal cones and cost-adjusted implied returns.
\item \textbf{Rule versus family.} A fixed heuristic is a map $\Phi$ from a reduced information statistic to the simplex. A tunable heuristic family $\{\Phi_\lambda\}$ is selected by projecting the full optimizer onto a lower-complexity image set, not by pretending that the heuristic image set equals the full feasible set.
\item \textbf{Tree stability.} HRP, RA-HRP, and Schur-RA-HRP are first analyzed for a fixed tree. Rolling-window implementations with re-estimated dendrograms add a discrete topology layer; the relevant quantities are merge preservation, cophenetic perturbation, and pathwise split distortion.
\item \textbf{Claim tiering.} A label \textbf{P} in a result header denotes a result proved here, \textbf{C} denotes a cited theorem or classical result used without reproving it, and \textbf{E} denotes an empirical regularity reported in the literature. Open problems are explicitly marked as open.
\end{enumerate}

\subsection{Contribution map}

\begin{table}[htbp]
\centering
\small
\renewcommand{\arraystretch}{1.25}
\begin{tabularx}{\textwidth}{p{3.0cm} X X}
\toprule
\textbf{Layer} & \textbf{Question answered} & \textbf{Mathematical object produced} \\
\midrule
Static optimality & When is a named heuristic exactly Markowitz/tangency optimal? & Implied-return sets $\{\bmu_e=a\bSigma\w\}$ and rule-specific coincidence restrictions. \\
Approximate efficiency & If exact optimality is generically false, how costly is misspecification? & Angular Sharpe identity and second-order efficiency expansion around each coincidence ray. \\
Hierarchical HPO & What information does HRP delete, and what does RA-HRP restore? & Schur-bisection identity, score-tree kernels, pathwise/KL split attribution, and Schur-RA-HRP conditional-risk scores. \\
Statistical implementation & Why can a biased heuristic beat a plug-in optimizer? & Exact mean--variance bias--variance identity, tangency-map differential, GRS/Wald diagnostics, and high-dimensional inversion tax. \\
Dynamic policy improvement & How should a learned policy use a heuristic prior? & HPO-induced stationary policies, deviation inequality, performance-difference identity, and nodewise hierarchical actor gradients. \\
\bottomrule
\end{tabularx}
\caption{Contribution map: each layer converts a practitioner question into a theorem-level object.}
\label{tab:contribution-map}
\end{table}

\subsection{What the theory deliberately does not assert}

The framework is falsifiable. A heuristic can fail its zero-alpha or implied-return condition, exhibit a large angular loss, generate excessive turnover after realistic costs, or break down under non-elliptical tail-shape heterogeneity. A learned RLPO policy can also fail if its continuation-value gain is not large enough to pay for added HPO defect and trading frictions. These are not exceptions to the theory; they are exactly the diagnostics the theory prescribes.

\section{Setup, the Implied-Return Principle, and the Geometry of Optimality}\label{sec:implied}

\subsection{Notation}

There are $N$ risky assets with return vector $\mathbf{r}$, mean $\bmu = \E[\mathbf{r}]$, and covariance matrix $\bSigma \succ 0$. A riskless rate $r_f$ is available; $\bmu_e = \bmu - r_f\ones$ denotes expected excess returns, assumed nonzero. Volatilities are $\sigma_i = \sqrt{\Sigma_{ii}}$, $D = \diag(\sigma_1,\dots,\sigma_N)$, and the correlation matrix is $R = D^{-1}\bSigma D^{-1}$, so $\bSigma = DRD$. Sharpe ratios are $S_i = \mu_{e,i}/\sigma_i$. A portfolio is $\w \in \R^N$, fully invested if $\ones^\top\w = 1$. Portfolio variance is $\sigma_p^2(\w) = \w^\top\bSigma\w$, the Sharpe ratio is $S(\w) = \w^\top\bmu_e/\sigma_p(\w)$, and $\langle \mathbf{u},\mathbf{v}\rangle_{\bSigma} = \mathbf{u}^\top\bSigma\mathbf{v}$, $\|\mathbf{u}\|_{\bSigma} = \langle\mathbf{u},\mathbf{u}\rangle_{\bSigma}^{1/2}$ denote the $\bSigma$-inner product and norm.

\begin{definition}[Tangency coincidence]\label{def:tangency}
Assume $\ones^\top\bSigma^{-1}\bmu_e > 0$. The \emph{tangency} (maximum-Sharpe) portfolio is
\begin{equation}
\w_T \;=\; \frac{\bSigma^{-1}\bmu_e}{\ones^\top\bSigma^{-1}\bmu_e},
\qquad
S_{\max} := S(\w_T) = \sqrt{\bmu_e^\top\bSigma^{-1}\bmu_e}.
\end{equation}
A fully invested $\w^\circ$ \emph{coincides with the tangency portfolio} if $\w^\circ = \w_T$. This is the canonical notion of optimality in this paper: with a riskless asset, every mean--variance investor holds cash plus $\w_T$, so a heuristic equal to $\w_T$ is optimal for all risk aversions simultaneously.
\end{definition}

\begin{definition}[Frontier membership]\label{def:frontier}
A fully invested $\w^\circ$ is \emph{frontier-efficient} if it solves $\max_\w \{\w^\top\bmu - \tfrac{\lambda}{2}\w^\top\bSigma\w : \ones^\top\w = 1\}$ for some $\lambda > 0$.
\end{definition}

\subsection{The first-order characterization}

\begin{theorem}[Implied-return principle; \textbf{P}]\label{thm:foc}
Let $\w^\circ$ be fully invested.
\begin{enumerate}
\item[(i)] $\w^\circ = \w_T$ if and only if there exists $a > 0$ such that
\begin{equation}\label{eq:foc-tangency}
\bmu_e \;=\; a\,\bSigma\w^\circ .
\end{equation}
\item[(ii)] $\w^\circ$ is frontier-efficient if and only if there exist $a > 0$ and $b \in \R$ such that
\begin{equation}\label{eq:foc-frontier}
\bmu \;=\; a\,\bSigma\w^\circ + b\,\ones .
\end{equation}
\end{enumerate}
\end{theorem}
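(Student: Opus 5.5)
Part (i) needs no optimization argument: it follows from the closed form of $\w_T$ in Definition~\ref{def:tangency} together with full investment. For the forward direction, suppose $\w^\circ=\w_T$ and set $c:=\ones^\top\bSigma^{-1}\bmu_e>0$, which is the standing hypothesis of that definition. Then $\bSigma\w^\circ=\bmu_e/c$, so $\bmu_e=a\bSigma\w^\circ$ holds with $a=c>0$.

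For the converse, suppose $\bmu_e=a\bSigma\w^\circ$ with $a>0$. Invert to get $\bSigma^{-1}\bmu_e=a\w^\circ$. Applying $\ones^\top$ and using $\ones^\top\w^\circ=1$ gives $\ones^\top\bSigma^{-1}\bmu_e=a>0$, so the tangency portfolio is well defined. Substituting into its formula yields $\w_T=a\w^\circ/a=\w^\circ$.

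If one also wants the maximum-Sharpe reading, it comes from Cauchy--Schwarz in the $\bSigma$-inner product. For any $\w$, $\w^\top\bmu_e=\langle \w,\bSigma^{-1}\bmu_e\rangle_{\bSigma}\le\|\w\|_{\bSigma}S_{\max}$. Equality with positive sign holds exactly when $\w$ is a positive multiple of $\bSigma^{-1}\bmu_e$. Among fully invested portfolios this singles out $\w_T$, which again gives $\bmu_e=a\bSigma\w$ with $a>0$.

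Part (ii) is a strictly concave quadratic program with one linear equality constraint, so the KKT conditions are necessary and sufficient. The Lagrangian is $\w^\top\bmu-\tfrac{\lambda}{2}\w^\top\bSigma\w-\eta(\ones^\top\w-1)$, and its stationarity condition is $\bmu=\lambda\bSigma\w+\eta\ones$. Since $\bSigma\succ0$ gives strict concavity, $\w^\circ$ solves the problem for a given $\lambda>0$ if and only if $\bmu=\lambda\bSigma\w^\circ+\eta\ones$ for some $\eta\in\R$. Taking $a=\lambda$ and $b=\eta$ proves the forward direction.

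For the converse, given $a>0$ and $b$, set $\lambda=a$. The pair $(\w^\circ,b)$ then satisfies stationarity and feasibility, and by sufficiency $\w^\circ$ is the unique maximizer, so it is frontier-efficient.

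The only step needing care is in part (i): checking the sign of $a$ and that $\ones^\top\bSigma^{-1}\bmu_e>0$ holds. In the converse this positivity is exactly what $a>0$ together with full investment delivers.
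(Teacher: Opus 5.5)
Your proposal is correct and follows the paper's proof essentially step for step. Part (i) goes by inverting $\bSigma$ and normalizing via full investment, with the positivity of $\ones^\top\bSigma^{-1}\bmu_e=a$ supplying the sign. Part (ii) goes by the necessity and sufficiency of the KKT conditions for the strictly concave, equality-constrained program. The Cauchy--Schwarz aside is not needed for the statement as posed, though it anticipates the paper's angular identity (Theorem~\ref{thm:cosine}).
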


\begin{proof}
(i) If $\bmu_e = a\bSigma\w^\circ$ with $a>0$, then $\bSigma^{-1}\bmu_e = a\w^\circ$, and normalizing by $\ones^\top\bSigma^{-1}\bmu_e = a > 0$ (using $\ones^\top\w^\circ = 1$) gives $\w_T = \w^\circ$. Conversely, if $\w^\circ = \w_T$ then $\bmu_e = (\ones^\top\bSigma^{-1}\bmu_e)\,\bSigma\w^\circ$ with positive scalar.

(ii) The program in Definition~\ref{def:frontier} is strictly concave with one affine constraint; the Karush--Kuhn--Tucker conditions $\bmu - \lambda\bSigma\w - \gamma\ones = 0$, $\ones^\top\w = 1$ are necessary and sufficient. Identify $a = \lambda$, $b = \gamma$.
\end{proof}

\begin{corollary}[Reverse optimization; \textbf{P}, classical]\label{cor:reverse}
Every fully invested $\w^\circ$ with $\w^{\circ\top}\bSigma\w^\circ > 0$ is the tangency portfolio of the \emph{implied} excess returns $\bmu_e^{\mathrm{imp}} = \bSigma\w^\circ$ (up to positive scaling), and frontier-efficient for $\bSigma\w^\circ + b\ones$, any $b$.
\end{corollary}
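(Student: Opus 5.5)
The corollary follows by applying Theorem~\ref{thm:foc} with the implied returns in place of the true ones. For the tangency claim, set $\bmu_e := c\,\bSigma\w^\circ$ with $c>0$ arbitrary. The one hypothesis to check before Definition~\ref{def:tangency} applies is positivity of the normalizer $\ones^\top\bSigma^{-1}\bmu_e$. Here
\[
\ones^\top\bSigma^{-1}(c\bSigma\w^\circ)=c\,\ones^\top\w^\circ=c>0,
\]
because $\w^\circ$ is fully invested. The tangency portfolio is therefore well defined, and
\[
\w_T=\bSigma^{-1}\bmu_e/c=\w^\circ,
\]
which is just Theorem~\ref{thm:foc}(i) read with $a=c$.

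We should also confirm that the implied excess return vector is nonzero, as the standing notation requires. This is where the hypothesis $\w^{\circ\top}\bSigma\w^\circ>0$ enters. It gives $\bSigma\w^\circ\neq 0$, since otherwise $\w^{\circ\top}\bSigma\w^\circ=0$. It also gives a positive expected excess return
\[
\w^{\circ\top}\bmu_e=c\,\w^{\circ\top}\bSigma\w^\circ>0,
\]
and this matches the tangency branch of the maximum-Sharpe problem rather than the degenerate negative branch. Under A1 ($\bSigma\succ0$) the hypothesis holds automatically, so this step is bookkeeping rather than an obstacle. As a consistency check, the implied maximal Sharpe ratio is
\[
S_{\max}=\sqrt{c^2\,\w^{\circ\top}\bSigma\w^\circ}=c\,\|\w^\circ\|_{\bSigma}.
\]
This equals $S(\w^\circ)$, confirming the coincidence directly.

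For the frontier claim, fix any $b\in\R$ and set $\bmu:=\bSigma\w^\circ+b\ones$. Theorem~\ref{thm:foc}(ii) applies with $a=1>0$, so $\w^\circ$ is frontier-efficient at risk aversion $\lambda=1$. More generally, $\bmu=a\bSigma\w^\circ+b\ones$ works for any $a>0$. The KKT sufficiency used in that proof needs only strict concavity, which follows from $\bSigma\succ0$.

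The only step I expect to require care is the positivity of $\ones^\top\bSigma^{-1}\bmu_e^{\mathrm{imp}}$. Full investment reduces it immediately to $c>0$, so the argument is essentially a direct rereading of Theorem~\ref{thm:foc}.
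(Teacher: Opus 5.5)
Your proposal is correct and follows the paper's intended route: the paper gives no separate proof because the corollary is an immediate rereading of Theorem~\ref{thm:foc}, with part (i) applied to $\bmu_e=c\,\bSigma\w^\circ$ (full investment gives $\ones^\top\bSigma^{-1}\bmu_e=c>0$) and part (ii) applied with $a=1$. Your additional checks, $\bSigma\w^\circ\neq 0$ and $S_{\max}=S(\w^\circ)=c\,\|\w^\circ\|_{\bSigma}$, are consistent but not needed.
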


This is the reverse-optimization step of \citet{blacklitterman1992} and the sensitivity logic of \citet{bestgrauer1991}. Its consequence is worth stating bluntly: \emph{no heuristic can be unconditionally suboptimal}. The substantive question is the shape of the optimality set, because adopting the rule $\w^\circ$ is behaviorally equivalent to asserting $(\bmu_e,\bSigma) \in \mathcal{O}(\w^\circ)$.

\begin{remark}[Beta form]\label{rem:beta}
Condition \eqref{eq:foc-tangency} says expected excess returns are proportional to covariances with the candidate: $\mu_{e,i} = a\,\Cov(r_i, \w^{\circ\top}\mathbf{r})$. Dividing by the candidate's own excess return gives the exact one-factor pricing relation
\begin{equation}\label{eq:zeroalpha}
\mu_{e,i} \;=\; \beta_i(\w^\circ)\,\bigl(\w^{\circ\top}\bmu_e\bigr), \qquad
\beta_i(\w^\circ) = \frac{\Cov(r_i, \w^{\circ\top}\mathbf{r})}{\Var(\w^{\circ\top}\mathbf{r})}.
\end{equation}
A heuristic is maximum-Sharpe iff no asset has alpha against it --- it prices the cross-section as if it were the market portfolio of a one-factor CAPM \citep{sharpe1964}. All conditions below specialize \eqref{eq:zeroalpha}.
\end{remark}

\begin{remark}[Testability; \textbf{C}]\label{rem:grs}
Relation \eqref{eq:zeroalpha} is the null hypothesis $\bm{\alpha} = 0$ in the system of time-series regressions $r_{e,i,t} = \alpha_i + \beta_i\,r^{\circ}_{e,t} + \epsilon_{i,t}$ where $r^\circ_{e,t}$ is the heuristic's realized excess return. Hence every optimality condition in this paper is directly testable by the finite-sample $F$-statistic of \citet{gibbons1989} for the mean--variance efficiency of a given portfolio: ``is equal weight optimal in this universe'' is a GRS test with the equal-weight portfolio as the candidate factor, and identically for IV, ERC, MD, GMV, HRP, and any fixed realization of RA-HRP. If the RA-HRP weights are estimated from the same return panel used in the test, the clean interpretation requires a split-sample or out-of-sample implementation; conditionally on the realized candidate weights, the GRS null is still the same zero-alpha restriction. The asymptotic and finite-sample distribution theory is otherwise standard.
\end{remark}

\subsection{Efficiency is an angle}\label{sec:angle}

Coincidence conditions classify the boundary; the angular identity measures everything else.

\begin{theorem}[Angular efficiency identity; \textbf{P}]\label{thm:cosine}
For any portfolio $\w \neq 0$,
\begin{equation}\label{eq:cosine}
S(\w) \;=\; S_{\max}\,\cos\theta_{\bSigma}(\w, \w_T),
\qquad
\cos\theta_{\bSigma}(\w,\w_T) := \frac{\langle \w, \w_T\rangle_{\bSigma}}{\|\w\|_{\bSigma}\,\|\w_T\|_{\bSigma}} .
\end{equation}
In particular the Sharpe efficiency $\eta(\w) := S(\w)/S_{\max} = \cos\theta_{\bSigma}$, and $1 - \eta(\w)^2 = \sin^2\theta_{\bSigma}$ exactly. A portfolio attains the maximum Sharpe ratio iff it is $\bSigma$-parallel to $\bSigma^{-1}\bmu_e$, recovering Theorem~\ref{thm:foc}(i).
\end{theorem}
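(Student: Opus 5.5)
The plan is to rewrite the numerator $\w^\top\bmu_e$ as a $\bSigma$-inner product with the tangency portfolio, so that the Sharpe ratio becomes a normalized inner product and hence a cosine. By Definition~\ref{def:tangency}, $\w_T = \bSigma^{-1}\bmu_e/c$ with $c := \ones^\top\bSigma^{-1}\bmu_e > 0$. Therefore
\[
\langle \w,\w_T\rangle_{\bSigma} = \w^\top\bSigma\w_T = \w^\top\bmu_e/c
\qquad\text{and}\qquad
\|\w_T\|_{\bSigma}^2 = \bmu_e^\top\bSigma^{-1}\bmu_e/c^2 = S_{\max}^2/c^2 .
\]
Since $c>0$, this gives $\|\w_T\|_{\bSigma} = S_{\max}/c$. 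The same computation, with $\w$ replaced by $\w_T$, confirms $S(\w_T)=S_{\max}$, which is consistent with the definition.

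Next, I substitute into $S(\w) = \w^\top\bmu_e/\|\w\|_{\bSigma}$, which is well defined because $\bSigma\succ 0$ and $\w\neq 0$ give $\|\w\|_{\bSigma}>0$. This yields
\[
S(\w) = \frac{c\,\langle \w,\w_T\rangle_{\bSigma}}{\|\w\|_{\bSigma}} = \frac{S_{\max}}{\|\w_T\|_{\bSigma}}\cdot\frac{\langle \w,\w_T\rangle_{\bSigma}}{\|\w\|_{\bSigma}} = S_{\max}\cos\theta_{\bSigma}(\w,\w_T),
\]
which is \eqref{eq:cosine}. Then $\eta(\w)=\cos\theta_{\bSigma}$ follows immediately. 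The identity $1-\eta^2=\sin^2\theta_{\bSigma}$ is just the definition of the angle, which is legitimate because Cauchy--Schwarz in the $\bSigma$-inner product gives $|\cos\theta_{\bSigma}|\le 1$.

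For the final clause, I use the equality case of Cauchy--Schwarz. $S(\w)=S_{\max}$ holds iff $\cos\theta_{\bSigma}=1$. This happens iff $\w = t\,\w_T$ for some $t>0$, that is, iff $\w$ is a positive multiple of $\bSigma^{-1}\bmu_e$. Equivalently, $\bmu_e = a\bSigma\w$ with $a>0$. When $\w$ is fully invested, this is exactly condition \eqref{eq:foc-tangency}, and it recovers Theorem~\ref{thm:foc}(i).

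The only delicate point is the standing hypothesis $c>0$ needed for $\w_T$ to exist. If one wants the identity without it, the fix is to replace $\w_T$ by the direction $\bSigma^{-1}\bmu_e$ itself. The cosine is scale-invariant under positive scaling, so the same computation goes through with $c$ replaced by $1$.
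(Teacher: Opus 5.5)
Your proof is correct and follows essentially the paper's argument: you write $\w^\top\bmu_e$ as a $\bSigma$-inner product with the tangency direction and read off the cosine. The paper works with the unnormalized direction $\bSigma^{-1}\bmu_e$ and then uses positive-scale invariance of the angle (the variant you mention at the end), whereas you carry the normalizer $c$ through explicitly and also spell out the Cauchy--Schwarz equality case for the final clause, which the paper leaves implicit.
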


\begin{proof}
Let $\tilde\w = \bSigma^{-1}\bmu_e$ (unnormalized tangency direction). Then $\w^\top\bmu_e = \w^\top\bSigma\tilde\w = \langle\w,\tilde\w\rangle_{\bSigma}$ and $\|\tilde\w\|_{\bSigma}^2 = \bmu_e^\top\bSigma^{-1}\bmu_e = S_{\max}^2$. Hence
\[
S(\w) = \frac{\langle\w,\tilde\w\rangle_{\bSigma}}{\|\w\|_{\bSigma}}
= \|\tilde\w\|_{\bSigma}\,\frac{\langle\w,\tilde\w\rangle_{\bSigma}}{\|\w\|_{\bSigma}\|\tilde\w\|_{\bSigma}}
= S_{\max}\cos\theta_{\bSigma}(\w,\tilde\w),
\]
and $\theta_{\bSigma}(\w,\tilde\w) = \theta_{\bSigma}(\w,\w_T)$ since $\w_T \propto \tilde\w$ with positive constant.
\end{proof}

The identity converts every question about heuristic performance into a question about an angle in the metric defined by risk itself. Two structural consequences follow: optimality sets are thin, and the efficiency surface is flat across them.

\begin{proposition}[Genericity: exact optimality is null; \textbf{P}]\label{prop:generic}
Fix $\bSigma \succ 0$ and a fully invested $\w^\circ$. The tangency-coincidence set $\mathcal{O}_{\bSigma}(\w^\circ) = \{\bmu_e : \bmu_e = a\bSigma\w^\circ,\ a>0\}$ is an open ray --- a one-dimensional, Lebesgue-null subset of $\R^N$ for $N \ge 2$. The frontier set $\{\bmu : \bmu = a\bSigma\w^\circ + b\ones,\ a>0\}$ is an open half-plane, null for $N \ge 3$. Consequently, under any belief about $\bmu_e$ (or $\bmu$) absolutely continuous with respect to Lebesgue measure, the probability that a fixed heuristic is exactly maximum-Sharpe (exactly frontier-efficient) is zero.
\end{proposition}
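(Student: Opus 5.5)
The plan is to identify each coincidence set as a piece of a low-dimensional linear subspace and then use the fact that proper linear subspaces of $\R^N$ are Lebesgue-null. For the tangency set, fix $\bSigma\succ 0$ and the fully invested $\w^\circ$. Because $\ones^\top\w^\circ=1$, we have $\w^\circ\neq 0$, and since $\bSigma$ is invertible, $\mathbf{v}:=\bSigma\w^\circ\neq 0$. Theorem~\ref{thm:foc}(i) then shows that $\mathcal{O}_{\bSigma}(\w^\circ)=\{a\mathbf{v}:a>0\}$, which is an open ray contained in the line $\mathrm{span}\{\mathbf{v}\}$. The map $a\mapsto a\mathbf{v}$ is injective, so the ray is genuinely one-dimensional.

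For the frontier set, Theorem~\ref{thm:foc}(ii) gives the set $\{a\mathbf{v}+b\ones:a>0,\,b\in\R\}$, which lies in $\mathrm{span}\{\mathbf{v},\ones\}$. I will split into two cases.
\begin{itemize}
\item If $\mathbf{v}$ and $\ones$ are linearly independent, the set is an open half-plane of a two-dimensional subspace.
\item If $\mathbf{v}\parallel\ones$, which happens exactly when $\w^\circ$ is the GMV portfolio, the set degenerates to the whole line $\mathrm{span}\{\ones\}$.
\end{itemize}
I will note this degenerate case explicitly, since the statement says ``half-plane''. In either case the set lies in a subspace of dimension at most $2$.

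Nullity then comes from a standard fact: any proper linear subspace $L\subsetneq\R^N$ has Lebesgue measure zero. To prove it, rotate orthogonally so that $L\subseteq\{x:x_N=0\}$; this hyperplane is a product set with a null factor $\{0\}\subset\R$, so Fubini gives measure zero, and orthogonal maps preserve Lebesgue measure. The dimension thresholds enter only here. A subspace of dimension $1$ is proper when $N\ge 2$, and one of dimension at most $2$ is proper when $N\ge 3$.

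Finally, let $P$ be a belief that is absolutely continuous with respect to Lebesgue measure. Then $P(\mathcal{O})=0$ for any Lebesgue-null Borel set $\mathcal{O}$; both sets are Borel because each is the image of a Borel set under a continuous injective linear map (or simply a subset of a closed subspace, which suffices via completion). Hence the probability of exact maximum-Sharpe or exact frontier efficiency is zero. For the frontier statement the belief is placed on $\bmu$, which is the same as a belief on $\bmu_e$ shifted by the constant $r_f\ones$; translation preserves absolute continuity, so the argument is unchanged. I do not expect a genuine obstacle. The only delicate points are the degenerate GMV case of the frontier set and keeping the dimension counts aligned with the thresholds $N\ge2$ and $N\ge3$.
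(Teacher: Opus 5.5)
Your proof is correct and essentially the paper's argument: both get nullity from the sets having dimension below $N$. You do this by containment in a proper linear subspace (rotation plus Fubini); the paper treats the sets as images of injective affine maps, hence $1$- and $2$-dimensional manifolds. Your explicit handling of $\bSigma\w^\circ\parallel\ones$ (i.e.\ $\w^\circ=\w_{\mathrm{GMV}}$) is more careful than the paper: its claim that $(a,b)\mapsto a\bSigma\w^\circ+b\ones$ is injective fails there, and the ``open half-plane'' becomes the line $\R\ones$, though nullity for $N\ge3$ still holds.
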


\begin{proof}
The sets are images of $(0,\infty)$ and $(0,\infty)\times\R$ under injective affine maps, hence smooth manifolds of dimension $1$ and $2$; sets of dimension $< N$ have Lebesgue measure zero, and null sets have probability zero under absolutely continuous laws.
\end{proof}

\begin{theorem}[Second-order robustness; \textbf{P}]\label{thm:secondorder}
Fix $\bSigma \succ 0$, a fully invested $\w^\circ$ with $A := \w^{\circ\top}\bSigma\w^\circ > 0$, and a point on the coincidence ray $\bmu_e^0 = a\bSigma\w^\circ$, $a > 0$. Perturb in an arbitrary direction $v \in \R^N$: $\bmu_e(\varepsilon) = a\bSigma\w^\circ + \varepsilon v$. Then the squared Sharpe efficiency of $\w^\circ$ satisfies, exactly,
\begin{equation}\label{eq:exact-eta}
\eta^2(\varepsilon) \;=\; \frac{(aA + \varepsilon b)^2}{A\,\bigl(a^2A + 2a\varepsilon b + \varepsilon^2 c\bigr)},
\qquad b := v^\top\w^\circ,\quad c := v^\top\bSigma^{-1}v,
\end{equation}
and therefore
\begin{equation}\label{eq:secondorder}
1 - \eta^2(\varepsilon) \;=\; \frac{\varepsilon^2}{a^2}\cdot\frac{cA - b^2}{A^2} \;+\; O(\varepsilon^3),
\qquad cA - b^2 \;\ge\; 0,
\end{equation}
with $cA - b^2 = 0$ iff $v$ is itself on the ray ($v \propto \bSigma\w^\circ$). In particular $\frac{d}{d\varepsilon}\eta^2\big|_{\varepsilon=0} = 0$ for \emph{every} direction $v$: the efficiency map is stationary on the coincidence set, and the leading loss is quadratic with coefficient equal to the $\bSigma^{-1}$-residual of $v$ after projection on the ray.
\end{theorem}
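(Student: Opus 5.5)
The plan is to compute $\eta^2(\varepsilon)$ exactly from the angular identity of Theorem~\ref{thm:cosine}, and only afterwards expand in $\varepsilon$. By that theorem,
\[
\eta^2 = \frac{S(\w^\circ)^2}{S_{\max}^2} = \frac{(\w^{\circ\top}\bmu_e)^2}{(\w^{\circ\top}\bSigma\w^\circ)\,(\bmu_e^\top\bSigma^{-1}\bmu_e)} .
\]
This formula holds for every $\bmu_e$, so no tangency normalization is needed; we only need the sign condition $\ones^\top\bSigma^{-1}\bmu_e>0$ near $\varepsilon=0$ to keep $\w_T$ well defined, and it is inherited from $\ones^\top\bSigma^{-1}\bmu_e^0=a>0$ by continuity.

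Next we substitute $\bmu_e(\varepsilon)=a\bSigma\w^\circ+\varepsilon v$ into the two factors.

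\begin{itemize}
\item The numerator is $\w^{\circ\top}\bmu_e = aA+\varepsilon b$, with $b=v^\top\w^\circ$.
\item The quadratic form is $\bmu_e^\top\bSigma^{-1}\bmu_e = a^2 A + 2a\varepsilon\, v^\top\w^\circ + \varepsilon^2 v^\top\bSigma^{-1}v = a^2A+2a\varepsilon b+\varepsilon^2 c$.
\end{itemize}

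Dividing by $A$ gives \eqref{eq:exact-eta} exactly. We then write
\[
1-\eta^2 = \frac{A(a^2A+2a\varepsilon b+\varepsilon^2 c)-(aA+\varepsilon b)^2}{A(a^2A+2a\varepsilon b+\varepsilon^2c)} .
\]
The key observation is that the constant and linear terms in the numerator cancel identically ($a^2A^2$ and $2aAb\varepsilon$ on both sides). This leaves the exact expression
\[
1-\eta^2(\varepsilon)=\frac{\varepsilon^2(cA-b^2)}{A(a^2A+2a\varepsilon b+\varepsilon^2c)} .
\]
Expanding the denominator as $a^2A^2(1+O(\varepsilon))$ yields \eqref{eq:secondorder}. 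Because $1-\eta^2=O(\varepsilon^2)$, the stationarity claim $\frac{d}{d\varepsilon}\eta^2|_{0}=0$ follows immediately.

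For the sign and equality case, write $cA-b^2=(v^\top\bSigma^{-1}v)(\w^{\circ\top}\bSigma\w^\circ)-(v^\top\w^\circ)^2$. This is Cauchy--Schwarz in the $\bSigma^{-1}$ inner product applied to the vectors $v$ and $u:=\bSigma\w^\circ$, since $\langle v,u\rangle_{\bSigma^{-1}}=v^\top\w^\circ$ and $\|u\|^2_{\bSigma^{-1}}=A$. Hence $cA-b^2\ge0$, with equality iff $v\parallel\bSigma\w^\circ$. One caveat: equality holds for any real multiple, so "on the ray" should be read as "on the line spanned by the ray". To interpret the coefficient, note that $cA-b^2 = A\,\|v-\tfrac{b}{A}u\|^2_{\bSigma^{-1}}$, which is $A$ times the squared $\bSigma^{-1}$-residual of $v$ after projection onto $\mathrm{span}(u)$. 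This matches the stated interpretation.

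The main obstacle is purely bookkeeping: confirming the cancellation of the $O(1)$ and $O(\varepsilon)$ terms. Everything else is a direct application of Theorem~\ref{thm:cosine} and Cauchy--Schwarz.
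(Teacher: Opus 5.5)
Your proposal is correct and follows the paper's proof essentially step for step. You substitute the perturbed premia into $S(\w^\circ)^2=(aA+\varepsilon b)^2/A$ and $S_{\max}^2=a^2A+2a\varepsilon b+\varepsilon^2c$, note that the $O(1)$ and $O(\varepsilon)$ terms cancel, and get nonnegativity and the equality case from Cauchy--Schwarz in the $\bSigma^{-1}$ inner product. Two of your additions slightly sharpen what the paper states: the exact closed form $1-\eta^2=\varepsilon^2(cA-b^2)/\bigl(A(a^2A+2a\varepsilon b+\varepsilon^2c)\bigr)$, and the observation that the equality case is the whole line spanned by $\bSigma\w^\circ$, not just the positive ray.
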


\begin{proof}
$S(\w^\circ)^2 = (\w^{\circ\top}\bmu_e)^2/A = (aA + \varepsilon b)^2/A$ and $S_{\max}^2 = \bmu_e^\top\bSigma^{-1}\bmu_e = a^2A + 2a\varepsilon b + \varepsilon^2 c$, giving \eqref{eq:exact-eta}. Expanding,
\[
\eta^2 = \frac{a^2A^2 + 2a\varepsilon bA + \varepsilon^2 b^2}{a^2A^2 + 2a\varepsilon bA + \varepsilon^2 cA}
= 1 - \varepsilon^2\,\frac{cA - b^2}{a^2A^2} + O(\varepsilon^3).
\]
Nonnegativity of $cA - b^2$ is Cauchy--Schwarz in the $\bSigma^{-1}$-inner product applied to $v$ and $\bSigma\w^\circ$: $\langle v, \bSigma\w^\circ\rangle_{\bSigma^{-1}} = v^\top\w^\circ = b$ and $\|\bSigma\w^\circ\|^2_{\bSigma^{-1}} = A$, so $b^2 \le cA$ with equality iff $v \parallel \bSigma\w^\circ$.
\end{proof}

\begin{remark}[Never optimal, rarely costly]\label{rem:flat}
Proposition~\ref{prop:generic} and Theorem~\ref{thm:secondorder} jointly formalize the folklore. Exact optimality of any fixed rule is a probability-zero event; but because the efficiency surface has a vanishing gradient on the entire coincidence ray, an order-$\varepsilon$ misspecification of premia costs only order-$\varepsilon^2$ in squared Sharpe efficiency. The same flatness, viewed from the optimizer's side, is the source of mean--variance instability --- the optimum moves at order $\varepsilon$ while the objective moves at order $\varepsilon^2$ \citep{bestgrauer1991,michaud1989,chopraziemba1993} --- so a single geometric fact explains simultaneously why heuristics are cheap and why optimizers are fragile.
\end{remark}

\subsection{The catalogue}

\begin{definition}[Allocation rules]\label{def:rules}
All rules are fully invested and long-only by construction.
\begin{align}
\textnormal{Equal weight (EW):}\quad & w_i = 1/N. \\
\textnormal{Inverse volatility (IV):}\quad & w_i = \sigma_i^{-1}\Big/\textstyle\sum_j \sigma_j^{-1}. \\
\textnormal{Inverse variance (IVar):}\quad & w_i = \sigma_i^{-2}\Big/\textstyle\sum_j \sigma_j^{-2}. \\
\textnormal{Global minimum variance (GMV):}\quad & \w_{\mathrm{GMV}} = \bSigma^{-1}\ones\big/\ones^\top\bSigma^{-1}\ones. \\
\textnormal{Equal risk contribution (ERC):}\quad & w_i\,(\bSigma\w)_i = \frac{\w^\top\bSigma\w}{N} \quad \forall i,\;\; \w > 0,\;\ones^\top\w = 1. \\
\textnormal{Maximum diversification (MD):}\quad & \w_{\mathrm{MD}} = \argmax_{\ones^\top\w = 1}\; \frac{\w^\top\bsig}{\sqrt{\w^\top\bSigma\w}}. \\
\textnormal{Hierarchical risk parity (HRP):}\quad & \textnormal{Definition~\ref{def:hrp} below.}
\end{align}
The MD portfolio satisfies $\w_{\mathrm{MD}} \propto \bSigma^{-1}\bsig$ by the argument of Theorem~\ref{thm:foc}(i) with $\bsig$ in place of $\bmu_e$; when $\bSigma^{-1}\bsig$ has negative entries the long-only MD portfolio differs, and our statements concern the unconstrained solution. Existence and uniqueness of ERC is Theorem~\ref{thm:ercexist} below.
\end{definition}

\section{Exact Conditions: Equal Weight, Inverse Volatility, Minimum Variance, Maximum Diversification}\label{sec:flat}

\subsection{Equal weight}

\begin{theorem}[\textbf{P}]\label{thm:ew}
The equal-weight portfolio coincides with the tangency portfolio if and only if
\begin{equation}\label{eq:ewcond}
\bmu_e \;\propto\; \bSigma\ones \qquad\textnormal{(positive constant)},
\end{equation}
i.e.\ iff every asset's expected excess return is proportional to its covariance with the equal-weight portfolio --- equivalently, every asset has zero alpha in \eqref{eq:zeroalpha} with $\w^\circ = \ones/N$. It is frontier-efficient iff $\bmu \in \{a\bSigma\ones + b\ones : a > 0,\, b\in\R\}$.
\end{theorem}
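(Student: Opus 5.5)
This is a direct specialization of Theorem~\ref{thm:foc} to the candidate $\w^\circ = \ones/N$, which is fully invested since $\ones^\top\ones/N = 1$. For the tangency claim I would invoke part~(i): $\ones/N = \w_T$ if and only if $\bmu_e = a\bSigma(\ones/N)$ for some $a>0$. Setting $a' = a/N > 0$ turns this into $\bmu_e = a'\bSigma\ones$, which is \eqref{eq:ewcond}. Conversely, $\bmu_e = a'\bSigma\ones$ with $a'>0$ gives $a = Na'$. Throughout, the standing assumption $\ones^\top\bSigma^{-1}\bmu_e>0$ from Definition~\ref{def:tangency} is in force. It is automatically satisfied on this ray, because $\ones^\top\bSigma^{-1}(a'\bSigma\ones) = a'N > 0$.

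For the zero-alpha reformulation I would specialize Remark~\ref{rem:beta} with $\w^\circ=\ones/N$. First, $\Cov(r_i,\ones^\top\mathbf{r}/N) = (\bSigma\ones)_i/N$, so \eqref{eq:ewcond} says exactly that $\mu_{e,i}$ is a common positive multiple of each asset's covariance with the equal-weight portfolio. Next, I would divide by $\Var(\ones^\top\mathbf{r}/N) = \ones^\top\bSigma\ones/N^2 > 0$ (positive since $\bSigma\succ 0$). The multiple is then pinned down as $a' = N\,(\ones^\top\bmu_e/N)/(\ones^\top\bSigma\ones/N) \cdot \tfrac1N$, which is just the identification $a'\ones^\top\bSigma\ones = \ones^\top\bmu_e$ obtained by multiplying \eqref{eq:ewcond} by $\ones^\top$. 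This recovers $\mu_{e,i} = \beta_i(\ones/N)\,(\ones^\top\bmu_e/N)$, i.e.\ zero alpha in \eqref{eq:zeroalpha}.

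The frontier statement follows from Theorem~\ref{thm:foc}(ii) in the same way: $\bmu = a\bSigma(\ones/N) + b\ones$ with $a>0$, $b\in\R$, and rescaling $a$ by $1/N$ gives the stated set.

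There is no real obstacle here. The only points needing care are to keep the positivity of the proportionality constant when absorbing the factor $N$, and to note that positivity of $\ones^\top\bmu_e$ (equivalently of $a'$) is what makes the "positive constant" qualifier in \eqref{eq:ewcond} match the sign condition $a>0$ in Theorem~\ref{thm:foc}(i).
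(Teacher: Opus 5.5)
Your proposal is correct and follows essentially the same route as the paper: specialize Theorem~\ref{thm:foc}(i) and (ii) to $\w^\circ=\ones/N$, absorbing the factor $N$ into the positive constant, and check the zero-alpha form by direct computation of $\beta_i(\ones/N)=N(\bSigma\ones)_i/(\ones^\top\bSigma\ones)$. Your additional remarks, that $\ones^\top\bSigma^{-1}\bmu_e>0$ holds automatically on the ray and that the sign of $\ones^\top\bmu_e$ matches the positivity qualifier in the converse of the beta form, are small but valid refinements of the paper's one-line argument.
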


\begin{proof}
Theorem~\ref{thm:foc} with $\w^\circ = \ones/N$. For the beta form: with $\bmu_e = c\,\bSigma\ones$, $c>0$, $\beta_i\cdot(\ones^\top\bmu_e/N) = \frac{N(\bSigma\ones)_i}{\ones^\top\bSigma\ones}\cdot\frac{c\,\ones^\top\bSigma\ones}{N} = c(\bSigma\ones)_i = \mu_{e,i}$.
\end{proof}

\begin{corollary}[Symmetry suffices; \textbf{P}]\label{cor:ewsym}
If $\bmu_e = m\ones$ with $m>0$ and $\bSigma\ones = c\ones$ (equal row sums; e.g.\ equal volatilities with constant correlation, or any $(\bmu_e,\bSigma)$ invariant under all permutations of assets), then EW is the tangency portfolio.
\end{corollary}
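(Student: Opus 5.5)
The plan is to reduce Corollary~\ref{cor:ewsym} to the equal-weight coincidence condition of Theorem~\ref{thm:ew}: we only need to exhibit a positive constant $k$ with $\bmu_e = k\,\bSigma\ones$. Under the two hypotheses, $\bmu_e = m\ones$ and $\bSigma\ones = c\ones$, this reads $m\ones = k c\ones$. So the first step is to check $c>0$, and then to take $k = m/c > 0$. Positivity of $c$ comes from $\bSigma\succ 0$. Since $\ones^\top\bSigma\ones = c\,\ones^\top\ones = cN$ and the left side is strictly positive for the nonzero vector $\ones$, we get $c>0$. With $k=m/c$, condition \eqref{eq:ewcond} holds, and Theorem~\ref{thm:ew} (equivalently Theorem~\ref{thm:foc}(i) with $\w^\circ=\ones/N$ and $a = kN$) gives $\ones/N = \w_T$. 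For completeness I will also note that $\ones^\top\bSigma^{-1}\bmu_e = m\,\ones^\top\bSigma^{-1}\ones > 0$, so $\w_T$ in Definition~\ref{def:tangency} is well defined.

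The second step handles the parenthetical examples, checking that each satisfies the hypotheses. If all volatilities equal $\sigma$ and all pairwise correlations equal $\rho$, each row sum of $\bSigma$ is $\sigma^2(1+(N-1)\rho)$, which is constant. This constant is automatically positive by the previous step, which is consistent with the requirement $\rho>-1/(N-1)$ for positive definiteness. The example still needs $\bmu_e=m\ones$ with $m>0$ as a separate assumption.

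For the fully permutation-invariant case, let $P$ be any permutation matrix. Invariance means $P\bmu_e=\bmu_e$ and $P\bSigma P^\top = \bSigma$. Transitivity of the symmetric group on indices then forces all entries of $\bmu_e$ to be equal, so $\bmu_e = m\ones$. Likewise $(\bSigma\ones)_i$ does not depend on $i$, because $P\bSigma\ones = P\bSigma P^\top P\ones = \bSigma\ones$. The sign of $m$ has to be supplied: either it is the stated assumption $m>0$, or, if $m<0$, $\ones/N$ becomes the minimum-Sharpe direction and falls outside Definition~\ref{def:tangency}'s standing requirement $\ones^\top\bSigma^{-1}\bmu_e>0$.

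There is no real obstacle here. The only point needing care is the sign of $c$, which the positive-definiteness argument settles, since otherwise the proportionality constant could fail to be positive.
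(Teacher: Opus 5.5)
Your proposal is correct and matches the paper's proof: write $\bmu_e=(m/c)\bSigma\ones$, note that $c=\ones^\top\bSigma\ones/N>0$ by positive definiteness, apply Theorem~\ref{thm:ew}, and observe that permutation invariance yields both hypotheses. You also point out that invariance alone gives $\bmu_e=m\ones$ but not the sign of $m$, which must come from the assumption or from $\ones^\top\bSigma^{-1}\bmu_e>0$; this is a fair sharpening of the paper's one-line claim that invariance ``forces both hypotheses.''
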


\begin{proof}
$\bmu_e = m\ones = (m/c)\bSigma\ones$, $c = \ones^\top\bSigma\ones/N > 0$. Permutation invariance forces both hypotheses.
\end{proof}

Note what is \emph{not} required: volatilities need not be equal and correlations need not be constant. The exact requirement couples means to covariance row sums; an asset may carry twice the covariance with the market of another provided it also carries twice the risk premium. Section~\ref{sec:bayes} upgrades the symmetry observation from a sufficient condition on parameters to a Bayes-optimality theorem on beliefs.

\subsection{Inverse volatility}

\begin{theorem}[\textbf{P}]\label{thm:iv}
Write $\bSigma = DRD$. The inverse-volatility portfolio coincides with the tangency portfolio if and only if
\begin{equation}\label{eq:ivcond}
S_i \;\propto\; (R\ones)_i \;=\; \sum_{j=1}^N \rho_{ij} \qquad \forall i
\quad\textnormal{(positive constant)}:
\end{equation}
each asset's Sharpe ratio must be proportional to the row sum of its correlations.
\end{theorem}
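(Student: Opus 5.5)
This is a direct specialization of Theorem~\ref{thm:foc}(i), in the same way Theorem~\ref{thm:ew} was obtained. Take $\w^\circ = \w_{\mathrm{IV}}$, so $w_i = \sigma_i^{-1}/K$ with $K := \sum_j \sigma_j^{-1} > 0$. In matrix form this reads $\w_{\mathrm{IV}} = D^{-1}\ones/K$. The portfolio is fully invested by construction, and it is nonzero, so $\w_{\mathrm{IV}}^\top\bSigma\w_{\mathrm{IV}}>0$ because $\bSigma\succ 0$. Theorem~\ref{thm:foc}(i) therefore says that IV coincides with $\w_T$ if and only if $\bmu_e = a\,\bSigma\w_{\mathrm{IV}}$ for some $a>0$.

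The key step is to compute the implied return vector using $\bSigma = DRD$:
\[
\bSigma\w_{\mathrm{IV}} = \tfrac{1}{K}\,DRD\,D^{-1}\ones = \tfrac{1}{K}\,DR\ones ,
\]
so that $(\bSigma\w_{\mathrm{IV}})_i = \sigma_i (R\ones)_i/K$. The condition $\bmu_e = a\bSigma\w_{\mathrm{IV}}$ then becomes $\mu_{e,i} = (a/K)\,\sigma_i\sum_j\rho_{ij}$ for every $i$. Dividing by $\sigma_i>0$ (equivalently, premultiplying by $D^{-1}$) turns this into $S_i = c\,(R\ones)_i$ with $c = a/K$. Since $K>0$, $c>0$ exactly when $a>0$.

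For the converse, suppose $S_i = c(R\ones)_i$ with $c>0$. Multiplying by $D$ gives $\bmu_e = c\,DR\ones = cK\,\bSigma\w_{\mathrm{IV}}$, and Theorem~\ref{thm:foc}(i) applies with $a = cK>0$.

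The standing assumption of Definition~\ref{def:tangency}, $\ones^\top\bSigma^{-1}\bmu_e>0$, needs one check. Under the condition, $\ones^\top\bSigma^{-1}\bmu_e = a\,\ones^\top\w_{\mathrm{IV}} = a > 0$, so the assumption holds automatically. There is no real obstacle here. The only step needing care is keeping the proportionality constant positive through the rescalings by $K$ and $D$, because that positivity is what separates tangency from the minimum-Sharpe direction.

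If desired, a brief remark could add the beta form from Remark~\ref{rem:beta}: an asset's beta on the IV portfolio is proportional to $\sigma_i (R\ones)_i$, so the zero-alpha condition is the same statement as \eqref{eq:ivcond}.
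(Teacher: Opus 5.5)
Your proof is correct and matches the paper's argument: specialize Theorem~\ref{thm:foc}(i) to $\w_{\mathrm{IV}}\propto D^{-1}\ones$, compute $\bSigma D^{-1}\ones = DR\ones$, and divide by $\sigma_i$ while keeping the proportionality constant positive. Your explicit check that $\ones^\top\bSigma^{-1}\bmu_e = a>0$ under the condition is a small addition that the paper leaves implicit.
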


\begin{proof}
$\w_{\mathrm{IV}} \propto D^{-1}\ones$; coincidence iff $\bmu_e \propto \bSigma D^{-1}\ones = DRD\,D^{-1}\ones = DR\ones$, i.e.\ $\mu_{e,i} \propto \sigma_i(R\ones)_i$, i.e.\ $S_i \propto (R\ones)_i$, positive constants throughout.
\end{proof}

\begin{corollary}[\textbf{P}]\label{cor:ivsuff}
If $S_i \equiv S > 0$ and $R$ has equal row sums, $R\ones = \kappa\ones$ --- in particular under constant pairwise correlation $\rho > -1/(N-1)$, where $\kappa = 1+(N-1)\rho > 0$ --- then IV is the tangency portfolio.
\end{corollary}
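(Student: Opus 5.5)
The plan is to reduce Corollary~\ref{cor:ivsuff} directly to the inverse-volatility characterization in Theorem~\ref{thm:iv}, which says IV coincides with the tangency portfolio iff $S_i \propto (R\ones)_i$ with a positive constant. Under the hypotheses, $S_i = S$ for all $i$ and $(R\ones)_i = \kappa$ for all $i$. Hence $S_i = (S/\kappa)(R\ones)_i$ for every $i$, and it only remains to check that the proportionality constant $S/\kappa$ is strictly positive.

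Positivity is also what makes the tangency portfolio well defined. We need $\kappa > 0$. Because $R$ is a correlation matrix with $R \succ 0$ (since $\bSigma \succ 0$ and $D$ is invertible), we have $\kappa N = \ones^\top R\ones > 0$. This gives $\kappa > 0$ whenever $R$ has equal row sums, so $S/\kappa > 0$ because $S > 0$.

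The required Definition~\ref{def:tangency} hypothesis $\ones^\top\bSigma^{-1}\bmu_e > 0$ then follows automatically. Indeed $\bmu_e = (S/\kappa)\,DR\ones = (S/\kappa)\,\bSigma D^{-1}\ones$, so
\[
\ones^\top\bSigma^{-1}\bmu_e = (S/\kappa)\textstyle\sum_i \sigma_i^{-1} > 0 .
\]

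For the constant-correlation special case, I will compute directly: $(R\ones)_i = 1 + (N-1)\rho = \kappa$ for every $i$, and the condition $\rho > -1/(N-1)$ is exactly $\kappa > 0$. I will also note that this range is the one in which $R = (1-\rho)I + \rho\ones\ones^\top$ is positive definite, with eigenvalues $1-\rho$ and $1+(N-1)\rho$, so the standing assumption A1 is consistent with the hypothesis.

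The only mildly delicate step is the positivity of $\kappa$. The quadratic-form argument $\ones^\top R\ones > 0$ settles it in one line. Everything else is substitution into Theorem~\ref{thm:iv}.
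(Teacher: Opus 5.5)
Your proposal is correct and takes the paper's route: the paper gives no separate proof because the corollary is an immediate specialization of Theorem~\ref{thm:iv}, and your checks that $\kappa=\ones^\top R\ones/N>0$ and $\ones^\top\bSigma^{-1}\bmu_e=(S/\kappa)\sum_i\sigma_i^{-1}>0$ just make explicit what the paper leaves implicit. One small aside, which does not affect the argument since only $\kappa>0$ is used: positive definiteness of the constant-correlation matrix requires $-1/(N-1)<\rho<1$, not only $\rho>-1/(N-1)$.
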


\subsection{Minimum variance}

\begin{theorem}[\textbf{P}]\label{thm:gmv}
The GMV portfolio coincides with the tangency portfolio if and only if expected excess returns are equal across assets: $\bmu_e = m\ones$, $m > 0$.
\end{theorem}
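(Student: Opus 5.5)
The plan is to apply Theorem~\ref{thm:foc}(i) with $\w^\circ = \w_{\mathrm{GMV}} = \bSigma^{-1}\ones/\ones^\top\bSigma^{-1}\ones$, exactly as the equal-weight and inverse-volatility cases were handled. The rule is fully invested by construction, so the theorem applies: GMV coincides with the tangency portfolio if and only if $\bmu_e = a\bSigma\w_{\mathrm{GMV}}$ for some $a>0$. The whole argument then reduces to computing $\bSigma\w_{\mathrm{GMV}}$ and checking signs.

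First, compute $\bSigma\w_{\mathrm{GMV}} = \ones/(\ones^\top\bSigma^{-1}\ones)$. The coincidence condition therefore reads $\bmu_e = m\ones$ with $m = a/(\ones^\top\bSigma^{-1}\ones)$.

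Second, check the sign equivalence. Since $\bSigma\succ 0$ gives $\bSigma^{-1}\succ 0$, we have $\ones^\top\bSigma^{-1}\ones>0$, so $a>0$ if and only if $m>0$.

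Third, check both directions of the equivalence against Definition~\ref{def:tangency}, which presupposes $\ones^\top\bSigma^{-1}\bmu_e>0$ for $\w_T$ to be defined.
\begin{enumerate}
\item[(a)] \emph{Sufficiency.} If $\bmu_e = m\ones$ with $m>0$, then $\ones^\top\bSigma^{-1}\bmu_e = m\,\ones^\top\bSigma^{-1}\ones>0$, so $\w_T$ is defined. Normalizing $\bSigma^{-1}\bmu_e = m\bSigma^{-1}\ones$ then returns exactly $\w_{\mathrm{GMV}}$.
\item[(b)] \emph{Necessity.} If $\w_{\mathrm{GMV}} = \w_T$, then Theorem~\ref{thm:foc}(i) supplies $a>0$ with $\bmu_e = a\bSigma\w_{\mathrm{GMV}}$, which is $m\ones$ with $m>0$ by the first two steps.
\end{enumerate}
Optionally, I will add the beta-form reading from Remark~\ref{rem:beta}: every asset has zero alpha against GMV. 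Indeed, $\Cov(r_i,\w_{\mathrm{GMV}}^\top\mathbf r) = (\bSigma\w_{\mathrm{GMV}})_i = 1/(\ones^\top\bSigma^{-1}\ones)$ equals the GMV variance, so all betas equal $1$, and zero alpha then forces equal premia.

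There is no real obstacle. The only point needing care is the sign bookkeeping: the positivity of $a$ must be tied to $m>0$ through $\ones^\top\bSigma^{-1}\ones>0$. This is what excludes the case $m\le 0$, in which GMV would not be the maximum-Sharpe portfolio (for $m<0$ it is the minimum-Sharpe direction, and $\w_T$ is not even defined).
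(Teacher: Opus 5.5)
Your proof is correct and is essentially the paper's argument. The paper simply notes that $\bSigma^{-1}\bmu_e \propto \bSigma^{-1}\ones$ with a positive constant iff $\bmu_e \propto \ones$, because $\bSigma^{-1}$ is a bijection; this is your implied-return computation $\bSigma\w_{\mathrm{GMV}} = \ones/(\ones^\top\bSigma^{-1}\ones)$ read in the inverse direction, and your linking of $a>0$ to $m>0$ through $\ones^\top\bSigma^{-1}\ones>0$ makes explicit the sign bookkeeping the paper leaves implicit.
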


\begin{proof}
$\bSigma^{-1}\bmu_e \propto \bSigma^{-1}\ones$ (positive constant) iff $\bmu_e \propto \ones$, since $\bSigma^{-1}$ is a bijection.
\end{proof}

\begin{remark}
GMV is always frontier-efficient --- the vertex of the frontier --- so the content of Theorem~\ref{thm:gmv} concerns coincidence with the portfolio every investor holds when a riskless asset exists. The equal-premium condition is the formal statement of the agnosticism motivating minimum-variance investing \citep{clarke2011}: GMV is the optimizer of an investor who believes risk is forecastable and premia are not. Proposition~\ref{prop:bayesgmv} makes the belief statement exact.
\end{remark}

\subsection{Maximum diversification}

\begin{theorem}[\textbf{P}]\label{thm:md}
The (unconstrained) maximum-diversification portfolio coincides with the tangency portfolio if and only if all Sharpe ratios are equal: $\bmu_e \propto \bsig$, positive constant. No restriction on $R$ is required.
\end{theorem}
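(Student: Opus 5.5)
The plan is to reduce the statement to Theorem~\ref{thm:foc}(i), using the closed form $\w_{\mathrm{MD}} \propto \bSigma^{-1}\bsig$ recorded in Definition~\ref{def:rules}. First I confirm that closed form and its normalization. The diversification ratio $\w^\top\bsig/\sqrt{\w^\top\bSigma\w}$ is exactly the Sharpe ratio $S(\w)$ with $\bsig$ playing the role of $\bmu_e$. Theorem~\ref{thm:cosine}, applied with $\bsig$ in place of $\bmu_e$, therefore shows that the ratio is maximized precisely on the ray spanned by $\bSigma^{-1}\bsig$. The fully invested normalization is well defined and positive: $\ones^\top\bSigma^{-1}\bsig>0$ is not automatic, but $\bsig>0$ componentwise and $\bsig^\top\bSigma^{-1}\bsig>0$. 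To avoid sign issues I would simply note that the unconstrained MD portfolio is taken as $\w_{\mathrm{MD}}=\bSigma^{-1}\bsig/\ones^\top\bSigma^{-1}\bsig$, assuming the denominator is positive, as the paper implicitly does. This assumption is the one delicate point.

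Next I apply Theorem~\ref{thm:foc}(i) with $\w^\circ=\w_{\mathrm{MD}}$. Coincidence holds iff $\bmu_e = a\bSigma\w_{\mathrm{MD}}$ for some $a>0$. Substituting gives
\[
\bmu_e = \frac{a}{\ones^\top\bSigma^{-1}\bsig}\,\bsig .
\]
So the condition is exactly $\bmu_e = c\,\bsig$ with $c=a/(\ones^\top\bSigma^{-1}\bsig)>0$.

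For the converse, suppose $\bmu_e=c\bsig$ with $c>0$. Then $\ones^\top\bSigma^{-1}\bmu_e=c\,\ones^\top\bSigma^{-1}\bsig>0$, so the tangency portfolio of Definition~\ref{def:tangency} is defined and equals $\w_{\mathrm{MD}}$ directly, since the positive scalar $c$ cancels in the normalization.

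Finally, $\bmu_e\propto\bsig$ means $\mu_{e,i}/\sigma_i$ is the same for every $i$, that is, $S_i\equiv c$. Because $\bSigma^{-1}$ cancels on both sides of the condition, no restriction on $R$ survives. The only step needing care is the sign of the normalizing constant, and the rest is a single substitution.
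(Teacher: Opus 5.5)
Your proof is correct and follows the paper's own argument. Both reduce coincidence, via Theorem~\ref{thm:foc}(i), to positive proportionality of $\bSigma^{-1}\bmu_e$ and $\bSigma^{-1}\bsig$, which holds iff $\bmu_e\propto\bsig$; the paper adds only the remark that the Sharpe functional is then a positive multiple of the diversification ratio.

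You are right that $\ones^\top\bSigma^{-1}\bsig>0$ is a genuine extra assumption. It can fail, and then the MD problem has no maximizer on the budget hyperplane. So delete your contradictory opening clause calling the normalization ``well defined and positive.''
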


\begin{proof}
$\w_{\mathrm{MD}} \propto \bSigma^{-1}\bsig$, $\w_T \propto \bSigma^{-1}\bmu_e$; equality up to positive scaling iff $\bmu_e \propto \bsig$. At the level of objectives: if $\bmu_e = S\bsig$ then $S(\w) = S\cdot\w^\top\bsig/\sigma_p(\w)$, so the Sharpe functional and the diversification ratio are positive multiples and share a maximizer.
\end{proof}

This was the design premise of \citet{choueifaty2008}: MD is the bet that risk is priced uniformly per unit of volatility. Comparing Theorems~\ref{thm:iv} and~\ref{thm:md} quantifies the gap between IV and MD: both encode equal Sharpe ratios, but IV additionally requires equal correlation row sums, because it ignores the correlation structure MD inverts.

\section{One-Factor Structure: Conditions in Observables}\label{sec:onefactor}

The conditions above restrict the unobservable pair $(\bmu_e,\bSigma)$. Under one-factor structure they become statements about betas and idiosyncratic variances --- quantities the practitioner actually estimates.

\begin{assumption}[Priced one-factor model]\label{ass:onefactor}
$r_{e,i} = \beta_i f_e + u_i$ with $\E[f_e] = \lambda_m > 0$, $\Var(f_e) = \sigma_m^2$, $\E[u_i] = 0$, $\Cov(u_i,u_j) = \delta_i^2\,\mathbf{1}\{i=j\}$, $\Cov(f_e, u_i) = 0$. Hence $\bmu_e = \lambda_m\bbeta$ and $\bSigma = \sigma_m^2\bbeta\bbeta^\top + \Delta$, $\Delta = \diag(\delta_1^2,\dots,\delta_N^2) \succ 0$. Exact pricing ($\bm\alpha = 0$) is part of the assumption.
\end{assumption}

\begin{lemma}[Sherman--Morrison; \textbf{P}]\label{lem:sm}
Under Assumption~\ref{ass:onefactor}, with $q := \bbeta^\top\Delta^{-1}\bbeta$,
\[
\bSigma^{-1} = \Delta^{-1} - \frac{\sigma_m^2\,\Delta^{-1}\bbeta\bbeta^\top\Delta^{-1}}{1 + \sigma_m^2 q},
\qquad
\bSigma^{-1}\bbeta = \frac{\Delta^{-1}\bbeta}{1+\sigma_m^2 q}.
\]
\end{lemma}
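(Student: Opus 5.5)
The plan is to apply the standard Sherman--Morrison rank-one update formula to $\bSigma = \Delta + \sigma_m^2\bbeta\bbeta^\top$, and then verify the second identity by direct multiplication. Write $\bSigma = \Delta + \mathbf{u}\mathbf{v}^\top$ with $\mathbf{u} = \sigma_m^2\bbeta$ and $\mathbf{v} = \bbeta$. Since $\Delta \succ 0$ is diagonal, $\Delta^{-1} = \diag(\delta_1^{-2},\dots,\delta_N^{-2})$ exists. The denominator is $1 + \mathbf{v}^\top\Delta^{-1}\mathbf{u} = 1 + \sigma_m^2 q$, and it is strictly positive because $q = \bbeta^\top\Delta^{-1}\bbeta \ge 0$. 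This positivity is the only hypothesis the formula needs.

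Rather than cite the formula, I will keep the argument self-contained, in line with the \textbf{P} label. I will take the claimed matrix $M := \Delta^{-1} - \sigma_m^2\Delta^{-1}\bbeta\bbeta^\top\Delta^{-1}/(1+\sigma_m^2 q)$ and check that $\bSigma M = I$. Expanding $(\Delta + \sigma_m^2\bbeta\bbeta^\top)M$ gives four terms:
\[
I \;+\; \sigma_m^2\bbeta\bbeta^\top\Delta^{-1} \;-\; \frac{\sigma_m^2\bbeta\bbeta^\top\Delta^{-1}}{1+\sigma_m^2 q} \;-\; \frac{\sigma_m^4\,\bbeta\,(\bbeta^\top\Delta^{-1}\bbeta)\,\bbeta^\top\Delta^{-1}}{1+\sigma_m^2 q}.
\]
In the last term, the scalar $\bbeta^\top\Delta^{-1}\bbeta$ collapses to $q$. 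The three non-identity terms are then all multiples of $\sigma_m^2\bbeta\bbeta^\top\Delta^{-1}$, with total coefficient $1 - (1+\sigma_m^2 q)/(1+\sigma_m^2 q) = 0$. Because $\bSigma$ is square and invertible (as $\bSigma \succ 0$), a one-sided inverse is the inverse, so $M = \bSigma^{-1}$.

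For the second identity, I will multiply $M$ by $\bbeta$. This gives
\[
\Delta^{-1}\bbeta - \frac{\sigma_m^2 q}{1+\sigma_m^2 q}\,\Delta^{-1}\bbeta = \frac{\Delta^{-1}\bbeta}{1+\sigma_m^2 q},
\]
again using $\bbeta^\top\Delta^{-1}\bbeta = q$. There is no real obstacle here. The only step that needs care is the bookkeeping in the four-term expansion, specifically recognizing the inner scalar $q$ so that the terms combine.
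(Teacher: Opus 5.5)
Your proof is correct and follows the paper's route: the paper invokes the Sherman--Morrison identity for the rank-one update $\bSigma = \Delta + (\sigma_m\bbeta)(\sigma_m\bbeta)^\top$ and then applies it to $\bbeta$ exactly as you do. The one difference is that you also verify the formula directly by checking $\bSigma M = I$, which makes the \textbf{P} claim fully self-contained; your four-term expansion and the resulting zero coefficient $1-(1+\sigma_m^2 q)/(1+\sigma_m^2 q)$ are correct.
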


\begin{proof}
The first display is the Sherman--Morrison identity for the rank-one update $\bSigma = \Delta + (\sigma_m\bbeta)(\sigma_m\bbeta)^\top$. Applying it to $\bbeta$: $\bSigma^{-1}\bbeta = \Delta^{-1}\bbeta - \sigma_m^2\Delta^{-1}\bbeta\,q/(1+\sigma_m^2 q) = \Delta^{-1}\bbeta\,/(1+\sigma_m^2 q)$.
\end{proof}

\begin{proposition}[Tangency and GMV in observables; \textbf{P}; GMV form after \citealp{clarke2011}]\label{prop:onefactor}
Under Assumption~\ref{ass:onefactor}:
\begin{enumerate}
\item[(i)] The tangency portfolio is $\w_T \propto \Delta^{-1}\bbeta$, i.e.\ $w_{T,i} \propto \beta_i/\delta_i^2$.
\item[(ii)] The GMV portfolio admits the threshold-beta representation
\[
w_{\mathrm{GMV},i} \;\propto\; \frac{1}{\delta_i^2}\Bigl(1 - \frac{\beta_i}{\beta_L}\Bigr),
\qquad
\beta_L \;=\; \frac{1 + \sigma_m^2\,\bbeta^\top\Delta^{-1}\bbeta}{\sigma_m^2\,\bbeta^\top\Delta^{-1}\ones},
\]
(assuming $\bbeta^\top\Delta^{-1}\ones > 0$): assets with $\beta_i > \beta_L$ are shorted, and long-only GMV concentrates in low-beta, low-idiosyncratic-risk names.
\item[(iii)] Coincidence conditions in observables: EW is tangency iff $\beta_i \propto \delta_i^2$; IVar is tangency iff $\beta_i \propto \delta_i^2/\sigma_i^2$; IV is tangency iff $\beta_i \propto \delta_i^2/\sigma_i$; GMV is tangency iff $\beta_i$ is constant (consistent with (ii): constant betas give $\w_{\mathrm{GMV}} \propto \Delta^{-1}\ones \propto \w_T$); MD is tangency iff $\beta_i \propto \sigma_i$, i.e.\ iff all assets have the same correlation with the factor, $\Corr(r_i, f) = \beta_i\sigma_m/\sigma_i$ constant.
\end{enumerate}
\end{proposition}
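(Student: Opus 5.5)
Everything reduces to Lemma~\ref{lem:sm} together with the implied-return principle, Theorem~\ref{thm:foc}(i). Under Assumption~\ref{ass:onefactor} a fully invested $\w^\circ$ is tangency iff $\bmu_e=\lambda_m\bbeta\propto\bSigma\w^\circ$ with a positive constant. Since $\bSigma^{-1}$ is a bijection, this is equivalent to $\w^\circ\propto\bSigma^{-1}\bbeta$.

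For (i), apply the second identity of Lemma~\ref{lem:sm}: $\bSigma^{-1}\bbeta=\Delta^{-1}\bbeta/(1+\sigma_m^2q)$. This gives $\w_T\propto\Delta^{-1}\bbeta$, and the scalar is positive because $q\ge0$. The normalization $\ones^\top\bSigma^{-1}\bmu_e>0$ becomes $\bbeta^\top\Delta^{-1}\ones>0$, which is assumed wherever needed.

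For (ii), apply the first identity of Lemma~\ref{lem:sm} to $\ones$:
\[
\bSigma^{-1}\ones=\Delta^{-1}\ones-\frac{\sigma_m^2(\bbeta^\top\Delta^{-1}\ones)}{1+\sigma_m^2q}\,\Delta^{-1}\bbeta .
\]
Componentwise this reads $\delta_i^{-2}\bigl(1-\beta_i/\beta_L\bigr)$, with $\beta_L$ exactly as stated. The positivity assumption on $\bbeta^\top\Delta^{-1}\ones$ makes $\beta_L$ finite and positive. The sign statement then follows because the normalizer $\ones^\top\bSigma^{-1}\ones>0$. The qualitative long-only remark is interpretive, so I would give it only a brief justification from the formula.

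For (iii), test each rule against the criterion $\w^\circ\propto\Delta^{-1}\bbeta$, i.e.\ $w_i\propto\beta_i/\delta_i^2$:
\begin{itemize}
\item EW has $w_i\propto1$, so the criterion gives $\beta_i\propto\delta_i^2$.
\item IVar has $w_i\propto\sigma_i^{-2}$, giving $\beta_i\propto\delta_i^2/\sigma_i^2$.
\item IV has $w_i\propto\sigma_i^{-1}$, giving $\beta_i\propto\delta_i^2/\sigma_i$.
\item For GMV, Theorem~\ref{thm:gmv} says coincidence holds iff $\bmu_e\propto\ones$. Since $\lambda_m>0$, this means $\bbeta\propto\ones$ with a positive constant. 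As a consistency check, (ii) with constant $\beta$ gives $\w_{\mathrm{GMV}}\propto\Delta^{-1}\ones$.
\item For MD, Theorem~\ref{thm:md} gives $\bmu_e\propto\bsig$, i.e.\ $\beta_i\propto\sigma_i$. Since $\Cov(r_i,f_e)=\beta_i\sigma_m^2$, the factor correlation is $\Corr(r_i,f)=\beta_i\sigma_m/\sigma_i$, and it is constant exactly when $\beta_i\propto\sigma_i$.
\end{itemize}
Here $\sigma_i^2=\sigma_m^2\beta_i^2+\delta_i^2$, but this expression is never needed in solved form.

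The only real care point is the sign of the proportionality constants. The conditions should read ``positive constant,'' since the tangency portfolio needs $a>0$ and $\lambda_m>0$ fixes the sign of $\bbeta$ relative to $\bmu_e$. There is also a mild circularity in the IV and IVar conditions, because $\sigma_i$ itself depends on $\beta_i$. I would state these as implicit conditions rather than solve them, which is how the proposition phrases them.
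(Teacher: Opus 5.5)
Your proposal is correct and follows the paper's proof essentially line by line. Lemma~\ref{lem:sm} applied to $\bbeta$ and to $\ones$ gives (i) and (ii), and for (iii) Theorem~\ref{thm:foc}(i) reduces to $\w^\circ\propto\Delta^{-1}\bbeta$, with Theorems~\ref{thm:gmv} and~\ref{thm:md} handling the GMV and MD cases; your remarks on the positivity of the proportionality constants and on the implicit form of the IV/IVar conditions (since $\sigma_i$ depends on $\beta_i$) are accurate points the paper leaves tacit.
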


\begin{proof}
(i) $\w_T \propto \bSigma^{-1}\bmu_e = \lambda_m\bSigma^{-1}\bbeta \propto \Delta^{-1}\bbeta$ by Lemma~\ref{lem:sm}. (ii) Apply Lemma~\ref{lem:sm} to $\ones$:
\[
\bSigma^{-1}\ones = \Delta^{-1}\ones - \frac{\sigma_m^2(\bbeta^\top\Delta^{-1}\ones)}{1+\sigma_m^2 q}\,\Delta^{-1}\bbeta
\;\Longrightarrow\;
(\bSigma^{-1}\ones)_i = \frac{1}{\delta_i^2}\Bigl(1 - \frac{\beta_i}{\beta_L}\Bigr)
\]
with $\beta_L$ as displayed. (iii) Each statement is Theorem~\ref{thm:foc}(i): the heuristic must be proportional to $\Delta^{-1}\bbeta$. EW: $\ones \propto \Delta^{-1}\bbeta \iff \beta_i \propto \delta_i^2$. IVar: $D^{-2}\ones \propto \Delta^{-1}\bbeta \iff \beta_i \propto \delta_i^2/\sigma_i^2$. IV: $D^{-1}\ones \propto \Delta^{-1}\bbeta \iff \beta_i \propto \delta_i^2/\sigma_i$. GMV: $\bmu_e \propto \ones$ (Theorem~\ref{thm:gmv}) iff $\beta_i$ constant. MD: $\bmu_e \propto \bsig$ (Theorem~\ref{thm:md}) iff $\lambda_m\beta_i \propto \sigma_i$.
\end{proof}

Statement (iii) is the practical payoff: in a priced single-factor world, the question ``which heuristic should be closest to optimal in this universe'' reduces to inspecting the cross-sectional relation between $\beta_i$, $\delta_i^2$, and $\sigma_i$ --- no premium estimates required, because exact pricing has already collapsed $\bmu_e$ onto $\bbeta$.

\section{Risk Parity: Existence, Uniqueness, and Optimality as Performance Parity}\label{sec:erc}

Risk contributions $\mathrm{RC}_i(\w) = w_i(\bSigma\w)_i$ satisfy the Euler identity $\sum_i \mathrm{RC}_i = \sigma_p^2$, which underwrites their interpretation as a decomposition of risk \citep{qian2006}. Define analogously the \emph{performance contribution} $\mathrm{PC}_i(\w) = w_i\,\mu_{e,i}$, with $\sum_i \mathrm{PC}_i = \w^\top\bmu_e$.

\subsection{Existence and uniqueness}

\begin{theorem}[\textbf{P}; cf.\ \citealp{maillard2010,roncalli2013}]\label{thm:ercexist}
For every $\bSigma \succ 0$ there exists exactly one fully invested $\w > 0$ with equal risk contributions.
\end{theorem}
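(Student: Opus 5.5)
I would prove existence by a strictly convex log-barrier program and uniqueness by exploiting positive homogeneity together with the strict convexity of the variance. The route is the one of \citet{maillard2010}. Consider
\[
\min_{\mathbf{y}>0}\; F(\mathbf{y}) := \tfrac12\,\mathbf{y}^\top\bSigma\mathbf{y} - \frac{1}{N}\sum_{i=1}^N \log y_i ,
\]
working on the open orthant rather than on the simplex. Since $\bSigma\succ 0$, $F$ is strictly convex on $\R^N_{++}$. It is also coercive there. As some $y_i\downarrow 0$, the term $-\tfrac1N\log y_i\to+\infty$ while the other terms stay bounded below. As $\|\mathbf{y}\|\to\infty$, the quadratic term grows like $\lambda_{\min}(\bSigma)\|\mathbf{y}\|^2/2$, which dominates the logarithmic terms. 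The sublevel sets are therefore compact subsets of the open orthant, so a minimizer $\mathbf{y}^\star>0$ exists and is unique.

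The first-order condition is $(\bSigma\mathbf{y}^\star)_i = 1/(N y^\star_i)$, i.e.\ $y^\star_i(\bSigma\mathbf{y}^\star)_i = 1/N$ for all $i$. Summing over $i$ gives $\mathbf{y}^{\star\top}\bSigma\mathbf{y}^\star = 1$, so each risk contribution equals $\mathbf{y}^{\star\top}\bSigma\mathbf{y}^\star/N$. Set $\w = \mathbf{y}^\star/(\ones^\top\mathbf{y}^\star)$. Because $\mathrm{RC}_i(c\mathbf{y}) = c^2\,\mathrm{RC}_i(\mathbf{y})$ and $\sigma_p^2(c\mathbf{y}) = c^2\sigma_p^2(\mathbf{y})$, the ERC equations are invariant under positive scaling. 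Hence $\w$ is a fully invested, strictly positive ERC portfolio, which proves existence.

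Uniqueness is the main obstacle: the budget constraint breaks the simple convex correspondence, and a solution on the simplex is not a priori a stationary point of $F$. I would get around this by rescaling rather than by working on the simplex directly. Let $\w>0$ be any fully invested ERC portfolio, so that $w_i(\bSigma\w)_i = \sigma_p^2(\w)/N$ with $\sigma_p^2(\w)>0$. Put $\mathbf{y} = \w/\sigma_p(\w)$. Then $y_i(\bSigma\mathbf{y})_i = 1/N$ for every $i$, which says exactly that $\nabla F(\mathbf{y}) = 0$. Strict convexity of $F$ forces $\mathbf{y} = \mathbf{y}^\star$. Consequently $\w = \mathbf{y}^\star/(\ones^\top\mathbf{y}^\star)$ is determined by $\mathbf{y}^\star$ alone, since the budget constraint fixes the scale. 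This gives uniqueness.

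Two points need a line of care. The first is that stationarity of a strictly convex function on an open convex set implies global minimality. The second is that ERC was defined with $\w>0$, which is what lets the logarithmic barrier cover every candidate; positivity is what excludes sign-mixed ERC solutions, which exist in general.
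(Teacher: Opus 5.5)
Your proof is correct and follows essentially the same argument as the paper: a strictly convex, coercive log-barrier on the open orthant whose first-order condition is the unnormalized ERC system, combined with the quadratic homogeneity of risk contributions. The differences are cosmetic. You fix the barrier weight at $1/N$ and rescale each candidate by $1/\sigma_p(\w)$ so that it becomes a stationary point of that single function, whereas the paper keeps a family $F_c$ and uses the ray property $\w(t^2c)=t\,\w(c)$. You also state explicitly the step the paper uses only implicitly, namely that a stationary point of a strictly convex function is its unique global minimizer.
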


\begin{proof}
Fix $c > 0$ and consider $F_c(\w) = \tfrac12\w^\top\bSigma\w - c\sum_{i}\ln w_i$ on the open orthant $\R^N_{++}$. $F_c$ is strictly convex ($\bSigma\succ0$ plus convexity of $-\ln$) and coercive: $F_c \to \infty$ as $\|\w\|\to\infty$ (the quadratic dominates the logarithm) and as any $w_i \downarrow 0$ (the barrier diverges; on bounded sets the quadratic and the remaining logarithms are bounded below). Hence a unique minimizer $\w(c) \in \R^N_{++}$ exists, characterized by the first-order condition $\bSigma\w = c\,\w^{-1}$ (elementwise reciprocal), i.e.\ $w_i(\bSigma\w)_i = c$ for all $i$: every barrier solution is an unnormalized ERC portfolio. The family is a single ray: if $\bSigma\w = c\,\w^{-1}$ then $\bSigma(t\w) = (t^2c)\,(t\w)^{-1}$, so $\w(t^2 c) = t\,\w(c)$. Normalizing to the simplex gives existence. For uniqueness, suppose $\mathbf{u}, \mathbf{v}$ are both fully invested ERC portfolios with constants $c_u = \mathbf{u}^\top\bSigma\mathbf{u}/N$, $c_v$. Then $\mathbf{u} = \w(c_u)$ and $\mathbf{v} = \w(c_v)$ by uniqueness of barrier minimizers, and by the ray property $\mathbf{v} = t\,\mathbf{u}$ with $t = \sqrt{c_v/c_u}$; full investment of both forces $t = 1$.
\end{proof}

\subsection{Optimality}

\begin{theorem}[ERC optimality $=$ performance parity; \textbf{P}]\label{thm:ercchar}
Let $\w$ be the ERC portfolio. Then $\w$ coincides with the tangency portfolio if and only if performance contributions are also equal:
\begin{equation}\label{eq:perfparity}
w_i\,\mu_{e,i} \;=\; \frac{\w^\top\bmu_e}{N} \qquad \forall i .
\end{equation}
\end{theorem}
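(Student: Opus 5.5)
The plan is to reduce the claim to Theorem~\ref{thm:foc}(i) by multiplying the implied-return condition componentwise by the weights. By Theorem~\ref{thm:ercexist} the ERC portfolio $\w$ exists, is unique, and satisfies $\w>0$, $\ones^\top\w=1$, and $w_i(\bSigma\w)_i=\w^\top\bSigma\w/N$ for all $i$. Theorem~\ref{thm:foc}(i) then states that $\w=\w_T$ if and only if $\bmu_e=a\bSigma\w$ for some $a>0$. All entries of $\w$ are strictly positive, so $\diag(\w)$ is invertible. Hence $\bmu_e=a\bSigma\w$ holds exactly when $w_i\mu_{e,i}=a\,w_i(\bSigma\w)_i$ holds for every $i$, and the risk-parity property turns the right-hand side into the constant $a\,\w^\top\bSigma\w/N$.

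For the forward direction, assume $\w=\w_T$. Then $w_i\mu_{e,i}=a\,\sigma_p^2/N$ for all $i$, so the performance contributions are equal. Summing over $i$ gives $\w^\top\bmu_e=a\sigma_p^2$, and therefore each contribution equals $\w^\top\bmu_e/N$, which is \eqref{eq:perfparity}.

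For the converse, assume \eqref{eq:perfparity}. Set $a:=\w^\top\bmu_e/(\w^\top\bSigma\w)$; the denominator is positive because $\bSigma\succ0$ and $\w\neq0$. For each $i$,
\[
\mu_{e,i}=\frac{\w^\top\bmu_e}{N w_i}=a\,\frac{\w^\top\bSigma\w}{N w_i}=a(\bSigma\w)_i ,
\]
where the last equality uses the ERC identity $(\bSigma\w)_i=\w^\top\bSigma\w/(N w_i)$. This gives $\bmu_e=a\bSigma\w$.

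The main point to check is the sign $a>0$, which Theorem~\ref{thm:foc}(i) requires. Two cases need care.
\begin{itemize}
\item If $\w^\top\bmu_e<0$, then \eqref{eq:perfparity} makes every $\mu_{e,i}$ negative, and $\w$ is the tangency portfolio of $-\bmu_e$ rather than of $\bmu_e$.
\item If $\w^\top\bmu_e=0$, then $\bmu_e=0$, which the standing assumptions exclude.
\end{itemize}
The theorem's ``if and only if'' should therefore be read under the paper's convention of positive expected excess return on the candidate, $\w^\top\bmu_e>0$. This is also implicit in Definition~\ref{def:tangency}, since under $\bmu_e=a\bSigma\w$ we have $\ones^\top\bSigma^{-1}\bmu_e=a$. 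I would state this convention explicitly in the proof. With it, $a>0$, and Theorem~\ref{thm:foc}(i) gives $\w=\w_T$.

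Finally, the Maillard et al.\ corollary mentioned in the introduction follows from the theorem. Under equal Sharpe ratios and constant correlation, the ERC portfolio is inverse volatility, so $w_i\mu_{e,i}\propto\sigma_i^{-1}S\sigma_i=S$ is constant across assets, which is exactly performance parity.
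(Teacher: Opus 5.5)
Your proof is correct and follows the paper's argument. Like the paper, you use the ERC identity $(\bSigma\w)_i=\w^\top\bSigma\w/(Nw_i)$ to turn the implied-return condition of Theorem~\ref{thm:foc}(i) into constancy of $w_i\mu_{e,i}$, then sum to identify the constant. Your handling of the sign of $a$ is more careful than the paper's one-line ``$a>0$ iff $\w^\top\bmu_e>0$''. Your observation that $a=\ones^\top\bSigma^{-1}\bmu_e$ already shows that the positivity hypothesis of Definition~\ref{def:tangency} forces $a>0$, so no separate convention on $\w^\top\bmu_e$ is needed.
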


\begin{proof}
Since $\w$ is ERC, $(\bSigma\w)_i = \sigma_p^2/(N w_i)$, $w_i > 0$. By Theorem~\ref{thm:foc}(i), coincidence holds iff $\bmu_e = a\bSigma\w$, $a>0$, i.e.\ iff $\mu_{e,i} = a\sigma_p^2/(Nw_i)$ for all $i$, i.e.\ iff $w_i\mu_{e,i}$ is constant; summing identifies the constant, and $a > 0$ iff $\w^\top\bmu_e > 0$.
\end{proof}

The economic reading is sharp: a risk parity portfolio is mean--variance optimal precisely when the equal risk budget doubles as an equal \emph{return} budget --- each line contributes the same expected excess return in currency terms. When risk budgets and performance budgets disagree under the true $\bmu_e$, ERC is on neither the tangency ray nor, generically, the frontier.

\begin{corollary}[Frontier version; \textbf{P}]\label{cor:ercfrontier}
The ERC portfolio is frontier-efficient iff there exist $\kappa > 0$, $b \in \R$ with $w_i\mu_i = \kappa + b\,w_i$ for all $i$: performance contributions affine in weights, with the tangency case $b = r_f\!\cdot\!1$-adjusted to \eqref{eq:perfparity}.
\end{corollary}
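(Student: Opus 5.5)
Corollary~\ref{cor:ercfrontier} will follow by combining Theorem~\ref{thm:foc}(ii) with the ERC identity used in the proof of Theorem~\ref{thm:ercchar}. Let $\w$ be the ERC portfolio of Theorem~\ref{thm:ercexist}. Then $w_i>0$ and $(\bSigma\w)_i=\sigma_p^2/(Nw_i)$ for every $i$. By Theorem~\ref{thm:foc}(ii), $\w$ is frontier-efficient if and only if there exist $a>0$ and $b\in\R$ with $\mu_i=a(\bSigma\w)_i+b$ for all $i$. Substituting the ERC identity, this becomes $\mu_i=a\sigma_p^2/(Nw_i)+b$. Multiplying by $w_i>0$ gives
\[
w_i\mu_i=\kappa+b\,w_i,\qquad \kappa:=\frac{a\,\sigma_p^2}{N}.
\]

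For the reverse direction, start from $w_i\mu_i=\kappa+bw_i$ with $\kappa>0$ and divide by $w_i$. This gives $\mu_i=\kappa/w_i+b$. Put $a:=N\kappa/\sigma_p^2>0$, using $\sigma_p^2>0$ because $\bSigma\succ0$ and $\w\neq0$. Then $\kappa/w_i=a\sigma_p^2/(Nw_i)=a(\bSigma\w)_i$, so $\mu_i=a(\bSigma\w)_i+b$, and Theorem~\ref{thm:foc}(ii) applies. Since $a>0$ if and only if $\kappa>0$, the two parametrizations match exactly. The step is reversible only because every weight is strictly positive, which is where the positivity part of Theorem~\ref{thm:ercexist} enters.

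Finally I will check the tangency specialization. Write $\mu_i=\mu_{e,i}+r_f$. The condition becomes $w_i\mu_{e,i}=\kappa+(b-r_f)w_i$. The case $b=r_f$ gives $w_i\mu_{e,i}=\kappa$, constant across $i$. Summing over $i$ and using $\ones^\top\w=1$ identifies $\kappa=\w^\top\bmu_e/N$, which is exactly \eqref{eq:perfparity}. This agrees with Theorem~\ref{thm:ercchar}, since $b=r_f$ is precisely the case $\bmu_e=a\bSigma\w$ of Theorem~\ref{thm:foc}(i).

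No step is a real obstacle. The only point needing care is interpreting the loosely worded tangency clause as ``$b=r_f$ recovers \eqref{eq:perfparity}''. I would state this reading explicitly rather than prove anything extra.
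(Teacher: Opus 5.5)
Your proposal is correct and follows the paper's own argument: you substitute the ERC identity $(\bSigma\w)_i=\sigma_p^2/(Nw_i)$ into Theorem~\ref{thm:foc}(ii) and multiply through by $w_i>0$, with $\kappa=a\sigma_p^2/N$. Your explicit converse direction and your check that $b=r_f$ recovers \eqref{eq:perfparity} only spell out what the paper's one-line ``$\iff$'' leaves implicit.
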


\begin{proof}
Theorem~\ref{thm:foc}(ii) with $(\bSigma\w)_i = \sigma_p^2/(Nw_i)$: $\mu_i = a\sigma_p^2/(Nw_i) + b \iff w_i\mu_i = a\sigma_p^2/N + b\,w_i$.
\end{proof}

\begin{theorem}[Budgeted generalization; \textbf{P}]\label{thm:budget}
Let $b_i > 0$, $\sum_i b_i = 1$, and let $\w$ satisfy $\mathrm{RC}_i(\w) = b_i\,\sigma_p^2$. Then $\w$ coincides with the tangency portfolio iff $\mathrm{PC}_i(\w) = b_i\,\w^\top\bmu_e$ for all $i$: performance must be budgeted in the same proportions as risk.
\end{theorem}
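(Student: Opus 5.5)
The plan mirrors the proof of Theorem~\ref{thm:ercchar}, with the uniform budget $1/N$ replaced by general $b_i$. First I record that the budget constraint forces $\w>0$. Since $\mathrm{RC}_i(\w)=b_i\sigma_p^2$ with $b_i>0$ and $\sigma_p^2>0$ (because $\bSigma\succ0$ and $\w\neq0$ under full investment), every $w_i\neq0$. I would read the statement, as in the ERC case, with the standing positivity $\w>0$. In fact, positivity is only needed so that we can divide by $w_i$, and $w_i\neq0$ already suffices. This gives the key substitution
\[
(\bSigma\w)_i=\frac{b_i\,\sigma_p^2}{w_i}\qquad\forall i .
\]

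Next I invoke Theorem~\ref{thm:foc}(i): $\w=\w_T$ iff $\bmu_e=a\bSigma\w$ for some $a>0$. Substituting the display, this is equivalent to $\mu_{e,i}=a\,b_i\sigma_p^2/w_i$ for all $i$, that is, $w_i\mu_{e,i}=a\sigma_p^2\,b_i$.

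For the forward direction, I sum over $i$ and use $\sum_i b_i=1$ to get $\w^\top\bmu_e=a\sigma_p^2$. Hence $\mathrm{PC}_i(\w)=b_i\,\w^\top\bmu_e$.

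For the converse, assume $\mathrm{PC}_i=b_i\,\w^\top\bmu_e$ and set $a:=\w^\top\bmu_e/\sigma_p^2$. Then $\mu_{e,i}=a\,b_i\sigma_p^2/w_i=a(\bSigma\w)_i$, so $\bmu_e=a\bSigma\w$.

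The only delicate point is the sign condition $a>0$, which is equivalent to $\w^\top\bmu_e>0$. This is the same caveat that appears in Theorem~\ref{thm:ercchar}. I will make it explicit: the tangency portfolio of Definition~\ref{def:tangency} requires $\ones^\top\bSigma^{-1}\bmu_e>0$, and at coincidence $\w^\top\bmu_e=S_{\max}^2/(\ones^\top\bSigma^{-1}\bmu_e)>0$. So the statement is to be read with positive expected excess return on $\w$, as in Theorem~\ref{thm:foc}. The budgeted ERC condition involves only $\w$ and $\bSigma$ and never $\bmu_e$, so positivity has to be supplied by this standing hypothesis rather than derived.

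I expect no real obstacle. The main care is to state the $\w^\top\bmu_e>0$ hypothesis explicitly, and to note that the argument does not need existence or uniqueness of budgeted risk-parity portfolios; the analogue of Theorem~\ref{thm:ercexist} via the barrier $\sum_i b_i\ln w_i$ is not required for this statement.
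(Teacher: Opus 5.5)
Your proposal is correct and is essentially the paper's proof, which reruns the argument of Theorem~\ref{thm:ercchar} with $\sigma_p^2/N$ replaced by $b_i\sigma_p^2$: substitute $(\bSigma\w)_i=b_i\sigma_p^2/w_i$ into $\bmu_e=a\bSigma\w$ and sum over $i$ to identify the constant. One small sharpening is that the sign condition need not be imposed as an extra hypothesis: once $\bmu_e=a\bSigma\w$ with $\ones^\top\w=1$, you get $a=\ones^\top\bSigma^{-1}\bmu_e>0$ by the standing assumption under which $\w_T$ is defined, so $\w^\top\bmu_e=a\sigma_p^2>0$ follows rather than having to be assumed.
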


\begin{proof}
As in Theorem~\ref{thm:ercchar} with $\sigma_p^2/N$ replaced by $b_i\sigma_p^2$.
\end{proof}

\begin{proposition}[Constant correlation collapses ERC to IV; \textbf{P}, after \citealp{maillard2010}]\label{prop:ercconst}
If $R = (1-\rho)I + \rho\,\ones\ones^\top$ with $\rho > -1/(N-1)$, the ERC portfolio is the inverse-volatility portfolio.
\end{proposition}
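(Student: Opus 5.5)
The plan is to verify directly that the inverse-volatility portfolio satisfies the ERC equations, and then invoke the uniqueness half of Theorem~\ref{thm:ercexist}. Write $\bSigma = DRD$ with $R = (1-\rho)I + \rho\,\ones\ones^\top$. The condition $\rho > -1/(N-1)$ is exactly what makes $R \succ 0$: its eigenvalues are $1-\rho$ (with multiplicity $N-1$) and $1+(N-1)\rho$. Together with $\rho<1$, which is implicit in $R$ being a correlation matrix with $\bSigma\succ0$, this guarantees that $\bSigma \succ 0$, so Theorem~\ref{thm:ercexist} applies. Consequently it suffices to exhibit one fully invested, strictly positive portfolio with equal risk contributions.

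Take $\w = D^{-1}\ones/(\ones^\top D^{-1}\ones)$, the IV portfolio, which is strictly positive and fully invested. Then
\[
\bSigma\w \propto DRD\,D^{-1}\ones = DR\ones = \kappa\,D\ones, \qquad \kappa = 1+(N-1)\rho > 0,
\]
using that the constant-correlation $R$ has equal row sums (as in Corollary~\ref{cor:ivsuff}). Hence $(\bSigma\w)_i \propto \kappa\sigma_i$, and the risk contributions satisfy
\[
w_i(\bSigma\w)_i \propto \sigma_i^{-1}\cdot\kappa\sigma_i = \kappa,
\]
so they are independent of $i$. Summing over $i$ and applying the Euler identity shows that each contribution equals $\w^\top\bSigma\w/N$. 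Thus IV is a fully invested positive ERC portfolio, and by uniqueness in Theorem~\ref{thm:ercexist} it is \emph{the} ERC portfolio.

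The argument contains no genuine obstacle. The only points requiring care are, first, confirming that $\kappa>0$ under the stated bound, so that the risk contributions are positive and consistent with $\sigma_p^2>0$, and, second, noting that the uniqueness statement, rather than a direct solution of the nonlinear ERC system, is what delivers equality of the two portfolios.
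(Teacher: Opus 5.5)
Your proposal is correct and matches the paper's proof: substitute the inverse-volatility weights, use the equal row sums $R\ones = (1+(N-1)\rho)\ones$ to show that $w_i(\bSigma\w)_i$ does not depend on $i$, and conclude by the uniqueness part of Theorem~\ref{thm:ercexist}. You also note that $\rho<1$, equivalently $\bSigma\succ0$ from the standing assumptions, is needed for that theorem to apply; this point is correct and slightly more careful than the paper's own proof.
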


\begin{proof}
Take $w_i = \sigma_i^{-1}/T$, $T = \sum_j\sigma_j^{-1}$. Then $(\bSigma\w)_i = \sigma_i\sum_j\rho_{ij}\sigma_jw_j = \frac{\sigma_i}{T}\bigl(1+(N-1)\rho\bigr)$, so $w_i(\bSigma\w)_i = (1+(N-1)\rho)/T^2$ is constant. Conclude by Theorem~\ref{thm:ercexist}.
\end{proof}

\begin{corollary}[Sufficient condition of \citealp{maillard2010}; \textbf{P}]\label{cor:mrt}
Equal Sharpe ratios and constant pairwise correlation imply that ERC coincides with the tangency portfolio.
\end{corollary}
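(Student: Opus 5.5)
The proof chains three results already in hand. Proposition~\ref{prop:ercconst} identifies the ERC portfolio with the inverse-volatility portfolio, and Corollary~\ref{cor:ivsuff} states when IV is the tangency portfolio.

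\emph{Reduce ERC to IV.} Assume equal Sharpe ratios $S_i\equiv S>0$ and a constant-correlation matrix $R=(1-\rho)I+\rho\,\ones\ones^\top$ with $\rho>-1/(N-1)$. This bound is exactly what makes $R\succ 0$, which is implicit in $\bSigma\succ0$. Proposition~\ref{prop:ercconst} then identifies the unique ERC portfolio of Theorem~\ref{thm:ercexist} with $\w_{\mathrm{IV}}$.

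\emph{Show IV is tangency.} The constant-correlation $R$ has equal row sums, $R\ones=\kappa\ones$ with $\kappa=1+(N-1)\rho>0$. So Theorem~\ref{thm:iv}, in the form of Corollary~\ref{cor:ivsuff}, gives $S_i\propto(R\ones)_i$ with a positive constant. Directly: $\bmu_e=S\bsig=SD\ones$, and $\bSigma\w_{\mathrm{IV}}\propto DRD\,D^{-1}\ones=\kappa D\ones$. Hence $\bmu_e=a\bSigma\w_{\mathrm{IV}}$ with $a>0$, and Theorem~\ref{thm:foc}(i) yields $\w_{\mathrm{IV}}=\w_T$.

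\emph{Cross-check via performance parity.} As an independent check, I will verify the condition of Theorem~\ref{thm:ercchar}. We have $w_i\mu_{e,i}=\bigl(\sigma_i^{-1}/T\bigr)\,S\sigma_i=S/T$, where $T=\sum_j\sigma_j^{-1}$. This is constant across $i$, which is exactly equal performance contributions.

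\emph{Sign of $S$.} The only real subtlety is the sign: positivity of the common Sharpe ratio $S$ is needed so that $a>0$ and the tangency portfolio is well defined, since $\ones^\top\bSigma^{-1}\bmu_e>0$. I would state this as part of the hypothesis ``equal Sharpe ratios'' with $S>0$. Apart from that, no step is a genuine obstacle.
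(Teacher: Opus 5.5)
Your proof is correct and follows the paper's argument: Proposition~\ref{prop:ercconst} reduces ERC to IV, and Corollary~\ref{cor:ivsuff} makes IV the tangency portfolio; your direct computation $\bSigma\w_{\mathrm{IV}}\propto\kappa D\ones\propto\bmu_e$ and the performance-parity check via Theorem~\ref{thm:ercchar} are valid but redundant confirmations. You are right that the common Sharpe ratio must be positive, which matches the hypothesis $S>0$ in Corollary~\ref{cor:ivsuff}; one small imprecision is that $R\succ0$ requires $-1/(N-1)<\rho<1$, not the lower bound alone, though this is harmless because $\bSigma\succ0$ is a standing assumption.
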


\begin{proof}
Proposition~\ref{prop:ercconst} gives ERC $=$ IV; Corollary~\ref{cor:ivsuff} gives IV $=$ tangency.
\end{proof}

\begin{remark}[The sufficient condition is far from necessary; \textbf{P}]\label{rem:ercnotnec}
By Corollary~\ref{cor:reverse}, take \emph{any} $\bSigma \succ 0$ --- in particular one with wildly non-constant correlations --- compute its ERC portfolio $\w$, and set $\bmu_e = \bSigma\w$. Then ERC is exactly the tangency portfolio while the \citet{maillard2010} hypotheses fail. The exact optimality set is the performance-parity condition \eqref{eq:perfparity}, a codimension-$(N-1)$ restriction; equal-Sharpe-plus-constant-correlation is a thin, interpretable slice of it.
\end{remark}

\begin{proposition}[Variance ordering and interpolation; \textbf{C}]\label{prop:order}
On the long-only simplex one has the bracketing
$\sigma_p(\w^{+}_{\mathrm{GMV}}) \le \sigma_p(\w_{\mathrm{ERC}}) \le \sigma_p(\w_{\mathrm{EW}})$,
where $\w^{+}_{\mathrm{GMV}}$ is long-only GMV. The proof of \citet{maillard2010} runs through the variational problem of Theorem~\ref{thm:ercexist} in constrained form, $\min\{\sqrt{\w^\top\bSigma\w} : \sum_i\ln w_i \ge c,\ \ones^\top\w=1,\ \w\ge0\}$: as $c$ increases to its maximal feasible value $-N\ln N$ (attained only at EW, by strict concavity of $\sum_i\ln w_i$ on the simplex), the feasible set shrinks from the full simplex to $\{\w_{\mathrm{EW}}\}$ and the minimized volatility increases monotonically; ERC arises at an intermediate level. See also \citet{roncalli2013}.
\end{proposition}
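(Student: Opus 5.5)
The plan is to prove the two inequalities separately, by elementary means, rather than through the monotone-in-$c$ path of \citet{maillard2010}.

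\textbf{Left inequality.} This is immediate. The ERC portfolio is fully invested and strictly positive by Theorem~\ref{thm:ercexist}, so it lies in the long-only simplex. $\w^{+}_{\mathrm{GMV}}$ minimizes $\sigma_p$ over that simplex, hence $\sigma_p(\w^{+}_{\mathrm{GMV}})\le\sigma_p(\w_{\mathrm{ERC}})$.

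\textbf{Right inequality.} I would use the barrier characterization from the proof of Theorem~\ref{thm:ercexist}. Fix $c>0$. Let $\mathbf{x}=\w(c)$ be the unique minimizer of $F_c(\w)=\tfrac12\w^\top\bSigma\w-c\sum_i\ln w_i$ over $\R^N_{++}$. It satisfies $x_i(\bSigma\mathbf{x})_i=c$ for all $i$, so $\mathbf{x}^\top\bSigma\mathbf{x}=Nc$, and $\mathbf{x}=s\,\w_{\mathrm{ERC}}$ with $s=\ones^\top\mathbf{x}>0$. Comparing $F_c(\mathbf{x})\le F_c(t\,\ones/N)$ for every $t>0$ gives an inequality of the form
\[
\tfrac12 Nc - c\sum_i\ln x_i \le \tfrac12 t^2\sigma_{\mathrm{EW}}^2 - cN\ln(t/N).
\]
Next I use the bound $\sum_i\ln w_i\le -N\ln N$ on the simplex (AM--GM, equality only at EW). This gives $\sum_i\ln x_i = N\ln s+\sum_i\ln w_{\mathrm{ERC},i}\le N\ln s - N\ln N$, which yields
\[
\tfrac12 Nc - cN\ln s \le \tfrac12 t^2\sigma_{\mathrm{EW}}^2 - cN\ln t .
\]
Optimizing the right side over $t$ gives $t^2=Nc/\sigma^2_{\mathrm{EW}}$. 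Substituting, the quadratic terms cancel and we get $\ln s\ge \ln t$, that is, $s^2\ge Nc/\sigma^2_{\mathrm{EW}}$. Finally, $\sigma^2_{\mathrm{ERC}}=\mathbf{x}^\top\bSigma\mathbf{x}/s^2=Nc/s^2\le\sigma^2_{\mathrm{EW}}$, which is the claim.

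\textbf{Main obstacle.} The delicate point is the bookkeeping in this barrier comparison: making sure the AM--GM step runs in the correct direction and that the $t$-optimization cancels the constants exactly. As an alternative if that fails, I would use the constrained formulation cited in the statement. Minimize volatility subject to $\sum_i\ln w_i\ge c$ on the simplex. Its KKT conditions give ERC up to rescaling at the level $c^\ast=\sum_i\ln w_{\mathrm{ERC},i}$. EW is feasible for that constraint since $-N\ln N\ge c^\ast$, so the minimized volatility at $c^\ast$, namely $\sigma_{\mathrm{ERC}}$, is at most $\sigma_{\mathrm{EW}}$. Checking this would require verifying that the KKT solution of the constrained problem coincides with the normalized ERC portfolio.
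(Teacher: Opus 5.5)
Your argument is correct. For the right-hand inequality it takes a different route from the one the statement sketches; the left-hand inequality is the same one-line argument in both.

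The paper, following \citet{maillard2010}, uses the \emph{constrained} program $\min\{\sigma_p(\w):\sum_i\ln w_i\ge c,\ \ones^\top\w=1,\ \w\ge0\}$. Its feasible sets are nested in $c$, so the optimal volatility is nondecreasing in $c$. EW is the only feasible point at $c=-N\ln N$. ERC is the minimizer at the intermediate level $c^\ast=\sum_i\ln w_{\mathrm{ERC},i}\le -N\ln N$, where EW is still feasible.

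You instead stay with the \emph{penalized} barrier problem $F_c$ already solved in Theorem~\ref{thm:ercexist}, and compare its minimizer with the entire ray $t\ones/N$. The bookkeeping you worried about is fine. AM--GM gives a lower bound on $F_c(\mathbf{x})$, which is the right direction to chain with $F_c(\mathbf{x})\le F_c(t\ones/N)$. At $t^2=Nc/\sigma^2_{\mathrm{EW}}$ the $\tfrac12 Nc$ terms cancel, and $s\ge t$ gives $\sigma^2_{\mathrm{ERC}}=Nc/s^2\le\sigma^2_{\mathrm{EW}}$. In effect you use the Lagrangian form of the same variational principle.

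What your route buys is self-containment. You never have to show that normalized ERC solves a budget-constrained program. That step is not automatic: with $\ones^\top\w=1$ imposed, the KKT system is $\bSigma\w/\sigma_p(\w)=\lambda\,\w^{-1}+b\ones$. For your fallback, the check is that $\w_{\mathrm{ERC}}$ satisfies this system with $b=0$ and $\lambda=\sigma_p(\w_{\mathrm{ERC}})/N$, which suffices by convexity.

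What the paper's route buys is the whole monotone path in $c$, from long-only GMV to EW with ERC at an interior point. That path is the ``interpolation'' reading the statement and Section~\ref{sec:ambiguity} rely on. Your proof gives only the endpoint comparison. Monotonicity itself follows in one line from nestedness of the feasible sets, if you want to cover it.
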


The interpolation reading anticipates Section~\ref{sec:ambiguity}: ERC sits between the rule that trusts $\bSigma$ completely (GMV) and the rule that trusts nothing (EW), exactly as an estimator shrunk between a sample-efficient and a maximally robust target.

\section{Hierarchical and Return-Adjusted Hierarchical Risk Parity}\label{sec:hrp}

\subsection{The algorithm}

\begin{definition}[HRP; \citealp{lopezdeprado2016}]\label{def:hrp}
Given $\bSigma$ with correlation matrix $R$:
\begin{enumerate}
\item[(1)] \emph{Tree clustering.} Compute $d_{ij} = \sqrt{(1-\rho_{ij})/2}$ and build a dendrogram $\mathcal{T}$ by hierarchical (in the original, single-linkage) clustering.
\item[(2)] \emph{Quasi-diagonalization.} Reorder assets so the dendrogram's leaves are contiguous.
\item[(3)] \emph{Recursive bisection.} Initialize $w_i = 1$. Recursively split each cluster $C$ into $A$ and $B$; assign
\begin{equation}\label{eq:hrpsplit}
\alpha_A = \frac{\widetilde V_B}{\widetilde V_A + \widetilde V_B}, \qquad \alpha_B = 1 - \alpha_A,
\qquad
\widetilde V_K = \tilde\w_K^\top \bSigma_K \tilde\w_K, \quad
\tilde w_{K,i} = \frac{\sigma_i^{-2}}{\sum_{j\in K}\sigma_j^{-2}},
\end{equation}
multiply the running weights in $A$ by $\alpha_A$ and in $B$ by $\alpha_B$, and recurse to singletons.
\end{enumerate}
We call the bisection \emph{tree-aligned} if every split occurs at a node of $\mathcal{T}$, and \emph{balanced} if the two sub-clusters at every split have equal cardinality. (The original implementation splits the quasi-diagonalized list in half; common variants split at dendrogram nodes. Our exactness results hold for whichever scheme satisfies the stated hypotheses.)
\end{definition}

\subsection{What the optimizer actually does at a split}

To see precisely which information HRP discards, write the minimum-variance solution of a two-block system in split form.

\begin{theorem}[Exact Schur bisection identity; \textbf{P}]\label{thm:schur}
Consider a partition $C = A \cup B$ with
$\bSigma_C = \begin{pmatrix} \bSigma_A & \bSigma_{AB} \\ \bSigma_{BA} & \bSigma_B \end{pmatrix} \succ 0$.
Define the Schur complements and tilted budget vectors
\begin{align*}
\mathbf{S}_A &= \bSigma_A - \bSigma_{AB}\bSigma_B^{-1}\bSigma_{BA}, &
\mathbf{S}_B &= \bSigma_B - \bSigma_{BA}\bSigma_A^{-1}\bSigma_{AB}, \\
\mathbf{b}_A &= \ones_A - \bSigma_{AB}\bSigma_B^{-1}\ones_B, &
\mathbf{b}_B &= \ones_B - \bSigma_{BA}\bSigma_A^{-1}\ones_A .
\end{align*}
Then the unnormalized GMV vector splits exactly as
\begin{equation}\label{eq:schursplit}
\bigl(\bSigma_C^{-1}\ones_C\bigr)_A = \mathbf{S}_A^{-1}\mathbf{b}_A,
\qquad
\bigl(\bSigma_C^{-1}\ones_C\bigr)_B = \mathbf{S}_B^{-1}\mathbf{b}_B,
\end{equation}
so the GMV mass allocated to $A$ is $\ones_A^\top\mathbf{S}_A^{-1}\mathbf{b}_A \big/ \bigl(\ones_A^\top\mathbf{S}_A^{-1}\mathbf{b}_A + \ones_B^\top\mathbf{S}_B^{-1}\mathbf{b}_B\bigr)$ and the within-$A$ composition is $\mathbf{S}_A^{-1}\mathbf{b}_A$ renormalized.
\end{theorem}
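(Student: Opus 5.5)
The plan is to solve the block linear system $\bSigma_C\mathbf{x}=\ones_C$ directly by block elimination, rather than to quote the block-inverse formula. First, positive definiteness of $\bSigma_C$ gives what is needed to make every step legal. The principal blocks $\bSigma_A$ and $\bSigma_B$ are positive definite, hence invertible. Both Schur complements $\mathbf{S}_A$ and $\mathbf{S}_B$ are also positive definite. One way to see this is the standard congruence
\[
\begin{pmatrix} I & -\bSigma_{AB}\bSigma_B^{-1}\\ 0 & I\end{pmatrix}\bSigma_C\begin{pmatrix} I & 0\\ -\bSigma_B^{-1}\bSigma_{BA} & I\end{pmatrix}=\begin{pmatrix}\mathbf{S}_A & 0\\ 0 & \bSigma_B\end{pmatrix}.
\]
The symmetric congruence with the roles of $A$ and $B$ swapped handles $\mathbf{S}_B$. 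Consequently every inverse appearing in the statement exists.

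Next, write $\mathbf{x}=(\mathbf{x}_A,\mathbf{x}_B)$. The system reads $\bSigma_A\mathbf{x}_A+\bSigma_{AB}\mathbf{x}_B=\ones_A$ and $\bSigma_{BA}\mathbf{x}_A+\bSigma_B\mathbf{x}_B=\ones_B$. Solving the second equation gives $\mathbf{x}_B=\bSigma_B^{-1}(\ones_B-\bSigma_{BA}\mathbf{x}_A)$. Substituting into the first yields
\[
(\bSigma_A-\bSigma_{AB}\bSigma_B^{-1}\bSigma_{BA})\mathbf{x}_A=\ones_A-\bSigma_{AB}\bSigma_B^{-1}\ones_B,
\]
that is, $\mathbf{S}_A\mathbf{x}_A=\mathbf{b}_A$, so $\mathbf{x}_A=\mathbf{S}_A^{-1}\mathbf{b}_A$. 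Eliminating in the opposite order, solving the first equation for $\mathbf{x}_A$ and substituting into the second, gives $\mathbf{x}_B=\mathbf{S}_B^{-1}\mathbf{b}_B$. Since $\bSigma_C$ is invertible, this solution is unique and equals $\bSigma_C^{-1}\ones_C$, which is \eqref{eq:schursplit}.

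Finally, the mass and composition statements follow from the normalization $\w_{\mathrm{GMV}}=\bSigma_C^{-1}\ones_C/\ones_C^\top\bSigma_C^{-1}\ones_C$. The denominator splits as $\ones_A^\top\mathbf{x}_A+\ones_B^\top\mathbf{x}_B$, and it is positive because $\bSigma_C^{-1}\succ 0$. The share of block $A$ is therefore $\ones_A^\top\mathbf{x}_A$ over this sum, and the within-$A$ weights are $\mathbf{x}_A$ renormalized. There is one caveat: the within-block renormalization requires $\ones_A^\top\mathbf{S}_A^{-1}\mathbf{b}_A\neq 0$, and I would state this explicitly as the condition under which the composition is defined.

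The main obstacle is really just bookkeeping. One must make sure the two eliminations are consistent, but uniqueness of the solution of an invertible system settles that immediately. Beyond that, all the care goes into justifying the invertibility of the Schur complements, which the congruence above provides.
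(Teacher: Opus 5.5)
Your proof is correct and is essentially the paper's argument: the paper reads off $(\bSigma_C^{-1})_{AA}=\mathbf{S}_A^{-1}$ and $(\bSigma_C^{-1})_{AB}=-\mathbf{S}_A^{-1}\bSigma_{AB}\bSigma_B^{-1}$ from the block-inverse formula and applies them to $\ones_C$, which is the same Schur-complement elimination you carry out by hand on $\bSigma_C\mathbf{x}=\ones_C$. You also make two points explicit that the paper only asserts or leaves implicit: the congruence argument showing $\mathbf{S}_A,\mathbf{S}_B\succ0$, and the caveat that the within-$A$ renormalization requires $\ones_A^\top\mathbf{S}_A^{-1}\mathbf{b}_A\neq0$.
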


\begin{proof}
By the block-inverse formula for $\bSigma_C \succ 0$ (both Schur complements are positive definite),
\[
\bigl(\bSigma_C^{-1}\bigr)_{AA} = \mathbf{S}_A^{-1}, \qquad
\bigl(\bSigma_C^{-1}\bigr)_{AB} = -\,\mathbf{S}_A^{-1}\bSigma_{AB}\bSigma_B^{-1},
\]
hence $(\bSigma_C^{-1}\ones_C)_A = \mathbf{S}_A^{-1}\ones_A - \mathbf{S}_A^{-1}\bSigma_{AB}\bSigma_B^{-1}\ones_B = \mathbf{S}_A^{-1}\mathbf{b}_A$; symmetrically for $B$.
\end{proof}

\begin{corollary}[The three substitutions; \textbf{P}]\label{cor:threesubs}
At every split, HRP computes its allocation from the triple $(\bSigma_K,\ \ones_K,\ \textnormal{inverse-variance weights})$ where the exact recursion \eqref{eq:schursplit} requires $(\mathbf{S}_K,\ \mathbf{b}_K,\ \textnormal{minimum-variance weights})$. The algorithm therefore commits exactly three substitutions:
(\emph{i}) raw block for Schur complement, $\bSigma_K \leftarrow \mathbf{S}_K$ --- discarding the variance reduction available from cross-hedging against the sibling cluster;
(\emph{ii}) flat budget for tilted budget, $\ones_K \leftarrow \mathbf{b}_K$ --- discarding the demand tilt induced by cross-cluster covariance;
(\emph{iii}) inverse variance for within-cluster minimum variance --- discarding within-cluster correlations in the composition and in the cluster-variance functional used by the split rule.
All three are exact when $\bSigma_{AB} = 0$ \emph{and} the within-cluster IV portfolio is the within-cluster GMV; the theorems below identify when that happens, and \citet{cotton2024} constructs the one-parameter family that reinstates (i)--(ii) continuously, interpolating between HRP and exact GMV, with the recursive Schur reduction developed further by \citet{mograby2025}.
\end{corollary}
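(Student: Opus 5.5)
The plan is to read the corollary off Theorem~\ref{thm:schur} by comparing, at a fixed split $C=A\cup B$, the quantities HRP uses in \eqref{eq:hrpsplit} with the quantities the exact recursion \eqref{eq:schursplit} uses. I will first recast the exact GMV split in the same ``inverse cluster variance'' form that HRP uses, so that the two rules can be matched term by term. Set $\mathbf{x}_K := \mathbf{S}_K^{-1}\mathbf{b}_K$ and $m_K := \ones_K^\top\mathbf{x}_K$ for $K\in\{A,B\}$. Theorem~\ref{thm:schur} says that the exact mass allocated to $A$ is $m_A/(m_A+m_B)$ and that the within-$A$ composition is $\mathbf{x}_A/m_A$.

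When $\bSigma_{AB}=0$ we have $\mathbf{S}_K=\bSigma_K$ and $\mathbf{b}_K=\ones_K$. In that case $m_K=\ones_K^\top\bSigma_K^{-1}\ones_K = 1/V_K^{\mathrm{GMV}}$, where $V_K^{\mathrm{GMV}}$ is the minimum variance attainable inside cluster $K$. The exact split then becomes $\alpha_A = V_B^{\mathrm{GMV}}/(V_A^{\mathrm{GMV}}+V_B^{\mathrm{GMV}})$, which has the same functional form as \eqref{eq:hrpsplit}.

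Reading the general formula against HRP's then identifies the three replacements directly. Substitution (i) is that HRP evaluates the cluster-risk functional on $\bSigma_K$ rather than on $\mathbf{S}_K$. The difference $\bSigma_A-\mathbf{S}_A=\bSigma_{AB}\bSigma_B^{-1}\bSigma_{BA}\succeq 0$ is exactly the variance removed by regressing $A$ on $B$, which is the cross-hedge. Substitution (ii) is that HRP uses $\ones_K$ where the exact recursion uses $\mathbf{b}_K$, and $\mathbf{b}_K-\ones_K=-\bSigma_{AB}\bSigma_B^{-1}\ones_B$ is the cross-covariance demand tilt. Substitution (iii) is that HRP uses the inverse-variance vector $\tilde\w_K$ both as the within-cluster composition and in $\widetilde V_K=\tilde\w_K^\top\bSigma_K\tilde\w_K$, where the exact recursion uses the renormalized $\mathbf{x}_K$ and $1/m_K$. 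Applying this comparison recursively at every node gives the ``at every split'' statement, because HRP's final weights are products of split fractions and within-singleton weights.

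For the exactness claim, assume $\bSigma_{AB}=0$ at a node and that the within-cluster IV portfolio coincides with the within-cluster GMV, that is $\tilde\w_K\propto\bSigma_K^{-1}\ones_K$. Then $\widetilde V_K=V_K^{\mathrm{GMV}}$, so the HRP split fraction equals the exact one and the compositions agree. Induction down the tree, with the same hypothesis at each node, gives equality of all weights.

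The main obstacle is interpreting the claim rather than carrying out the algebra. Substitution (iii) affects two objects at once, the composition and the variance functional, and exactness of both requires the IV $=$ GMV hypothesis inside every cluster, not only at the top split. I would state explicitly that the hypothesis is imposed nodewise. I would also note that the attribution to \citet{cotton2024} and \citet{mograby2025} is a citation, not something to be proved.
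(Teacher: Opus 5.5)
Your proposal is correct and matches the paper, which gives no separate proof and reads the corollary off Theorem~\ref{thm:schur}. As you do, setting $\bSigma_{AB}=0$ turns $(\mathbf{S}_K,\mathbf{b}_K)$ into $(\bSigma_K,\ones_K)$ and the exact split into $V_B^{\mathrm{GMV}}/(V_A^{\mathrm{GMV}}+V_B^{\mathrm{GMV}})$, and then IV $=$ GMV gives $\widetilde V_K=V_K^{\mathrm{GMV}}$ — the same computation the paper performs at the top split in Proposition~\ref{prop:hrpblocks}.

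One caution: HRP's actual within-cluster weights are its recursive output, not $\tilde\w_K$, so equality of final weights rests on your nodewise induction rather than on the single-node claim that the compositions agree. Imposed at every node, that hypothesis forces $\bSigma$ to be diagonal and recovers Theorem~\ref{thm:hrpdiag}. Proposition~\ref{prop:hrpblocks} shows the condition is sufficient but not necessary below the top split.
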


\subsection{Exact coincidence results}

\begin{lemma}[Harmonic aggregation; \textbf{P}]\label{lem:harmonic}
If $\bSigma_K$ is diagonal, the cluster variance in \eqref{eq:hrpsplit} equals
$\widetilde V_K = \bigl(\sum_{i\in K}\sigma_i^{-2}\bigr)^{-1} = 1/T_K$, $T_K := \sum_{i\in K}\sigma_i^{-2}$,
and the inverse-variance weights are exactly the GMV weights of $\bSigma_K$, with $\widetilde V_K = (\ones^\top\bSigma_K^{-1}\ones)^{-1}$.
\end{lemma}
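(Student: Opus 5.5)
The plan is a direct computation from the definitions in \eqref{eq:hrpsplit}, followed by identifying the result with the GMV portfolio of a diagonal block via Definition~\ref{def:rules}.

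First, write $\bSigma_K = \diag(\sigma_i^2)_{i\in K}$ and $\tilde w_{K,i} = \sigma_i^{-2}/T_K$. Then
\[
\widetilde V_K = \sum_{i\in K}\tilde w_{K,i}^2\sigma_i^2 = \sum_{i\in K}\frac{\sigma_i^{-4}\sigma_i^{2}}{T_K^2} = \frac{T_K}{T_K^2} = \frac{1}{T_K}.
\]
Cross terms vanish because $\bSigma_K$ is diagonal.

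Second, for the GMV identification, note that $\bSigma_K^{-1} = \diag(\sigma_i^{-2})$. Hence $\bSigma_K^{-1}\ones_K = (\sigma_i^{-2})_{i\in K}$ and $\ones_K^\top\bSigma_K^{-1}\ones_K = T_K$. The GMV weights $\bSigma_K^{-1}\ones_K/\ones_K^\top\bSigma_K^{-1}\ones_K$ are therefore exactly $\sigma_i^{-2}/T_K = \tilde w_{K,i}$. The GMV variance equals $(\ones_K^\top\bSigma_K^{-1}\ones_K)^{-1} = 1/T_K$, which matches $\widetilde V_K$. This last fact also follows in general by substituting $\w_{\mathrm{GMV}}$ into $\w^\top\bSigma\w$.

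There is no real obstacle. The only point needing care is keeping the normalization $T_K$ consistent between the two expressions.
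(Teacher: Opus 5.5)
Your proof is correct and follows the paper's argument essentially verbatim. Like the paper, you compute $\widetilde V_K=\sum_{i\in K}\tilde w_{K,i}^2\sigma_i^2=1/T_K$ directly using diagonality, then read off $\bSigma_K^{-1}\ones_K=(\sigma_i^{-2})_{i\in K}$ to identify the inverse-variance weights with the GMV weights and obtain $\ones_K^\top\bSigma_K^{-1}\ones_K=T_K$.
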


\begin{proof}
$\widetilde V_K = \sum_{i\in K}\tilde w_{K,i}^2\sigma_i^2 = \sum_{i\in K}\sigma_i^{-2}/T_K^2 = 1/T_K$; and $\bSigma_K^{-1}\ones = (\sigma_i^{-2})_{i\in K}$ gives the rest.
\end{proof}

\begin{theorem}[Diagonal exactness; \textbf{P}]\label{thm:hrpdiag}
If $\bSigma$ is diagonal, then for every dendrogram and every bisection scheme HRP returns the inverse-variance portfolio, which is the GMV portfolio:
$w^{\mathrm{HRP}}_i = \sigma_i^{-2}/\sum_j\sigma_j^{-2} = w^{\mathrm{GMV}}_i$.
\end{theorem}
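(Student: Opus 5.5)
The plan is an induction over the recursion depth of the bisection. The invariant is that after processing any cluster $C$, the total mass sitting on $C$ equals its inverse-variance mass share $T_C/T$, with $T := \sum_j \sigma_j^{-2}$ and $T_K := \sum_{i\in K}\sigma_i^{-2}$ as in Lemma~\ref{lem:harmonic}. The argument never uses which dendrogram produced the partition, nor whether the splits are balanced or tree-aligned. It only uses that every split is a partition $C = A\cup B$ of the current cluster into nonempty subsets. This is why the conclusion holds for every dendrogram and every bisection scheme.

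The key step is a single split. Since $\bSigma$ is diagonal, every principal block $\bSigma_K$ is diagonal. Lemma~\ref{lem:harmonic} then gives $\widetilde V_K = 1/T_K$ for $K = A, B$. Substituting into \eqref{eq:hrpsplit},
\[
\alpha_A = \frac{1/T_B}{1/T_A + 1/T_B} = \frac{T_A}{T_A + T_B} = \frac{T_A}{T_C},
\qquad \alpha_B = \frac{T_B}{T_C},
\]
using $T_A + T_B = T_C$ because $A$ and $B$ partition $C$. So each split divides the cluster's mass in proportion to the inverse-variance sums of the children.

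Now run the induction from the root, where the running mass is $1 = T/T$. If a cluster $C$ carries mass $T_C/T$, then after its split $A$ carries $(T_C/T)(T_A/T_C) = T_A/T$, and likewise for $B$. The recursion terminates at singletons $\{i\}$, so each leaf receives $T_{\{i\}}/T = \sigma_i^{-2}/\sum_j\sigma_j^{-2}$. Equivalently, the final weight is a telescoping product of ratios $T_{\text{child}}/T_{\text{parent}}$ along the root-to-leaf path, which collapses to $\sigma_i^{-2}/T$.

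The identification with GMV is immediate: $\bSigma^{-1}\ones = (\sigma_i^{-2})_i$, and normalizing gives $w^{\mathrm{GMV}}_i = \sigma_i^{-2}/T$. This is also Lemma~\ref{lem:harmonic} applied with $K$ equal to the whole universe. There is no real obstacle here. The only point needing care is to state the induction over an arbitrary binary partition process, rather than over a specific tree, so that independence from the dendrogram and bisection scheme is explicit. I would also note in passing that all three substitutions of Corollary~\ref{cor:threesubs} are vacuous in this setting: $\bSigma_{AB} = 0$ gives $\mathbf{S}_K = \bSigma_K$ and $\mathbf{b}_K = \ones_K$, and Lemma~\ref{lem:harmonic} makes inverse-variance weights equal to within-cluster GMV weights. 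This gives an alternative derivation through Theorem~\ref{thm:schur}.
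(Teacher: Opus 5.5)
Your proof is correct and follows the paper's own argument. Lemma~\ref{lem:harmonic} gives the split fraction $\alpha_A = T_A/T_C$ at every node, and the root-to-leaf product telescopes to $\sigma_i^{-2}/\sum_j\sigma_j^{-2}$, which is the diagonal GMV vector; your induction on running cluster mass is that telescoping product written out, and your remark that the three substitutions of Corollary~\ref{cor:threesubs} are identities matches the paper's closing ``equivalently'' sentence.
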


\begin{proof}
By Lemma~\ref{lem:harmonic}, at any split of $C$ into $A,B$,
$\alpha_A = \frac{1/T_B}{1/T_A + 1/T_B} = \frac{T_A}{T_A+T_B} = \frac{T_A}{T_C}$.
Asset $i$'s final weight is the product of fractions along its root-to-leaf path, which telescopes:
$w^{\mathrm{HRP}}_i = \prod \frac{T_{\mathrm{child}}}{T_{\mathrm{parent}}} = \frac{T_{\{i\}}}{T_{\mathrm{root}}} = \frac{\sigma_i^{-2}}{\sum_j\sigma_j^{-2}}$,
independent of the tree; GMV for diagonal $\bSigma$ is the same vector. Equivalently: $\bSigma_{AB} = 0$ everywhere and IV $=$ GMV in every cluster, so all three substitutions of Corollary~\ref{cor:threesubs} are identities.
\end{proof}

\begin{proposition}[Uncorrelated exchangeable blocks; \textbf{P}]\label{prop:hrpblocks}
Let $\bSigma$ be block-diagonal with two blocks, block $K \in \{A,B\}$ exchangeable:
$\bSigma_K = \sigma_K^2[(1-\rho_K)I + \rho_K\ones\ones^\top]$ with $n_K$ assets, $\rho_K > -1/(n_K-1)$, $n_K$ a power of two. Suppose the bisection is tree-aligned with the top split at the block boundary and balanced within blocks. Then HRP $=$ GMV exactly:
\[
w^{\mathrm{GMV}}_{i\in K} = \frac{m_K}{n_K(m_A+m_B)},
\qquad
m_K := \ones^\top\bSigma_K^{-1}\ones = \frac{n_K}{\sigma_K^2(1+(n_K-1)\rho_K)} .
\]
\end{proposition}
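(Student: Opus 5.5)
I will verify the claim in three stages: compute GMV directly, compute HRP's top split, and show that HRP's within-block recursion produces equal weights.

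\emph{GMV side.} Because $\bSigma$ is block-diagonal, $\bSigma^{-1}\ones$ splits blockwise into $\bSigma_A^{-1}\ones_A$ and $\bSigma_B^{-1}\ones_B$. This is Theorem~\ref{thm:schur} with $\bSigma_{AB}=0$, so $\mathbf{S}_K=\bSigma_K$ and $\mathbf{b}_K=\ones_K$. For an exchangeable block, $\ones$ is an eigenvector of $\bSigma_K$ with eigenvalue $\sigma_K^2(1+(n_K-1)\rho_K)>0$; positivity uses $\rho_K>-1/(n_K-1)$. Hence $\bSigma_K^{-1}\ones_K=\ones_K/[\sigma_K^2(1+(n_K-1)\rho_K)]$, which is constant within the block. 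Summing this vector gives $m_K$, and normalizing by $m_A+m_B$ gives $w^{\mathrm{GMV}}_{i\in K}=m_K/(n_K(m_A+m_B))$.

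\emph{HRP top split.} All assets in block $K$ have the same variance $\sigma_K^2$, so the inverse-variance weights $\tilde\w_K$ are uniform, $\ones_K/n_K$. This uniform vector is exactly the within-block GMV direction just computed, so substitution (iii) of Corollary~\ref{cor:threesubs} is harmless. The cluster variance is then
\[
\widetilde V_K=\ones^\top\bSigma_K\ones/n_K^2=\sigma_K^2(1+(n_K-1)\rho_K)/n_K=1/m_K.
\]
The top split is at the block boundary by hypothesis, so
\[
\alpha_A=\frac{\widetilde V_B}{\widetilde V_A+\widetilde V_B}=\frac{1/m_B}{1/m_A+1/m_B}=\frac{m_A}{m_A+m_B}.
\]
This is the GMV mass of block $A$, exactly as in the telescoping argument of Theorem~\ref{thm:hrpdiag}.

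\emph{Within-block recursion.} This is the step that needs the balance hypothesis. Every split inside block $K$ divides a sub-cluster into two halves of equal size $n$. Any principal $n\times n$ submatrix of an exchangeable block is again exchangeable with the same $(\sigma_K,\rho_K)$. By the same computation, both halves therefore have $\widetilde V=\sigma_K^2(1+(n-1)\rho_K)/n$, which is identical for the two halves, so each split fraction is $1/2$. Because $n_K$ is a power of two, balanced bisection can be carried all the way down to singletons. The product of $\log_2 n_K$ halvings is $1/n_K$, so $w^{\mathrm{HRP}}_{i\in K}=\alpha_K/n_K=m_K/(n_K(m_A+m_B))$, which matches the GMV weights.

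The only delicate point is this last step. Balance is what makes the sibling variances equal. With unbalanced splits, the fractions would be proportional to the sub-cluster masses $n/(1+(n-1)\rho_K)$ rather than to cardinality, and the product of fractions would no longer telescope to $1/n_K$ unless $\rho_K=0$.
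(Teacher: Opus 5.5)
Your proposal is correct and follows essentially the same argument as the paper. Equal volatilities make the inverse-variance weights uniform, so $\widetilde V_K = 1/m_K$ gives the top split $\alpha_A = m_A/(m_A+m_B)$; balance forces $1/2$--$1/2$ splits within each block; and block-diagonality plus the eigenvector property of $\ones$ makes block GMV equal-weighted with masses $m_K/(m_A+m_B)$ (you merely order these steps differently, and your closing remark that unbalanced splits give fractions proportional to $n/(1+(n-1)\rho_K)$ is a correct addition the paper leaves implicit).
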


\begin{proof}
\emph{Within blocks.} Equal volatilities make inverse-variance weights on any sub-cluster equal weights. A sub-cluster of size $k$ inside block $K$ has cluster variance at equal weights
$\widetilde V(k) = \sigma_K^2(1+(k-1)\rho_K)/k$.
Balanced bisection gives both children equal size, hence equal $\widetilde V$, hence $1/2$--$1/2$ splits; recursing down $n_K = 2^q$ yields equal weights $1/n_K$ within the block. Since $\bSigma_K\ones = \sigma_K^2(1+(n_K-1)\rho_K)\ones$, block GMV is also equal weights.
\emph{Across blocks.} At the top split, $\widetilde V(n_K) = \sigma_K^2(1+(n_K-1)\rho_K)/n_K = 1/m_K$, so $\alpha_A = m_A/(m_A+m_B)$, which is the GMV mass on $A$: block-diagonality stacks $\bSigma_A^{-1}\ones$ and $\bSigma_B^{-1}\ones$, with block masses $m_K/(m_A+m_B)$. (In the language of Corollary~\ref{cor:threesubs}: $\bSigma_{AB}=0$ makes (i)--(ii) identities at the top split; within blocks the substitutions are not identities individually, but exchangeability plus balance makes their composition weight-neutral.)
\end{proof}

\begin{remark}[Minimal counterexample; \textbf{P}]\label{rem:hrpcounter}
Exactness fails as soon as a split must absorb within-cluster correlation with unequal volatilities. For $N = 2$, correlation $\rho$, volatilities $\sigma_1 \neq \sigma_2$: HRP ignores $\rho$ (the children are singletons) and returns
$w^{\mathrm{HRP}}_1 = \sigma_2^2/(\sigma_1^2+\sigma_2^2)$, while
$w^{\mathrm{GMV}}_1 = (\sigma_2^2 - \rho\sigma_1\sigma_2)/(\sigma_1^2+\sigma_2^2-2\rho\sigma_1\sigma_2)$.
These coincide iff $\rho = 0$ or $\sigma_1 = \sigma_2$ (when $\sigma_1=\sigma_2$ both equal $1/2$ for every $\rho$). The hypotheses of Theorem~\ref{thm:hrpdiag} and Proposition~\ref{prop:hrpblocks} --- zero correlation wherever volatilities differ within a cluster, equal volatilities wherever correlation survives --- are not artifacts of the proofs but the boundary of exactness, and the deviation is exactly substitution (iii) of Corollary~\ref{cor:threesubs} at the bottom split.
\end{remark}

\begin{corollary}[Mean--variance optimality of HRP; \textbf{P}]\label{cor:hrpmv}
Under the hypotheses of Theorem~\ref{thm:hrpdiag} or Proposition~\ref{prop:hrpblocks}, HRP coincides with the tangency portfolio iff, in addition, $\bmu_e \propto \ones$ (Theorem~\ref{thm:gmv}). Under the diagonal hypothesis this reads: equal premia and uncorrelated assets, in which case HRP $=$ IVar $=$ GMV $=$ tangency simultaneously.
\end{corollary}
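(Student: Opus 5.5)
The plan is to reduce the corollary to two earlier results, chained together. Under either hypothesis, Theorem~\ref{thm:hrpdiag} or Proposition~\ref{prop:hrpblocks} tells us that the HRP output equals the GMV portfolio, $\w^{\mathrm{HRP}} = \w_{\mathrm{GMV}} = \bSigma^{-1}\ones/\ones^\top\bSigma^{-1}\ones$. This is a pure identity of weight vectors and does not involve $\bmu_e$. Tangency coincidence of HRP is therefore the same event as tangency coincidence of GMV.

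By Theorem~\ref{thm:gmv}, GMV coincides with the tangency portfolio if and only if $\bmu_e = m\ones$ with $m>0$. One direction is then immediate. If $\bmu_e = m\ones$, then $\w_T \propto \bSigma^{-1}\ones$, so $\w_T = \w_{\mathrm{GMV}} = \w^{\mathrm{HRP}}$. For the converse, suppose $\w^{\mathrm{HRP}} = \w_T$. Then $\w_{\mathrm{GMV}} = \w_T$, and Theorem~\ref{thm:gmv} forces equal premia.

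There is one technical point. Definition~\ref{def:tangency} presupposes $\ones^\top\bSigma^{-1}\bmu_e>0$, and this is exactly what makes the positive proportionality constant, so that $m>0$ rather than merely $\bmu_e\propto\ones$. Under equal premia this quantity equals $m\,\ones^\top\bSigma^{-1}\ones$, which has the sign of $m$. I would also double-check the equivalent route through Theorem~\ref{thm:foc}(i): coincidence holds iff $\bmu_e = a\bSigma\w_{\mathrm{GMV}} = (a/\ones^\top\bSigma^{-1}\ones)\ones$. This route shows directly that the constant is positive.

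For the diagonal restatement, I would substitute $\bSigma = \diag(\sigma_i^2)$. This gives $\w_{\mathrm{GMV}} = (\sigma_i^{-2})_i/\sum_j\sigma_j^{-2}$, which is the IVar portfolio; this is already contained in Theorem~\ref{thm:hrpdiag} and Lemma~\ref{lem:harmonic}. Combining this with the equivalence above yields $\mathrm{HRP} = \mathrm{IVar} = \mathrm{GMV} = \w_T$ exactly when premia are equal and the assets are uncorrelated.

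I expect no substantive obstacle. The only care needed is to note that the HRP weights depend on $\bSigma$ alone, so fixing $\bSigma$ and varying $\bmu_e$ is legitimate, and to track the sign of the constant as described above.
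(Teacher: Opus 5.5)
Your proposal is correct and follows the paper's own argument: Theorem~\ref{thm:hrpdiag} or Proposition~\ref{prop:hrpblocks} gives $\w^{\mathrm{HRP}}=\w_{\mathrm{GMV}}$ as an identity that does not involve $\bmu_e$, and Theorem~\ref{thm:gmv} then turns tangency coincidence into equal premia, with the diagonal case reading HRP $=$ IVar $=$ GMV via Lemma~\ref{lem:harmonic}. Tracking the sign of the constant through $\ones^\top\bSigma^{-1}\bmu_e>0$ is a harmless extra, since it only makes explicit that $m>0$.
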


\begin{remark}[What HRP is actually for; \textbf{E}]\label{rem:hrpe}
None of the above is the operative defense of HRP. \citet{lopezdeprado2016} motivates the algorithm by the instability of $\hat\bSigma^{-1}$ under estimation error --- quasi-diagonalization and recursion avoid inverting an ill-conditioned matrix --- and reports Monte Carlo evidence of superior out-of-sample variance relative to plug-in minimum variance. That defense is empirical and belongs to the information-set logic of Sections~\ref{sec:bayes}--\ref{sec:ambiguity}, not to exact optimality; Corollary~\ref{cor:threesubs} gives it structure by listing precisely which estimated objects ($\bSigma_{AB}$, within-cluster correlations) HRP declines to trust.
\end{remark}

\subsection{Return-adjusted hierarchical risk parity}

Standard HRP is deliberately risk-only. RA-HRP keeps the clustering and quasi-diagonalization stages fixed, but replaces the inverse-variance split score by a local cluster-Sharpe score. This makes the rule materially different from the flat heuristics above: its weight vector is a function of both $\bSigma$ and $\bmu_e$, so the implied-return condition is no longer a fixed ray in $\bmu_e$ for fixed $\bSigma$. It is a nonlinear fixed point.

\begin{definition}[Fixed-tree RA-HRP; \textbf{P}, notation following \citealp{noguer2026rahrp}]\label{def:rahrp}
Fix a binary dendrogram $\mathcal T$ and a score floor $\varepsilon>0$. For every nonempty cluster $K\subset\{1,\dots,N\}$ define the inverse-variance portfolio
\[
\tilde w_{K,i}=\frac{\sigma_i^{-2}}{\sum_{j\in K}\sigma_j^{-2}},\qquad i\in K,
\]
its local mean, variance, and floored Sharpe score by
\begin{equation}\label{eq:ra-scores}
M_K := \sum_{i\in K}\tilde w_{K,i}\mu_{e,i},\qquad
V_K := \tilde\w_K^\top\bSigma_K\tilde\w_K,\qquad
Q_K := \max\left\{\frac{M_K}{\sqrt{V_K}},\varepsilon\right\}.
\end{equation}
At an internal node $C=A\cup B$, RA-HRP allocates the parent budget by
\begin{equation}\label{eq:ra-split}
\beta_A^{\mathrm{RA}}=\frac{Q_A}{Q_A+Q_B},\qquad
\beta_B^{\mathrm{RA}}=1-\beta_A^{\mathrm{RA}}=\frac{Q_B}{Q_A+Q_B}.
\end{equation}
The final portfolio obtained by multiplying these branch fractions along root-to-leaf paths is denoted
\[
\w^{\mathrm{RA}}=\Phi^{\mathrm{RA}}_{\mathcal T,\varepsilon}(\bmu_e,\bSigma).
\]
When the raw score $M_K/\sqrt{V_K}$ exceeds $\varepsilon$ for every cluster, the floor is inactive and the split is exactly a cluster-Sharpe split.
\end{definition}

\begin{theorem}[Well-posedness and lower bounds; \textbf{P}]\label{thm:rahrp-wellposed}
For fixed $(\bmu_e,\bSigma,\mathcal T,\varepsilon)$ with $\bSigma\succ0$ and $\varepsilon>0$, RA-HRP defines a unique portfolio $\w^{\mathrm{RA}}\in\Delta^{N-1}$. If $D_{\mathcal T}$ is the maximum root-to-leaf depth and $M_Q:=\max_{K\in\mathcal T} Q_K$, then every leaf satisfies
\begin{equation}\label{eq:ra-lower}
w_i^{\mathrm{RA}}\ge \left(\frac{\varepsilon}{\varepsilon+M_Q}\right)^{D_{\mathcal T}}.
\end{equation}
On any region where the active/inactive status of the floor does not change and all cluster variances remain positive, the map $(\bmu_e,\bSigma)\mapsto \Phi^{\mathrm{RA}}_{\mathcal T,\varepsilon}(\bmu_e,\bSigma)$ is smooth.
\end{theorem}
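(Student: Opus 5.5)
The plan is to work node by node and then compose along the tree. For well-posedness, I first check that every cluster score is a finite positive number. For each nonempty $K$, the weights $\tilde w_{K,i}$ are well defined and strictly positive because $\sigma_i^2=\Sigma_{ii}>0$ under $\bSigma\succ0$. Hence $\tilde\w_K\neq0$, and $V_K=\tilde\w_K^\top\bSigma_K\tilde\w_K>0$ since the principal submatrix $\bSigma_K$ is positive definite. So $M_K/\sqrt{V_K}$ is a finite real number and $Q_K\ge\varepsilon>0$. At each internal node $C=A\cup B$ the split fractions \eqref{eq:ra-split} therefore lie strictly in $(0,1)$ and sum to one.

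The leaf weight $w_i^{\mathrm{RA}}$ is the product of these fractions along the unique root-to-leaf path in the binary tree $\mathcal T$, so $\w^{\mathrm{RA}}$ is determined uniquely and is strictly positive. To show it lies in $\Delta^{N-1}$, I will prove by induction on the tree, from the leaves upward, that the leaf masses under any node sum to that node's budget. The inductive step is $\beta_A+\beta_B=1$. At the root the budget is $1$, so $\ones^\top\w^{\mathrm{RA}}=1$.

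For the lower bound, I bound each factor on the path: $\beta_A^{\mathrm{RA}}=Q_A/(Q_A+Q_B)\ge \varepsilon/(\varepsilon+M_Q)$. This holds because $x/(x+y)$ is increasing in $x$ and decreasing in $y$, while $Q_A\ge\varepsilon$ and $Q_B\le M_Q$. A path to leaf $i$ has depth $d_i\le D_{\mathcal T}$ and each factor is at most $1$. Since the base $\varepsilon/(\varepsilon+M_Q)$ lies in $(0,1)$, I get $w_i^{\mathrm{RA}}\ge(\varepsilon/(\varepsilon+M_Q))^{d_i}\ge(\varepsilon/(\varepsilon+M_Q))^{D_{\mathcal T}}$, which is \eqref{eq:ra-lower}.

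For smoothness, I treat $\mathcal T$ as fixed, so the map is a finite composition. First, $(\bmu_e,\bSigma)\mapsto(\tilde\w_K,M_K,V_K)$ is rational in the entries with nonvanishing denominators $\sum_{j\in K}\sigma_j^{-2}$, so it is smooth wherever $\Sigma_{ii}>0$. Next, $V_K\mapsto V_K^{-1/2}$ is smooth on $V_K>0$. The only nonsmooth ingredient is the $\max\{\cdot,\varepsilon\}$ in $Q_K$. On a region where the active/inactive status of each floor is fixed, $Q_K$ coincides with one smooth branch, either $M_K/\sqrt{V_K}$ or the constant $\varepsilon$. The step needing care is the interpretation of ``status does not change'': I will read it as an open region on which, for each $K$, either $M_K/\sqrt{V_K}>\varepsilon$ throughout or $M_K/\sqrt{V_K}<\varepsilon$ throughout (or the floor is identically active), so that the branch is locally constant and no kink is crossed. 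Then the splits $Q_A/(Q_A+Q_B)$ are smooth because their denominators are at least $2\varepsilon$. The leaf weights are finite products of these splits, hence smooth.
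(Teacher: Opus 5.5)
Your proposal is correct and takes the same route as the paper's proof. The floored scores $Q_K\ge\varepsilon$ put every split in $(0,1)$ with the two fractions summing to one. Each branch fraction is at least $\varepsilon/(\varepsilon+M_Q)$ along a path of length at most $D_{\mathcal T}$, giving the lower bound. Smoothness follows by composing smooth maps with positive denominators on a fixed floor regime. Your extra checks are ones the paper leaves implicit: $V_K>0$ follows from $\bSigma\succ0$, $x/(x+y)$ is monotone in each argument, and the floor regime should be read as an open region.
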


\begin{proof}
At every node, $Q_A,Q_B\ge\varepsilon$, so the denominator in \eqref{eq:ra-split} is strictly positive and
\[
\frac{\varepsilon}{\varepsilon+M_Q}\le \beta_A^{\mathrm{RA}},\beta_B^{\mathrm{RA}}\le \frac{M_Q}{\varepsilon+M_Q}<1.
\]
The tree has finitely many nodes, and the recursive multiplication of deterministic split fractions therefore gives a unique strictly positive leaf weight. Budgets are conserved at every split because $\beta_A^{\mathrm{RA}}+\beta_B^{\mathrm{RA}}=1$, so the leaf weights sum to one. The lower bound follows by multiplying the worst possible branch fraction along a path of length at most $D_{\mathcal T}$. Smoothness on a fixed floor regime follows because the IV weights, $M_K$, $V_K$, $Q_K$, and the rational split map are compositions of smooth functions with positive denominators.
\end{proof}

\begin{definition}[Unit-free HRP--RA-HRP interpolation; \textbf{P}]
At an internal node $C=A\cup B$, write the standard HRP left-branch fraction and the RA-HRP left-branch fraction as
\begin{equation}\label{eq:unitfree-splits}
r_A:=\frac{V_B}{V_A+V_B},\qquad
s_A:=\frac{Q_A}{Q_A+Q_B},
\end{equation}
where $V_K$ and $Q_K$ are the cluster IVP variance and floored cluster-Sharpe score from \eqref{eq:ra-scores}. For $\lambda\in[0,1]$, define the interpolated local split
\begin{equation}\label{eq:unitfree-interp}
\beta_A^{(\lambda)}:=(1-\lambda)r_A+\lambda s_A,\qquad
\beta_B^{(\lambda)}:=1-\beta_A^{(\lambda)}.
\end{equation}
Applying \eqref{eq:unitfree-interp} recursively on the same tree defines the interpolated portfolio $\w^{(\lambda)}$.
\end{definition}

\begin{proposition}[Properties of the unit-free interpolation; \textbf{P}]\label{prop:unitfree-interp}
For the family \eqref{eq:unitfree-interp}: (i) $\lambda=0$ gives HRP; (ii) $\lambda=1$ gives RA-HRP; (iii) $\lambda\mapsto\beta_A^{(\lambda)}$ is affine and continuous on $[0,1]$; and (iv)
\begin{equation}\label{eq:unitfree-deriv}
\frac{d}{d\lambda}\beta_A^{(\lambda)}=s_A-r_A.
\end{equation}
Consequently the local path is monotone whenever the HRP and RA-HRP branch fractions differ.
\end{proposition}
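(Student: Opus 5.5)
The plan is to read each claim directly off the definition \eqref{eq:unitfree-interp}, node by node, and then pass from local splits to leaf weights using the fact that the same fixed tree $\mathcal T$ and the same recursive multiplication along root-to-leaf paths define HRP (Definition~\ref{def:hrp}), RA-HRP (Definition~\ref{def:rahrp}), and $\w^{(\lambda)}$.

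For (i) and (ii), I will evaluate \eqref{eq:unitfree-interp} at the endpoints. At $\lambda=0$ it gives $\beta_A^{(0)}=r_A=V_B/(V_A+V_B)$. This is exactly the HRP fraction $\alpha_A$ of \eqref{eq:hrpsplit}, because $V_K$ in \eqref{eq:ra-scores} is the same IVP cluster variance $\widetilde V_K$. At $\lambda=1$ it gives $\beta_A^{(1)}=s_A=\beta_A^{\mathrm{RA}}$ from \eqref{eq:ra-split}. In both cases $\beta_B^{(\lambda)}=1-\beta_A^{(\lambda)}$ matches the complementary fraction. The local splits therefore agree with HRP (respectively RA-HRP) at every internal node. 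Since leaf weights are products of branch fractions along root-to-leaf paths in the same tree, $\w^{(0)}=\w^{\mathrm{HRP}}$ and $\w^{(1)}=\w^{\mathrm{RA}}$.

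The one point needing care is that HRP and the interpolation recurse on the same tree. I will state explicitly that HRP is run in its tree-aligned form on $\mathcal T$, which is the fixed-tree convention of Standing Assumption A5. Under a different bisection scheme the identification in (i) would hold only for that scheme's tree. This bookkeeping is the only real obstacle; everything else is immediate.

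For (iii) and (iv), note that $r_A$ and $s_A$ depend only on $(\bmu_e,\bSigma,\mathcal T,\varepsilon)$ and not on $\lambda$. Hence $\lambda\mapsto(1-\lambda)r_A+\lambda s_A=r_A+\lambda(s_A-r_A)$ is affine, so it is continuous, and its derivative is the constant $s_A-r_A$. This is \eqref{eq:unitfree-deriv}.

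For monotonicity, if $s_A\neq r_A$ the local path is strictly increasing when $s_A>r_A$ and strictly decreasing when $s_A<r_A$, and the same holds for $\beta_B^{(\lambda)}$ in the opposite direction. I will also note that $\beta_A^{(\lambda)}$ is a convex combination of two numbers in $(0,1)$: $r_A\in(0,1)$ because $V_K>0$ under $\bSigma\succ0$, and $s_A\in(0,1)$ by Theorem~\ref{thm:rahrp-wellposed}. Every interpolated split is therefore a valid branch probability, and $\w^{(\lambda)}$ is well defined in the simplex for all $\lambda\in[0,1]$.
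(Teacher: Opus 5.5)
Your proposal is correct and follows the paper's proof: you substitute the endpoints into \eqref{eq:unitfree-interp}, use that the map is affine in $\lambda$ with constant slope $s_A-r_A$, and get feasibility from $r_A,s_A\in(0,1)$. Your extra bookkeeping makes explicit what the paper leaves implicit, namely identifying $V_K$ with $\widetilde V_K$, requiring HRP to be run tree-aligned on the same $\mathcal T$, and passing from local splits to leaf weights via root-to-leaf products. It does not change the argument.
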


\begin{proof}
Substitution of $\lambda=0$ and $\lambda=1$ gives the two endpoints. The map in \eqref{eq:unitfree-interp} is an affine function of $\lambda$, hence continuous and differentiable, with derivative \eqref{eq:unitfree-deriv}. Since both $r_A$ and $s_A$ lie in $(0,1)$, every interpolated split remains feasible.
\end{proof}

\begin{remark}[Why the interpolation is economically clean]\label{rem:unitfree-why}
The interpolation is performed on branch probabilities, not on raw scores. This matters because inverse variances and Sharpe ratios have different units and different estimation errors. Blending $r_A$ and $s_A$ gives $\lambda$ a literal interpretation: it is the local weight placed on the return-adjusted allocation decision relative to the risk-only decision. This is the correct homotopy for empirical sensitivity analysis and for any future theorem that treats RA-HRP as a controlled perturbation of HRP.
\end{remark}

\begin{theorem}[Exact RA-HRP tangency coincidence; \textbf{P}]\label{thm:rahrp-coincidence}
Fix $(\mathcal T,\varepsilon)$ and define
\[
\w^{\mathrm{RA}}=\Phi^{\mathrm{RA}}_{\mathcal T,\varepsilon}(\bmu_e,\bSigma).
\]
Assume $\ones^\top\bSigma^{-1}\bmu_e>0$. Then the following are equivalent:
\begin{enumerate}
\item[(i)] RA-HRP is the tangency portfolio: $\w^{\mathrm{RA}}=\w_T$.
\item[(ii)] There exists $a>0$ such that
\begin{equation}\label{eq:ra-fixedpoint}
\bmu_e = a\,\bSigma\,\Phi^{\mathrm{RA}}_{\mathcal T,\varepsilon}(\bmu_e,\bSigma).
\end{equation}
\item[(iii)] RA-HRP has zero alpha against itself: for every asset $i$,
\[
\mu_{e,i}=\beta_i(\w^{\mathrm{RA}})\,\bigl(\w^{\mathrm{RA}\top}\bmu_e\bigr).
\]
\end{enumerate}
Moreover, at any split $C=A\cup B$ of an exact tangency recursion, with local covariance--premium pair $(\Gamma_C,\nu_C)$, write
\[
\Gamma_C=\begin{pmatrix}\Gamma_A&\Gamma_{AB}\\\Gamma_{BA}&\Gamma_B\end{pmatrix},\qquad
S_A=\Gamma_A-\Gamma_{AB}\Gamma_B^{-1}\Gamma_{BA},\qquad
S_B=\Gamma_B-\Gamma_{BA}\Gamma_A^{-1}\Gamma_{AB},
\]
\[
\nu_A^{\mathrm{Sch}}=\nu_A-\Gamma_{AB}\Gamma_B^{-1}\nu_B,
\qquad
\nu_B^{\mathrm{Sch}}=\nu_B-\Gamma_{BA}\Gamma_A^{-1}\nu_A,
\]
and
\begin{equation}\label{eq:true-split-ra}
L_A:=\ones_A^\top S_A^{-1}\nu_A^{\mathrm{Sch}},\qquad
L_B:=\ones_B^\top S_B^{-1}\nu_B^{\mathrm{Sch}}.
\end{equation}
When $L_A,L_B>0$, the exact tangency split is $L_A/(L_A+L_B)$. Hence the RA-HRP split is locally exact at that node if and only if
\begin{equation}\label{eq:ra-node-exact}
\frac{Q_A}{Q_B}=\frac{L_A}{L_B}.
\end{equation}
Full tree coincidence holds if and only if \eqref{eq:ra-node-exact} holds at every internal node when the right-hand side is computed from the Schur-updated conditional pair generated by the exact tangency recursion.
\end{theorem}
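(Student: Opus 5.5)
The equivalences (i)$\Leftrightarrow$(ii)$\Leftrightarrow$(iii) follow by applying the earlier results pointwise, and the nodewise statement follows from a tangency analogue of the Schur bisection identity. For fixed $(\bmu_e,\bSigma)$, Theorem~\ref{thm:rahrp-wellposed} shows that $\w^{\mathrm{RA}}=\Phi^{\mathrm{RA}}_{\mathcal T,\varepsilon}(\bmu_e,\bSigma)$ is a single, well-defined, fully invested vector in $\Delta^{N-1}$. So I will freeze it as $\w^\circ:=\w^{\mathrm{RA}}$ and invoke Theorem~\ref{thm:foc}(i) with this $\w^\circ$. That gives (i)$\Leftrightarrow$(ii) immediately, because the fixed-point equation \eqref{eq:ra-fixedpoint} is literally $\bmu_e=a\bSigma\w^\circ$ with $\w^\circ$ evaluated at the same parameter pair. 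The nonlinearity only affects how one \emph{describes} the set of such pairs, not the pointwise equivalence. For (ii)$\Leftrightarrow$(iii) I will use Remark~\ref{rem:beta}. If $\bmu_e=a\bSigma\w^\circ$, then $\w^{\circ\top}\bmu_e=aA>0$, and dividing yields the zero-alpha relation. Conversely, zero alpha gives $\bmu_e=(\w^{\circ\top}\bmu_e/A)\,\bSigma\w^\circ$. Positivity of this scalar follows from $\ones^\top\bSigma^{-1}\bmu_e=(\w^{\circ\top}\bmu_e/A)\,\ones^\top\w^\circ$ together with the standing assumption $\ones^\top\bSigma^{-1}\bmu_e>0$.

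For the nodewise statement I will redo Theorem~\ref{thm:schur} with $\nu_C$ in place of $\ones_C$. The block-inverse formula gives $(\Gamma_C^{-1}\nu_C)_A=S_A^{-1}(\nu_A-\Gamma_{AB}\Gamma_B^{-1}\nu_B)=S_A^{-1}\nu_A^{\mathrm{Sch}}$, and symmetrically for $B$. The unnormalized local tangency vector therefore assigns masses $L_A$ and $L_B$ to the two children. When $L_A,L_B>0$ the normalized split fraction is $L_A/(L_A+L_B)$. The RA-HRP fraction $Q_A/(Q_A+Q_B)$ agrees with it iff $Q_A/Q_B=L_A/L_B$, since $t\mapsto t/(1+t)$ is injective on $(0,\infty)$ and both $Q$'s are positive by the floor.

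For the full-tree statement I will argue by induction down the tree. I define the exact tangency recursion as follows. At the root the pair is $(\bSigma,\bmu_e)$, and each child $A$ inherits the conditional pair $(S_A,\nu_A^{\mathrm{Sch}})$. Within-child tangency composition is then $S_A^{-1}\nu_A^{\mathrm{Sch}}$, which by the block identity equals the restriction of $\bSigma^{-1}\bmu_e$ to $A$. Hence $w_{T,i}$ is the product, along the root-to-leaf path, of the normalized exact splits computed from these conditional pairs. This is the same telescoping used in Theorem~\ref{thm:hrpdiag}.

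The main obstacle is the recursion step. I must verify that applying the Schur identity to $(S_A,\nu_A^{\mathrm{Sch}})$ at the next node reproduces the restriction of the global tangency vector, which is a nested-Schur-complement (quotient formula) argument. I will handle it by noting that $S_A^{-1}$ is exactly the $AA$ block of $\bSigma_C^{-1}$, so that iterating block inversion on $S_A$ is consistent with inverting the full matrix. Once that is established, RA-HRP and $\w_T$ are both products of branch fractions on the same tree. Sufficiency then follows because equal fractions at every node give equal products. Necessity follows by induction from the root: the mass of a cluster is the sum of its leaf weights, so equal leaf weights force equal cluster masses and hence equal conditional fractions at each node. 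A degenerate case arises when some $L_K\le 0$ while $\w_T$ is still fully invested. Then $\w_T$ has nonpositive mass on $K$ and cannot equal the strictly positive $\w^{\mathrm{RA}}$, so coincidence fails on both sides of the equivalence and the stated condition is consistent.
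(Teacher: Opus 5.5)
Your proposal is correct and follows essentially the same route as the paper: apply Theorem~\ref{thm:foc}(i) pointwise to the frozen vector $\w^{\mathrm{RA}}$, use the beta form for (iii), redo the block-inverse identity with $\nu_C$ in place of $\ones_C$, and match root-to-leaf products of split fractions. Your positivity check in (iii)$\Rightarrow$(ii) and your treatment of nonpositive subtree masses are refinements the paper leaves implicit; on the nodewise side of that degenerate case, note that the first node on the root-to-$K$ path where the mass turns nonpositive has $L_A/L_B\le 0<Q_A/Q_B$. The recursion step you flag as the main obstacle is immediate, as the paper notes: $S_A^{-1}\nu_A^{\mathrm{Sch}}=x_A$ is the tangency direction of the positive definite pair $(S_A,\nu_A^{\mathrm{Sch}})$, and the block identity holds for any such pair, so applying it at the next node directly decomposes the restriction of $\bSigma^{-1}\bmu_e$ with no quotient formula required.
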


\begin{proof}
The equivalence of (i) and (ii) is Theorem~\ref{thm:foc}(i), with the important difference that the candidate $\w^{\mathrm{RA}}$ itself depends on $(\bmu_e,\bSigma)$. The equivalence of (i) and (iii) is the beta form in Remark~\ref{rem:beta}.

For the split statement, the unnormalized tangency direction on a local pair $(\Gamma_C,\nu_C)$ is $x_C=\Gamma_C^{-1}\nu_C$. The block-inverse identity gives
\[
x_A=S_A^{-1}\bigl(\nu_A-\Gamma_{AB}\Gamma_B^{-1}\nu_B\bigr),\qquad
x_B=S_B^{-1}\bigl(\nu_B-\Gamma_{BA}\Gamma_A^{-1}\nu_A\bigr).
\]
Therefore the unnormalized capital masses assigned by the exact tangency direction to the two children are $L_A=\ones_A^\top x_A$ and $L_B=\ones_B^\top x_B$. If both are positive, the exact local budget share of $A$ is $L_A/(L_A+L_B)$. The RA-HRP local share is $Q_A/(Q_A+Q_B)$; equality is equivalent to the ratio identity \eqref{eq:ra-node-exact}. Recursing is legitimate because $x_A=S_A^{-1}\nu_A^{\mathrm{Sch}}$ is itself the tangency direction of the Schur-updated pair $(S_A,\nu_A^{\mathrm{Sch}})$ inside child $A$, and analogously for $B$. Matching every local split on the tree is equivalent to matching every root-to-leaf product of split fractions, hence to full portfolio coincidence.
\end{proof}

\begin{remark}[What RA-HRP fixes and what it does not]\label{rem:rahrp-substitutions}
RA-HRP repairs one weakness of standard HRP: it no longer allocates solely by inverse cluster variance, and therefore it can tilt toward clusters with superior estimated reward per unit of local risk. But Theorem~\ref{thm:rahrp-coincidence} shows that the exact Markowitz split is not a raw cluster-Sharpe split. The optimizer uses Schur-adjusted covariance $S_A,S_B$ and Schur-adjusted premia $\nu_A^{\mathrm{Sch}},\nu_B^{\mathrm{Sch}}$. RA-HRP therefore replaces HRP's purely risk-based local score by a return-aware local score, while still avoiding the exact Schur and global-inversion machinery. Its optimality set is larger and more economically flexible than HRP's equal-premium GMV set, but it remains a restrictive nonlinear fixed-point set rather than a generic property.
\end{remark}

\begin{definition}[Schur-RA-HRP local score; \textbf{P}]\label{def:schur-rahrp}
This is the local conditional-risk extension of RA-HRP developed in \citet{noguer2026rahrp}. At a sibling split $C=A\cup B$, partition the local covariance matrix as
\[
\bSigma_C=\begin{pmatrix}\bSigma_{AA}&\bSigma_{AB}\\ \bSigma_{BA}&\bSigma_{BB}\end{pmatrix}.
\]
Define the conditional Schur covariance blocks
\begin{equation}\label{eq:schur-cond-blocks}
\bSigma_{A\mid B}:=\bSigma_{AA}-\bSigma_{AB}\bSigma_{BB}^{-1}\bSigma_{BA},\qquad
\bSigma_{B\mid A}:=\bSigma_{BB}-\bSigma_{BA}\bSigma_{AA}^{-1}\bSigma_{AB}.
\end{equation}
Using the same within-cluster IVP weights $\tilde\w_A,\tilde\w_B$ and the same cluster means $M_A,M_B$, set
\begin{equation}\label{eq:schur-ra-score}
V_{A\mid B}^{\mathrm{Sch}}:=\tilde\w_A^\top\bSigma_{A\mid B}\tilde\w_A,
\qquad
Q_{A\mid B}^{\mathrm{Sch}}:=\max\left\{\frac{M_A}{\sqrt{V_{A\mid B}^{\mathrm{Sch}}}},\varepsilon\right\},
\end{equation}
and analogously for $B\mid A$. The Schur-RA-HRP split is
\begin{equation}\label{eq:schur-ra-split}
\beta_A^{\mathrm{SchRA}}=
\frac{Q_{A\mid B}^{\mathrm{Sch}}}{Q_{A\mid B}^{\mathrm{Sch}}+Q_{B\mid A}^{\mathrm{Sch}}},
\qquad
\beta_B^{\mathrm{SchRA}}=1-\beta_A^{\mathrm{SchRA}}.
\end{equation}
\end{definition}

\begin{proposition}[Conditional-risk reduction and the Schur score wedge; \textbf{P}]\label{prop:schur-score-wedge}
If $\bSigma_C\succ0$, then
\begin{equation}\label{eq:schur-risk-reduction}
0<V_{A\mid B}^{\mathrm{Sch}}
=V_A-\tilde\w_A^\top\bSigma_{AB}\bSigma_{BB}^{-1}\bSigma_{BA}\tilde\w_A
\le V_A,
\end{equation}
with equality if and only if $\bSigma_{BA}\tilde\w_A=0$. Moreover,
\begin{equation}\label{eq:schur-risk-bound}
0\le V_A-V_{A\mid B}^{\mathrm{Sch}}
\le \frac{\|\bSigma_{BA}\tilde\w_A\|_2^2}{\lambda_{\min}(\bSigma_{BB})},
\end{equation}
and analogously for $B\mid A$. Hence, whenever the cluster mean is positive, the raw Schur score $M_A/\sqrt{V_{A\mid B}^{\mathrm{Sch}}}$ is weakly larger than the standalone RA-HRP score $M_A/\sqrt{V_A}$; after the floor, the score cannot decrease.
\end{proposition}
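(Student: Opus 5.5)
The argument reduces everything to a quadratic form in the fixed vector $\tilde\w_A$ and then reads off positivity, the equality case, and the bound from standard facts about $\bSigma_C\succ0$ and $\bSigma_{BB}^{-1}$.

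\textbf{Step 1: the identity in \eqref{eq:schur-risk-reduction}.} Since $\bSigma_C\succ0$, the principal block $\bSigma_{BB}$ is positive definite and hence invertible. The conditional block \eqref{eq:schur-cond-blocks} is therefore well defined. Expanding the quadratic form gives
\[
V_{A\mid B}^{\mathrm{Sch}}=\tilde\w_A^\top\bSigma_{AA}\tilde\w_A-\tilde\w_A^\top\bSigma_{AB}\bSigma_{BB}^{-1}\bSigma_{BA}\tilde\w_A .
\]
The first term is $V_A$ by \eqref{eq:ra-scores}, with $\bSigma_A=\bSigma_{AA}$.

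\textbf{Step 2: strict positivity.} The Schur complement $\bSigma_{A\mid B}$ is positive definite, as already used in the proof of Theorem~\ref{thm:schur}. The IVP vector $\tilde\w_A$ is nonzero, because its entries are positive and sum to one. Hence $V_{A\mid B}^{\mathrm{Sch}}>0$. If I want a self-contained argument here, I would take $z=(\tilde\w_A,-\bSigma_{BB}^{-1}\bSigma_{BA}\tilde\w_A)\neq0$. A direct computation shows that $z^\top\bSigma_C z$ equals the Schur quadratic form, and it is positive because $\bSigma_C\succ0$.

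\textbf{Step 3: the two-sided bound \eqref{eq:schur-risk-bound}.} Set $y:=\bSigma_{BA}\tilde\w_A$, so that $\bSigma_{AB}=\bSigma_{BA}^\top$ gives $V_A-V_{A\mid B}^{\mathrm{Sch}}=y^\top\bSigma_{BB}^{-1}y$. Because $\bSigma_{BB}^{-1}\succ0$, this quantity is nonnegative, and it vanishes if and only if $y=0$. This settles the inequality $\le V_A$ and the equality case $\bSigma_{BA}\tilde\w_A=0$ together. The upper bound comes from the Rayleigh quotient:
\[
y^\top\bSigma_{BB}^{-1}y\le\lambda_{\max}(\bSigma_{BB}^{-1})\|y\|_2^2=\|y\|_2^2/\lambda_{\min}(\bSigma_{BB}).
\]
The statement for $B\mid A$ follows by exchanging the roles of $A$ and $B$.

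\textbf{Step 4: monotonicity of the score.} From Step 3 we have $0<V_{A\mid B}^{\mathrm{Sch}}\le V_A$, so $1/\sqrt{V_{A\mid B}^{\mathrm{Sch}}}\ge1/\sqrt{V_A}$. When $M_A>0$, multiplying by $M_A$ preserves the inequality, so the raw Schur score is at least $M_A/\sqrt{V_A}$. The map $x\mapsto\max\{x,\varepsilon\}$ is nondecreasing, so $Q_{A\mid B}^{\mathrm{Sch}}\ge Q_A$.

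This last step should be stated with the positive-mean hypothesis made explicit. If $M_A\le0$, the raw inequality reverses or becomes an equality. In that case both floored scores equal $\varepsilon$, so "the score cannot decrease" still holds after the floor.

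No step is a genuine obstacle. The only points needing care are the positivity argument in Step 2 (that the Schur complement of a positive definite matrix is positive definite, together with $\tilde\w_A\neq0$) and the handling of the sign of $M_A$ in Step 4.
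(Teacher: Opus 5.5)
Your proof is correct and follows the paper's argument essentially step for step: you expand the quadratic form, write the subtracted term as $y^\top\bSigma_{BB}^{-1}y$ with $y=\bSigma_{BA}\tilde\w_A$ (the paper writes $\|\bSigma_{BB}^{-1/2}y\|_2^2$), bound it via $\bSigma_{BB}^{-1}\preceq\lambda_{\min}(\bSigma_{BB})^{-1}I$, and split on the sign of $M_A$. Your handling of $M_A\le0$ is slightly cleaner than the paper's: both raw scores are then nonpositive and both floored scores equal $\varepsilon$ exactly, so the paper's caveat that the Schur adjustment might push the score above the floor cannot occur.
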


\begin{proof}
Since $\bSigma_C\succ0$, both principal blocks are positive definite and the Schur complements in \eqref{eq:schur-cond-blocks} are positive definite. Expanding the quadratic form gives the identity in \eqref{eq:schur-risk-reduction}; the subtracted term equals $\|\bSigma_{BB}^{-1/2}\bSigma_{BA}\tilde\w_A\|_2^2\ge0$, with equality exactly when $\bSigma_{BA}\tilde\w_A=0$. The operator inequality $\bSigma_{BB}^{-1}\preceq \lambda_{\min}(\bSigma_{BB})^{-1}I$ yields \eqref{eq:schur-risk-bound}. If $M_A>0$, reducing the denominator weakly raises the raw score; if $M_A\le0$, both raw scores are at or below the positive floor, so the floored score remains $\varepsilon$ unless the Schur adjustment pushes it above the floor.
\end{proof}

\begin{remark}[Relation to the exact Schur-tangency split]\label{rem:schur-ra-exactness}
Schur-RA-HRP repairs a second HRP substitution. Standard RA-HRP replaces inverse-variance splitting by return-aware cluster-Sharpe splitting, but it still uses standalone cluster variance. Definition~\ref{def:schur-rahrp} replaces that standalone risk by conditional risk. It therefore moves the rule toward the exact nodewise condition \eqref{eq:ra-node-exact}. It is still not a global Markowitz optimizer: it uses local block inversions only, keeps the tree fixed, and uses IVP aggregation inside clusters. Its purpose is to recover the economically important cross-cluster hedging term without paying the full global inversion tax.
\end{remark}

\begin{corollary}[Two-asset boundary; \textbf{P}]\label{cor:rahrp-twoasset}
Consider two assets, assume the floor is inactive and the relevant quantities are positive. Write $\rho$ for their correlation. RA-HRP coincides with tangency if and only if
\begin{equation}\label{eq:ra-twoasset}
\frac{\mu_{e,1}/\sigma_1}{\mu_{e,2}/\sigma_2}
=
\frac{\sigma_2^2\mu_{e,1}-\rho\sigma_1\sigma_2\mu_{e,2}}
{\sigma_1^2\mu_{e,2}-\rho\sigma_1\sigma_2\mu_{e,1}}.
\end{equation}
In particular, when $\rho=0$ and $\mu_{e,1},\mu_{e,2}>0$, RA-HRP is tangency if and only if $\sigma_1=\sigma_2$. Thus even for diagonal covariance, RA-HRP is generally not the plug-in Markowitz rule: with two uncorrelated assets it allocates by signal-to-volatility, whereas tangency allocates by signal-to-variance.
\end{corollary}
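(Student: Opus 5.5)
We specialize Theorem~\ref{thm:rahrp-coincidence} to the two-leaf tree. With $N=2$ there is one internal node, and its children are the singletons $A=\{1\}$, $B=\{2\}$. Full coincidence is therefore the single nodewise condition \eqref{eq:ra-node-exact}. The plan is to compute both sides of that ratio identity explicitly.

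First, the RA-HRP side. For a singleton cluster $K=\{i\}$ the inverse-variance weight is $\tilde w_{K,i}=1$, so $M_K=\mu_{e,i}$ and $V_K=\sigma_i^2$. With the floor inactive, $Q_K=\mu_{e,i}/\sigma_i=S_i$, and the left-hand side of \eqref{eq:ra-node-exact} is $S_1/S_2$.

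Second, the exact Schur-tangency side at the root. There the local pair is $(\Gamma_C,\nu_C)=(\bSigma,\bmu_e)$, so all blocks are scalars:
\begin{itemize}
\item $S_A=\sigma_1^2(1-\rho^2)$ and $S_B=\sigma_2^2(1-\rho^2)$;
\item $\nu_A^{\mathrm{Sch}}=\mu_{e,1}-\rho\sigma_1\sigma_2\mu_{e,2}/\sigma_2^2$ and $\nu_B^{\mathrm{Sch}}=\mu_{e,2}-\rho\sigma_1\sigma_2\mu_{e,1}/\sigma_1^2$.
\end{itemize}
Hence
\[
L_A=\frac{\sigma_2^2\mu_{e,1}-\rho\sigma_1\sigma_2\mu_{e,2}}{\sigma_1^2\sigma_2^2(1-\rho^2)},
\qquad
L_B=\frac{\sigma_1^2\mu_{e,2}-\rho\sigma_1\sigma_2\mu_{e,1}}{\sigma_1^2\sigma_2^2(1-\rho^2)}.
\]
The common denominator cancels in $L_A/L_B$, which gives the right-hand side of \eqref{eq:ra-twoasset}. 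Equivalently, this is the ratio of the two components of $\bSigma^{-1}\bmu_e$, computed by the $2\times2$ inverse.

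The hypothesis ``relevant quantities positive'' is exactly what we need to apply the split statement of Theorem~\ref{thm:rahrp-coincidence}. It gives $L_A,L_B>0$, which also yields $\ones^\top\bSigma^{-1}\bmu_e>0$. It also guarantees that $Q_A,Q_B>0$ and that the tangency split $L_A/(L_A+L_B)$ is the tangency weight on asset 1. Both RA-HRP and tangency are fully invested with two components, so equal first-coordinate shares are equivalent to equal portfolios. This gives the equivalence.

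For the special case, set $\rho=0$. The right-hand side becomes $\sigma_2^2\mu_{e,1}/(\sigma_1^2\mu_{e,2})$, while the left-hand side is $\sigma_2\mu_{e,1}/(\sigma_1\mu_{e,2})$. Since $\mu_{e,1},\mu_{e,2}>0$, we may cancel $\mu_{e,1}/\mu_{e,2}$, and equality reduces to $\sigma_2/\sigma_1=\sigma_2^2/\sigma_1^2$, that is, $\sigma_1=\sigma_2$.

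The interpretive sentence then just reads off the two weights. RA-HRP puts weight proportional to $\mu_{e,i}/\sigma_i$ on asset $i$, while tangency puts weight proportional to $\mu_{e,i}/\sigma_i^2$.

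The only delicate point is the positivity bookkeeping, namely that the floor is inactive and $L_A,L_B>0$, so that the ratio form is legitimate. The corollary assumes this outright, so the argument is otherwise a direct computation.
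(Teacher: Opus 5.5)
Your proof is correct and is essentially the paper's argument. The paper compares the inactive-floor RA-HRP ratio $w^{\mathrm{RA}}_1/w^{\mathrm{RA}}_2=(\mu_{e,1}/\sigma_1)/(\mu_{e,2}/\sigma_2)$ directly with the component ratio of $\bSigma^{-1}\bmu_e$ obtained from the $2\times2$ inverse, and your Schur quantities $L_A,L_B$ at the single root node reduce to exactly that ratio (as you note yourself), so routing through Theorem~\ref{thm:rahrp-coincidence} adds only bookkeeping; your positivity remarks are sound, and an inactive floor already forces $\mu_{e,1},\mu_{e,2}>0$.
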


\begin{proof}
With two singleton children, the inactive-floor RA-HRP ratio is
\[
\frac{w^{\mathrm{RA}}_1}{w^{\mathrm{RA}}_2}
=\frac{\mu_{e,1}/\sigma_1}{\mu_{e,2}/\sigma_2}.
\]
The unnormalized tangency direction is
\[
\bSigma^{-1}\bmu_e \propto
\begin{pmatrix}
\sigma_2^2\mu_{e,1}-\rho\sigma_1\sigma_2\mu_{e,2}\\
\sigma_1^2\mu_{e,2}-\rho\sigma_1\sigma_2\mu_{e,1}
\end{pmatrix},
\]
which gives \eqref{eq:ra-twoasset}. Setting $\rho=0$ and cancelling the positive premia gives $\sigma_2/\sigma_1=\sigma_2^2/\sigma_1^2$, hence $\sigma_1=\sigma_2$.
\end{proof}

\begin{proposition}[Balanced equal-volatility diagonal exactness; \textbf{P}]\label{prop:rahrp-equalvol}
Suppose $\bSigma=\sigma^2 I$, every split in $\mathcal T$ is balanced, all cluster premium sums are positive, and the floor is inactive. Then RA-HRP coincides with the tangency portfolio:
\[
w_i^{\mathrm{RA}}=\frac{\mu_{e,i}}{\sum_j\mu_{e,j}}=w_{T,i}.
\]
\end{proposition}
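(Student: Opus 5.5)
The plan is to compute every local RA-HRP split in closed form under $\bSigma=\sigma^2 I$, show that balance makes each split equal to a ratio of cluster premium sums, and then telescope along root-to-leaf paths as in the proof of Theorem~\ref{thm:hrpdiag}. Throughout, write $P_K:=\sum_{i\in K}\mu_{e,i}$ for the premium sum of a cluster $K$ and $k=|K|$. By hypothesis $P_K>0$ for every cluster of $\mathcal T$, including leaves, so $\mu_{e,i}>0$ for every $i$.

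\emph{Step 1 (local scores).} With all volatilities equal to $\sigma$, the inverse-variance weights of Definition~\ref{def:rahrp} on $K$ are uniform, $\tilde w_{K,i}=1/k$. Hence
\[
M_K=\frac{P_K}{k},\qquad V_K=\frac{1}{k^2}\,\ones_K^\top(\sigma^2 I)\ones_K=\frac{\sigma^2}{k}.
\]
Since the floor is inactive, the score is the raw Sharpe ratio:
\[
Q_K=\frac{M_K}{\sqrt{V_K}}=\frac{P_K}{\sigma\sqrt{k}}.
\]

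\emph{Step 2 (splits).} At an internal node $C=A\cup B$, balance gives $|A|=|B|$. The common factor $\sigma\sqrt{|A|}$ therefore cancels in \eqref{eq:ra-split}, leaving
\[
\beta_A^{\mathrm{RA}}=\frac{P_A}{P_A+P_B}=\frac{P_A}{P_C}.
\]
This is well defined because all premium sums are positive.

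\emph{Step 3 (telescoping).} Asset $i$'s weight is the product of branch fractions along its root-to-leaf path. That product is $\prod P_{\mathrm{child}}/P_{\mathrm{parent}}$, which telescopes to
\[
w_i^{\mathrm{RA}}=\frac{P_{\{i\}}}{P_{\mathrm{root}}}=\frac{\mu_{e,i}}{\sum_j\mu_{e,j}}.
\]
On the tangency side, $\bSigma^{-1}\bmu_e=\sigma^{-2}\bmu_e$ and $\ones^\top\bSigma^{-1}\bmu_e=\sigma^{-2}P_{\mathrm{root}}>0$. Definition~\ref{def:tangency} then gives $w_{T,i}=\mu_{e,i}/\sum_j\mu_{e,j}$, the same vector. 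Equivalently, $\bmu_e=a\bSigma\w^{\mathrm{RA}}$ with $a=P_{\mathrm{root}}/\sigma^2>0$, which is condition (ii) of Theorem~\ref{thm:rahrp-coincidence}.

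\emph{Main obstacle.} There is no real analytic difficulty; the one point needing care is where balance enters. Without it, the splits become $P_A\sqrt{|B|}\,/\bigl(P_A\sqrt{|B|}+P_B\sqrt{|A|}\bigr)$, which is the signal-to-volatility distortion of Corollary~\ref{cor:rahrp-twoasset} at the cluster level, and the product no longer telescopes. The remaining hypotheses are exactly what Steps 1–2 use: the inactive floor lets $Q_K$ be replaced by the raw score, and positivity of the premium sums makes every split denominator nonzero and puts the tangency normalization in the regime of Definition~\ref{def:tangency}.
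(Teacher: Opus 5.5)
Your proof is correct and follows the paper's argument essentially line for line. Uniform inverse-variance weights give $Q_K=P_K/(\sigma\sqrt{|K|})$, balance cancels the $\sqrt{|K|}$ factors so each split is the premium-sum ratio $P_A/P_C$, and the root-to-leaf product telescopes to $\mu_{e,i}/\sum_j\mu_{e,j}=w_{T,i}$.
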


\begin{proof}
For a cluster $K$, IV weights are equal weights, $V_K=\sigma^2/|K|$, and
\[
Q_K=\frac{|K|^{-1}\sum_{i\in K}\mu_{e,i}}{\sigma/\sqrt{|K|}}
=\frac{\sum_{i\in K}\mu_{e,i}}{\sigma\sqrt{|K|}}.
\]
At a balanced split $K=A\cup B$, $|A|=|B|$, so
\[
\beta_A^{\mathrm{RA}}=\frac{\sum_{i\in A}\mu_{e,i}}{\sum_{i\in A}\mu_{e,i}+\sum_{i\in B}\mu_{e,i}}.
\]
Multiplying these branch ratios along a root-to-leaf path telescopes to $\mu_{e,i}/\sum_j\mu_{e,j}$. Since $\bSigma=\sigma^2 I$, the tangency portfolio is proportional to $\bSigma^{-1}\bmu_e=\sigma^{-2}\bmu_e$, giving the result.
\end{proof}

\begin{theorem}[Pathwise HRP--RA-HRP distortion and KL attribution; \textbf{P}]\label{thm:rahrp-pathwise}
Fix a common tree $\mathcal T$. Let $r_v\in(0,1)$ be the HRP left-branch split at internal node $v$ and $s_v\in(0,1)$ the RA-HRP left-branch split. For a leaf $i$, let $\mathcal P(i)$ be its set of ancestor internal nodes and let $\eta_v(i)=1$ if $i$ descends through the left child of $v$ and $0$ otherwise. Then
\begin{equation}\label{eq:ra-hrp-ratio}
\frac{w_i^{\mathrm{RA}}}{w_i^{\mathrm{HRP}}}
=
\prod_{v\in\mathcal P(i)}
\left(\frac{s_v}{r_v}\right)^{\eta_v(i)}
\left(\frac{1-s_v}{1-r_v}\right)^{1-\eta_v(i)}.
\end{equation}
Moreover, viewing the two long-only portfolios as probability measures on leaves,
\begin{equation}\label{eq:ra-kl}
D_{\mathrm{KL}}(\w^{\mathrm{RA}}\Vert\w^{\mathrm{HRP}})
=
\sum_{v\in\mathcal V_{\mathrm{int}}} W_v^{\mathrm{RA}}
\left[s_v\log\frac{s_v}{r_v}+(1-s_v)\log\frac{1-s_v}{1-r_v}\right],
\end{equation}
where $W_v^{\mathrm{RA}}$ is the RA-HRP total capital entering node $v$.
\end{theorem}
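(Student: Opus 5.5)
The plan is to write both portfolios as products of branch fractions along root-to-leaf paths, which is exactly how Definitions~\ref{def:hrp} and~\ref{def:rahrp} build them on the common tree $\mathcal T$. For a leaf $i$ we have
\[
w_i^{\mathrm{HRP}}=\prod_{v\in\mathcal P(i)} r_v^{\eta_v(i)}(1-r_v)^{1-\eta_v(i)},\qquad
w_i^{\mathrm{RA}}=\prod_{v\in\mathcal P(i)} s_v^{\eta_v(i)}(1-s_v)^{1-\eta_v(i)}.
\]
Every factor is strictly positive because $r_v,s_v\in(0,1)$; for RA-HRP this is guaranteed by the floor, as in Theorem~\ref{thm:rahrp-wellposed}. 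Dividing the two products factor by factor gives \eqref{eq:ra-hrp-ratio} at once. Taking logarithms then gives the additive form $\log(w_i^{\mathrm{RA}}/w_i^{\mathrm{HRP}})=\sum_{v\in\mathcal P(i)}\ell_v(\eta_v(i))$, where $\ell_v(1)=\log(s_v/r_v)$ and $\ell_v(0)=\log\bigl((1-s_v)/(1-r_v)\bigr)$.

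For the KL identity, substitute this additive form into $D_{\mathrm{KL}}=\sum_i w_i^{\mathrm{RA}}\log(w_i^{\mathrm{RA}}/w_i^{\mathrm{HRP}})$ and exchange the order of summation, so that the outer sum runs over internal nodes $v$ and the inner sum over the leaves $i$ in the subtree of $v$:
\[
D_{\mathrm{KL}}=\sum_{v\in\mathcal V_{\mathrm{int}}}\Bigl[\ell_v(1)\sum_{i\in \mathrm{L}(v)} w_i^{\mathrm{RA}}+\ell_v(0)\sum_{i\in \mathrm{R}(v)} w_i^{\mathrm{RA}}\Bigr].
\]
Here $\mathrm{L}(v)$ and $\mathrm{R}(v)$ are the leaf sets of the left and right children. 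The key step is budget conservation in the RA-HRP recursion. The RA-HRP mass in a subtree equals the capital entering its root, so $\sum_{i\in\mathrm{L}(v)}w_i^{\mathrm{RA}}=W_v^{\mathrm{RA}}s_v$ and $\sum_{i\in\mathrm{R}(v)}w_i^{\mathrm{RA}}=W_v^{\mathrm{RA}}(1-s_v)$. This follows by induction from the leaves upward, using that the split fractions below any node sum to one (as in the proof of Theorem~\ref{thm:rahrp-wellposed}). Substituting yields \eqref{eq:ra-kl}.

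The main point requiring care is the interchange of sums: each leaf must be counted exactly once at each ancestor, and each ancestor's branch indicator $\eta_v(i)$ must be attached to the correct child mass. This is bookkeeping rather than a genuine obstacle, and I would handle it by indexing the sum over pairs $(v,i)$ with $v\in\mathcal P(i)$. It is also worth noting the interpretation: \eqref{eq:ra-kl} is the chain rule for KL divergence applied to the tree-structured (Markov) path laws. Equivalently, one can prove it by induction on depth, splitting KL at the root as the root Bernoulli divergence plus the mass-weighted child divergences. I would mention this as an alternative route.
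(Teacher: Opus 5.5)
Your proposal is correct and follows the paper's own proof essentially step for step. Both write each portfolio as a product of branch fractions along root-to-leaf paths and divide factorwise to get the ratio; both then take logarithms, weight by $w_i^{\mathrm{RA}}$, exchange the sums over leaves and ancestor nodes, and use budget conservation $\sum_{i\in\mathrm{L}(v)}w_i^{\mathrm{RA}}=W_v^{\mathrm{RA}}s_v$ (and $W_v^{\mathrm{RA}}(1-s_v)$ on the right) to produce the binary KL terms.
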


\begin{proof}
Both portfolios are products of local branch fractions along a root-to-leaf path. Dividing the two products gives \eqref{eq:ra-hrp-ratio}. Taking logarithms, multiplying by $w_i^{\mathrm{RA}}$, summing over leaves, and exchanging the order of summation gives \eqref{eq:ra-kl}. At node $v$, the sum of RA-HRP weights in the left subtree is $W_v^{\mathrm{RA}}s_v$ and in the right subtree is $W_v^{\mathrm{RA}}(1-s_v)$, which yields the binary KL term.
\end{proof}

\begin{corollary}[Exact Sharpe-efficiency comparison; \textbf{P}]\label{cor:ra-efficiency}
Let $\eta_H=S(\w^{\mathrm{HRP}})/S_{\max}$ and $\eta_R=S(\w^{\mathrm{RA}})/S_{\max}$. The reduction in squared Sharpe loss obtained by moving from HRP to RA-HRP is exactly
\begin{align}\label{eq:ra-efficiency}
(1-\eta_H^2)-(1-\eta_R^2)
&=\eta_R^2-\eta_H^2 \\
&=\frac{S(\w^{\mathrm{RA}})^2-S(\w^{\mathrm{HRP}})^2}{S_{\max}^2} \\
&=\cos^2\theta_{\bSigma}(\w^{\mathrm{RA}},\w_T)
  -\cos^2\theta_{\bSigma}(\w^{\mathrm{HRP}},\w_T).
\end{align}
Therefore RA-HRP improves the exact-optimality gap precisely when its pathwise split distortion moves the final portfolio closer to the tangency ray in the $\bSigma$-angle. Equation~\eqref{eq:ra-hrp-ratio} attributes the source of that movement to individual tree nodes; Equation~\eqref{eq:ra-kl} measures how much capital was reallocated at each node.
\end{corollary}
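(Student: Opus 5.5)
The argument applies Theorem~\ref{thm:cosine} to each portfolio separately and then subtracts. Both HRP and RA-HRP are long-only and fully invested; by Theorem~\ref{thm:rahrp-wellposed} and Definition~\ref{def:hrp}, every leaf weight is strictly positive. Hence each weight vector is nonzero, and the angular identity gives
\[
S(\w^{\mathrm{HRP}}) = S_{\max}\cos\theta_{\bSigma}(\w^{\mathrm{HRP}},\w_T), \qquad
S(\w^{\mathrm{RA}}) = S_{\max}\cos\theta_{\bSigma}(\w^{\mathrm{RA}},\w_T).
\]
For this I assume the standing condition $\ones^\top\bSigma^{-1}\bmu_e>0$ of Definition~\ref{def:tangency}, so that $\w_T$ and $S_{\max}>0$ are defined. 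Since $\bmu_e\neq 0$ and $\bSigma\succ0$, we have $S_{\max}>0$, and division by $S_{\max}$ is legitimate.

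The chain of equalities is then routine algebra. The first line, $(1-\eta_H^2)-(1-\eta_R^2)=\eta_R^2-\eta_H^2$, is cancellation of the constants. The second line comes from squaring $\eta=S/S_{\max}$ and putting both terms over the common denominator $S_{\max}^2$. The third line comes from squaring the two angular identities displayed above and substituting.

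Squaring is harmless here. The identity $\eta=\cos\theta_{\bSigma}$ is exact with signs, so no case split on the sign of $\w^\top\bmu_e$ is needed.

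For the interpretive sentence, note that $1-\eta^2=\sin^2\theta_{\bSigma}$ by Theorem~\ref{thm:cosine}. The difference is therefore positive exactly when $\cos^2\theta_{\bSigma}(\w^{\mathrm{RA}},\w_T)>\cos^2\theta_{\bSigma}(\w^{\mathrm{HRP}},\w_T)$. This says the RA-HRP portfolio makes a smaller $\bSigma$-angle with the tangency line; read literally, it means closeness to the line $\operatorname{span}(\w_T)$, since the statement uses squared cosines.

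The attribution claims follow from Theorem~\ref{thm:rahrp-pathwise}. Because $\w^{\mathrm{RA}}$ is obtained from $\w^{\mathrm{HRP}}$ by the multiplicative nodewise factors in \eqref{eq:ra-hrp-ratio}, any change in the angle is caused by those node ratios $s_v/r_v$. The KL identity \eqref{eq:ra-kl} then measures, node by node, how much capital was reallocated.

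I do not expect a real obstacle in this proof. The only point needing care is making sure the hypotheses of Theorem~\ref{thm:cosine} hold: the portfolios must be nonzero and $S_{\max}$ must be well defined and nonzero. Both checks reduce to the positivity statements already cited.
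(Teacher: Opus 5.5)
Your proposal is correct and matches the paper's proof, which is simply Theorem~\ref{thm:cosine} applied once to $\w^{\mathrm{HRP}}$ and once to $\w^{\mathrm{RA}}$, followed by subtraction; your extra checks that both weight vectors are nonzero and that $S_{\max}>0$ are the right hypotheses to verify. Your remark that squared cosines measure closeness to the line $\operatorname{span}(\w_T)$ rather than to the ray is a legitimate sharpening of the interpretive sentence, since long-only weights need not satisfy $\w^\top\bmu_e>0$ when some premia are negative.
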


\begin{proof}
This is Theorem~\ref{thm:cosine} applied once to $\w^{\mathrm{HRP}}$ and once to $\w^{\mathrm{RA}}$.
\end{proof}

\begin{remark}[Statistical status and empirical scope; \textbf{C}/\textbf{E}]\label{rem:rahrp-stat}
The fixed-tree RA-HRP paper proves existence, uniqueness, explicit leaf lower bounds, local split curvature, pathwise HRP--RA-HRP distortion, KL decomposition, and fixed-tree consistency/asymptotic normality under fixed dimension, finite fourth moments, and an inactive floor \citep{noguer2026rahrp}. It also emphasizes the theory--empirics gap created by re-estimating dendrograms in rolling windows: tree changes make the implemented map piecewise smooth rather than globally smooth. Empirically, the reported 15-stock US backtest is best read as proof of concept: RA-HRP improves traditional HRP in that sample, but a capped within-universe market-cap proxy attains an even higher Sharpe ratio. This is exactly the right evidentiary status for the present paper: RA-HRP is a disciplined return-aware extension of HRP, not a theorem of universal dominance.
\end{remark}

\begin{remark}[Open problem: data-driven trees as random topologies]\label{rem:datadriventrees}
The fixed-tree theory is the right first theorem because all split maps are smooth away from floor and variance boundaries. The implemented strategy, however, usually re-estimates the dendrogram. Theorem~\ref{thm:rahrp-pathwise} suggests a natural route to a full estimated-tree theory: compare two random trees by the split ratios induced on common root-to-leaf paths and by distances between their merge sets, cophenetic matrices, or tree-edit representations. One possible program is: (i) prove high-probability topology preservation under a merge-separation condition on the population distance matrix; (ii) on that event, apply the fixed-tree delta method; (iii) off that event, control the portfolio jump by pathwise products such as \eqref{eq:ra-hrp-ratio} and by a metric on dendrogram topologies; and (iv) combine the two pieces into a uniform-in-tree convergence bound. The empirical merge-preservation and cophenetic-correlation diagnostics reported in \citet{noguer2026rahrp} are therefore not merely descriptive; they are the measurable quantities a future random-tree asymptotic theorem must control.
\end{remark}

\subsection{First-order Sharpe calculus along the HRP--RA-HRP homotopy}\label{sec:homotopy-calculus}

The unit-free interpolation of Proposition~\ref{prop:unitfree-interp} turns ``how much local return information to admit'' into a one-dimensional decision. This subsection differentiates the Sharpe objective along that path. The derivative turns out to be a sum over internal nodes of the products of two estimable quantities: the displacement of the local split and the \emph{alpha}, against the HRP portfolio itself, of the self-financing switch between the node's two children. The marginal value of return information is therefore a nodewise zero-alpha computation --- the same object the GRS machinery of Remark~\ref{rem:grs} estimates.

\begin{lemma}[Node-switch vectors; \textbf{P}]\label{lem:node-switch}
Fix $\mathcal T$, strictly interior splits $r_v,s_v\in(0,1)$ at every internal node $v$, and the interpolated portfolio $\w^{(\lambda)}$ of \eqref{eq:unitfree-interp}. Let $W_v^{\mathrm H}:=\sum_{i:\,v\in\mathcal P(i)}w_i^{\mathrm{HRP}}$ be the HRP capital entering $v$. Define $g_v\in\R^N$ by
\begin{equation}\label{eq:node-switch}
g_{v,i}
:=
w_i^{\mathrm{HRP}}
\left(
\frac{\eta_v(i)}{r_v}-\frac{1-\eta_v(i)}{1-r_v}
\right)
\quad\textnormal{if } v\in\mathcal P(i),
\qquad
g_{v,i}:=0 \quad\textnormal{otherwise.}
\end{equation}
Then:
\begin{enumerate}
\item[(i)] $\ones^\top g_v=0$: each $g_v$ is a self-financing reallocation.
\item[(ii)] $g_v=W_v^{\mathrm H}\bigl(\mathbf u_{L(v)}-\mathbf u_{R(v)}\bigr)$, where $\mathbf u_K$ is the HRP composition inside subtree $K$ renormalized to unit mass: $g_v$ moves one unit of node-$v$ capital from the right-child HRP composition to the left-child HRP composition, levered by the capital actually present at the node.
\item[(iii)] $\lambda\mapsto\w^{(\lambda)}$ is a polynomial in $\lambda$, and
\begin{equation}\label{eq:homotopy-derivative}
\frac{d}{d\lambda}\w^{(\lambda)}\Big|_{\lambda=0}
=
\sum_{v\in\mathcal V_{\mathrm{int}}}(s_v-r_v)\,g_v .
\end{equation}
\end{enumerate}
\end{lemma}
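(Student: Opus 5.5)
The plan is to write every portfolio on the tree as a product of branch fractions and differentiate that product. For a leaf $i$ and an internal node $v\in\mathcal P(i)$, define $\beta_v^{(\lambda)}(i):=\eta_v(i)\beta_v^{(\lambda)}+(1-\eta_v(i))(1-\beta_v^{(\lambda)})$, where $\beta_v^{(\lambda)}=(1-\lambda)r_v+\lambda s_v$ is the interpolated left split of \eqref{eq:unitfree-interp}. Then
\[
w_i^{(\lambda)}=\prod_{v\in\mathcal P(i)}\beta_v^{(\lambda)}(i).
\]
Each factor is affine in $\lambda$, so $w_i^{(\lambda)}$ is a polynomial of degree at most $|\mathcal P(i)|\le D_{\mathcal T}$. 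This proves the first half of (iii). At $\lambda=0$ every factor reduces to the HRP branch fraction, so $\w^{(0)}=\w^{\mathrm{HRP}}$, as in Proposition~\ref{prop:unitfree-interp}(i).

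For the derivative in (iii), I will apply the product rule at $\lambda=0$. Since all $r_v\in(0,1)$, the factors are nonzero, and the logarithmic derivative can be used:
\[
\frac{d}{d\lambda}w_i^{(\lambda)}\Big|_{0}=w_i^{\mathrm{HRP}}\sum_{v\in\mathcal P(i)}\frac{\frac{d}{d\lambda}\beta_v^{(\lambda)}(i)\big|_0}{\beta_v^{(0)}(i)}.
\]
By \eqref{eq:unitfree-deriv}, the numerator equals $(s_v-r_v)$ on a left branch and $-(s_v-r_v)$ on a right branch. The denominator equals $r_v$ on a left branch and $1-r_v$ on a right branch. The summand is therefore $(s_v-r_v)\bigl(\eta_v(i)/r_v-(1-\eta_v(i))/(1-r_v)\bigr)$. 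Exchanging the sum over leaves with the sum over internal nodes, and setting $g_{v,i}=0$ for $v\notin\mathcal P(i)$, gives exactly \eqref{eq:homotopy-derivative} with $g_v$ as defined in \eqref{eq:node-switch}.

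For (ii), I will factor HRP weights at node $v$. A leaf $i$ in the left subtree $L(v)$ has weight $w_i^{\mathrm{HRP}}=W_v^{\mathrm H}\,r_v\,u_{L(v),i}$. Here $u_{L(v),i}$ is the product of HRP fractions strictly below $v$, which sums to one over $L(v)$ by the budget conservation argument of Theorem~\ref{thm:rahrp-wellposed}. It follows that $w_i^{\mathrm{HRP}}/r_v=W_v^{\mathrm H}u_{L(v),i}$. Symmetrically, $w_i^{\mathrm{HRP}}/(1-r_v)=W_v^{\mathrm H}u_{R(v),i}$ on the right subtree. Together these give $g_v=W_v^{\mathrm H}(\mathbf u_{L(v)}-\mathbf u_{R(v)})$. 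Statement (i) then follows at once, because $\ones^\top\mathbf u_K=1$ for each child subtree $K$.

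The only real obstacle is bookkeeping. I need to confirm that the capital entering $v$ under HRP, $W_v^{\mathrm H}$ defined as the sum of weights of leaves below $v$, equals the product of HRP fractions from the root to $v$. This holds by conservation at each split. I also need the reindexing of the double sum, leaves then ancestors versus internal nodes then descendant leaves, to be stated cleanly; it is the same exchange used in the proof of Theorem~\ref{thm:rahrp-pathwise}.
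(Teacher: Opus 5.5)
Your proposal is correct and follows essentially the paper's proof. It uses the same root-to-leaf product of affine splits for polynomiality, the same logarithmic differentiation at $\lambda=0$ for \eqref{eq:homotopy-derivative}, and the same factorization $w_i^{\mathrm{HRP}}=W_v^{\mathrm H}r_v\,u_{L(v),i}$ (resp.\ $W_v^{\mathrm H}(1-r_v)\,u_{R(v),i}$) for (ii); the only cosmetic difference is that you deduce (i) from (ii) via $\ones^\top\mathbf u_K=1$, whereas the paper sums $g_v$ over each child subtree directly and then reads off (ii).
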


\begin{proof}
Each leaf weight is the root-to-leaf product of branch fractions (the representation underlying Theorem~\ref{thm:rahrp-pathwise}), here of the affine functions $\beta_v^{(\lambda)}=(1-\lambda)r_v+\lambda s_v$ along its path, hence a polynomial in $\lambda$. Logarithmic differentiation at $\lambda=0$ gives
\[
\frac{d}{d\lambda}w_i^{(\lambda)}\Big|_{\lambda=0}
=
w_i^{\mathrm{HRP}}
\sum_{v\in\mathcal P(i)}(s_v-r_v)
\left(\frac{\eta_v(i)}{r_v}-\frac{1-\eta_v(i)}{1-r_v}\right),
\]
which is the $i$-th coordinate of \eqref{eq:homotopy-derivative}. For (i)--(ii): the HRP mass in the left subtree of $v$ is $W_v^{\mathrm H}r_v$ and in the right subtree $W_v^{\mathrm H}(1-r_v)$, so summing \eqref{eq:node-switch} over the left child gives $W_v^{\mathrm H}r_v/r_v=W_v^{\mathrm H}$ and over the right child $-W_v^{\mathrm H}$; the totals cancel, and dividing each child's restriction of $\w^{\mathrm{HRP}}$ by its subtree mass exhibits the two unit-mass compositions $\mathbf u_{L(v)},\mathbf u_{R(v)}$.
\end{proof}

\begin{theorem}[Nodewise-alpha representation of the homotopy derivative; \textbf{P}]\label{thm:homotopy-alpha}
Let $\bSigma\succ0$, $\bmu_e\neq0$, and write $a^*(\w)=\w^\top\bmu_e/\w^\top\bSigma\w$ and
\[
\bm\alpha(\w):=\bmu_e-a^*(\w)\,\bSigma\w
\]
for the implied-return residual of $\w$; by Remark~\ref{rem:beta}, $\alpha_i(\w)=\mu_{e,i}-\beta_i(\w)(\w^\top\bmu_e)$ is exactly the vector of alphas in \eqref{eq:zeroalpha}, and $\bm\alpha(\w)$ is the minimizing residual in the defect \eqref{eq:hpo-defect}. Then, with $\w^{\mathrm H}:=\w^{\mathrm{HRP}}$ and
\begin{equation}\label{eq:node-alpha}
\alpha_v:=g_v^\top\bm\alpha(\w^{\mathrm H})
=
g_v^\top\bmu_e-\frac{g_v^\top\bSigma\w^{\mathrm H}}{\w^{\mathrm H\top}\bSigma\w^{\mathrm H}}\,\bigl(\w^{\mathrm H\top}\bmu_e\bigr),
\end{equation}
the Sharpe ratio and the squared Sharpe efficiency satisfy
\begin{equation}\label{eq:homotopy-sharpe-derivative}
\begin{split}
\frac{d}{d\lambda}S\bigl(\w^{(\lambda)}\bigr)\Big|_{\lambda=0}
&=
\frac{1}{\sigma_p(\w^{\mathrm H})}
\sum_{v\in\mathcal V_{\mathrm{int}}}(s_v-r_v)\,\alpha_v,
\\
\frac{d}{d\lambda}\bigl(1-\eta^2\bigr)\bigl(\w^{(\lambda)}\bigr)\Big|_{\lambda=0}
&=
-\,\frac{2S(\w^{\mathrm H})}{S_{\max}^2\,\sigma_p(\w^{\mathrm H})}
\sum_{v\in\mathcal V_{\mathrm{int}}}(s_v-r_v)\,\alpha_v .
\end{split}
\end{equation}
The scalar $\alpha_v$ is precisely the alpha of the self-financing node-$v$ switch $g_v$ in the one-factor regression of Remark~\ref{rem:grs} with the HRP portfolio as the candidate factor.
\end{theorem}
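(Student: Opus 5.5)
The plan is to obtain the result by the chain rule, using the weight derivative already computed in Lemma~\ref{lem:node-switch}(iii). Write $\dot\w:=\frac{d}{d\lambda}\w^{(\lambda)}\big|_{\lambda=0}=\sum_v(s_v-r_v)g_v$. The Sharpe ratio $S(\w)=\w^\top\bmu_e/\sqrt{\w^\top\bSigma\w}$ is smooth wherever $\w^\top\bSigma\w>0$, and this holds at $\w^{\mathrm H}$ because $\bSigma\succ0$ and $\w^{\mathrm H}\neq0$. Its gradient is
\[
\nabla S(\w)=\frac{\bmu_e}{\sigma_p(\w)}-\frac{\w^\top\bmu_e}{\sigma_p(\w)^3}\,\bSigma\w
=\frac{1}{\sigma_p(\w)}\Bigl(\bmu_e-a^*(\w)\,\bSigma\w\Bigr)=\frac{\bm\alpha(\w)}{\sigma_p(\w)} .
\]
This identification is the single algebraic point of the argument. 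It says that the Sharpe gradient equals the implied-return residual divided by volatility.

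Composing with the polynomial path $\lambda\mapsto\w^{(\lambda)}$, we get
\[
\frac{d}{d\lambda}S(\w^{(\lambda)})\Big|_0=\nabla S(\w^{\mathrm H})^\top\dot\w=\frac{1}{\sigma_p(\w^{\mathrm H})}\sum_v(s_v-r_v)\,g_v^\top\bm\alpha(\w^{\mathrm H}).
\]
By the definition \eqref{eq:node-alpha}, this is the first line of \eqref{eq:homotopy-sharpe-derivative}. Expanding $g_v^\top\bm\alpha(\w^{\mathrm H})$ with $a^*(\w^{\mathrm H})=\w^{\mathrm H\top}\bmu_e/\w^{\mathrm H\top}\bSigma\w^{\mathrm H}$ gives the second expression in \eqref{eq:node-alpha}.

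For the efficiency line, note that $S_{\max}$ does not depend on $\lambda$, because $(\bmu_e,\bSigma)$ are fixed along the homotopy and only the split fractions move. Hence
\[
\frac{d}{d\lambda}\bigl(1-\eta^2\bigr)=-\frac{2S}{S_{\max}^2}\,\frac{dS}{d\lambda},
\]
evaluated at $\lambda=0$, and substituting the first line gives the second.

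The regression interpretation rests on linearity. Since $g_v$ is self-financing by Lemma~\ref{lem:node-switch}(i), its excess return is $g_v^\top\mathbf r_e$, and the regression alpha on the HRP factor return equals $\E[g_v^\top\mathbf r_e]-\beta\,\E[\w^{\mathrm H\top}\mathbf r_e]$ with $\beta=\Cov(g_v^\top\mathbf r,\w^{\mathrm H\top}\mathbf r)/\Var(\w^{\mathrm H\top}\mathbf r)=g_v^\top\bSigma\w^{\mathrm H}/\w^{\mathrm H\top}\bSigma\w^{\mathrm H}$. Equivalently, $\alpha_v=\sum_i g_{v,i}\alpha_i(\w^{\mathrm H})$, because assetwise alphas aggregate linearly into portfolio alphas (Remark~\ref{rem:beta}).

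I do not expect a serious obstacle. The only care needed is in treating the floor-induced nonsmoothness and in checking that $\w^{(\lambda)}$ is differentiable at $\lambda=0$. Both are settled by taking the splits $r_v,s_v$ as fixed numbers independent of $\lambda$, so that $\w^{(\lambda)}$ is an honest polynomial path, as in Lemma~\ref{lem:node-switch}(iii). One sign remark is worth adding: if $S(\w^{\mathrm H})<0$, the efficiency formula still holds verbatim, since no sign assumption entered the derivation.
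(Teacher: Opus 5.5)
Your proposal is correct and follows the paper's proof essentially line for line: the gradient identity $\nabla S(\w)=\bm\alpha(\w)/\sigma_p(\w)$ composed via the chain rule with Lemma~\ref{lem:node-switch}(iii), the $\lambda$-independence of $S_{\max}$ for the efficiency line, and the zero-cost beta computation for the regression reading. The paper additionally cites Theorem~\ref{thm:hpo-defect} to identify $a^*(\w^{\mathrm H})$ with the defect minimizer. That identification, unlike your derivative formulas, requires $\w^{\mathrm H\top}\bmu_e>0$.
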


\begin{proof}
For $\w\neq0$, $\nabla S(\w)=\bmu_e/\sigma_p(\w)-(\w^\top\bmu_e)\bSigma\w/\sigma_p(\w)^3=\bm\alpha(\w)/\sigma_p(\w)$. The chain rule with Lemma~\ref{lem:node-switch}(iii) gives the first display in \eqref{eq:homotopy-sharpe-derivative}; the second follows from $1-\eta^2=1-S^2/S_{\max}^2$ with $S_{\max}$ independent of $\lambda$. For the alpha interpretation: $g_v$ is zero-cost by Lemma~\ref{lem:node-switch}(i), its beta against $\w^{\mathrm H}$ is $\beta(g_v)=g_v^\top\bSigma\w^{\mathrm H}/\w^{\mathrm H\top}\bSigma\w^{\mathrm H}$, and its alpha in the sense of \eqref{eq:zeroalpha} is $g_v^\top\bmu_e-\beta(g_v)(\w^{\mathrm H\top}\bmu_e)$, which is \eqref{eq:node-alpha}. That this equals the minimizing residual of Definition~\ref{def:hpo-defect} contracted with $g_v$ is Theorem~\ref{thm:hpo-defect}: $a^*$ in \eqref{eq:hpo-a-star} is the same scalar.
\end{proof}

\begin{corollary}[Marginal value of return information; \textbf{P}]\label{cor:homotopy-sign}
To first order at $\lambda=0$, admitting local return information improves the Sharpe ratio if and only if
\begin{equation}\label{eq:homotopy-sign}
\sum_{v\in\mathcal V_{\mathrm{int}}}(s_v-r_v)\,\alpha_v\;>\;0 .
\end{equation}
Each node contributes the product of the split displacement $s_v-r_v$ (how far the floored cluster-Sharpe score moves the local budget away from the inverse-variance split) and the node alpha $\alpha_v$ (whether the implied child switch is priced by the HRP portfolio). In particular: (i) if every node switch has zero alpha against HRP --- the nodewise refinement of the zero-alpha null of Remark~\ref{rem:grs} restricted to the $|\mathcal V_{\mathrm{int}}|$ tradable directions $\{g_v\}$ --- the homotopy is first-order flat regardless of the score displacements; (ii) a node helps exactly when its score tilts toward the child whose switch carries positive alpha. Since $\alpha_v=\sum_i g_{v,i}\,\alpha_i(\w^{\mathrm H})$ is a fixed linear combination of the regression intercepts entering the GRS statistic \eqref{eq:grsformula}, the entire derivative \eqref{eq:homotopy-sharpe-derivative} is estimable from a single set of time-series regressions on the realized HRP return, with standard errors inherited from $\hat{\bm\alpha}$.
\end{corollary}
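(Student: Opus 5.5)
This corollary should follow almost entirely from Theorem~\ref{thm:homotopy-alpha} by reading off signs, so I will organize the argument around the first display of \eqref{eq:homotopy-sharpe-derivative}.

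\emph{Main claim.} Since $\sigma_p(\w^{\mathrm H})>0$ (because $\bSigma\succ0$ and $\w^{\mathrm H}$ lies in the simplex, hence is nonzero), the sign of $\frac{d}{d\lambda}S(\w^{(\lambda)})|_{\lambda=0}$ equals the sign of $\sum_v(s_v-r_v)\alpha_v$. I will interpret ``to first order, admitting return information improves the Sharpe ratio'' as strict positivity of this one-sided derivative at $\lambda=0$. This reading makes sense because $\lambda\mapsto\w^{(\lambda)}$ is a polynomial by Lemma~\ref{lem:node-switch}(iii). Hence $S(\w^{(\lambda)})$ is smooth near $0$, and a positive derivative gives $S(\w^{(\lambda)})>S(\w^{\mathrm H})$ for all sufficiently small $\lambda>0$. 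I will also note that this is a first-order statement only: when the sum vanishes, the second-order term decides, so the ``only if'' direction is understood at first order.

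\emph{Items (i) and (ii).} For (i), if $\alpha_v=0$ for every internal node, then the sum is zero whatever the displacements $s_v-r_v$, and the derivative vanishes. For (ii), I will use Lemma~\ref{lem:node-switch}(ii): $g_v=W_v^{\mathrm H}(\mathbf u_{L(v)}-\mathbf u_{R(v)})$ with $W_v^{\mathrm H}>0$. Then $s_v>r_v$ means the score tilts toward the left child, and the node's term $(s_v-r_v)\alpha_v$ is positive exactly when $\alpha_v>0$, that is, when the left-minus-right switch carries positive alpha. The case $s_v<r_v$ is symmetric: replacing $g_v$ by $-g_v$ swaps the roles of the two children. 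The word ``helps'' here is read termwise, as the sign of that node's contribution to the sum.

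\emph{Estimability.} By \eqref{eq:node-alpha}, $\alpha_v=g_v^\top\bm\alpha(\w^{\mathrm H})=\sum_i g_{v,i}\alpha_i(\w^{\mathrm H})$, and by Remark~\ref{rem:beta} the $\alpha_i(\w^{\mathrm H})$ are the population intercepts of the regressions of Remark~\ref{rem:grs} with $\w^{\mathrm H}$ as the candidate factor. The weights $g_{v,i}$ and the displacements $s_v-r_v$ are deterministic given the tree and the splits, so the whole sum is a fixed linear functional $c^\top\bm\alpha$ with $c=\sum_v(s_v-r_v)g_v$. Its estimator is $c^\top\hat{\bm\alpha}$, with variance $c^\top\widehat{\Var}(\hat{\bm\alpha})c$ taken from the GRS covariance of the intercepts. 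The prefactor $1/\sigma_p$ only rescales and does not affect the sign.

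\emph{Main obstacle.} The only delicate point is this estimability claim when the $r_v,s_v$ are themselves estimated from the same panel. There I would assume conditioning on the realized splits, or a split-sample implementation, exactly as in Remark~\ref{rem:grs}. Under that convention the standard errors are inherited from $\hat{\bm\alpha}$.
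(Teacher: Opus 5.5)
Your proposal is correct and matches the paper's proof, which is also a sign reading of the first display of \eqref{eq:homotopy-sharpe-derivative} using $\sigma_p(\w^{\mathrm H})>0$, with (i)--(ii) immediate and estimability following from the linearity of \eqref{eq:node-alpha} in $\bm\alpha(\w^{\mathrm H})$. Your two caveats are sound refinements consistent with Remark~\ref{rem:grs}, not departures from the argument: that the ``only if'' is a first-order statement, and that the standard errors require conditioning on the realized splits or a split-sample design.
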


\begin{proof}
Sign reading of \eqref{eq:homotopy-sharpe-derivative}; (i)--(ii) are immediate, and the estimability statement is the linearity of \eqref{eq:node-alpha} in $\bm\alpha(\w^{\mathrm H})$.
\end{proof}

\begin{proposition}[KL trust budget along the homotopy; \textbf{P}]\label{prop:homotopy-kl}
For every internal node $v$, the map $\lambda\mapsto d_{\mathrm{KL}}\bigl(\beta_v^{(\lambda)}\Vert r_v\bigr)$ (binary KL divergence) is convex on $[0,1]$, vanishes together with its derivative at $\lambda=0$, and satisfies the chord bound
$d_{\mathrm{KL}}\bigl(\beta_v^{(\lambda)}\Vert r_v\bigr)\le\lambda\,d_{\mathrm{KL}}(s_v\Vert r_v)$.
Consequently, with $W_v^{(\lambda)}$ the interpolated capital entering $v$,
\begin{equation}\label{eq:homotopy-kl-budget}
\begin{split}
D_{\mathrm{KL}}\bigl(\w^{(\lambda)}\,\Vert\,\w^{\mathrm{HRP}}\bigr)
=
\sum_{v\in\mathcal V_{\mathrm{int}}}W_v^{(\lambda)}\,
d_{\mathrm{KL}}\bigl(\beta_v^{(\lambda)}\Vert r_v\bigr)
&\;\le\;
\lambda
\sum_{v\in\mathcal V_{\mathrm{int}}}W_v^{(\lambda)}\,d_{\mathrm{KL}}(s_v\Vert r_v)
\\
&\;\le\;
\lambda\,D_{\mathcal T}\max_{v}d_{\mathrm{KL}}(s_v\Vert r_v),
\end{split}
\end{equation}
since $\sum_v W_v^{(\lambda)}=\sum_i w_i^{(\lambda)}\,|\mathcal P(i)|\le D_{\mathcal T}$ is the capital-weighted tree depth.
\end{proposition}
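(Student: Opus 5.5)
Fix an internal node $v$ and write $f(\lambda):=d_{\mathrm{KL}}\bigl(\beta_v^{(\lambda)}\Vert r_v\bigr)=\beta\log(\beta/r_v)+(1-\beta)\log\bigl((1-\beta)/(1-r_v)\bigr)$ with $\beta=\beta_v^{(\lambda)}=(1-\lambda)r_v+\lambda s_v$. The plan is to separate the scalar (node-level) claims from the aggregate (tree-level) bound.

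\emph{Node-level claims.} Since $x\mapsto d_{\mathrm{KL}}(x\Vert r_v)$ has second derivative $1/(x(1-x))>0$ on $(0,1)$, it is strictly convex. Proposition~\ref{prop:unitfree-interp} shows that $\lambda\mapsto\beta_v^{(\lambda)}$ is affine and stays in $(0,1)$, so $f$ is convex on $[0,1]$. At $\lambda=0$ we have $\beta=r_v$, hence $f(0)=0$. The first derivative in $x$ is $\log\frac{x(1-r_v)}{r_v(1-x)}$, which vanishes at $x=r_v$; by the chain rule and \eqref{eq:unitfree-deriv}, $f'(0)=(s_v-r_v)\cdot 0=0$. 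The chord bound is the convexity inequality $f(\lambda)\le(1-\lambda)f(0)+\lambda f(1)=\lambda\,d_{\mathrm{KL}}(s_v\Vert r_v)$.

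\emph{KL decomposition at general $\lambda$.} The first equality in \eqref{eq:homotopy-kl-budget} is the argument of Theorem~\ref{thm:rahrp-pathwise} with $s_v$ replaced by $\beta_v^{(\lambda)}$. That proof uses only two facts: both portfolios are root-to-leaf products of local branch fractions on the common tree, and the $\lambda$-portfolio places mass $W_v^{(\lambda)}\beta_v^{(\lambda)}$ and $W_v^{(\lambda)}(1-\beta_v^{(\lambda)})$ in the two subtrees of $v$. Concretely:
\begin{itemize}
\item write $\log(w_i^{(\lambda)}/w_i^{\mathrm{HRP}})$ as a sum over $v\in\mathcal P(i)$ of the left- or right-branch log-ratio;
\item weight by $w_i^{(\lambda)}$ and exchange the sums;
\item collect the left and right subtree masses at each $v$, which yields $W_v^{(\lambda)}$ times the binary KL term.
\end{itemize}
Applying the node chord bound termwise, with $W_v^{(\lambda)}\ge0$, gives the middle inequality.

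\emph{Final inequality.} Bound $d_{\mathrm{KL}}(s_v\Vert r_v)\le\max_v d_{\mathrm{KL}}(s_v\Vert r_v)$. For the remaining sum, exchange the order of summation: $\sum_v W_v^{(\lambda)}=\sum_v\sum_{i:\,v\in\mathcal P(i)}w_i^{(\lambda)}=\sum_i w_i^{(\lambda)}|\mathcal P(i)|$. Then $|\mathcal P(i)|\le D_{\mathcal T}$ and $\sum_i w_i^{(\lambda)}=1$ complete the bound.

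None of these steps is hard. The only point that needs care is checking that the proof of Theorem~\ref{thm:rahrp-pathwise} really transfers to interpolated splits. It does, because the $\beta_v^{(\lambda)}$ lie strictly in $(0,1)$, so every logarithm is finite and the leaf weights are positive.
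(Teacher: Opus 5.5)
Your proposal is correct and follows the paper's proof essentially step for step. The shared steps are: convexity of $x\mapsto d_{\mathrm{KL}}(x\Vert r_v)$ composed with the affine path $\beta_v^{(\lambda)}$; vanishing value and derivative at $x=r_v$; the chord inequality between the endpoints $0$ and $d_{\mathrm{KL}}(s_v\Vert r_v)$; reuse of the product-structure argument of Theorem~\ref{thm:rahrp-pathwise} with $s_v$ replaced by $\beta_v^{(\lambda)}$; and counting each leaf's capital once per ancestor to get $\sum_v W_v^{(\lambda)}=\sum_i w_i^{(\lambda)}|\mathcal P(i)|\le D_{\mathcal T}$, with your explicit second-derivative and chain-rule computations simply making precise what the paper states in words.
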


\begin{proof}
The identity in \eqref{eq:homotopy-kl-budget} is Theorem~\ref{thm:rahrp-pathwise} with $s_v$ replaced by $\beta_v^{(\lambda)}$: the proof there uses only the product structure of tree weights, never the specific form of the splits. For fixed $r\in(0,1)$, $x\mapsto d_{\mathrm{KL}}(x\Vert r)=x\log(x/r)+(1-x)\log\bigl((1-x)/(1-r)\bigr)$ is convex with derivative $\log(x/r)-\log\bigl((1-x)/(1-r)\bigr)$, which vanishes at $x=r$; composing with the affine path $\beta_v^{(\lambda)}$ preserves convexity, and the value and derivative at $\lambda=0$ are zero, so the map is nondecreasing. Convexity with the endpoint values $0$ and $d_{\mathrm{KL}}(s_v\Vert r_v)$ gives the chord bound. Finally each unit of leaf capital is counted once per ancestor, so $\sum_v W_v^{(\lambda)}=\sum_i w_i^{(\lambda)}|\mathcal P(i)|$, bounded by the maximal root-to-leaf depth.
\end{proof}

The economic reading: $\lambda$ is a \emph{certified trust budget}. Choosing the interpolation level caps the capital reallocated away from the risk-only baseline linearly in $\lambda$, node by node, before any data are seen --- which is exactly the role the $\lambda_{\mathrm{KL}}$ penalty plays dynamically in the RLPO objective \eqref{eq:hpo-regularized-rlpo} below. Theorem~\ref{thm:homotopy-alpha} says where that budget earns alpha; Proposition~\ref{prop:homotopy-kl} says how much structural movement it can cause.


\section{The Mathematics of Heuristic Portfolio Optimization (HPO)}\label{sec:hpo}

The preceding sections compute exact coincidence conditions rule by rule. This section abstracts from the individual names and treats heuristic portfolio construction itself as a mathematical object. The key move is to separate three layers that are often conflated in practice: the full population optimizer, the information-restricted portfolio map, and the estimated implementation of that map.

\subsection{HPO maps as information-restricted portfolio rules}

\begin{definition}[Information-restricted portfolio map]\label{def:hpo-map}
Let
\[
\Theta := \{(\bmu_e,\bSigma): \bmu_e\in\R^N,\; \bSigma\in\mathbb{S}_{++}^N\}
\]
be the population parameter space. An \emph{information set} is a statistic
\[
\mathcal I:\Theta\to\mathcal Z,
\]
and a \emph{heuristic portfolio map} is a measurable function
\[
\Phi:\mathcal Z\to\Delta^{N-1},\qquad
\Delta^{N-1}:=\{\w\in\R^N:\ones^\top\w=1,\; w_i\ge 0\}.
\]
The induced portfolio at population point $\theta=(\bmu_e,\bSigma)$ is
\[
\w_\Phi(\theta):=\Phi(\mathcal I(\theta)).
\]
A parametric HPO family is a collection $\{\Phi_\lambda:\lambda\in\Lambda\}$, where $\lambda$ may represent a shrinkage intensity, a score floor, a tree depth, a turnover penalty, or the HRP--RA-HRP interpolation weight.
\end{definition}

Examples are immediate. Equal weight has a trivial information set. Inverse volatility uses only $\diag(\bSigma)$. ERC uses $\bSigma$ but not $\bmu_e$. HRP uses a dendrogram and recursively aggregated cluster variances. RA-HRP uses the same tree plus local cluster-premium and cluster-risk scores. Schur-RA-HRP enlarges the local information set by allowing sibling-block Schur complements. Thus the rule name is less important than the map
\[
(\bmu_e,\bSigma)\;\longmapsto\; \mathcal I(\bmu_e,\bSigma)
\;\longmapsto\; \Phi(\mathcal I(\bmu_e,\bSigma)).
\]
HPO studies the geometry and sampling behavior of this composition.

\begin{definition}[Population HPO projection]\label{def:hpo-projection}
For a family $\{\Phi_\lambda\}_{\lambda\in\Lambda}$ and a population point $\theta=(\bmu_e,\bSigma)$ with positive tangency portfolio, the population-optimal member of the heuristic family is
\begin{equation}\label{eq:hpo-proj}
\lambda^*(\theta)
\in
\argmax_{\lambda\in\Lambda} S\bigl(\w_{\Phi_\lambda}(\theta)\bigr)
=
\argmin_{\lambda\in\Lambda}
\theta_{\bSigma}\bigl(\w_{\Phi_\lambda}(\theta),\w_T(\theta)\bigr),
\end{equation}
where $\theta_{\bSigma}$ is the angle in the $\bSigma$-inner product from Theorem~\ref{thm:cosine}.
\end{definition}

Equation~\eqref{eq:hpo-proj} is the clean mathematical definition of HPO: it is not full optimization over all portfolios; it is angular projection of the tangency direction onto a lower-complexity image set generated by admissible heuristic maps. The image set may be finite (choose among EW, IV, ERC, HRP), one-dimensional (HRP--RA-HRP interpolation), or high-dimensional (a family of trees, floors, shrinkage operators, and cost penalties).

\subsection{The implied-return defect of a heuristic rule}

The implied-return principle gives a natural residual for every heuristic. If $\w$ were exactly optimal, $\bmu_e$ would lie on the ray spanned by $\bSigma\w$. The distance to that ray is the exact HPO defect.

\begin{definition}[HPO implied-return defect]\label{def:hpo-defect}
For a fully invested portfolio $\w$ with $\w^\top\bmu_e>0$, define
\begin{equation}\label{eq:hpo-defect}
\mathfrak d(\w;\bmu_e,\bSigma)
:=
\min_{a>0}
\left\|\bmu_e-a\bSigma\w\right\|_{\bSigma^{-1}}^2,
\qquad
\left\|x\right\|_{\bSigma^{-1}}^2:=x^\top\bSigma^{-1}x.
\end{equation}
The normalized defect is
\begin{equation}\label{eq:hpo-normalized-defect}
\mathfrak D(\w;\bmu_e,\bSigma)
:=
\frac{\mathfrak d(\w;\bmu_e,\bSigma)}{\bmu_e^\top\bSigma^{-1}\bmu_e}.
\end{equation}
For a rule $\Phi$, write $\mathfrak D_\Phi(\theta):=\mathfrak D(\w_\Phi(\theta);\theta)$.
\end{definition}

\begin{theorem}[Defect equals squared Sharpe inefficiency; \textbf{P}]\label{thm:hpo-defect}
Let $\w$ be fully invested with $\w^\top\bmu_e>0$. Then the minimizer in \eqref{eq:hpo-defect} is
\begin{equation}\label{eq:hpo-a-star}
a^*=
\frac{\w^\top\bmu_e}{\w^\top\bSigma\w},
\end{equation}
and
\begin{equation}\label{eq:hpo-defect-sharpe}
\mathfrak d(\w;\bmu_e,\bSigma)
=
S_{\max}^2-S(\w)^2,
\qquad
\mathfrak D(\w;\bmu_e,\bSigma)
=
1-\eta(\w)^2
=
\sin^2\theta_{\bSigma}(\w,\w_T).
\end{equation}
Thus the implied-return residual, the angular error, and squared Sharpe inefficiency are the same object in three coordinate systems.
\end{theorem}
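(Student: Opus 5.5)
The plan is to treat the defect as a one-dimensional least-squares problem in the $\bSigma^{-1}$-inner product and then recognise the residual through the angular identity of Theorem~\ref{thm:cosine}. Set $A:=\w^\top\bSigma\w>0$ (positive because $\bSigma\succ0$ and $\w\neq0$, which holds since $\ones^\top\w=1$). Expand the objective in \eqref{eq:hpo-defect} as
\[
f(a)=\bmu_e^\top\bSigma^{-1}\bmu_e-2a\,\w^\top\bmu_e+a^2A .
\]
The cross term simplifies because $(\bSigma\w)^\top\bSigma^{-1}\bmu_e=\w^\top\bmu_e$ and $\|\bSigma\w\|_{\bSigma^{-1}}^2=\w^\top\bSigma\w$. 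So $f$ is a strictly convex quadratic in $a$. Its unconstrained minimiser is $a^*=\w^\top\bmu_e/A$, which is \eqref{eq:hpo-a-star}. The hypothesis $\w^\top\bmu_e>0$ gives $a^*>0$, so the constraint $a>0$ is inactive and the minimum over $a>0$ is attained there. This is the one point where the sign hypothesis matters; without it the infimum would sit at the boundary $a\downarrow0$.

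Substituting $a^*$ into $f$ gives $f(a^*)=\bmu_e^\top\bSigma^{-1}\bmu_e-(\w^\top\bmu_e)^2/A$. By Definition~\ref{def:tangency}, $\bmu_e^\top\bSigma^{-1}\bmu_e=S_{\max}^2$, and by definition $(\w^\top\bmu_e)^2/A=S(\w)^2$. This yields $\mathfrak d=S_{\max}^2-S(\w)^2$.

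Dividing by $S_{\max}^2$, which is positive since $\bmu_e\neq0$, gives $\mathfrak D=1-\eta(\w)^2$. Theorem~\ref{thm:cosine} then identifies this with $\sin^2\theta_{\bSigma}(\w,\w_T)$. One subtlety needs a remark: $\w_T$ requires $\ones^\top\bSigma^{-1}\bmu_e>0$. I would note that the proof of Theorem~\ref{thm:cosine} actually works with the unnormalised direction $\bSigma^{-1}\bmu_e$. Since only $\sin^2$ appears, the identity holds with $\theta$ measured against that direction, and it coincides with the $\w_T$ version whenever $\w_T$ is defined.

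As a geometric cross-check, I would observe that the map $x\mapsto\bSigma^{-1}x$ is an isometry from $(\R^N,\|\cdot\|_{\bSigma^{-1}})$ to $(\R^N,\|\cdot\|_{\bSigma})$. Under it, the defect becomes the squared $\bSigma$-distance from $\tilde\w=\bSigma^{-1}\bmu_e$ to the ray through $\w$. That squared distance equals $\|\tilde\w\|_{\bSigma}^2\sin^2\theta$, which is the ``three coordinate systems'' reading of the statement. There is no real obstacle here; the only care needed is the positivity of $a^*$ and the well-definedness of the angle.
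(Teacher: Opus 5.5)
Your proposal is correct and follows the paper's proof. The paper likewise reads the defect as the least-squares projection of $\bmu_e$ onto the ray spanned by $\bSigma\w$ in the $\bSigma^{-1}$-inner product, obtains $a^*$ and the Pythagorean residual $S_{\max}^2-S(\w)^2$, then divides by $S_{\max}^2$ and invokes Theorem~\ref{thm:cosine}; your explicit expansion of the quadratic, and your remark that the angle should be measured against $\bSigma^{-1}\bmu_e$ when $\ones^\top\bSigma^{-1}\bmu_e>0$ fails, are sound refinements of details the paper leaves implicit.
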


\begin{proof}
Let $x=\bmu_e$ and $y=\bSigma\w$. In the $\bSigma^{-1}$-inner product,
\[
\langle x,y\rangle_{\bSigma^{-1}}=x^\top\bSigma^{-1}y=\bmu_e^\top\w=\w^\top\bmu_e,
\qquad
\|y\|_{\bSigma^{-1}}^2=\w^\top\bSigma\w.
\]
The least-squares projection of $x$ onto the positive ray spanned by $y$ has coefficient \eqref{eq:hpo-a-star}, which is positive by assumption. Therefore
\[
\mathfrak d
=\|x\|_{\bSigma^{-1}}^2-
\frac{\langle x,y\rangle_{\bSigma^{-1}}^2}{\|y\|_{\bSigma^{-1}}^2}
=
\bmu_e^\top\bSigma^{-1}\bmu_e-
\frac{(\w^\top\bmu_e)^2}{\w^\top\bSigma\w}
=S_{\max}^2-S(\w)^2.
\]
Dividing by $S_{\max}^2$ and applying Theorem~\ref{thm:cosine} gives \eqref{eq:hpo-defect-sharpe}.
\end{proof}

\begin{corollary}[Near-optimality tubes; \textbf{P}]\label{cor:hpo-tubes}
Fix $\bSigma\succ0$ and a portfolio $\w$ with $\w^\top\bSigma\w>0$. The set
\[
\mathcal T_\delta(\w;\bSigma)
:=
\{\bmu_e:\mathfrak D(\w;\bmu_e,\bSigma)\le\delta\}
\]
is exactly the cone around the ray $\{a\bSigma\w:a>0\}$ with aperture $\arcsin\sqrt\delta$ in the $\bSigma^{-1}$ geometry. Hence exact optimality is the zero-aperture cone, while approximate optimality is a positive-volume tube around the same ray. If a prior density for $\bmu_e$ is bounded on bounded $\bSigma^{-1}$-balls, then for small $\delta$ the probability of $\mathcal T_\delta$ inside any fixed ball is of order $\delta^{(N-1)/2}$.
\end{corollary}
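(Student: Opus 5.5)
The plan is to reduce the tube to an angular condition through Theorem~\ref{thm:hpo-defect}, and then to estimate its prior mass by a cone-volume argument in whitened coordinates.

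\emph{Step 1 (the cone).} Theorem~\ref{thm:hpo-defect} gives $\mathfrak D(\w;\bmu_e,\bSigma)=\sin^2\theta_{\bSigma}(\w,\w_T)$ whenever $\w^\top\bmu_e>0$. I will re-express this angle in the $\bSigma^{-1}$ geometry. Put $x=\bmu_e$ and $y=\bSigma\w$, and note that $\w_T\propto\bSigma^{-1}\bmu_e$. Then
\[
\langle \w,\bSigma^{-1}\bmu_e\rangle_{\bSigma}=\w^\top\bmu_e=\langle y,x\rangle_{\bSigma^{-1}},\qquad
\|\w\|_{\bSigma}=\|y\|_{\bSigma^{-1}},\qquad
\|\bSigma^{-1}\bmu_e\|_{\bSigma}=\|x\|_{\bSigma^{-1}}.
\]
So $\theta_{\bSigma}(\w,\w_T)$ equals the $\bSigma^{-1}$-angle $\phi$ between $\bmu_e$ and the ray spanned by $\bSigma\w$. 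Provided $\w^\top\bmu_e>0$ (so $\phi<\pi/2$), the condition $\mathfrak D\le\delta$ is equivalent to $\sin\phi\le\sqrt\delta$, i.e.\ $\phi\le\arcsin\sqrt\delta$. This is exactly the stated cone. Two boundary issues need a line each. First, the origin is excluded or trivially included; I will adopt the convention that $\bmu_e=0$ is excluded. Second, $\mathfrak D$ is only defined when $\w^\top\bmu_e>0$, which is precisely the half-space $\phi<\pi/2$, so no points on the opposite side enter the tube. Setting $\delta=0$ gives aperture zero and recovers the ray of Proposition~\ref{prop:generic}.

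\emph{Step 2 (volume).} I will whiten with $z=\bSigma^{-1/2}\bmu_e$, under which the $\bSigma^{-1}$ geometry becomes Euclidean. The tube becomes a circular cone $\{z:\angle(z,u)\le\arcsin\sqrt\delta\}$ around the unit vector $u\propto\bSigma^{1/2}\w$. A $\bSigma^{-1}$-ball becomes a Euclidean ball, and the Jacobian $\det\bSigma^{1/2}$ is constant.

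\emph{Step 3 (the cone volume, the main step).} The intersection of the cone with a ball $B(0,R)$ has volume
\[
\tfrac{R^N}{N}\,|\mathbb S^{N-2}|\int_0^{\arcsin\sqrt\delta}\sin^{N-2}t\,dt .
\]
For small $\delta$ this is $\asymp(\sqrt\delta)^{N-1}=\delta^{(N-1)/2}$.

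\emph{Step 4 (the fixed ball).} For a ball not centred at the origin, I will embed it in a centred ball $B(0,R)$, which gives an upper bound of the same order. Since the density is bounded on that ball, $\mathbb P(\mathcal T_\delta\cap\text{ball})\le C\,\delta^{(N-1)/2}$. I read ``of order'' as this upper bound, i.e.\ $O(\cdot)$. If a two-sided statement is wanted, I would add that the ball meets the ray and the density is bounded below near it. The only delicate point is this interpretation; the spherical-cap computation itself is routine.
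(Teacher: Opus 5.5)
Your proposal is correct and follows the paper's own argument. Theorem~\ref{thm:hpo-defect} identifies $\mathfrak D$ with the squared sine of the $\bSigma^{-1}$-angle between $\bmu_e$ and $\bSigma\w$, and the mass estimate is the small-cap scaling $\alpha^{N-1}$ with $\alpha=\arcsin\sqrt\delta\sim\sqrt\delta$. Beyond that, your whitening step, the explicit cap integral, and your observation that ``of order'' is only an upper bound (unless the ball's interior meets the ray where the density is bounded below) are refinements the paper's two-line proof leaves implicit.
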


\begin{proof}
By Theorem~\ref{thm:hpo-defect}, $\mathfrak D=\sin^2\angle_{\bSigma^{-1}}(\bmu_e,\bSigma\w)$, so the sublevel set is the stated cone. In $N$ dimensions a cone of angular radius $\alpha$ has solid angle proportional to $\alpha^{N-1}$ as $\alpha\downarrow0$. Since $\alpha=\arcsin\sqrt\delta\sim\sqrt\delta$, bounded-density probabilities inside fixed-radius balls scale as $\delta^{(N-1)/2}$.
\end{proof}

\begin{remark}[Why this defect is useful]\label{rem:hpo-defect-use}
The defect $\mathfrak D_\Phi$ is simultaneously a diagnostic, a test target, and an optimization objective. As a diagnostic, it reports exactly how far the heuristic's implied premia $a^*\bSigma\w_\Phi$ are from the estimated premia. As a test target, it is equivalent to the zero-alpha restriction in Remark~\ref{rem:grs}. As an HPO objective, it ranks heuristic families without pretending that every member is globally Markowitz-optimal.
\end{remark}

\subsection{Tree HPO: local scores as stochastic kernels}

HRP, RA-HRP, Schur-RA-HRP, and their interpolations are all examples of a common object: a tree-indexed product of local binary allocation kernels.

\begin{definition}[Positive score-tree HPO]\label{def:tree-score-hpo}
Fix a binary tree $\mathcal T$ with leaf set $\{1,\ldots,N\}$. At each internal node $v$, let $L(v)$ and $R(v)$ denote its left and right child clusters. A positive score-tree HPO rule assigns scores
\[
G_v(L(v))>0,
\qquad
G_v(R(v))>0,
\]
and local split
\begin{equation}\label{eq:hpo-tree-split}
\beta_v
:=
\frac{G_v(L(v))}{G_v(L(v))+G_v(R(v))}.
\end{equation}
For a leaf $i$, let $\mathcal P(i)$ be the internal nodes on the root-to-leaf path and let $\eta_v(i)=1$ if $i$ descends through the left branch of $v$ and $0$ otherwise. The resulting leaf weight is
\begin{equation}\label{eq:hpo-tree-weight}
w_i^\Phi
=
\prod_{v\in\mathcal P(i)}
\beta_v^{\eta_v(i)}(1-\beta_v)^{1-\eta_v(i)}.
\end{equation}
\end{definition}

The main rules correspond to different choices of $G_v$. HRP uses $G_v(K)=\widetilde V_K^{-1}$. RA-HRP uses $G_v(K)=S^+(K)$. Schur-RA-HRP uses conditional scores $G_v(K)=S^+_{\mathrm{Sch}}(K\mid K^c_v)$, where $K^c_v$ is the sibling cluster at node $v$. Unit-free HRP--RA-HRP interpolation is obtained by interpolating the local branch probabilities $\beta_v$, not by adding raw scores with incompatible units.

\begin{theorem}[Exact coincidence for score-tree HPO; \textbf{P}]\label{thm:tree-hpo-coincidence}
Fix $\theta=(\bmu_e,\bSigma)$ and suppose $x_T:=\bSigma^{-1}\bmu_e$ has strictly positive subtree masses
\[
M_T(K):=\ones_K^\top x_T>0
\]
for every cluster $K$ appearing in $\mathcal T$. Let $\Phi$ be a positive score-tree HPO rule on $\mathcal T$. Then $\w^\Phi$ coincides with the tangency composition $\w_T=x_T/(\ones^\top x_T)$ if and only if, at every internal node $v$,
\begin{equation}\label{eq:hpo-node-coincidence}
\frac{G_v(L(v))}{G_v(R(v))}
=
\frac{M_T(L(v))}{M_T(R(v))}.
\end{equation}
Equivalently, every heuristic local split must equal the exact tangency mass split:
\[
\beta_v=\frac{M_T(L(v))}{M_T(L(v))+M_T(R(v))}.
\]
\end{theorem}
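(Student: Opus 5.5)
The plan is to reduce coincidence of the two portfolios to equality of their local split fractions. Both sides can be written as root-to-leaf products over the same tree $\mathcal T$. The heuristic weights $\w^\Phi$ have this form by definition \eqref{eq:hpo-tree-weight}.

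\emph{Step 1: tree representation of tangency.} Define for each internal node $v$ the tangency split
\[
\tau_v := \frac{M_T(L(v))}{M_T(L(v))+M_T(R(v))}.
\]
Since $L(v)\cup R(v)$ is the cluster at $v$ and masses are additive, $M_T(L(v))+M_T(R(v))=M_T(v)$. Positivity of all subtree masses gives $\tau_v\in(0,1)$. Along the root-to-leaf path of leaf $i$, the product $\prod_{v\in\mathcal P(i)}\tau_v^{\eta_v(i)}(1-\tau_v)^{1-\eta_v(i)}$ telescopes to $M_T(\{i\})/M_T(\text{root}) = x_{T,i}/\ones^\top x_T = w_{T,i}$. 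This is the same telescoping used in Theorem~\ref{thm:hrpdiag} and Proposition~\ref{prop:rahrp-equalvol}. So $\w_T$ is itself a positive score-tree rule with scores $G_v(K)=M_T(K)$. Note that this step does not need the Schur machinery of Theorem~\ref{thm:rahrp-coincidence}: the masses of $x_T$ are computed directly.

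\emph{Step 2: sufficiency.} Condition \eqref{eq:hpo-node-coincidence} is equivalent to $\beta_v=\tau_v$, because $x\mapsto x/(1+x)$ is injective on $(0,\infty)$. If $\beta_v=\tau_v$ at every node, the two leaf products agree factor by factor, so $\w^\Phi=\w_T$.

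\emph{Step 3: necessity, the only real point.} Suppose $\w^\Phi=\w_T$. For any tree-generated weight vector, the mass of subtree $K$ is the product of split fractions along the path from the root to $K$. This is the same observation as in the proof of Theorem~\ref{thm:rahrp-pathwise}: the capital entering $v$ is $W_v$, and the left child receives $W_v\beta_v$. Therefore each split can be recovered from the weights as
\[
\beta_v=\frac{\sum_{i\in L(v)}w_i^\Phi}{\sum_{i\in v}w_i^\Phi}.
\]
The denominator is strictly positive because every split lies in $(0,1)$. Applying the same formula to $\w_T$ gives $\tau_v$. Equal weight vectors therefore give equal splits at every node, $\beta_v=\tau_v$, which is \eqref{eq:hpo-node-coincidence}.

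The main subtlety is this invertibility of the map from splits to weights. It relies on strict positivity on both sides: for $\Phi$ it comes from positive scores, and for tangency from the hypothesis $M_T(K)>0$. Without positivity, a zero-mass subtree would leave the splits beneath it undetermined. I will state this explicitly and close by noting that the equivalent form displayed after the theorem is just $\beta_v=\tau_v$.
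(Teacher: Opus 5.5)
Your proposal is correct and follows the paper's proof in substance. For necessity, both arguments read each local split off the subtree masses of the common weight vector: the paper uses the left-to-right ratio $\beta_v/(1-\beta_v)$, and you use the left-to-parent ratio. For sufficiency, both show that the heuristic sends the same fraction of each parent's mass to each child as tangency does: the paper argues by induction from root mass one, while you write $\w_T$ as a telescoping tree product and match factor by factor.

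Your explicit account of where strict positivity of the splits and of $M_T(K)$ is used is a small but welcome addition to the paper's terser argument.
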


\begin{proof}
If $\w^\Phi=\w_T$, then the total portfolio mass assigned to the left child of any node $v$ divided by the total mass assigned to its sibling equals $M_T(L(v))/M_T(R(v))$. But under the recursive rule, the same ratio is $\beta_v/(1-\beta_v)=G_v(L(v))/G_v(R(v))$, proving necessity. Conversely, if \eqref{eq:hpo-node-coincidence} holds at every node, then the heuristic sends exactly the same fraction of each parent subtree mass to each child as the tangency portfolio. Starting from root mass one, induction down the tree shows that every leaf receives the tangency mass, hence $\w^\Phi=\w_T$.
\end{proof}

\begin{remark}[RA-HRP as one score choice]\label{rem:hpo-rahrp}
Theorem~\ref{thm:tree-hpo-coincidence} is the abstract version of Theorem~\ref{thm:rahrp-coincidence}. RA-HRP chooses a particular economically interpretable score, the floored local cluster Sharpe ratio. Exact optimality requires that this score ratio reproduce the exact subtree mass ratio of $\bSigma^{-1}\bmu_e$. Schur-RA-HRP changes the score in order to move the local ratio closer to the Schur-adjusted tangency mass, while still avoiding the global inverse.
\end{remark}

\subsection{The exact Schur-tangency split}

The score-tree theorem identifies the target local mass ratio. The next theorem writes that target in Schur form. This is the return-aware analogue of the GMV Schur identity used earlier for HRP.

\begin{theorem}[Exact Schur-tangency bisection; \textbf{P}]\label{thm:hpo-schur-tangency}
Let $C=A\cup B$ and partition $\bSigma_C$ and $\bmu_C$ conformably:
\[
\bSigma_C=
\begin{pmatrix}
\bSigma_A & \bSigma_{AB}\\
\bSigma_{BA} & \bSigma_B
\end{pmatrix},
\qquad
\bmu_C=
\begin{pmatrix}
\bmu_A\\ \bmu_B
\end{pmatrix}.
\]
Define Schur complements and Schur-adjusted premia
\begin{align*}
\mathbf S_A&:=\bSigma_A-\bSigma_{AB}\bSigma_B^{-1}\bSigma_{BA},
&
\nu_{A\mid B}&:=\bmu_A-\bSigma_{AB}\bSigma_B^{-1}\bmu_B,\\
\mathbf S_B&:=\bSigma_B-\bSigma_{BA}\bSigma_A^{-1}\bSigma_{AB},
&
\nu_{B\mid A}&:=\bmu_B-\bSigma_{BA}\bSigma_A^{-1}\bmu_A.
\end{align*}
Then the unnormalized tangency vector $x_C=\bSigma_C^{-1}\bmu_C$ satisfies
\begin{equation}\label{eq:hpo-schur-tangency-vector}
(x_C)_A=\mathbf S_A^{-1}\nu_{A\mid B},
\qquad
(x_C)_B=\mathbf S_B^{-1}\nu_{B\mid A}.
\end{equation}
Consequently the exact tangency mass allocated to $A$ at this split is
\begin{equation}\label{eq:hpo-exact-tangency-split}
\beta_A^T
=
\frac{\ones_A^\top\mathbf S_A^{-1}\nu_{A\mid B}}
{\ones_A^\top\mathbf S_A^{-1}\nu_{A\mid B}
+\ones_B^\top\mathbf S_B^{-1}\nu_{B\mid A}},
\end{equation}
whenever the denominator is positive and both masses are positive.
\end{theorem}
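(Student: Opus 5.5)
The plan is to reproduce the argument of Theorem~\ref{thm:schur}, replacing the right-hand side $\ones_C$ by $\bmu_C$. Everything follows from the standard block-inverse formula for a symmetric positive definite matrix.

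First I will check that all the inverses exist. Since $\bSigma_C\succ0$, its principal blocks $\bSigma_A$ and $\bSigma_B$ are positive definite. The Schur complements $\mathbf S_A$ and $\mathbf S_B$ are then positive definite as well. One way to see this: $\mathbf S_A^{-1}$ is the $AA$ block of $\bSigma_C^{-1}\succ0$. Alternatively, $z^\top\mathbf S_A z$ is the minimum over $y$ of the quadratic form of $\bSigma_C$ at $(z,y)$.

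Next I will write down the block-inverse identities:
\[
\bigl(\bSigma_C^{-1}\bigr)_{AA}=\mathbf S_A^{-1},\qquad
\bigl(\bSigma_C^{-1}\bigr)_{AB}=-\mathbf S_A^{-1}\bSigma_{AB}\bSigma_B^{-1},
\]
together with their mirror images for the $B$ rows. These can be confirmed by multiplying out with $\bSigma_C$: for the $A$ rows, $\mathbf S_A^{-1}(\bSigma_A-\bSigma_{AB}\bSigma_B^{-1}\bSigma_{BA})=I$ and $\mathbf S_A^{-1}(\bSigma_{AB}-\bSigma_{AB}\bSigma_B^{-1}\bSigma_B)=0$. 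From these,
\[
(x_C)_A=\mathbf S_A^{-1}\bmu_A-\mathbf S_A^{-1}\bSigma_{AB}\bSigma_B^{-1}\bmu_B=\mathbf S_A^{-1}\nu_{A\mid B}.
\]
The formula for $(x_C)_B$ follows in the same way by exchanging the roles of $A$ and $B$. This gives \eqref{eq:hpo-schur-tangency-vector}, and it is the same computation already carried out in the proof of Theorem~\ref{thm:rahrp-coincidence}.

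For \eqref{eq:hpo-exact-tangency-split}, I will sum the coordinates of $x_C$ over each child. This gives unnormalized masses $\ones_A^\top\mathbf S_A^{-1}\nu_{A\mid B}$ and $\ones_B^\top\mathbf S_B^{-1}\nu_{B\mid A}$, which add up to $\ones_C^\top x_C$. When both masses are positive, normalizing $x_C$ by the (positive) total sends the fraction $\beta_A^T$ of the cluster capital to $A$. This is the local mass ratio that Theorem~\ref{thm:tree-hpo-coincidence} refers to. The positivity hypotheses are needed only so that $\beta_A^T$ lies in $(0,1)$ and can be read as a split fraction.

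The one step that needs care is the invertibility of the two Schur complements and the correctness of the off-diagonal block formula. Both are standard, so I expect no real obstacle. I would also point out that $\bmu$ may be read as excess premia $\bmu_e$ here, since the identity is linear in the right-hand side.
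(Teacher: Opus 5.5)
Your proposal is correct and is the paper's own proof: apply the block-inverse formula to read off the $A$ and $B$ blocks of $\bSigma_C^{-1}\bmu_C$ as $\mathbf S_A^{-1}\nu_{A\mid B}$ and $\mathbf S_B^{-1}\nu_{B\mid A}$, then sum coordinates within each child to obtain the split fraction. Your checks that $\mathbf S_A$ and $\mathbf S_B$ are positive definite and that the off-diagonal block formula holds fill in details the paper leaves implicit.
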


\begin{proof}
Apply the block-inverse formula. The $A$ block of $\bSigma_C^{-1}\bmu_C$ is
\[
\mathbf S_A^{-1}\bmu_A
-
\mathbf S_A^{-1}\bSigma_{AB}\bSigma_B^{-1}\bmu_B
=
\mathbf S_A^{-1}\nu_{A\mid B}.
\]
The $B$ block is symmetric. Summing coordinates in each block gives the mass formula.
\end{proof}

\begin{corollary}[The HPO approximation ladder; \textbf{P}]\label{cor:hpo-ladder}
At a node $C=A\cup B$, exact local Markowitz/tangency allocation consumes four objects: the Schur covariance $\mathbf S_K$, the Schur-adjusted premium $\nu_{K\mid K^c}$, the inverse $\mathbf S_K^{-1}$, and the resulting subtree mass $\ones_K^\top\mathbf S_K^{-1}\nu_{K\mid K^c}$. The main hierarchical HPO rules are obtained by deleting or compressing these objects in stages:
\[
\begin{array}{rcl}
\textnormal{Exact Schur tangency} &:& G(K)=\ones_K^\top\mathbf S_K^{-1}\nu_{K\mid K^c},\\[0.2em]
\textnormal{Schur-RA-HRP} &:& G(K)=S^+_{\mathrm{Sch}}(K\mid K^c),\\[0.2em]
\textnormal{RA-HRP} &:& G(K)=S^+(K),\\[0.2em]
\textnormal{HRP} &:& G(K)=\widetilde V_K^{-1}.
\end{array}
\]
Thus the ``heuristic'' label describes a precise algebraic operation: replace exact Schur-tangency masses by stable positive scores whose estimation error is smaller and whose economic meaning is transparent.
\end{corollary}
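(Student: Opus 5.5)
The statement to prove is Corollary~\ref{cor:hpo-ladder}. Most of it is a classification plus definitional bookkeeping, so the plan is to check, row by row, that each rule in the ladder is a positive score-tree HPO rule in the sense of Definition~\ref{def:tree-score-hpo} with the displayed score $G(K)$. The substantive row is the first one, where the score must be shown to reproduce exact tangency. The remaining rows are then read as successive deletions of the four objects listed.

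\emph{Exact Schur tangency row.} Fix a node $C=A\cup B$. I will invoke Theorem~\ref{thm:hpo-schur-tangency}, which gives $(x_C)_A=\mathbf S_A^{-1}\nu_{A\mid B}$. Hence $G(A)=\ones_A^\top\mathbf S_A^{-1}\nu_{A\mid B}$ is the tangency mass of the child $A$ computed from the local pair $(\bSigma_C,\bmu_C)$. Plugging this $G$ into \eqref{eq:hpo-tree-split} gives $\beta_A=\beta_A^T$ of \eqref{eq:hpo-exact-tangency-split}.

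To lift this from one node to the whole tree, I need the local pair at each node to be the one produced by the recursion from the root. As in the proof of Theorem~\ref{thm:rahrp-coincidence}, $\mathbf S_A^{-1}\nu_{A\mid B}$ is itself the tangency direction of the Schur-updated pair $(\mathbf S_A,\nu_{A\mid B})$. Its subtree masses therefore coincide with $M_T(K)=\ones_K^\top x_T$ for every descendant $K$. The node condition \eqref{eq:hpo-node-coincidence} of Theorem~\ref{thm:tree-hpo-coincidence} then holds at every node, and this rule returns $\w_T$ exactly whenever all subtree masses are positive. Positivity is needed so that the scores are admissible positive scores; I will state it as the standing hypothesis of that theorem.

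\emph{Remaining rows.} Schur-RA-HRP is Definition~\ref{def:schur-rahrp}. It keeps the conditional covariance $\mathbf S_K$ only through the scalar quadratic form $\tilde\w_K^\top\mathbf S_K\tilde\w_K$. It replaces $\nu_{K\mid K^c}$ by the unadjusted IV mean $M_K$ and replaces the inverse $\mathbf S_K^{-1}$ by IVP aggregation. The floor guarantees $G>0$. RA-HRP (Definition~\ref{def:rahrp}) additionally replaces the conditional variance by the standalone $V_K$, which is a deletion of $\mathbf S_K$ justified quantitatively by Proposition~\ref{prop:schur-score-wedge}. HRP drops the premium entirely and uses $\widetilde V_K^{-1}$. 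Its ratio $G(A)/G(B)=\widetilde V_B/\widetilde V_A$ reproduces \eqref{eq:hpo-tree-split} as \eqref{eq:hrpsplit}.

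\emph{Main obstacle.} I expect the hard part to be the recursive consistency in the first row. One must make precise that "the local pair at node $v$" means the Schur-updated conditional pair rather than the raw sub-block $(\bSigma_K,\bmu_K)$. One must also make sure the score is positive at every node. I will handle this by citing the recursion argument already given in Theorem~\ref{thm:rahrp-coincidence}, and by noting that where some $M_T(K)\le 0$ the first row falls outside the positive score-tree class. The "precise algebraic operation" sentence is interpretive and needs no separate proof.
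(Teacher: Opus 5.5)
Your proposal is correct and follows the paper's proof. The first row is Theorem~\ref{thm:hpo-schur-tangency}, the other three rows are the definitions of the corresponding score-tree rules (Definitions~\ref{def:schur-rahrp} and~\ref{def:rahrp}, and \eqref{eq:hrpsplit}), and the closing sentence is interpretive. Your lift to full-tree coincidence via Theorem~\ref{thm:tree-hpo-coincidence} is also correct but goes beyond this nodewise corollary. If you keep it, note two things. First, the Schur-updated $\mathbf S_K$ conditions on all of $K^c$ in the universe, so the lift reduces to a single application of Theorem~\ref{thm:hpo-schur-tangency} to the partition $K\cup K^c$, giving $G(K)=M_T(K)$ directly. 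Second, Definition~\ref{def:schur-rahrp} conditions only on the raw sibling block, so your statement that Schur-RA-HRP ``keeps $\mathbf S_K$'' is in general literally true only at the root split.
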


\begin{proof}
The first line is Theorem~\ref{thm:hpo-schur-tangency}. The remaining lines are definitions of the corresponding score-tree rules. The claim is therefore an identity of information sets: each step removes either Schur conditioning, premium conditioning, local inversion, or premium information, and then renormalizes the remaining positive scores through \eqref{eq:hpo-tree-split}.
\end{proof}

\subsection{Sampling error, approximation error, and the HPO bias--variance identity}

HPO is useful only if the information it deletes is sufficiently noisy. The next result gives the exact quadratic accounting identity behind this statement.

\begin{theorem}[Exact mean--variance bias--variance decomposition for estimated HPO; \textbf{P}]\label{thm:hpo-riskdecomp}
Fix $\gamma>0$ and define the budget-constrained mean--variance objective
\[
U_\gamma(\w)=\w^\top\bmu_e-\frac{\gamma}{2}\w^\top\bSigma\w,
\qquad \ones^\top\w=1.
\]
Let $\w_\gamma^\star$ be its unique maximizer on the affine budget hyperplane, and let $\widehat\w_\Phi$ be any random fully invested portfolio produced by an estimated heuristic rule. If $\E\|\widehat\w_\Phi\|^2<\infty$, then
\begin{equation}\label{eq:hpo-biasvar}
\E\bigl[U_\gamma(\w_\gamma^\star)-U_\gamma(\widehat\w_\Phi)\bigr]
=
\frac{\gamma}{2}
\left(
\left\|\E[\widehat\w_\Phi]-\w_\gamma^\star\right\|_{\bSigma}^2
+
\operatorname{tr}\bigl(\bSigma\,\Cov(\widehat\w_\Phi)\bigr)
\right).
\end{equation}
\end{theorem}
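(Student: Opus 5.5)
The plan is to reduce the claim to an exact quadratic identity for deterministic fully invested portfolios, and then take expectations through a standard bias--variance split in the $\bSigma$-norm. Because $U_\gamma$ is a concave quadratic, its second-order Taylor expansion about the maximizer $\w_\gamma^\star$ is exact. The only subtlety is that $\w_\gamma^\star$ is a constrained maximizer, so the gradient there is not zero. It is, however, orthogonal to the budget hyperplane's direction space.

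\emph{Step 1 (first-order condition).} By the KKT argument of Theorem~\ref{thm:foc}(ii), applied to $U_\gamma$ with $\bmu_e$ in place of $\bmu$, there is a multiplier $c\in\R$ with
\[
\bmu_e-\gamma\bSigma\w_\gamma^\star=c\,\ones .
\]
Existence and uniqueness of $\w_\gamma^\star$ follow from strict concavity, since $\bSigma\succ0$.

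\emph{Step 2 (exact deterministic identity).} For any fully invested $\w$, set $\mathbf d=\w-\w_\gamma^\star$, so that $\ones^\top\mathbf d=0$. Expanding the quadratic gives
\[
U_\gamma(\w)=U_\gamma(\w_\gamma^\star)+\mathbf d^\top(\bmu_e-\gamma\bSigma\w_\gamma^\star)-\tfrac{\gamma}{2}\mathbf d^\top\bSigma\mathbf d .
\]
The linear term equals $c\,\ones^\top\mathbf d=0$. Hence
\[
U_\gamma(\w_\gamma^\star)-U_\gamma(\w)=\tfrac{\gamma}{2}\|\w-\w_\gamma^\star\|_{\bSigma}^2 .
\]
This is the key step. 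The point to get right is that the constraint kills the linear term \emph{pathwise}, for every realization of $\widehat\w_\Phi$, because each realization is fully invested.

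\emph{Step 3 (expectation and decomposition).} Substitute $\w=\widehat\w_\Phi$ and take expectations; this is justified because $\E\|\widehat\w_\Phi\|^2<\infty$ makes $\E\|\widehat\w_\Phi-\w_\gamma^\star\|_{\bSigma}^2$ finite. Write $\bar\w=\E[\widehat\w_\Phi]$ and
\[
\widehat\w_\Phi-\w_\gamma^\star=(\widehat\w_\Phi-\bar\w)+(\bar\w-\w_\gamma^\star).
\]
The cross term $2\,\E[(\widehat\w_\Phi-\bar\w)^\top]\bSigma(\bar\w-\w_\gamma^\star)$ vanishes because the first factor has mean zero. The fluctuation term is
\[
\E[(\widehat\w_\Phi-\bar\w)^\top\bSigma(\widehat\w_\Phi-\bar\w)]=\operatorname{tr}(\bSigma\Cov(\widehat\w_\Phi)),
\]
by the cyclic trace identity. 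Combining the two surviving terms gives \eqref{eq:hpo-biasvar}.

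No step is genuinely hard; the only obstacle is remembering that the identity needs the budget constraint, rather than unconstrained stationarity, to remove the gradient term. I would also remark that $\bar\w$ is itself fully invested, so the bias term measures a feasible deviation.
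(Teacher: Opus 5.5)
Your proposal is correct and follows the paper's proof essentially step for step. The budget-constraint first-order condition $\bmu_e-\gamma\bSigma\w_\gamma^\star=c\,\ones$ removes the linear term pathwise, which gives the exact identity $U_\gamma(\w_\gamma^\star)-U_\gamma(\w)=\tfrac{\gamma}{2}\|\w-\w_\gamma^\star\|_{\bSigma}^2$ for every fully invested $\w$, and the $\bSigma$-norm bias--variance split then finishes the argument (you simply spell out the cross term, trace identity, and integrability that the paper leaves implicit).
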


\begin{proof}
The first-order condition for the budget-constrained optimum is
\[
\bmu_e-\gamma\bSigma\w_\gamma^\star=\lambda\ones
\]
for some Lagrange multiplier $\lambda$. For any fully invested $\w$, $\ones^\top(\w-\w_\gamma^\star)=0$, so the linear term in the second-order expansion vanishes:
\[
U_\gamma(\w_\gamma^\star)-U_\gamma(\w)
=
\frac{\gamma}{2}(\w-\w_\gamma^\star)^\top\bSigma(\w-\w_\gamma^\star).
\]
Apply this identity to $\widehat\w_\Phi$ and take expectations. The standard Hilbert-space bias--variance decomposition in the $\bSigma$ norm gives
\[
\E\|\widehat\w_\Phi-\w_\gamma^\star\|_{\bSigma}^2
=
\|\E[\widehat\w_\Phi]-\w_\gamma^\star\|_{\bSigma}^2
+\operatorname{tr}\bigl(\bSigma\,\Cov(\widehat\w_\Phi)\bigr).
\]
\end{proof}

\begin{remark}[When a heuristic beats the plug-in optimizer]\label{rem:hpo-beats-plugin}
Equation~\eqref{eq:hpo-biasvar} is the formal answer to the Markowitz optimization enigma. A heuristic can be farther from the population optimum in bias but closer in expected realized utility if it has sufficiently lower estimation variance. HRP, RA-HRP, inverse volatility, ERC, and constrained/shrunk optimizers all exploit this identity in different ways. HRP and RA-HRP reduce variance by replacing global inversion with local score maps; Schur-RA-HRP spends some variance budget on local block inversions to recover cross-cluster hedge information; full MVO spends the most variance budget by estimating and inverting the complete covariance matrix and using the full mean vector.
\end{remark}

\subsection{The master HPO objective}

The preceding results suggest a practical HPO loss for selecting among heuristic families:
\begin{equation}\label{eq:hpo-master}
\mathcal L_{\mathrm{HPO}}(\Phi)
=
\widehat{\mathfrak D}_\Phi
+\lambda_{\mathrm{est}}\,\widehat{\operatorname{tr}\bigl(\bSigma\,\Cov(\widehat\w_\Phi)\bigr)}
+\lambda_{\mathrm{tc}}\,\|\widehat\w_\Phi-\w_-\|_1
+\lambda_{\mathrm{top}}\,d_{\mathcal T}(\widehat{\mathcal T},\mathcal T_-),
\end{equation}
where $\widehat{\mathfrak D}_\Phi$ measures implied-return misspecification, the second term measures weight estimation variance, the third measures turnover, and the fourth measures tree instability when the rule is hierarchical. Equation~\eqref{eq:hpo-master} is not a new theorem; it is the operational synthesis of the theorems above. It turns HPO into a disciplined model-selection problem rather than a collection of unrelated allocation recipes.

\begin{remark}[How this section changes the interpretation of the paper]\label{rem:hpo-interpretation}
The earlier sections answer, for each named rule, ``when is it exactly optimal?'' This section answers the broader question, ``what is a heuristic portfolio rule mathematically?'' It is an information-restricted map; its population error is an implied-return projection residual; its Sharpe loss is the squared sine of an angle; its hierarchical forms are products of local binary kernels; its exact nodewise target is a Schur-tangency mass; and its out-of-sample value is governed by a bias--variance--cost trade-off. That is the mathematics of HPO.
\end{remark}

\section{From HPO to Reinforcement Learning Portfolio Optimization (RLPO)}\label{sec:rlpo}

The preceding section is static: given a statewise estimate $(\widehat\bmu_{e,t},\widehat\bSigma_t)$, possibly together with a tree $\widehat{\mathcal T}_t$, a heuristic map produces a portfolio $\w_t=\Phi(s_t)$. RLPO starts from the same action object but changes the question. It asks how a portfolio agent should rebalance through time when today's action affects future states, trading costs, inventory, drawdown, tax lots, liquidity, and regime exposure. Thus HPO is the \emph{myopic policy face} of RLPO; RLPO is the dynamic control layer that learns when the continuation value of deviating from the HPO prior compensates for its instantaneous optimality defect.

\subsection{Controlled market formulation}

\begin{definition}[RLPO environment]\label{def:rlpo-env}
An RLPO environment is a discounted controlled market
\[
\mathfrak M=(\mathcal S,\{\mathcal A(s)\}_{s\in\mathcal S},P,R,\gamma),
\qquad 0\le \gamma<1,
\]
where $s_t\in\mathcal S$ is the observed market state, $\mathcal A(s_t)\subseteq\Delta^{N-1}$ is the feasible portfolio set, $P(ds'\mid s,w)$ is the controlled transition law, $R(s,w,s')$ is the one-period reward, and $\gamma$ is the discount factor. A Markov policy $\pi$ assigns a distribution $\pi(\cdot\mid s)$ on $\mathcal A(s)$. Its value is
\begin{equation}\label{eq:rlpo-value}
V^\pi(s)=\E^\pi\left[\sum_{t\ge0}\gamma^t R(s_t,w_t,s_{t+1})\mid s_0=s\right].
\end{equation}
Writing $r(s,w):=\E[R(s,w,s')\mid s,w]$, the optimal value satisfies the Bellman equation
\begin{equation}\label{eq:rlpo-bellman}
V^\star(s)=\sup_{w\in\mathcal A(s)}
\left\{r(s,w)+\gamma\E\left[V^\star(s')\mid s,w\right]\right\}.
\end{equation}
\end{definition}

This is the usual dynamic-programming formulation of sequential control \citep{bellman1957,bertsekas2017,suttonbarto2018}. In portfolio applications the state may include lagged returns, macro variables, volatility estimates, factor exposures, the current dendrogram, the previous portfolio $\w_-$, turnover budget, margin constraints, and execution-cost estimates. The reward may be raw excess return, mean--variance utility, downside-risk-adjusted return, a drawdown-penalized objective, or a transaction-cost-adjusted utility as in dynamic trading models with predictable returns and frictions \citep{garleanuPedersen2013}.

\begin{definition}[HPO-induced policy]\label{def:hpo-policy}
Let $G:\mathcal S\to\mathcal Z$ be a state summarizer producing the information consumed by an HPO rule, for example
\[
G(s_t)=(\widehat\bmu_{e,t},\widehat\bSigma_t),
\qquad
G(s_t)=(\widehat\bmu_{e,t},\widehat\bSigma_t,\widehat{\mathcal T}_t),
\]
or the reduced statistics used by EW, IV, ERC, HRP, RA-HRP, or Schur-RA-HRP. Any feasible HPO map $\Phi:\mathcal Z\to\Delta^{N-1}$ induces the deterministic stationary policy
\begin{equation}\label{eq:hpo-induced-policy}
\pi_\Phi(\cdot\mid s)=\delta_{\Phi(G(s))},
\qquad
\Phi(G(s))\in\mathcal A(s).
\end{equation}
\end{definition}

\begin{proposition}[HPO policies as stationary RLPO policies; \textbf{P}]\label{prop:hpo-stationary-rlpo}
Every feasible measurable HPO map induces an admissible deterministic stationary RLPO policy through \eqref{eq:hpo-induced-policy}. If $\gamma=0$, the reward has no next-state dependence, and $R(s,w,s')=u(G(s),w)$ equals the one-period static objective used to define the HPO map, then the RLPO problem reduces state by state to the corresponding static HPO problem. In particular, if
\[
\Phi(G(s))\in \argmax_{w\in\mathcal A(s)} u(G(s),w),
\]
then $\pi_\Phi$ is optimal for the RLPO environment with $\gamma=0$.
\end{proposition}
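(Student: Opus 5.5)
The plan has three parts: admissibility, collapse of the value function at $\gamma=0$, and optimality of $\pi_\Phi$. Each is a direct check against Definition~\ref{def:rlpo-env}.

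\emph{Admissibility.} A Markov policy is a family of probability measures $\pi(\cdot\mid s)$ on $\mathcal A(s)$ such that $s\mapsto\pi(B\mid s)$ is measurable for every Borel $B\subseteq\Delta^{N-1}$.
\begin{itemize}
\item For $\pi_\Phi$ we have $\pi_\Phi(B\mid s)=\mathbf 1\{\Phi(G(s))\in B\}$.
\item This is measurable because $G$ and $\Phi$ are measurable, so $\Phi\circ G$ is measurable and $\{s:\Phi(G(s))\in B\}$ is a measurable set.
\item Feasibility $\Phi(G(s))\in\mathcal A(s)$ is part of the hypothesis in \eqref{eq:hpo-induced-policy}, so the Dirac mass lives on $\mathcal A(s)$.
\item Stationarity holds because the kernel does not depend on $t$.
\end{itemize}
I will state the measurability of $G$ explicitly, since Definition~\ref{def:hpo-policy} leaves it implicit. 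This is the only technical point I expect to need care.

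\emph{Collapse at $\gamma=0$.} With $\gamma=0$, the series \eqref{eq:rlpo-value} keeps only its $t=0$ term. Using the convention $0^0=1$, this gives
\[
V^\pi(s)=\E^\pi[R(s_0,w_0,s_1)\mid s_0=s]=\int_{\mathcal A(s)} r(s,w)\,\pi(dw\mid s).
\]
Since $R(s,w,s')=u(G(s),w)$ has no $s'$-dependence, $r(s,w)=u(G(s),w)$. The transition law $P$ therefore drops out entirely. The Bellman equation \eqref{eq:rlpo-bellman} becomes
\[
V^\star(s)=\sup_{w\in\mathcal A(s)}u(G(s),w),
\]
which is the static HPO problem at state $s$ and decouples across states.

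\emph{Optimality of $\pi_\Phi$.} For any Markov policy $\pi$,
\[
V^\pi(s)=\int u(G(s),w)\,\pi(dw\mid s)\le\sup_{w\in\mathcal A(s)}u(G(s),w).
\]
For the induced policy, $V^{\pi_\Phi}(s)=u(G(s),\Phi(G(s)))$. By the argmax hypothesis this equals the supremum, hence $V^{\pi_\Phi}=V^\star$ pointwise.

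Two details remain:
\begin{itemize}
\item Integrability of $u(G(s),\cdot)$ under $\pi(\cdot\mid s)$, which I will either assume (e.g.\ bounded rewards) or handle with the usual extended-real convention.
\item Policies that are history-dependent rather than Markov. These reduce to the same bound, because the $t=0$ action distribution is again a probability measure on $\mathcal A(s_0)$.
\end{itemize}
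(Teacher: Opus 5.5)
Your proposal is correct and follows the paper's own argument. Measurability and feasibility of $\Phi\circ G$ give an admissible stationary Dirac policy, and at $\gamma=0$ the value reduces to the expected one-period reward $u(G(s),w)$, so the Bellman problem becomes the statewise static maximization attained by $\Phi(G(s))$; you are simply more careful than the paper about measurability of $G$ and about integrating against randomized and history-dependent competitors.
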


\begin{proof}
Measurability and feasibility imply that $s\mapsto\delta_{\Phi(G(s))}$ is a valid stationary Markov policy. If $\gamma=0$, then \eqref{eq:rlpo-value} reduces to the one-period conditional objective $V^\pi(s)=\E[R(s,w,s')\mid s,w]$. Under the stated reward specification this is $u(G(s),w)$, so the Bellman maximization is exactly $\max_{w\in\mathcal A(s)}u(G(s),w)$. A statewise optimizer $\Phi(G(s))$ therefore induces an optimal deterministic policy.
\end{proof}

\begin{remark}[The conceptual hierarchy]\label{rem:hpo-rlpo-hierarchy}
HPO answers the question ``given the current estimated pair $(\widehat\bmu_t,\widehat\bSigma_t)$ and any reduced structure such as a tree, which portfolio map is locally justified?'' RLPO answers the dynamic question ``given the current state and the effect of today's action on tomorrow's state, when should the agent deviate from that locally justified map?'' The action space is the same simplex. The difference is the objective: HPO optimizes an instantaneous geometric criterion, while RLPO optimizes the Bellman criterion \eqref{eq:rlpo-bellman}.
\end{remark}

\subsection{Continuation value versus myopic HPO defect}

At state $s$, write
\[
\widehat\bmu_e(s),\qquad \widehat\bSigma(s),\qquad \w_\Phi(s):=\Phi(G(s))
\]
for the estimated excess-premium vector, covariance matrix, and HPO baseline. The statewise implied-return defect is
\begin{equation}\label{eq:statewise-hpo-defect}
\mathfrak d_s(w)
:=
\min_{a>0}
\left\|\widehat\bmu_e(s)-a\widehat\bSigma(s)w\right\|_{\widehat\bSigma(s)^{-1}}^2.
\end{equation}
By Theorem~\ref{thm:hpo-defect}, this is not an ad hoc regularizer: it is exactly
\[
\mathfrak d_s(w)=S_{\max}^2(s)-S_s^2(w),
\]
the statewise squared Sharpe inefficiency measured in the metric of the estimated covariance matrix.

\begin{proposition}[Dynamic deviation principle; \textbf{P}]\label{prop:dynamic-deviation-principle}
Fix a state $s$, a baseline HPO action $w_\Phi=\w_\Phi(s)$, and a candidate action $w\in\mathcal A(s)$. Decompose the one-period reward as
\[
r(s,w)=r_0(s,w)-c_s(w,w_-)-\lambda_{\mathrm{HPO}}\mathfrak d_s(w),
\]
where $r_0$ is the trading reward before HPO regularization, $c_s$ is an implementation cost relative to the previous portfolio $w_-$, and $\lambda_{\mathrm{HPO}}\ge0$. Define the instantaneous loss of choosing $w$ rather than the HPO action by
\begin{equation}\label{eq:inst-loss}
\begin{split}
\Delta_{\mathrm{inst}}(s,w\mid w_\Phi)
:={}& r_0(s,w_\Phi)-r_0(s,w)
+c_s(w,w_-)-c_s(w_\Phi,w_-) \\
&+\lambda_{\mathrm{HPO}}\{\mathfrak d_s(w)-\mathfrak d_s(w_\Phi)\},
\end{split}
\end{equation}
and the continuation-value gain by
\begin{equation}\label{eq:continuation-gain}
\Delta_{\mathrm{cont}}(s,w\mid w_\Phi)
:=
\gamma\E[V^\star(s')\mid s,w]
-
\gamma\E[V^\star(s')\mid s,w_\Phi].
\end{equation}
Then $w$ weakly improves on the HPO action in Bellman value if and only if
\begin{equation}\label{eq:rlpo-justification}
\Delta_{\mathrm{cont}}(s,w\mid w_\Phi)
\ge
\Delta_{\mathrm{inst}}(s,w\mid w_\Phi).
\end{equation}
\end{proposition}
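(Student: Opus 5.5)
The plan is to reduce the statement to a comparison of the two Bellman $Q$-values at the fixed state $s$. For any action $w\in\mathcal A(s)$, the one-step lookahead value under the optimal continuation is
\[
Q^\star(s,w):=r(s,w)+\gamma\,\E\bigl[V^\star(s')\mid s,w\bigr],
\]
which is exactly the bracketed expression maximized in the Bellman equation \eqref{eq:rlpo-bellman}. Accordingly, I read ``$w$ weakly improves on the HPO action in Bellman value'' as $Q^\star(s,w)\ge Q^\star(s,w_\Phi)$. This is the value of taking $w$ (respectively $w_\Phi$) today and acting optimally thereafter. The first step is to make this interpretation explicit, since the proposition is a statement about the Bellman criterion and not about the value of the stationary policy $\pi_\Phi$.

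The second step is pure algebra. Substitute the reward decomposition $r(s,w)=r_0(s,w)-c_s(w,w_-)-\lambda_{\mathrm{HPO}}\mathfrak d_s(w)$ into $Q^\star(s,w)-Q^\star(s,w_\Phi)$, doing the same for $w_\Phi$. Then group the terms into the immediate part and the continuation part. The immediate part is
\[
r(s,w)-r(s,w_\Phi)=-\bigl[r_0(s,w_\Phi)-r_0(s,w)+c_s(w,w_-)-c_s(w_\Phi,w_-)+\lambda_{\mathrm{HPO}}\{\mathfrak d_s(w)-\mathfrak d_s(w_\Phi)\}\bigr]=-\Delta_{\mathrm{inst}}(s,w\mid w_\Phi)
\]
by \eqref{eq:inst-loss}. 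The continuation part is $\Delta_{\mathrm{cont}}(s,w\mid w_\Phi)$ by \eqref{eq:continuation-gain}. This gives the identity
\[
Q^\star(s,w)-Q^\star(s,w_\Phi)=\Delta_{\mathrm{cont}}(s,w\mid w_\Phi)-\Delta_{\mathrm{inst}}(s,w\mid w_\Phi).
\]
Its nonnegativity is exactly \eqref{eq:rlpo-justification}, which gives both directions of the equivalence at once.

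The only genuine obstacle is well-definedness: $V^\star$ and the conditional expectations must be finite so that the subtraction is legitimate. I would handle this by assuming, as is implicit in Definition~\ref{def:rlpo-env}, that rewards are bounded or at least integrable with $\gamma<1$. Under that assumption $V^\star$ is the bounded fixed point of the Bellman contraction, and $\E[V^\star(s')\mid s,w]$ is finite for every feasible $w$. In addition, $\mathfrak d_s$ is finite by Theorem~\ref{thm:hpo-defect} whenever $\widehat\bSigma(s)\succ0$ (it equals $S_{\max}^2(s)-S_s^2(w)\le S_{\max}^2(s)$). I would state these assumptions in a sentence and not prove them.

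Finally, I would note that the decomposition does not depend on the HPO regularizer being an exact optimum. The same identity holds with $\lambda_{\mathrm{HPO}}=0$, and the defect term merely reprices instantaneous Sharpe inefficiency via Theorem~\ref{thm:hpo-defect}.
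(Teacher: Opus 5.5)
Your proposal is correct and follows the paper's proof essentially verbatim: subtract the Bellman score $r(s,w_\Phi)+\gamma\E[V^\star(s')\mid s,w_\Phi]$ from that of $w$, substitute the reward decomposition, and read off $\Delta_{\mathrm{cont}}-\Delta_{\mathrm{inst}}\ge0$. Your well-definedness remarks are a harmless addition; note only that $\mathfrak d_s=S_{\max}^2(s)-S_s^2(w)$ from Theorem~\ref{thm:hpo-defect} requires $w^\top\widehat\bmu_e(s)>0$, although $\mathfrak d_s$, read as an infimum, is finite regardless.
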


\begin{proof}
Subtract the Bellman score of $w_\Phi$ from that of $w$:
\[
\{r(s,w)+\gamma\E[V^\star(s')\mid s,w]\}
-
\{r(s,w_\Phi)+\gamma\E[V^\star(s')\mid s,w_\Phi]\}.
\]
Substituting the reward decomposition and rearranging gives
$\Delta_{\mathrm{cont}}(s,w\mid w_\Phi)-\Delta_{\mathrm{inst}}(s,w\mid w_\Phi)$. Nonnegativity is exactly \eqref{eq:rlpo-justification}.
\end{proof}

\begin{remark}[Economic interpretation]\label{rem:rlpo-econ}
Equation~\eqref{eq:rlpo-justification} is the precise mathematical statement of the HPO--RLPO relation. A dynamic policy should not deviate from a locally justified HPO portfolio merely because a neural policy can output different weights. It should deviate only when the expected continuation-value gain caused by that deviation is large enough to pay for three quantities: the lost instantaneous reward, the incremental turnover or implementation cost, and the additional implied-return defect. RLPO is therefore not a replacement for HPO. It is a disciplined dynamic policy-improvement layer over an HPO prior.
\end{remark}

\subsection{The exact value gap of an HPO policy}\label{sec:value-gap}

Proposition~\ref{prop:dynamic-deviation-principle} is a one-step comparison. The classical performance-difference identity upgrades it to an exact global accounting of how much value the myopic HPO policy leaves on the table, in the same advantage units, summed along the trajectory of whatever policy challenges it.

\begin{proposition}[Exact value gap of the HPO policy; \textbf{P}, identity after \citealp{kakadelangford2002}]\label{prop:hpo-value-gap}
Assume $|R(s,w,s')|\le\bar R<\infty$ and $\gamma\in[0,1)$, so all values are bounded by $\bar R/(1-\gamma)$. Define the \emph{HPO advantage} of action $w$ at state $s$ by
\begin{equation}\label{eq:hpo-advantage}
A^{\pi_\Phi}(s,w)
:=
r(s,w)+\gamma\E\bigl[V^{\pi_\Phi}(s')\mid s,w\bigr]-V^{\pi_\Phi}(s),
\end{equation}
which satisfies $A^{\pi_\Phi}\bigl(s,\w_\Phi(s)\bigr)=0$ by Bellman consistency of the deterministic stationary policy $\pi_\Phi$. Then for every Markov policy $\pi$ and initial state $s_0$,
\begin{equation}\label{eq:value-gap}
V^{\pi}(s_0)-V^{\pi_\Phi}(s_0)
=
\E^{\pi}\left[\sum_{t\ge0}\gamma^t A^{\pi_\Phi}(s_t,w_t)\,\Big|\,s_0\right]
=
\frac{1}{1-\gamma}\,
\E_{(s,w)\sim d^{\pi}_{s_0}}\bigl[A^{\pi_\Phi}(s,w)\bigr],
\end{equation}
where $d^{\pi}_{s_0}:=(1-\gamma)\sum_{t\ge0}\gamma^t\,\mathrm{Law}^{\pi}\bigl((s_t,w_t)\mid s_0\bigr)$ is the normalized discounted occupancy measure. In particular, taking $\pi=\pi^\star$,
\begin{equation}\label{eq:myopia-price}
V^{\star}(s_0)-V^{\pi_\Phi}(s_0)
=
\frac{1}{1-\gamma}\,
\E_{(s,w)\sim d^{\pi^\star}_{s_0}}\bigl[A^{\pi_\Phi}(s,w)\bigr]:
\end{equation}
the exact price of acting myopically is the occupancy-weighted HPO advantage, and only the states where the optimal policy actually deviates from the heuristic contribute.
\end{proposition}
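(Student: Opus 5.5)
The plan is the standard telescoping argument of \citet{kakadelangford2002}, adapted to the HPO-induced deterministic policy $\pi_\Phi$ of Definition~\ref{def:hpo-policy}.

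\emph{Step 0: verify the zero-advantage property.} Since $\pi_\Phi(\cdot\mid s)=\delta_{\w_\Phi(s)}$ is stationary and Markov, $V^{\pi_\Phi}$ solves the policy-evaluation equation
\[
V^{\pi_\Phi}(s)=r\bigl(s,\w_\Phi(s)\bigr)+\gamma\E\bigl[V^{\pi_\Phi}(s')\mid s,\w_\Phi(s)\bigr].
\]
This follows from the strong Markov property, by conditioning \eqref{eq:rlpo-value} on the first transition. Comparing with \eqref{eq:hpo-advantage} gives $A^{\pi_\Phi}(s,\w_\Phi(s))=0$. Boundedness $|R|\le\bar R$ makes $V^{\pi_\Phi}$ bounded by $\bar R/(1-\gamma)$, so all conditional expectations are finite.

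\emph{Step 1: telescoping.} Fix a Markov policy $\pi$ and generate the trajectory $(s_t,w_t)$ under $\pi$ from $s_0$. For each $t$, take the tower property conditional on $(s_t,w_t)$ and use the Markov transition law. This gives
\[
\E^\pi\bigl[\gamma^t A^{\pi_\Phi}(s_t,w_t)\bigr]=\E^\pi\bigl[\gamma^t R(s_t,w_t,s_{t+1})+\gamma^{t+1}V^{\pi_\Phi}(s_{t+1})-\gamma^t V^{\pi_\Phi}(s_t)\bigr].
\]
Sum over $t=0,\dots,T$. The value terms telescope to $\gamma^{T+1}\E^\pi V^{\pi_\Phi}(s_{T+1})-V^{\pi_\Phi}(s_0)$, and the boundary term is bounded by $\gamma^{T+1}\bar R/(1-\gamma)\to0$.

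\emph{Step 2: pass to the limit.} Let $T\to\infty$. The reward partial sums converge to $V^\pi(s_0)$ by dominated convergence, since the geometric series $\sum\gamma^t\bar R$ dominates them. The advantage series is likewise dominated, because $|A^{\pi_\Phi}|\le\bar R+2\gamma\bar R/(1-\gamma)$. This is the one step where care is needed: interchanging expectation with the infinite sum requires exactly this boundedness and $\gamma<1$, and it is the only analytic obstacle. The result is the first equality in \eqref{eq:value-gap}.

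\emph{Step 3: occupancy form.} Write $\E^\pi\sum_t\gamma^t A(s_t,w_t)=\sum_t\gamma^t\int A\,d\,\mathrm{Law}^\pi((s_t,w_t)\mid s_0)$, which is valid by Fubini under the same bound. By the definition of the normalized occupancy measure $d^\pi_{s_0}$, this equals $(1-\gamma)^{-1}\E_{d^\pi_{s_0}}[A]$. A measurable optimal Markov policy $\pi^\star$ exists; this is assumed implicitly in \eqref{eq:rlpo-bellman}, or if not one can use $\varepsilon$-optimal policies and a limit. Specializing to $\pi=\pi^\star$ then gives \eqref{eq:myopia-price}.

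\emph{Step 4: only deviation states contribute.} Wherever $\pi^\star$ selects $w=\w_\Phi(s)$, the integrand vanishes by Step~0. Hence only state–action pairs in the support of $d^{\pi^\star}_{s_0}$ with $w\neq\w_\Phi(s)$ contribute to the gap.
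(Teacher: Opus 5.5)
Your proposal is correct and follows the paper's proof essentially step for step: Bellman consistency of $\pi_\Phi$ for the zero-advantage property, the tower property and telescoping with the vanishing boundary term $\gamma^{T+1}\E^{\pi}[V^{\pi_\Phi}(s_{T+1})]$, dominated convergence for the limit, and bounded Fubini for the occupancy form. One small arithmetic slip: the uniform bound should be $|A^{\pi_\Phi}|\le\bar R+(1+\gamma)\bar R/(1-\gamma)=2\bar R/(1-\gamma)$, not $\bar R+2\gamma\bar R/(1-\gamma)$; for example, at $\gamma=0$ one has $A^{\pi_\Phi}(s,w)=r(s,w)-r(s,\w_\Phi(s))$, which can reach $2\bar R$. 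Any finite bound suffices for the dominated-convergence step, so nothing else in your argument changes.
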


\begin{proof}
Substituting \eqref{eq:hpo-advantage} and using the tower property under $\pi$,
\[
\E^{\pi}\left[\sum_{t=0}^{T}\gamma^t A^{\pi_\Phi}(s_t,w_t)\right]
=
\E^{\pi}\left[\sum_{t=0}^{T}\gamma^t R_t\right]
+
\E^{\pi}\left[\sum_{t=0}^{T}\Bigl(\gamma^{t+1}V^{\pi_\Phi}(s_{t+1})-\gamma^{t}V^{\pi_\Phi}(s_{t})\Bigr)\right].
\]
The second sum telescopes to
\[
\gamma^{T+1}\E^{\pi}\!\left[V^{\pi_\Phi}(s_{T+1})\right]
- V^{\pi_\Phi}(s_0).
\]
Boundedness sends the discounted terminal term to zero as $T\to\infty$, leaving
$V^{\pi}(s_0)-V^{\pi_\Phi}(s_0)$. Dominated convergence justifies the limit, and the occupancy form is the bounded-Fubini rewriting of the discounted sum. Bellman consistency,
\[
V^{\pi_\Phi}(s)=r(s,\w_\Phi(s))
+\gamma\E\!\left[V^{\pi_\Phi}(s')\mid s,\w_\Phi(s)\right],
\]
gives $A^{\pi_\Phi}(s,\w_\Phi(s))=0$.
\end{proof}

\begin{corollary}[$\varepsilon$-myopic optimality and shaped localization; \textbf{P}]\label{cor:eps-myopic}
\begin{enumerate}
\item[(i)] If no feasible one-step deviation gains more than $\varepsilon\ge0$ against the HPO policy's own continuation value, i.e.
\[
A^{\pi_\Phi}(s,w)\;\le\;\varepsilon
\qquad\textnormal{for every $s$ and every } w\in\mathcal A(s),
\]
then
\[
V^{\star}(s)\;\le\;V^{\pi_\Phi}(s)+\frac{\varepsilon}{1-\gamma}
\qquad\textnormal{for every } s.
\]
\item[(ii)] Under the reward decomposition of Proposition~\ref{prop:dynamic-deviation-principle}, the advantage splits exactly as
\begin{align*}
A^{\pi_\Phi}(s,w)
&=
\widetilde\Delta_{\mathrm{cont}}(s,w\mid \w_\Phi)
-
\Delta_{\mathrm{inst}}(s,w\mid \w_\Phi),
\\
\widetilde\Delta_{\mathrm{cont}}(s,w\mid \w_\Phi)
&:=
\gamma\E\bigl[V^{\pi_\Phi}(s')\mid s,w\bigr]
-\gamma\E\bigl[V^{\pi_\Phi}(s')\mid s,\w_\Phi(s)\bigr],
\end{align*}
with the same instantaneous-loss term \eqref{eq:inst-loss} and the continuation gain evaluated under $V^{\pi_\Phi}$. Hence \eqref{eq:myopia-price} is the occupancy-weighted integral of exactly the quantity the deviation principle tests pointwise; and because the $\lambda_{\mathrm{HPO}}$ term sits inside $\Delta_{\mathrm{inst}}$, every deviation along the optimal trajectory pays its incremental implied-return defect $\mathfrak d_s(w)-\mathfrak d_s(\w_\Phi(s))$ state by state.
\end{enumerate}
\end{corollary}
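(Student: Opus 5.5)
Part (i) follows from the performance-difference identity of Proposition~\ref{prop:hpo-value-gap} with $\pi=\pi^\star$. Equation~\eqref{eq:myopia-price} gives
\[
V^\star(s_0)-V^{\pi_\Phi}(s_0)=\frac{1}{1-\gamma}\,\E_{(s,w)\sim d^{\pi^\star}_{s_0}}\bigl[A^{\pi_\Phi}(s,w)\bigr].
\]
The occupancy measure $d^{\pi^\star}_{s_0}$ is a probability measure, because it is $(1-\gamma)$ times a sum of laws with total weight $\sum_t\gamma^t=1/(1-\gamma)$. It is supported on feasible pairs, since $\pi^\star(\cdot\mid s)$ charges only $\mathcal A(s)$. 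The uniform hypothesis $A^{\pi_\Phi}(s,w)\le\varepsilon$ therefore bounds the expectation by $\varepsilon$, which gives $V^\star(s_0)\le V^{\pi_\Phi}(s_0)+\varepsilon/(1-\gamma)$ for every $s_0$.

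One technical point needs care: an optimal Markov policy $\pi^\star$ attaining $V^\star$ may not exist. To avoid assuming existence, I will apply \eqref{eq:value-gap} to an arbitrary Markov policy $\pi$. The same argument gives $V^\pi\le V^{\pi_\Phi}+\varepsilon/(1-\gamma)$, and taking the supremum over $\pi$ yields the claim for $V^\star$. This supremum step is the only real obstacle. I will note that under bounded rewards $V^\star$ is the supremum of $V^\pi$ over (Markov) policies, which is the standard fact underlying the Bellman equation \eqref{eq:rlpo-bellman}. The interchange of expectation and bound is immediate from boundedness.

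For part (ii), I will use Bellman consistency of the deterministic policy, $V^{\pi_\Phi}(s)=r(s,\w_\Phi(s))+\gamma\E[V^{\pi_\Phi}(s')\mid s,\w_\Phi(s)]$, and substitute it into the definition \eqref{eq:hpo-advantage}. This gives
\[
A^{\pi_\Phi}(s,w)=\bigl[r(s,w)-r(s,\w_\Phi(s))\bigr]+\widetilde\Delta_{\mathrm{cont}}(s,w\mid\w_\Phi).
\]
Next I insert the decomposition $r=r_0-c_s(\cdot,w_-)-\lambda_{\mathrm{HPO}}\mathfrak d_s(\cdot)$. The reward difference $r(s,w)-r(s,\w_\Phi)$ then equals $-\Delta_{\mathrm{inst}}(s,w\mid\w_\Phi)$ term by term, exactly as in the proof of Proposition~\ref{prop:dynamic-deviation-principle}, with $V^{\pi_\Phi}$ in place of $V^\star$ in the continuation term.

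The closing interpretive statements follow from this identity combined with \eqref{eq:myopia-price}. The occupancy integral of $A^{\pi_\Phi}$ is the integral of $\widetilde\Delta_{\mathrm{cont}}-\Delta_{\mathrm{inst}}$. The defect increment $\lambda_{\mathrm{HPO}}\{\mathfrak d_s(w)-\mathfrak d_s(\w_\Phi(s))\}$ appears inside $\Delta_{\mathrm{inst}}$ at every visited state. It vanishes wherever $w=\w_\Phi(s)$, consistent with $A^{\pi_\Phi}(s,\w_\Phi(s))=0$.
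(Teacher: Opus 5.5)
Your proof is correct and follows the paper's argument. In (i) you bound the integrand of the occupancy-weighted performance-difference identity by $\varepsilon$, and in (ii) you use Bellman consistency of $\pi_\Phi$ to cancel $V^{\pi_\Phi}(s)$ and then substitute the reward decomposition term by term. Your one deviation, applying \eqref{eq:value-gap} to an arbitrary policy and taking the supremum instead of plugging in $\pi=\pi^\star$, is a mild strengthening: it drops the paper's implicit assumption that an optimal policy attaining $V^\star$ exists, at the cost of invoking the standard identification $V^\star=\sup_\pi V^\pi$.
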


\begin{proof}
(i) Bound the integrand in \eqref{eq:myopia-price} by $\varepsilon$. (ii) Subtract and add $r(s,\w_\Phi(s))$ in \eqref{eq:hpo-advantage}, use Bellman consistency of $\pi_\Phi$ to cancel $V^{\pi_\Phi}(s)$, and substitute the reward decomposition; the algebra is that of Proposition~\ref{prop:dynamic-deviation-principle} with $V^{\pi_\Phi}$ in place of $V^\star$.
\end{proof}

This closes the logical loop between the two layers. Statically, ``never optimal, rarely costly'' is Proposition~\ref{prop:generic} plus Theorem~\ref{thm:secondorder}. Dynamically, the analogue is \eqref{eq:myopia-price} plus Corollary~\ref{cor:eps-myopic}(i): the HPO policy forfeits at most the discounted occupancy average of its statewise advantage gaps, and when continuation values never reward deviation by more than $\varepsilon$, acting myopically costs at most $\varepsilon/(1-\gamma)$ --- a bound expressed entirely in the advantage units that the deviation principle already prices.

\subsection{HPO-regularized RLPO objectives}

The preceding proposition suggests the following regularized RLPO objective for a parametric policy $\pi_\theta$:
\begin{equation}\label{eq:hpo-regularized-rlpo}
\begin{split}
J_{\mathrm{RLPO}}(\theta)
:=
\E^{\pi_\theta}\sum_{t\ge0}\gamma^t
\Big[
&w_t^\top r_{e,t+1}
-c_t(w_t,w_{t-1})
-\lambda_{\mathrm{risk}}\rho_t(w_t) \\
&-\lambda_{\mathrm{HPO}}\mathfrak d(w_t;\widehat\bmu_{e,t},\widehat\bSigma_t)
-\lambda_{\mathrm{KL}}D_{\mathrm{KL}}(w_t\,\|\,\w_\Phi(s_t))
\Big].
\end{split}
\end{equation}
Here $\rho_t$ is a risk functional, $c_t$ is the trading-cost model, $\mathfrak d$ is the implied-return defect from Definition~\ref{def:hpo-defect}, and the KL term keeps the learned policy near the HPO prior unless the data supply enough dynamic evidence to move away from it. If $\lambda_{\mathrm{HPO}}$ and $\lambda_{\mathrm{KL}}$ are large, the solution collapses toward the HPO baseline. If they are zero, the problem becomes an unconstrained RLPO objective. Intermediate values implement dynamic residual learning around a mathematically interpretable prior.

A particularly stable actor parametrization is the convex-residual policy
\begin{equation}\label{eq:convex-residual-policy}
w_t^\theta
=
(1-\lambda_t^\theta)\w_\Phi(s_t)
+
\lambda_t^\theta\psi_\theta(s_t),
\qquad
\lambda_t^\theta\in[0,1],
\quad
\psi_\theta(s_t)\in\Delta^{N-1}.
\end{equation}
This guarantees long-only full investment by construction. The scalar $\lambda_t^\theta$ has a direct interpretation: the state-dependent degree of trust placed on dynamic RLPO relative to static HPO. A conservative agent learns mostly $\lambda_t^\theta$; a more expressive agent learns both $\lambda_t^\theta$ and the residual portfolio $\psi_\theta$.

\subsection{RA-HRP as a hierarchical policy prior}

RA-HRP is especially natural inside RLPO because the policy is already factorized over a tree. Instead of asking an actor network to output $N$ unrelated weights, one can ask it to output local split probabilities. Fix a tree $\mathcal T$. For each internal node $v$, let $\beta_v^{\mathrm{RA}}(s)$ denote the RA-HRP split fraction at state $s$, and let $a_v^\theta(s)$ be a learned dynamic residual. Define the actor by
\begin{equation}\label{eq:rahrp-logit-actor}
\operatorname{logit}\beta_v^\theta(s)
=
\operatorname{logit}\beta_v^{\mathrm{RA}}(s)
+
a_v^\theta(s),
\qquad
\beta_v^\theta(s)\in(0,1).
\end{equation}
For leaf $i$, with path $\mathcal P(i)$ and branch indicator $\eta_v(i)$, the induced portfolio is
\begin{equation}\label{eq:hierarchical-actor-weight}
w_i^\theta(s)
=
\prod_{v\in\mathcal P(i)}
\left(\beta_v^\theta(s)\right)^{\eta_v(i)}
\left(1-\beta_v^\theta(s)\right)^{1-\eta_v(i)}.
\end{equation}

\begin{proposition}[Budget feasibility and baseline recovery of the RA-HRP actor; \textbf{P}]\label{prop:rahrp-actor-feasible}
For any collection of split probabilities $\{\beta_v^\theta(s)\in(0,1):v\in\mathcal V_{\mathrm{int}}\}$, the weights in \eqref{eq:hierarchical-actor-weight} are strictly positive and sum to one. If $a_v^\theta(s)=0$ for every internal node, then $w^\theta(s)=w^{\mathrm{RA}}(s)$ exactly. Moreover, for two hierarchical policies $A$ and $B$ on the same tree,
\begin{equation}\label{eq:rlpo-tree-kl}
D_{\mathrm{KL}}(w^A(s)\,\|\,w^B(s))
=
\sum_{v\in\mathcal V_{\mathrm{int}}}
W_v^A(s)
\,d_{\mathrm{KL}}\bigl(\beta_v^A(s)\,\|\,\beta_v^B(s)\bigr),
\end{equation}
where $W_v^A(s)$ is the capital mass reaching node $v$ under policy $A$ and $d_{\mathrm{KL}}$ is the binary KL divergence.
\end{proposition}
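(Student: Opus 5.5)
The plan is to treat the three claims separately. All three rest on the same product-of-branch-fractions structure already used in Theorem~\ref{thm:rahrp-pathwise} and Proposition~\ref{prop:homotopy-kl}.

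\emph{Positivity and budget.} Positivity is immediate. Each leaf weight in \eqref{eq:hierarchical-actor-weight} is a finite product of factors $\beta_v^\theta(s)$ or $1-\beta_v^\theta(s)$, and every such factor lies in $(0,1)$. For the budget, I will argue by induction on the tree, bottom-up. For an internal node $v$, define $W_v$ as the sum of leaf weights in the subtree of $v$, divided by the product of branch fractions on the path from the root to $v$. The claim is that this normalized subtree mass equals one. At a node whose children are leaves or already-normalized subtrees, the normalized mass is $\beta_v^\theta+(1-\beta_v^\theta)=1$. Applying this at the root gives $\sum_i w_i^\theta=1$. Equivalently, the capital entering $v$ is split into $\beta_v W_v$ and $(1-\beta_v)W_v$, so capital is conserved at every split; this is exactly the budget-conservation argument in Theorem~\ref{thm:rahrp-wellposed}.

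\emph{Baseline recovery.} If $a_v^\theta(s)=0$, then \eqref{eq:rahrp-logit-actor} gives $\operatorname{logit}\beta_v^\theta=\operatorname{logit}\beta_v^{\mathrm{RA}}$. Logit is a bijection from $(0,1)$ onto $\R$, and $\beta_v^{\mathrm{RA}}\in(0,1)$ by the floor bound in Theorem~\ref{thm:rahrp-wellposed}. Hence $\beta_v^\theta=\beta_v^{\mathrm{RA}}$ at every node. The leaf products then coincide with the RA-HRP construction in Definition~\ref{def:rahrp}.

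\emph{KL decomposition.} I will repeat the proof of Theorem~\ref{thm:rahrp-pathwise} with general splits. First, take the ratio of the two products:
\[
\log\frac{w^A_i}{w^B_i}=\sum_{v\in\mathcal P(i)}\Bigl[\eta_v(i)\log\frac{\beta^A_v}{\beta^B_v}+(1-\eta_v(i))\log\frac{1-\beta^A_v}{1-\beta^B_v}\Bigr].
\]
Next, multiply by $w^A_i$, sum over leaves, and swap the order of summation so that the outer sum runs over internal nodes $v$. Within each node, the $A$-mass of the leaves below $v$ that take the left branch is $W_v^A\beta_v^A$, and the mass that takes the right branch is $W_v^A(1-\beta_v^A)$. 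This follows from the same subtree-mass fact established in the budget step, applied to the subtree of each child. Each node's contribution is therefore $W_v^A\,d_{\mathrm{KL}}(\beta_v^A\Vert\beta_v^B)$, which is \eqref{eq:rlpo-tree-kl}.

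The main obstacle is the bookkeeping in this exchange of summations. The point to make explicit is that the per-node contribution depends only on the mass reaching $v$ and its two children, because the product over the descendants below each child sums to one by the budget step. Once that point is stated, everything else is routine. The division inside the logarithms is well defined because every split lies strictly in $(0,1)$.
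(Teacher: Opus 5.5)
Your proposal is correct and follows essentially the same route as the paper: budget conservation by induction on the tree (you run it bottom-up through normalized subtree masses, while the paper runs it top-down from root mass one), baseline recovery by node-by-node equality of splits, and the KL identity by expanding the log-ratio of leaf products and exchanging the sums over leaves and internal nodes, as in Theorem~\ref{thm:rahrp-pathwise}. Your explicit remark that the left- and right-child $A$-masses at $v$ are $W_v^A\beta_v^A$ and $W_v^A(1-\beta_v^A)$ is exactly the bookkeeping that the paper's one-line proof leaves implicit.
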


\begin{proof}
Budget feasibility follows by induction down the tree: the root has mass one, and every internal node sends fractions $\beta_v^\theta$ and $1-\beta_v^\theta$ to its two children, so total mass is conserved at each split. Positivity follows from $\beta_v^\theta\in(0,1)$. If all residuals vanish in \eqref{eq:rahrp-logit-actor}, every local split equals the RA-HRP split, hence the product representation recovers RA-HRP. The KL identity is the same pathwise decomposition proved in Theorem~\ref{thm:rahrp-pathwise}: expand the log ratio of leaf weights as a sum of log ratios of local branch probabilities and exchange the sums over leaves and internal nodes.
\end{proof}

Equation~\eqref{eq:rlpo-tree-kl} is the main reason RA-HRP is a better RLPO prior than a black-box weight vector. The global distance between the learned dynamic policy and the baseline decomposes into node-level binary divergences. Root nodes receive large capital weights and therefore dominate global deviation; deep nodes allow fine local tilts. This gives a precise attribution of the actor's dynamic behavior: one can identify whether RLPO is overriding the top-level sector allocation, a mid-level cluster split, or a leaf-level security choice.

\begin{corollary}[Myopic policy gradient at the hierarchical prior; \textbf{P}]\label{cor:actor-gradient-alpha}
Fix a state $s$ and evaluate the myopic Sharpe objective $S_s(w):=w^\top\widehat\bmu_e(s)/\bigl(w^\top\widehat\bSigma(s)\,w\bigr)^{1/2}$ on the hierarchical actor \eqref{eq:rahrp-logit-actor} at the prior, $a_v^\theta(s)=0$ for all $v$, so that $w^\theta(s)=w^{\mathrm{RA}}(s)$. Then for every internal node $v$,
\begin{equation}\label{eq:actor-gradient}
\frac{\partial S_s\bigl(w^\theta(s)\bigr)}{\partial a_v^\theta}\bigg|_{a^\theta=0}
=
\frac{\beta_v^{\mathrm{RA}}(s)\bigl(1-\beta_v^{\mathrm{RA}}(s)\bigr)}{\sigma_p\bigl(w^{\mathrm{RA}}(s)\bigr)}\;
\alpha_v^{\mathrm{RA}}(s),
\end{equation}
where $\alpha_v^{\mathrm{RA}}(s)=g_v^{\mathrm{RA}}(s)^\top\bigl[\widehat\bmu_e(s)-a^*\bigl(w^{\mathrm{RA}}(s)\bigr)\widehat\bSigma(s)\,w^{\mathrm{RA}}(s)\bigr]$ is the node alpha of Theorem~\ref{thm:homotopy-alpha} computed with the RA-HRP weights and splits in the node-switch vectors of Lemma~\ref{lem:node-switch}. Hence the myopic component of the policy gradient at the prior is, coordinate by coordinate, the nodewise alpha left unexplained by the prior, damped by the logistic variance $\beta_v(1-\beta_v)$: the actor receives its first learning signal exactly where the baseline misprices a child switch, and that signal is automatically attenuated at saturated splits.
\end{corollary}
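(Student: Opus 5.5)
The plan is a chain rule with three links: the gradient of the Sharpe ratio in $w$, the derivative of the leaf weights with respect to one split, and the derivative of that split with respect to its logit residual. The gradient of the Sharpe ratio was already computed in the proof of Theorem~\ref{thm:homotopy-alpha}:
\[
\nabla_w S_s(w)=\frac{\bm\alpha_s(w)}{\sigma_p(w)},
\qquad
\bm\alpha_s(w):=\widehat\bmu_e(s)-a^*(w)\,\widehat\bSigma(s)\,w .
\]
That computation is purely algebraic, so it holds verbatim with the estimated pair $(\widehat\bmu_e(s),\widehat\bSigma(s))$. Evaluating at $w=w^{\mathrm{RA}}(s)$ gives the outer factor $1/\sigma_p(w^{\mathrm{RA}})$ and the residual vector against which node switches are contracted.

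For the inner derivative I would fix one internal node $v$ and note that $a_v^\theta$ enters only through $\beta_v^\theta$, by \eqref{eq:rahrp-logit-actor}. The logistic derivative at $a^\theta=0$ gives
\[
\frac{\partial\beta_v^\theta}{\partial a_v^\theta}=\beta_v^{\mathrm{RA}}\bigl(1-\beta_v^{\mathrm{RA}}\bigr).
\]
Next, log-differentiating the product \eqref{eq:hierarchical-actor-weight} in $\beta_v$, exactly as in the proof of Lemma~\ref{lem:node-switch}(iii), gives
\[
\frac{\partial w_i}{\partial\beta_v}
=w_i\left(\frac{\eta_v(i)}{\beta_v}-\frac{1-\eta_v(i)}{1-\beta_v}\right)
\quad\textnormal{if } v\in\mathcal P(i),
\]
and $0$ otherwise. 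At the prior this is precisely $g_v^{\mathrm{RA}}$, meaning the vector \eqref{eq:node-switch} with HRP weights and splits $r_v$ replaced by RA-HRP weights and splits $\beta_v^{\mathrm{RA}}$. The lemma's derivation used only the product structure and $r_v\in(0,1)$, so the replacement is legitimate, and $\ones^\top g_v^{\mathrm{RA}}=0$ carries over.

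Combining the three links,
\[
\frac{\partial S_s}{\partial a_v^\theta}
=\frac{\beta_v^{\mathrm{RA}}\bigl(1-\beta_v^{\mathrm{RA}}\bigr)}{\sigma_p(w^{\mathrm{RA}})}\;(g_v^{\mathrm{RA}})^\top\bm\alpha_s(w^{\mathrm{RA}}),
\]
and the last factor is $\alpha_v^{\mathrm{RA}}(s)$ by definition, which gives \eqref{eq:actor-gradient}.

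The only step needing care is checking the hypotheses. We need $w^{\mathrm{RA}}\neq0$ with $\sigma_p>0$, which holds since the weights are strictly positive (Theorem~\ref{thm:rahrp-wellposed}) and $\widehat\bSigma(s)\succ0$. We also need $\beta_v^{\mathrm{RA}}\in(0,1)$, which holds by the floor, so the logit and its inverse are smooth and the chain rule applies. No sign condition on $w^\top\widehat\bmu_e$ is needed, because $\nabla S$ is defined for every $w\neq0$.
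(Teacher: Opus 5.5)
Your proposal is correct and follows the paper's own proof. Both use the same three-link chain rule: the logistic derivative $\beta_v^{\mathrm{RA}}(1-\beta_v^{\mathrm{RA}})$, then log-differentiation of the tree product to obtain $g_v^{\mathrm{RA}}$, then contraction with $\nabla S_s(w)=\bigl[\widehat\bmu_e(s)-a^*(w)\widehat\bSigma(s)w\bigr]/\sigma_p(w)$. Your explicit check of the hypotheses ($\sigma_p>0$, $\beta_v^{\mathrm{RA}}\in(0,1)$ by the floor, no sign condition on $w^\top\widehat\bmu_e$) is a small addition that the paper leaves implicit.
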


\begin{proof}
At $a^\theta=0$, $\partial\beta_v^\theta/\partial a_v^\theta=\beta_v^{\mathrm{RA}}(1-\beta_v^{\mathrm{RA}})$ from the logit parametrization, so by the product representation \eqref{eq:hierarchical-actor-weight},
$\partial w_i^\theta/\partial a_v^\theta
=\beta_v^{\mathrm{RA}}(1-\beta_v^{\mathrm{RA}})\,
w_i^{\mathrm{RA}}\bigl(\eta_v(i)/\beta_v^{\mathrm{RA}}-(1-\eta_v(i))/(1-\beta_v^{\mathrm{RA}})\bigr)$,
which is $\beta_v^{\mathrm{RA}}(1-\beta_v^{\mathrm{RA}})$ times the node-switch vector $g_v^{\mathrm{RA}}$ of Lemma~\ref{lem:node-switch} built from the RA weights and splits. Contract with $\nabla S_s(w)=\bigl[\widehat\bmu_e(s)-a^*(w)\widehat\bSigma(s)w\bigr]/\sigma_p(w)$ as in Theorem~\ref{thm:homotopy-alpha}.
\end{proof}

A Schur-RA-HRP prior is obtained by replacing $\beta_v^{\mathrm{RA}}(s)$ in \eqref{eq:rahrp-logit-actor} by the conditional-risk split from Definition~\ref{def:schur-rahrp}. The actor then learns deviations around a baseline that already incorporates local cross-cluster hedging while still avoiding a global inverse. This gives the natural approximation ladder
\begin{equation}\label{eq:rlpo-ladder}
\begin{array}{c}
\text{exact Bellman policy}\\[0.2em]
\downarrow\\[-0.1em]
\text{RL residual over Schur-RA-HRP}\\[0.2em]
\downarrow\\[-0.1em]
\text{RL residual over RA-HRP}\\[0.2em]
\downarrow\\[-0.1em]
\text{RL residual over HRP}.
\end{array}
\end{equation}

\subsection{Tree topology, regime shifts, and the dynamic open problem}

The fixed-tree theory in Sections~\ref{sec:hrp} and~\ref{sec:hpo} is deliberately sharp because it isolates the local score map from the discrete topology. RLPO makes the open problem unavoidable: the state may include a time-varying tree $\mathcal T_t$, and the policy must decide whether changes in the estimated dendrogram are signal, noise, or regime transition. A dynamic tree-regularized objective can be written as
\begin{equation}\label{eq:dynamic-tree-regularizer}
J_{\mathrm{tree}}(\theta)
=
J_{\mathrm{RLPO}}(\theta)
-
\lambda_{\mathcal T}
\E^{\pi_\theta}\sum_{t\ge0}\gamma^t
 d_{\mathcal T}(\mathcal T_t,\mathcal T_{t-1}),
\end{equation}
where $d_{\mathcal T}$ may be a merge-preservation distance, a cophenetic-matrix distance, a leaf-order rank distance, or a split-probability path distance. The pathwise factorization of hierarchical weights suggests the right mathematical route for future theory: prove uniform convergence of the value of node-split policies over a class of random trees, controlling both continuous score perturbations and discrete topology changes. In this sense the dendrogram-stability diagnostics used for RA-HRP are not merely empirical checks; they are the observable coordinates of the missing uniform-in-tree RLPO theorem.

\subsection{Operational blueprint}

The HPO--RLPO synthesis leads to a practical architecture.
\begin{enumerate}
\item Construct a state $s_t$ containing market features, estimates $(\widehat\bmu_{e,t},\widehat\bSigma_t)$, the current tree $\widehat{\mathcal T}_t$ when hierarchical rules are used, previous weights $w_{t-1}$, and cost/liquidity variables.
\item Compute a baseline HPO prior $\w_\Phi(s_t)$, preferably RA-HRP or Schur-RA-HRP when a stable tree is available.
\item Parametrize the actor either as the convex residual \eqref{eq:convex-residual-policy} or, for hierarchical policies, through node-level residual logits \eqref{eq:rahrp-logit-actor}.
\item Train the policy on the cost-adjusted Bellman objective \eqref{eq:hpo-regularized-rlpo}, using the HPO defect and KL penalties to prevent economically unjustified deviations.
\item Attribute the learned deviations through \eqref{eq:rlpo-tree-kl}, separating root-level regime allocation from local security-level tilts.
\end{enumerate}

\begin{table}[htbp]
\centering
\small
\renewcommand{\arraystretch}{1.25}
\begin{tabular}{p{4.2cm} p{4.8cm} p{5.2cm}}
\toprule
\textbf{Static HPO object} & \textbf{RLPO counterpart} & \textbf{Interpretation} \\
\midrule
HPO map $\Phi(G(s))$ & deterministic policy $\pi_\Phi(\cdot\mid s)=\delta_{\Phi(G(s))}$ & every heuristic is a stationary Markov policy \\
Implied-return defect $\mathfrak d_s(w)$ & reward-shaping penalty & dynamic deviations must pay for local Sharpe inefficiency \\
RA-HRP node split $\beta_v^{\mathrm{RA}}$ & actor prior $\operatorname{logit}\beta_v^\theta=\operatorname{logit}\beta_v^{\mathrm{RA}}+a_v^\theta$ & RL learns interpretable residuals rather than unrelated weights \\
Pathwise KL decomposition & tree-level policy regularizer & global policy distance decomposes into node-level deviations \\
Tree-stability diagnostics & topology regularizer $d_{\mathcal T}(\mathcal T_t,\mathcal T_{t-1})$ & regime changes and dendrogram jumps become controlled state variables \\
Homotopy derivative $\sum_v(s_v-r_v)\alpha_v$ & myopic policy-gradient coordinates \eqref{eq:actor-gradient} & the actor's first learning signal is the nodewise alpha the prior leaves unexplained \\
HPO advantage $A^{\pi_\Phi}(s,w)$ & performance-difference identity \eqref{eq:value-gap} & exact discounted price of acting myopically; $\varepsilon$-myopic bound $\varepsilon/(1-\gamma)$ \\
\bottomrule
\end{tabular}
\caption{Static-to-dynamic bridge between HPO and RLPO. HPO supplies the myopic geometry and policy priors; RLPO supplies continuation value and dynamic policy improvement.}
\label{tab:hpo-rlpo-bridge}
\end{table}

The central message is therefore simple. HPO gives exact statewise optimality defects. RA-HRP gives structured hierarchical policy priors. RLPO learns dynamic deviations under costs, regimes, and feedback. The three pieces are complementary rather than competing.

\section{Transfer: Elliptical Risk Measures and Growth Optimality}\label{sec:elliptical}

Mean--variance optimality is the natural benchmark, but the conditions above would be parochial if they were artifacts of the quadratic objective. They are not.

\subsection{Elliptical transfer}

\begin{assumption}[Ellipticity]\label{ass:elliptical}
The return vector is elliptical, $\mathbf{r} \sim E_N(\bmu,\bSigma,\psi)$: the characteristic function is $\phi(\mathbf{t}) = e^{i\mathbf{t}^\top\bmu}\,\psi(\mathbf{t}^\top\bSigma\mathbf{t})$, with finite second moments normalized so $\bSigma$ is the covariance matrix.
\end{assumption}

\begin{theorem}[Risk-measure transfer; \textbf{P}, assembling classical results]\label{thm:elliptical}
Under Assumption~\ref{ass:elliptical}, let $\varrho$ be any risk measure on portfolio losses $L_\w = -\w^\top\mathbf{r}$ that is law-invariant, positively homogeneous, and cash-translative ($\varrho(L-c) = \varrho(L)-c$), with $\varrho(Z) \in (0,\infty)$ for $Z$ the standardized one-dimensional marginal. Then
\begin{equation}\label{eq:locscale}
\varrho(L_\w) \;=\; -\,\w^\top\bmu + \sqrt{\w^\top\bSigma\w}\;\varrho(Z),
\end{equation}
so for every target mean the $\varrho$-minimizing fully invested portfolio is the variance-minimizing one, the mean--$\varrho$ frontier coincides with the mean--variance frontier, and the generalized-Sharpe maximizer $\argmax_\w \w^\top\bmu_e/\varrho(L_\w + r_f)$ is the maximum-Sharpe portfolio. Consequently every coincidence condition of Sections~\ref{sec:flat}--\ref{sec:hrp} is simultaneously the condition for mean--$\CVaR_\alpha$ optimality at every level $\alpha$, and frontier portfolios are optimal for every nonsatiated risk-averse expected-utility investor.
\end{theorem}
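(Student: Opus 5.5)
The argument has three steps: a location–scale representation of every portfolio loss, substitution of that representation into the risk measure, and a separate stochastic-dominance argument for expected utility.

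\emph{Step 1: location–scale representation.} Under Assumption~\ref{ass:elliptical} the characteristic function of $\w^\top\mathbf r$ is
\[
s\mapsto e^{is\w^\top\bmu}\,\psi\bigl(s^2\,\w^\top\bSigma\w\bigr).
\]
Hence, for $\w^\top\bSigma\w>0$, the standardized variable $(\w^\top\mathbf r-\w^\top\bmu)/\sqrt{\w^\top\bSigma\w}$ has characteristic function $\psi(s^2)$, which does not depend on $\w$. Call this common law $Z$. Because $Z$ is symmetric, $Z\overset{d}{=}-Z$, and therefore
\[
L_\w\overset{d}{=}-\w^\top\bmu+\sqrt{\w^\top\bSigma\w}\,Z .
\]

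\emph{Step 2: the formula \eqref{eq:locscale} and its consequences.} Law invariance lets us apply $\varrho$ to this representation. Cash translativity and positive homogeneity then give \eqref{eq:locscale} directly.

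For a fixed target mean $\w^\top\bmu=m$ and budget $\ones^\top\w=1$, the quantity $\varrho(L_\w)$ is a strictly increasing function of $\sigma_p(\w)$, because $\varrho(Z)>0$. So the $\varrho$-minimizer is the variance minimizer, and the mean–$\varrho$ frontier coincides with the mean–variance frontier.

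For the generalized Sharpe ratio on fully invested $\w$, note that $L_\w+r_f$ has mean $-\w^\top\bmu_e$. Translativity gives $\varrho(L_\w+r_f)=-\w^\top\bmu_e+\sigma_p\varrho(Z)$. The ratio $\w^\top\bmu_e/(\sigma_p\varrho(Z)-\w^\top\bmu_e)$ is an increasing function of $S(\w)=\w^\top\bmu_e/\sigma_p$ wherever the denominator is positive. Maximizing it is therefore maximizing $S$, which gives $\w_T$ by Theorem~\ref{thm:foc}(i).

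$\CVaR_\alpha$ is law-invariant, positively homogeneous and translative. It satisfies $\CVaR_\alpha(Z)>0$ for a symmetric nondegenerate $Z$ at every level $\alpha$ above the median. It therefore falls under the argument, and each coincidence condition of Sections~\ref{sec:flat}–\ref{sec:hrp} transfers verbatim.

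\emph{Step 3: expected utility.} For utility, fix the mean $m$. Two portfolio returns $m+\sigma Z$ and $m+\sigma' Z$ with $\sigma<\sigma'$ are ordered by second-order stochastic dominance. The reason is that $m+\sigma'Z\overset{d}{=}m+\sigma Z+\bigl(\sigma'-\sigma\bigr)Z'$, a mean-preserving spread; this uses the symmetry and scale family of $Z$. Equivalently, one can use Rothschild–Stiglitz, since the integrated CDFs cross only once.

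Thus every risk-averse expected-utility maximizer with a given mean prefers the smallest $\sigma$. Nonsatiation then puts her optimum on the upper frontier, and with a riskless asset on the tangency ray.

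The main obstacle is this dominance step. It needs concavity plus finite moments, and it needs an explicit mean-preserving-spread construction valid for a general elliptical generator. I would first try to prove it via the single-crossing of the CDFs of $\sigma Z$ and $\sigma' Z$ at $0$, which holds by symmetry. A secondary delicacy is the sign of the denominator in the generalized Sharpe ratio, which I would handle by restricting to portfolios with $\varrho(L_\w+r_f)>0$.
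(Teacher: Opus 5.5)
Your Steps 1 and 2 follow the paper's own argument. You use the same characteristic-function location--scale reduction, the same symmetry $-Z\overset{d}{=}Z$, and the same use of law invariance, translativity and homogeneity to get \eqref{eq:locscale}.

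The two routes diverge at Step 3. The paper does not prove the expected-utility claim. It invokes Chamberlain's (1983) characterization: under ellipticity, expected utility depends only on mean and variance, increasing in the first and decreasing in the second. You prove the needed monotonicity directly via second-order dominance. That route is more self-contained and proves exactly what the statement uses. The citation buys in addition the converse (ellipticity is essentially necessary), which the statement does not need.

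However, the justification you actually display is false. If $Z'$ is an independent copy of $Z$, then $m+\sigma'Z\overset{d}{=}m+\sigma Z+(\sigma'-\sigma)Z'$ fails for every generator, Gaussian included. The variances do not match: $\sigma'^2$ versus $\sigma^2+(\sigma'-\sigma)^2$. If instead $Z'=Z$, the identity is trivial, but the increment $(\sigma'-\sigma)Z$ is not conditionally mean zero given $\sigma Z$, so it is not a mean-preserving spread.

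Your fallback is the right repair once it is stated precisely:
\begin{itemize}
\item It is the CDFs $F_Z((x-m)/\sigma)$ and $F_Z((x-m)/\sigma')$ that cross once, at $x=m$, not the integrated CDFs.
\item The crossing uses only monotonicity of $F_Z$, not symmetry.
\item With equal means, the Karlin--Novikoff cut criterion then orders the integrated CDFs, which is exactly SSD.
\end{itemize}
A shorter closure also works. For convex $\phi$, the function $g(t)=\E\phi(m+tZ)$ is convex, and $g(t)\ge g(0)$ by Jensen, so $g$ is nondecreasing on $[0,\infty)$. If you want an explicit spread that genuinely uses symmetry, take $B\in\{-1,1\}$ independent of $Z$ with $\Pr(B=1)=(1+\sigma/\sigma')/2$. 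Then $\sigma'BZ\overset{d}{=}\sigma'Z$ and $\E[\sigma'BZ\mid Z]=\sigma Z$.

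Two smaller points in Step 2 need fixing.

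First, restricting $\CVaR_\alpha$ to levels above the median is unnecessary, and read literally it leaves part of ``every level $\alpha$'' unproved. $\CVaR_\alpha(Z)$ is the average of the quantile function of $Z$ over $[\alpha,1]$. That is at least its average over $[0,1]$, namely $\E Z=0$, with equality only for degenerate $Z$, which unit variance rules out. So $\CVaR_\alpha(Z)>0$ at every $\alpha\in(0,1)$. The median condition is what VaR would need, not CVaR.

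Second, for the generalized Sharpe ratio, restricting to $\{\varrho(L_\w+r_f)>0\}$ is not quite enough. The denominator equals $\sigma_p(\w)\bigl(\varrho(Z)-S(\w)\bigr)$. If $S_{\max}\ge\varrho(Z)$, the ratio is unbounded on that set and has no maximizer. The clean hypothesis is $S_{\max}<\varrho(Z)$. Under it, every fully invested portfolio qualifies and the maximizer is $\w_T$ by Theorem~\ref{thm:cosine}. The paper's proof passes over this point silently, so your flag is a real improvement once sharpened this way.
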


\begin{proof}
Ellipticity implies the location--scale property of linear functionals: $\w^\top\mathbf{r} \overset{d}{=} \w^\top\bmu + \sqrt{\w^\top\bSigma\w}\,Z$ with $Z$ standardized, the same law for all $\w$ \citep{owenrabinovitch1983,embrechts2002}. Apply law invariance, translativity, and positive homogeneity to $L_\w = -\w^\top\bmu - \sqrt{\w^\top\bSigma\w}\,Z$ (using $-Z \overset{d}{=} Z$ by symmetry of elliptical marginals) to obtain \eqref{eq:locscale}. For fixed mean, minimizing $\varrho$ is minimizing $\sqrt{\w^\top\bSigma\w}$; CVaR at any $\alpha\in(0,1)$ satisfies the hypotheses with $\varrho(Z) = \CVaR_\alpha(Z) > 0$ \citep{rockafellaruryasev2000}. The expected-utility statement is the characterization of \citet{chamberlain1983}: under ellipticity with finite means, expected utility is a function of mean and variance alone, increasing in the former and decreasing in the latter for concave increasing $u$ --- and Chamberlain's result is an exact characterization, so ellipticity is also essentially the largest class for which this transfer can hold for all utilities.
\end{proof}

An allocator who replaces variance by CVaR escapes none of the conditions in this paper unless she simultaneously departs from ellipticity. The frameworks separate only through tails and asymmetry --- skewness, asymmetric heavy tails, state-dependent dependence --- which is where genuine mean--CVaR-versus-mean--variance trade-offs live.

\subsection{Growth-optimal transfer}

\begin{proposition}[Kelly coincidence; \textbf{P}, classical setting]\label{prop:kelly}
In the standard diffusion market $dS_{i,t}/S_{i,t} = \mu_i\,dt + (\bm\sigma\,dB_t)_i$ with $\bSigma = \bm\sigma\bm\sigma^\top \succ 0$ and money-market rate $r_f$, the almost-sure long-run growth rate of a constant-proportion strategy holding fractions $\w$ in risky assets is
$g(\w) = r_f + \w^\top\bmu_e - \tfrac12\w^\top\bSigma\w$,
maximized uniquely at the Kelly allocation $\w_K = \bSigma^{-1}\bmu_e$ \citep{kelly1956,merton1971}. Hence a fully invested heuristic $\w^\circ$ is proportional to the growth-optimal risky allocation --- i.e.\ is the risky composition of every fractional-Kelly strategy --- if and only if $\bmu_e \propto \bSigma\w^\circ$: \emph{the same condition} \eqref{eq:foc-tangency}. Every optimality set in this paper is simultaneously a Kelly-coincidence set in the diffusion limit.
\end{proposition}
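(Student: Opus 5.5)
The plan has three steps: compute the growth rate of a constant-proportion strategy with It\^o's lemma, maximize the resulting strictly concave quadratic, and then read off the coincidence condition from Theorem~\ref{thm:foc}(i).

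\emph{Step 1: the growth rate.} Hold fractions $\w$ in the risky assets and $1-\ones^\top\w$ in the money market, rebalanced continuously. Wealth then satisfies
\[
dX_t/X_t=\bigl(r_f+\w^\top\bmu_e\bigr)\,dt+\w^\top\bm\sigma\,dB_t .
\]
It\^o's lemma applied to $\log X_t$ gives
\[
\log X_t=\log X_0+\bigl(r_f+\w^\top\bmu_e-\tfrac12\w^\top\bSigma\w\bigr)t+\w^\top\bm\sigma B_t .
\]
The last term is a Gaussian martingale with variance $t\,\w^\top\bSigma\w$. By the strong law of large numbers for Brownian motion, $B_t/t\to0$ almost surely, so $t^{-1}\log X_t\to g(\w)$ almost surely.

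\emph{Step 2: the maximizer.} The function $g$ is a quadratic on $\R^N$ with Hessian $-\bSigma\prec0$, so it is strictly concave. Its unique critical point solves $\bmu_e=\bSigma\w$, which gives $\w_K=\bSigma^{-1}\bmu_e$ as the unique global maximizer. This is the classical statement of \citet{kelly1956,merton1971}, and we only reproduce it.

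\emph{Step 3: coincidence.} A fractional-Kelly strategy holds $c\,\w_K$ for some $c>0$. Its risky composition, normalized to unit mass, is $\w_K/\ones^\top\w_K=\bSigma^{-1}\bmu_e/(\ones^\top\bSigma^{-1}\bmu_e)=\w_T$. This requires $\ones^\top\bSigma^{-1}\bmu_e>0$, which is the standing assumption of Definition~\ref{def:tangency}. A fully invested $\w^\circ$ is therefore proportional, with a positive constant, to $\w_K$ if and only if $\w^\circ=\w_T$. By Theorem~\ref{thm:foc}(i) this holds if and only if $\bmu_e=a\bSigma\w^\circ$ for some $a>0$.

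For the direct direction, if $\bmu_e=a\bSigma\w^\circ$ then $\w_K=a\w^\circ$. Conversely, if $\w_K=c\,\w^\circ$ with $c>0$, then $\bmu_e=c\bSigma\w^\circ$. Because the coincidence condition is exactly \eqref{eq:foc-tangency}, every optimality set computed earlier in the paper is also a Kelly-coincidence set.

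\emph{Main obstacle.} The algebra is immediate. The only delicate point is a sign and normalization convention. If $\ones^\top\bSigma^{-1}\bmu_e<0$, proportionality to $\w_K$ with a positive constant is impossible for any fully invested portfolio. The proof should therefore state the positivity assumption explicitly, as in Definition~\ref{def:tangency}, rather than leave it implicit.
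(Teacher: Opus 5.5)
Your proposal is correct and follows the paper's own argument. You apply It\^o's formula to log-wealth, show that the time-averaged martingale term vanishes almost surely (which you justify via $B_t/t\to0$), use strict concavity of the quadratic $g$, and observe that positive proportionality $\w^\circ\propto\w_K$ is condition \eqref{eq:foc-tangency} verbatim. Your positivity caveat is harmless but not needed: the direct argument in the last paragraph of your Step~3 shows that either side of the equivalence already forces $\ones^\top\bSigma^{-1}\bmu_e=\ones^\top\w_K>0$.
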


\begin{proof}
It\^o's formula on $\ln W_t$ for the constant-proportion wealth dynamics gives $d\ln W_t = (r_f + \w^\top\bmu_e - \tfrac12\w^\top\bSigma\w)\,dt + \w^\top\bm\sigma\,dB_t$; the time-average of the martingale part vanishes a.s., leaving $g(\w)$, a strictly concave quadratic with maximizer $\bSigma^{-1}\bmu_e$. Proportionality $\w^\circ \propto \w_K$ is condition \eqref{eq:foc-tangency} verbatim.
\end{proof}

\section{Symmetry and Bayes Optimality: When the Heuristic Is the Optimizer}\label{sec:bayes}

The exact conditions classify parameter configurations; this section classifies \emph{beliefs}. The claim to be made precise is: each flat heuristic is the \emph{exactly optimal} Bayes portfolio of an investor whose uncertainty is symmetric under the group corresponding to her information set. Ignorance, formalized as invariance, does not approximately rationalize the heuristics --- it derives them.

\begin{lemma}[Predictive reduction; \textbf{P}]\label{lem:predictive}
Let the investor hold a belief $\pi$ over $(\bmu_e,\bSigma)$ and maximize the one-period objective $\E_\pi[\w^\top\mathbf{r}_e] - \tfrac{\lambda}{2}\Var_\pi(\w^\top\mathbf{r}_e)$, where $\mathbf{r}_e$ is the predictive excess return (parameters integrated out, with $\E[\mathbf{r}_e\mid\bmu_e,\bSigma] = \bmu_e$, $\Var(\mathbf{r}_e\mid\bmu_e,\bSigma) = \bSigma$). Then the objective equals $\w^\top\bar\bmu_e - \tfrac{\lambda}{2}\w^\top\hat\bSigma\w$ with
\[
\bar\bmu_e = \E_\pi[\bmu_e], \qquad
\hat\bSigma = \E_\pi[\bSigma] + \Cov_\pi(\bmu_e) \;\succ 0,
\]
and the Bayes portfolio is $\w^\star = \tfrac{1}{\lambda}\hat\bSigma^{-1}\bar\bmu_e$ --- the tangency portfolio of the \emph{predictive} pair.
\end{lemma}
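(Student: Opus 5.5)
The plan is to compute the first two predictive moments of $\w^\top\mathbf{r}_e$ by conditioning on the parameters, observe that the objective becomes an ordinary mean--variance quadratic in $\w$, and then read off the maximizer.

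\emph{Predictive mean.} By the tower property, $\E_\pi[\w^\top\mathbf{r}_e] = \E_\pi\bigl[\E[\w^\top\mathbf{r}_e\mid\bmu_e,\bSigma]\bigr] = \w^\top\E_\pi[\bmu_e] = \w^\top\bar\bmu_e$. This uses only linearity in $\w$ and integrability of $\bmu_e$ under $\pi$, which I will assume as part of the hypotheses.

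\emph{Predictive variance.} I will apply the law of total variance to the scalar $X = \w^\top\mathbf{r}_e$:
\[
\Var_\pi(X) = \E_\pi\bigl[\Var(X\mid\bmu_e,\bSigma)\bigr] + \Var_\pi\bigl(\E[X\mid\bmu_e,\bSigma]\bigr) = \E_\pi[\w^\top\bSigma\w] + \Var_\pi(\w^\top\bmu_e).
\]
Since $\w$ is fixed, this equals $\w^\top\bigl(\E_\pi[\bSigma] + \Cov_\pi(\bmu_e)\bigr)\w = \w^\top\hat\bSigma\w$. The two moment identities together give the displayed form of the objective.

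\emph{Positive definiteness.} Each $\bSigma$ in the support of $\pi$ is positive definite, so $\E_\pi[\bSigma]\succ 0$: for any $x\neq0$ we have $x^\top\E_\pi[\bSigma]x = \E_\pi[x^\top\bSigma x] > 0$. The term $\Cov_\pi(\bmu_e)$ is positive semidefinite, so the sum $\hat\bSigma$ is positive definite. This requires finite first moments of $\bSigma$ and finite second moments of $\bmu_e$ under $\pi$, which are implicit in the objective being defined.

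\emph{Optimization.} The objective is now a strictly concave quadratic $\w^\top\bar\bmu_e - \tfrac{\lambda}{2}\w^\top\hat\bSigma\w$. Setting its gradient to zero gives $\w^\star = \lambda^{-1}\hat\bSigma^{-1}\bar\bmu_e$, which is unconstrained, as in Proposition~\ref{prop:kelly}. This vector is proportional to $\hat\bSigma^{-1}\bar\bmu_e$, so by Definition~\ref{def:tangency} and Theorem~\ref{thm:foc}(i) it is the tangency direction of the predictive pair $(\bar\bmu_e,\hat\bSigma)$. If a budget constraint is intended instead, the normalized version follows from the KKT argument of Theorem~\ref{thm:foc}(ii).

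There is no real obstacle here. The only step that needs care is the total-variance step: the $\Cov_\pi(\bmu_e)$ term must be kept, and the integrability conditions that make both terms finite must be stated.
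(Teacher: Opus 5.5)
Your argument is correct and is essentially the paper's proof: the tower property and the law of total variance turn the objective into $\w^\top\bar\bmu_e-\tfrac{\lambda}{2}\w^\top\hat\bSigma\w$, and the maximizer is the unconstrained quadratic optimum. Your positive-definiteness check ($\E_\pi[\bSigma]\succ0$ plus $\Cov_\pi(\bmu_e)\succeq0$) fills in a step the paper leaves implicit. One caution on your closing aside only: imposing $\ones^\top\w=1$ and applying the KKT conditions of Theorem~\ref{thm:foc}(ii) gives the frontier portfolio $\lambda^{-1}\hat\bSigma^{-1}(\bar\bmu_e-b\ones)$, which is not in general the normalized tangency vector, whereas the paper's later phrase ``normalized to full investment'' simply means rescaling $\w^\star$ by $\ones^\top\w^\star$.
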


\begin{proof}
Law of total expectation and total variance: $\Var_\pi(\w^\top\mathbf{r}_e) = \E_\pi[\w^\top\bSigma\w] + \Var_\pi(\w^\top\bmu_e) = \w^\top(\E_\pi[\bSigma] + \Cov_\pi(\bmu_e))\w$. The maximizer is the unconstrained quadratic optimum.
\end{proof}

\begin{lemma}[Commutant of the symmetric group; \textbf{P}]\label{lem:commutant}
If $M \in \R^{N\times N}$ is symmetric and $PMP^\top = M$ for every permutation matrix $P$, then $M = uI + v\ones\ones^\top$ for scalars $u, v$; if additionally $M \succ 0$ then $u > 0$ and $u + Nv > 0$, and $M^{-1}\ones = (u+Nv)^{-1}\ones$.
\end{lemma}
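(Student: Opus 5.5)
The plan is to work directly with the entries of $M$, using permutation invariance in the form $M_{\pi(i)\pi(j)} = M_{ij}$ for every permutation $\pi$. If $P$ is the permutation matrix sending $e_j \mapsto e_{\pi(j)}$, then $(PMP^\top)_{\pi(i)\pi(j)} = M_{ij}$, so the hypothesis $PMP^\top = M$ is exactly this entrywise condition.

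\emph{Step 1 (two orbits).} The symmetric group acts transitively on the diagonal positions $\{(i,i)\}$. Taking $\pi$ to be the transposition $(i\,k)$ gives $M_{ii} = M_{kk}$, so all diagonal entries share a common value $d$. The group also acts transitively on ordered off-diagonal pairs $(i,j)$ with $i\neq j$: for $N\ge 3$, and for $N=2$ via the transposition alone, any such pair can be mapped to any other. Hence all off-diagonal entries share a common value $v$. Setting $u := d - v$ gives $M = uI + v\ones\ones^\top$. Symmetry of $M$ is not even needed here, though it is consistent with the conclusion. When $N=1$ the decomposition is trivially non-unique, and we may take $v=0$.

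\emph{Step 2 (spectrum).} The matrix $\ones\ones^\top$ has eigenvector $\ones$ with eigenvalue $N$, and eigenvalue $0$ on the orthogonal complement $\ones^\perp$, which has dimension $N-1$. Therefore $M$ has eigenvalue $u+Nv$ on $\ones$ and eigenvalue $u$ on $\ones^\perp$. If $M\succ 0$, both eigenvalues are positive, which yields $u+Nv>0$. For $N\ge 2$ it also yields $u>0$. When $N=1$ the claim $u>0$ holds once we choose the normalization $v=0$, which I would note explicitly.

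\emph{Step 3 (inverse applied to $\ones$).} We have $M\ones = (u+Nv)\ones$, and $u+Nv\neq 0$. Applying $M^{-1}$ to both sides and dividing by $u+Nv$ gives $M^{-1}\ones = (u+Nv)^{-1}\ones$.

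Each step is elementary, and the only point requiring care is the degenerate small-$N$ bookkeeping in Steps 1--2. The orbit argument must be checked for $N=2$, where the single transposition swaps $(1,2)$ and $(2,1)$, although symmetry of $M$ already handles that case. The positivity of $u$ must be checked for $N=1$. I expect no substantive obstacle beyond this.
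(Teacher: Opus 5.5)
Your proposal is correct and follows essentially the same route as the paper: permutation invariance equalizes all diagonal and all off-diagonal entries (the paper phrases this through transpositions, you through transitivity on diagonal positions and on ordered off-diagonal pairs), and the eigenvalues $u$ on $\ones^\perp$ and $u+Nv$ on $\ones$ then give positivity and $M^{-1}\ones=(u+Nv)^{-1}\ones$. Your extra remark on $N=1$ is a small refinement the paper leaves implicit: there $u$ is not determined by $M$, so $u>0$ holds only after normalizing $v=0$.
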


\begin{proof}
Invariance under the transposition of $i$ and $j$ forces $M_{ii} = M_{jj}$ and $M_{ik} = M_{jk}$ for $k \neq i,j$; ranging over transpositions equalizes all diagonal entries and all off-diagonal entries. Positive definiteness gives the eigenvalue conditions ($u$ with multiplicity $N{-}1$ on $\ones^\perp$, $u + Nv$ on $\ones$), and $M\ones = (u+Nv)\ones$ inverts as claimed.
\end{proof}

\begin{theorem}[Exchangeability $\Rightarrow$ equal weight; \textbf{P}]\label{thm:bayesew}
Suppose the belief $\pi$ is invariant under the full symmetric group acting on assets, $(\bmu_e,\bSigma) \mapsto (P\bmu_e, P\bSigma P^\top)$, and $\bar\bmu_e \neq 0$ with $\ones^\top\hat\bSigma^{-1}\bar\bmu_e > 0$. Then the Bayes portfolio of Lemma~\ref{lem:predictive}, normalized to full investment, is exactly the equal-weight portfolio --- for every risk aversion $\lambda$ and every exchangeable $\pi$.
\end{theorem}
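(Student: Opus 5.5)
The plan is to push the permutation invariance of $\pi$ through the predictive moments of Lemma~\ref{lem:predictive} and then apply Lemma~\ref{lem:commutant}. First I will verify that $\bar\bmu_e$ and $\hat\bSigma$ inherit the symmetry. Invariance of $\pi$ under $(\bmu_e,\bSigma)\mapsto(P\bmu_e,P\bSigma P^\top)$ gives, by linearity of expectation, $P\bar\bmu_e=\E_\pi[P\bmu_e]=\bar\bmu_e$ and $P\,\E_\pi[\bSigma]P^\top=\E_\pi[\bSigma]$. It also gives $P\,\Cov_\pi(\bmu_e)P^\top=\Cov_\pi(P\bmu_e)=\Cov_\pi(\bmu_e)$. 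Hence $P\hat\bSigma P^\top=\hat\bSigma$ for every permutation matrix $P$. The moments need to exist, which I take as implicit in Lemma~\ref{lem:predictive}.

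Next, a vector fixed by all permutations has all coordinates equal, so $\bar\bmu_e=m\ones$ for some scalar $m$, and $m\neq0$ since $\bar\bmu_e\neq0$. Lemma~\ref{lem:predictive} says $\hat\bSigma\succ0$, and $\hat\bSigma$ is symmetric and permutation-invariant. Lemma~\ref{lem:commutant} therefore gives $\hat\bSigma=uI+v\ones\ones^\top$ with $u+Nv>0$ and $\hat\bSigma^{-1}\ones=(u+Nv)^{-1}\ones$.

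Then I will compute the Bayes portfolio as $\w^\star=\tfrac{1}{\lambda}\hat\bSigma^{-1}\bar\bmu_e=\tfrac{m}{\lambda(u+Nv)}\ones$. The normalizing scalar is $\ones^\top\hat\bSigma^{-1}\bar\bmu_e=Nm/(u+Nv)$, which is positive by hypothesis; this forces $m>0$. Dividing by it gives $\w^\star/(\ones^\top\w^\star)=\ones/N$. The scalar $\lambda$ cancels in the normalization, and no property of $\pi$ beyond exchangeability was used, so the conclusion holds for every risk aversion and every exchangeable belief.

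I expect no serious obstacle. The only point needing care is the covariance-term invariance, $\Cov_\pi(P\bmu_e)=P\Cov_\pi(\bmu_e)P^\top$ combined with the equality in law of $P\bmu_e$ and $\bmu_e$. It is worth stating explicitly that this uses invariance of the joint law, not merely of the marginals.
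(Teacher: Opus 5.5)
Your proof is correct and follows the paper's argument. Permutation invariance forces $\bar\bmu_e=m\ones$, and through Lemma~\ref{lem:commutant} it forces $\hat\bSigma=uI+v\ones\ones^\top$ with $\hat\bSigma^{-1}\ones\propto\ones$; the sign condition $\ones^\top\hat\bSigma^{-1}\bar\bmu_e>0$ then gives $m>0$, and normalization yields $\ones/N$ independently of $\lambda$.
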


\begin{proof}
Invariance of $\pi$ makes $\bar\bmu_e$ a fixed vector of every permutation, hence $\bar\bmu_e = m\ones$; and makes both $\E_\pi[\bSigma]$ and $\Cov_\pi(\bmu_e)$ invariant matrices, hence by Lemma~\ref{lem:commutant} $\hat\bSigma = uI + v\ones\ones^\top$ with $\hat\bSigma^{-1}\ones \propto \ones$. Thus $\w^\star \propto \hat\bSigma^{-1}m\ones \propto \ones$, and the sign condition fixes $m > 0$.
\end{proof}

\begin{theorem}[Standardized exchangeability $\Rightarrow$ inverse volatility; \textbf{P}]\label{thm:bayesiv}
Suppose volatilities $\sigma_1,\dots,\sigma_N$ are known, and the belief about the \emph{standardized} excess returns $\mathbf{z} = D^{-1}\mathbf{r}_e$ is exchangeable, with predictive mean $\bar{\mathbf{z}} = \bar s\ones$, $\bar s > 0$, and predictive covariance $\hat Z \succ 0$. Then the Bayes portfolio, normalized to full investment, is exactly the inverse-volatility portfolio.
\end{theorem}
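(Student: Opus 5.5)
The plan is to reduce this statement to Theorem~\ref{thm:bayesew} by a change of variables. With volatilities known, the diagonal matrix $D$ is deterministic, so holding $\w$ in the raw assets is the same as holding $\mathbf{v} := D\w$ in the standardized assets. Indeed $\w^\top\mathbf{r}_e = \w^\top D\mathbf{z} = \mathbf{v}^\top\mathbf{z}$.

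By Lemma~\ref{lem:predictive}, applied to the predictive law of $\mathbf{z}$, the investor's objective in $\mathbf{v}$ is $\mathbf{v}^\top\bar{\mathbf{z}} - \tfrac{\lambda}{2}\mathbf{v}^\top\hat Z\mathbf{v}$. Equivalently, in raw coordinates the predictive pair is $(D\bar{\mathbf{z}}, D\hat Z D)$. The unconstrained maximizer is then
\[
\mathbf{v}^\star = \tfrac{1}{\lambda}\hat Z^{-1}\bar{\mathbf{z}}, \qquad \w^\star = D^{-1}\mathbf{v}^\star .
\]
One can check directly that $(D\hat Z D)^{-1} D\bar{\mathbf{z}} = D^{-1}\hat Z^{-1}\bar{\mathbf{z}}$, so the two coordinate systems agree.

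Next I use exchangeability. It makes $\hat Z$ invariant under $\hat Z \mapsto P\hat Z P^\top$ for every permutation matrix $P$. This holds because $\hat Z = \E[\Cov(\mathbf{z}\mid\cdot)] + \Cov(\E[\mathbf{z}\mid\cdot])$, and both terms inherit the invariance; equivalently, the predictive law of $\mathbf{z}$ is itself exchangeable. Lemma~\ref{lem:commutant} then gives $\hat Z = uI + v\ones\ones^\top$ with $\hat Z^{-1}\ones = (u+Nv)^{-1}\ones$ and $u+Nv > 0$. Together with $\bar{\mathbf{z}} = \bar s\ones$ this yields
\[
\mathbf{v}^\star = \frac{\bar s}{\lambda(u+Nv)}\,\ones,
\]
which is a positive multiple of $\ones$ because $\bar s > 0$. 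Hence $\w^\star \propto D^{-1}\ones$ with a positive constant. Normalizing to full investment gives $w_i = \sigma_i^{-1}/\sum_j\sigma_j^{-1}$, the inverse-volatility portfolio, for every $\lambda$.

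The only delicate step is justifying the invariance of $\hat Z$. If the hypothesis is read as exchangeability of the parameter belief over $(\bar{\mathbf{z}}\text{-conditional mean}, \text{standardized covariance})$, I would repeat the argument of Theorem~\ref{thm:bayesew}: each of the two summands is a permutation-invariant symmetric matrix. If instead $\hat Z$ is simply stipulated as the predictive covariance of an exchangeable vector, invariance is immediate. Either way, the positivity needed for the sign of the normalization comes from $u+Nv>0$ and $\bar s>0$, so no extra sign condition is required.
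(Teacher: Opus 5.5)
Your proposal is correct and is essentially the paper's proof. Both invoke Lemma~\ref{lem:commutant} to get $\hat Z = uI + v\ones\ones^\top$ with $\hat Z^{-1}\ones = (u+Nv)^{-1}\ones$, and then pass the inverse through the known diagonal scaling via $(D\hat Z D)^{-1}D\bar s\ones = \bar s\,D^{-1}\hat Z^{-1}\ones \propto D^{-1}\ones$. Your change of variables $\mathbf{v} = D\w$ just repackages that computation in standardized coordinates, and your positivity bookkeeping ($\bar s>0$, $u+Nv>0$) makes explicit the normalization step the paper leaves implicit.
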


\begin{proof}
Exchangeability of the predictive law of $\mathbf{z}$ forces $\bar{\mathbf{z}} \propto \ones$ and, by Lemma~\ref{lem:commutant}, $\hat Z = uI + v\ones\ones^\top$. The predictive pair for $\mathbf{r}_e = D\mathbf{z}$ is $(\bar s D\ones,\ D\hat Z D)$, so
\[
\w^\star \propto (D\hat Z D)^{-1}\bar s D\ones = \bar s\,D^{-1}\hat Z^{-1}\ones \propto D^{-1}\ones,
\]
the IV weights, using $\hat Z^{-1}\ones = (u+Nv)^{-1}\ones$.
\end{proof}

\begin{proposition}[Uninformative premia with conjugate uncertainty $\Rightarrow$ minimum variance; \textbf{P}]\label{prop:bayesgmv}
Suppose $\bSigma$ is known, the belief about $\bmu_e$ has mean $\bar\bmu_e = m\ones$, $m>0$, and conjugate-form uncertainty $\Cov_\pi(\bmu_e) = \tau^2\bSigma$, $\tau \ge 0$. Then the Bayes portfolio, normalized to full investment, is exactly the GMV portfolio, for every $\tau$.
\end{proposition}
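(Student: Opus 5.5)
The plan is to apply Lemma~\ref{lem:predictive} directly and read off the predictive pair. Since $\bSigma$ is known, the belief over $(\bmu_e,\bSigma)$ is degenerate in the covariance, so $\E_\pi[\bSigma]=\bSigma$. The predictive covariance is therefore
\[
\hat\bSigma=\bSigma+\Cov_\pi(\bmu_e)=(1+\tau^2)\bSigma\succ0 .
\]
The predictive mean is $\bar\bmu_e=m\ones$.

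The Bayes portfolio of Lemma~\ref{lem:predictive} is then
\[
\w^\star=\tfrac{1}{\lambda}\hat\bSigma^{-1}\bar\bmu_e=\frac{m}{\lambda(1+\tau^2)}\,\bSigma^{-1}\ones .
\]
The scalar in front is strictly positive because $m>0$, $\lambda>0$ and $\tau^2\ge0$. We also have $\ones^\top\bSigma^{-1}\ones>0$ because $\bSigma\succ0$. So $\ones^\top\w^\star>0$, and normalizing to full investment gives $\bSigma^{-1}\ones/\ones^\top\bSigma^{-1}\ones=\w_{\mathrm{GMV}}$ from Definition~\ref{def:rules}.

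This works for every $\tau$, since $\tau$ only enters through the positive scalar $(1+\tau^2)^{-1}$, which disappears under normalization. Equivalently, the predictive pair satisfies the equal-premium condition of Theorem~\ref{thm:gmv}. The conjugate form $\Cov_\pi(\bmu_e)\propto\bSigma$ matters because it keeps $\hat\bSigma^{-1}\ones$ on the ray of $\bSigma^{-1}\ones$. A non-conjugate uncertainty $\Cov_\pi(\bmu_e)=\Omega$ would instead give $(\bSigma+\Omega)^{-1}\ones$, which is generally not GMV.

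There is no real obstacle here. The only point needing care is justifying $\E_\pi[\bSigma]=\bSigma$: the known-covariance hypothesis means the marginal of $\pi$ on $\bSigma$ is a point mass. It is also worth checking that the positivity condition $\ones^\top\hat\bSigma^{-1}\bar\bmu_e>0$ holds automatically, so no separate sign assumption is needed, unlike in Theorem~\ref{thm:bayesew}.
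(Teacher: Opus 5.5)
Your proposal is correct and matches the paper's proof: it also applies Lemma~\ref{lem:predictive} to get $\hat\bSigma=(1+\tau^2)\bSigma$ and then reads off $\w^\star\propto\bSigma^{-1}\ones$. Your extra checks are details the paper leaves implicit: that $\E_\pi[\bSigma]=\bSigma$ under a point-mass covariance belief, and that the normalizing quantity $\ones^\top\hat\bSigma^{-1}\bar\bmu_e$ is automatically positive.
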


\begin{proof}
$\hat\bSigma = (1+\tau^2)\bSigma$, so $\w^\star \propto \bSigma^{-1}\ones$.
\end{proof}

\begin{remark}[MD, ERC, HRP at this rung]\label{rem:bayesgaps}
Maximum diversification is the degenerate case: a point-mass belief $\bmu_e = S\bsig$ with known $\bSigma$ gives $\w^\star \propto \bSigma^{-1}\bsig$ immediately (Theorem~\ref{thm:md} read as a Bayes statement). For ERC, HRP, and RA-HRP we know of no theorem of comparable cleanliness: ERC arises from the symmetric beliefs of Theorem~\ref{thm:bayesiv} only on the constant-correlation slice where it collapses to IV (Proposition~\ref{prop:ercconst}), HRP's defining objects --- the tree and the recursive split --- have no invariance-group derivation we are aware of, and RA-HRP additionally requires a belief-level account of why noisy local cluster-Sharpe estimates should be trusted after aggregation but not globally inverted. A robust-control foundation for ERC under Sharpe-ratio ambiguity with a trusted correlation matrix, for HRP under ambiguity confined to inter-cluster blocks (the objects substitutions (i)--(ii) of Corollary~\ref{cor:threesubs} discard), and for RA-HRP under ambiguity-penalized local Sharpe scores are the natural conjectured forms; we state them as open problems, not results.
\end{remark}

The pattern of Theorems~\ref{thm:bayesew}--\ref{thm:bayesiv} and Proposition~\ref{prop:bayesgmv} is the group-theoretic content of the information-set ladder: choose the subgroup of asset relabelings under which your knowledge is genuinely indistinguishable, impose invariance of beliefs under it, and the Bayes portfolio is forced into the corresponding heuristic. The heuristics are not approximations to optimization; they are optimization under symmetry.

\section{Ambiguity, Robustness, and Estimation}\label{sec:ambiguity}

The Bayes theorems condition on symmetric beliefs; robust and finite-sample results show that the same portfolios emerge from adversarial and frequentist formulations.

\begin{proposition}[$1/N$ as the high-ambiguity limit; \textbf{C}]\label{prop:pflug}
In distributionally robust portfolio selection where the return law is known only to lie in a ball (Kantorovich/Wasserstein-type) of radius $\varepsilon$ around a nominal model and the investor minimizes the worst case of a convex risk functional over the ball, \citet{pflug2012} prove that as $\varepsilon \to \infty$ the robust optimal portfolio converges to equal weight: uniform diversification is the exact limit of robust optimization under extreme ambiguity. \citet{garlappi2007} obtain multi-prior portfolios interpolating between the plug-in optimum and conservative diversification as confidence shrinks.
\end{proposition}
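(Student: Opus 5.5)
This proposition is labeled \textbf{C} and is used without proof. If I were to reconstruct the \citet{pflug2012} argument in the notation of this paper, I would reduce the robust problem to a deterministic penalized problem on $\Delta^{N-1}$ and then take $\varepsilon\to\infty$. Write $\varrho$ for the convex, law-invariant, cash-translative risk functional applied to the loss $L_\w=-\w^\top\mathbf r$. Let $P_0$ be the nominal law and $\mathbb B_\varepsilon(P_0)$ the Kantorovich ball of radius $\varepsilon$ built on a norm $\|\cdot\|$ on $\R^N$ with dual norm $\|\cdot\|_*$. The robust value is then $F_\varepsilon(\w):=\sup_{P\in\mathbb B_\varepsilon(P_0)}\varrho_P(L_\w)$.

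The first step is a two-sided estimate of $F_\varepsilon(\w)$ that is linear in $\varepsilon\|\w\|_*$.

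\emph{Lower bound.} Pick $z$ with $\|z\|=1$ and $\w^\top z=\|\w\|_*$. Move every atom of $P_0$ by $-\varepsilon z$. This transport costs exactly $\varepsilon$, so the shifted law lies in the ball. It lowers the portfolio return by the constant $\varepsilon\|\w\|_*$, and cash translativity gives $F_\varepsilon(\w)\ge \varrho_{P_0}(L_\w)+\varepsilon\|\w\|_*$.

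\emph{Upper bound.} Use the dual representation $\varrho(L)=\sup_{Z\in\mathcal Z}\{\E[ZL]-\varrho^*(Z)\}$, with densities $Z$ that are bounded in $L^\infty$ by some $\kappa$ (for example $\kappa=1/(1-\alpha)$ for $\CVaR_\alpha$). Coupling $P$ with $P_0$ then bounds the increment by $\kappa\,\varepsilon\|\w\|_*$. Together,
\[
\varrho_{P_0}(L_\w)+\varepsilon\|\w\|_*\le F_\varepsilon(\w)\le \varrho_{P_0}(L_\w)+\kappa\,\varepsilon\|\w\|_*.
\]

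The second step is the limit. Since $\varrho_{P_0}(L_\w)$ is bounded on the compact simplex, I would show that $F_\varepsilon(\w)/\varepsilon$ converges uniformly on $\Delta^{N-1}$ to $c\,\|\w\|_*$ for some constant $c\in[1,\kappa]$. The lower-bound shift is in fact asymptotically extremal, because for large transport budgets the cheapest way to hurt the investor is a deterministic shift. Establishing this is the step I expect to be the main obstacle. The sandwich alone does not pin down the limit, so I would try to prove the matching upper bound by showing that any $\varepsilon$-transport increases $\varrho$ by at most $\varepsilon\|\w\|_*+o(\varepsilon)$. The route is to split the transport plan into its mean displacement, which is handled exactly by translativity, and a mean-zero part. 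The mean-zero part's contribution must then be controlled by convexity and law invariance.

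Given uniform convergence, the minimizers of $F_\varepsilon$ over $\Delta^{N-1}$ accumulate only at minimizers of $\|\w\|_*$ on the simplex, by epi-convergence on a compact set. The last step identifies that minimizer. For the Euclidean norm, or any permutation-invariant norm whose dual is strictly convex, $\w\mapsto\|\w\|_*$ is strictly convex and invariant under every permutation of the coordinates. The symmetrized point $\tfrac1{N!}\sum_P P\w$ equals $\ones/N$ for every $\w$ in the simplex, so by strict convexity $\ones/N$ is the unique minimizer. Hence the robust portfolios converge to equal weight.

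The statement about \citet{garlappi2007} is a separate cited result and would not be reproved here. I would only note that it fits the same penalty structure: a confidence-set radius enters as a norm penalty that interpolates between the plug-in optimum and the symmetric diversified point.
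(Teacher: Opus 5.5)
\textbf{The key step fails.} Your architecture is right: a worst case growing like a constant times $\varepsilon\|\w\|_*$, a uniform limit on $\Delta^{N-1}$, and minimization of a permutation-invariant strictly convex dual norm at $\ones/N$. Your two bounds and the final symmetrization step are correct. The gap is the step you flag as the main obstacle. It is not merely hard; it is false. The deterministic shift is not asymptotically extremal. No split of the transport into a mean displacement plus a mean-zero part can cap the increment at $\varepsilon\|\w\|_*+o(\varepsilon)$, because the adversary gains precisely by concentrating the budget on the tail.

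Take $\varrho=\CVaR_\alpha$, let $A$ be an upper $(1-\alpha)$-tail event of $L_\w$ under $P_0$ (randomizing on atoms if needed), and transport only that mass by $-\tfrac{\varepsilon}{1-\alpha}z$. The cost is $\varepsilon$, $A$ stays the tail event, and the loss on $A$ rises by $\tfrac{\varepsilon}{1-\alpha}\|\w\|_*$. So $\CVaR_\alpha$ rises by $\varepsilon\|\w\|_*/(1-\alpha)$: your \emph{upper} bound is attained for every $\varepsilon$, and the slope is $\kappa=1/(1-\alpha)>1$, not $1$. Likewise, for a spectral measure with bounded nondecreasing spectrum $\phi$, moving a vanishing fraction $\delta$ of the highest-loss mass a distance $\varepsilon/\delta$ gives the exact worst case $\varrho_{P_0}(L_\w)+\phi(1^-)\,\varepsilon\|\w\|_*$. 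Robustification is then exactly a norm penalty. Any attempt to prove $c=1$ must therefore fail. A sandwich with a possibly $\w$-dependent $c(\w)\in[1,\kappa]$ does not locate the limiting minimizer.

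\textbf{How to repair it.} Your last step only needs the asymptotic slope to be independent of $\w$, and this follows without identifying the slope. With $\xi_0\sim P_0$, the image of the Kantorovich ball under $\xi\mapsto-\w^\top\xi$ is exactly the univariate $W_1$-ball of radius $\varepsilon\|\w\|_*$ around the law of $L_\w$. Your coupling estimate gives one inclusion. For the other, any $L'$ in that ball is realized by $\xi'=\xi_0+\|\w\|_*^{-1}(L_\w-L')\,z$, built on an optimal coupling of $(L_\w,L')$, at cost $\E|L_\w-L'|/\|\w\|_*\le\varepsilon$. Hence $F_\varepsilon(\w)=G(L_\w,\varepsilon\|\w\|_*)$ with $G(L,\eta):=\sup\{\varrho(L'):W_1(L',L)\le\eta\}$.

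$W_1$ on the line is geodesic (interpolate quantile functions), so your $\kappa$-Lipschitz estimate gives $|G(L,\eta)-G(0,\eta)|\le\kappa\,\E|L|$. Here $G(0,\eta)=\sup_{\E|Y|\le1}\varrho(\eta Y)$ is convex in $\eta$, at most $\varrho(0)+\kappa\eta$, and at least $\varrho(0)+\eta$ by translativity. So $G(0,\eta)/\eta\to c_\varrho\in[1,\kappa]$, a constant depending on $\varrho$ alone ($c_\varrho=1/(1-\alpha)$ for $\CVaR_\alpha$). On the simplex, $\|\w\|_*$ is bounded above and away from zero and $\E|L_\w|$ is bounded. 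Therefore $F_\varepsilon(\w)/\varepsilon\to c_\varrho\|\w\|_*$ uniformly, and your epi-convergence and symmetrization steps go through unchanged.
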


\begin{proposition}[Constraints as shrinkage; \textbf{C}]\label{prop:jm}
\citet{jagannathanma2003} prove that the long-only (and upper-bounded) minimum-variance portfolio computed from an estimate $\hat\bSigma$ equals the \emph{unconstrained} minimum-variance portfolio of a modified matrix in which constraint multipliers shrink the covariances of assets at the bounds: imposing the ``wrong'' constraints performs covariance shrinkage \citep[cf.][]{ledoitwolf2004}, which is why constrained GMV often beats unconstrained GMV out of sample.
\end{proposition}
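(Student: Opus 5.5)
The plan is to reproduce the argument of \citet{jagannathanma2003} through the Karush--Kuhn--Tucker conditions of the constrained program, and then to exhibit an explicit modified matrix whose \emph{unconstrained}, budget-only minimum-variance problem is solved by the same vector.

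\emph{KKT system.} Consider $\min_\w \tfrac12\w^\top\hat\bSigma\w$ subject to $\ones^\top\w=1$, $\w\ge0$ and $\w\le\bar{\mathbf u}$, with $\hat\bSigma\succ0$. The program is strictly convex with affine constraints, so the KKT conditions are necessary and sufficient, exactly as in Theorem~\ref{thm:foc}(ii). Hence there are multipliers $\lambda\in\R$ for the budget, $\bm\delta\ge0$ for the nonnegativity constraints and $\bm\xi\ge0$ for the upper bounds such that
\[
\hat\bSigma\w^+=\lambda\ones+\bm\delta-\bm\xi .
\]
Complementary slackness gives $\delta_j w^+_j=0$ and $\xi_j(\bar u_j-w^+_j)=0$, so $\bm\delta^\top\w^+=0$ and $\bm\xi^\top\w^+=\bm\xi^\top\bar{\mathbf u}$.

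\emph{Modified matrix.} Set
\[
\tilde\bSigma:=\hat\bSigma-(\bm\delta\ones^\top+\ones\bm\delta^\top)+(\bm\xi\ones^\top+\ones\bm\xi^\top).
\]
Using $\ones^\top\w^+=1$ and the slackness identities, $\tilde\bSigma\w^+=\hat\bSigma\w^+-\bm\delta+\bm\xi+(\bm\xi^\top\bar{\mathbf u})\ones=(\lambda+\bm\xi^\top\bar{\mathbf u})\ones$. So $\w^+$ satisfies the implied-return condition of the GMV problem for $\tilde\bSigma$: the analogue of Theorem~\ref{thm:gmv} with equal implied premia.

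\emph{Optimality without global positive definiteness.} This is the step I expect to be the main obstacle. $\tilde\bSigma$ need not be positive definite, so the stationarity condition alone does not certify that $\w^+$ is a minimizer. The plan is to sidestep this by working only on the budget hyperplane. For $\ones^\top\w=1$ one has $\w^\top\tilde\bSigma\w=\w^\top\hat\bSigma\w-2\bm\delta^\top\w+2\bm\xi^\top\w$, which is $\hat\bSigma$'s quadratic form plus a linear term. For directions $\mathbf x$ with $\ones^\top\mathbf x=0$, the cross terms vanish and $\mathbf x^\top\tilde\bSigma\mathbf x=\mathbf x^\top\hat\bSigma\mathbf x>0$. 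Hence $\w\mapsto\w^\top\tilde\bSigma\w$ is strictly convex on the affine hyperplane. Its Lagrange condition ``$\tilde\bSigma\w\propto\ones$'' is therefore necessary and sufficient, and the minimizer is unique. Consequently $\w^+$ is the unique unconstrained budget-only GMV portfolio of $\tilde\bSigma$. When $\tilde\bSigma\succ0$ it is given by the closed form $\tilde\bSigma^{-1}\ones/\ones^\top\tilde\bSigma^{-1}\ones$.

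\emph{Shrinkage reading.} The entries are $\tilde\Sigma_{ij}=\hat\Sigma_{ij}-(\delta_i+\delta_j)+(\xi_i+\xi_j)$. An asset pinned at zero, typically one whose estimated covariances are large and which unconstrained GMV would short, has all of its covariances reduced by $\delta_j>0$. An asset at its cap has its covariances raised by $\xi_j$. Both adjustments pull extreme estimated entries toward the cross-section, which is the qualitative effect of \citet{ledoitwolf2004}-type shrinkage. This part is interpretive and requires no further proof.
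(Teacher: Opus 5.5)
Your argument is correct and is essentially the Kuhn--Tucker construction of \citet{jagannathanma2003}, which the paper cites without reproducing. Your modified matrix is theirs, and the identity $\tilde\bSigma\w^+=(\lambda+\bm\xi^\top\bar{\mathbf u})\ones$ is the core of their characterization.

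The one thing to correct is your ``main obstacle'': it does not exist. Contracting the KKT system with $\w^+$ gives $\sigma_+^2:=\w^{+\top}\hat\bSigma\w^+=\lambda-\bm\xi^\top\bar{\mathbf u}$. Since $\bm\delta-\bm\xi=\hat\bSigma\w^+-\lambda\ones$ coordinatewise, your matrix equals
\[
\tilde\bSigma=\hat\bSigma-\hat\bSigma\w^+\ones^\top-\ones\w^{+\top}\hat\bSigma+2\lambda\ones\ones^\top .
\]
Write any $\mathbf x$ as $\mathbf x=\mathbf y+t\w^+$ with $t=\ones^\top\mathbf x$ and $\ones^\top\mathbf y=0$. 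Then
\[
\mathbf x^\top\tilde\bSigma\mathbf x=\mathbf y^\top\hat\bSigma\mathbf y+t^2\bigl(\sigma_+^2+2\,\bm\xi^\top\bar{\mathbf u}\bigr).
\]
Complementary slackness gives $\bm\xi^\top\bar{\mathbf u}=\bm\xi^\top\w^+\ge0$, so this quadratic form is strictly positive for $\mathbf x\neq0$. Hence $\tilde\bSigma\succ0$ whenever $\hat\bSigma\succ0$, and $\tilde\bSigma\succeq0$ when $\hat\bSigma\succeq0$, which is the form Jagannathan and Ma state.

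Your hyperplane computation is exactly the $t=0$ slice of this identity. The $\ones$ direction you left uncontrolled is handled by the budget multiplier, through $\lambda\ge\sigma_+^2$. Your restricted-convexity detour remains valid, and the closed form $\tilde\bSigma^{-1}\ones/\ones^\top\tilde\bSigma^{-1}\ones$ always applies.

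Positivity also matters for the statement itself. It is what makes $\tilde\bSigma$ an admissible covariance estimate, so ``constraints perform covariance shrinkage'' is a literal claim about an alternative estimator. Without it, you only have a quadratic form that happens to have the right minimizer on the budget hyperplane. You should drop the sentence asserting that $\tilde\bSigma$ need not be positive definite, since it contradicts the cited result.
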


\begin{remark}[Estimation error and the empirical record; \textbf{E}]\label{rem:dgu}
The sample counterpart is well documented. Errors in estimated means degrade mean--variance utility by roughly an order of magnitude more than comparable errors in variances at moderate risk aversion \citep{chopraziemba1993}; optimized weights are extremely sensitive to means \citep{bestgrauer1991,michaud1989}; and expected returns are the slowest quantity to learn from data \citep{merton1980}. \citet{demiguel2009} show across datasets and fourteen optimizing models that none consistently beats $1/N$ out of sample in Sharpe ratio or certainty equivalent, with the estimation window required for the plug-in tangency portfolio to overtake $1/N$ growing far beyond available histories as $N$ increases; \citet{kanzhou2007} derive the optimal finite-sample combination of riskless asset, sample tangency, and sample GMV, formalizing the shrinkage the heuristics implement for free; and \citet{tuZhou2011} show that optimally combining $1/N$ with sophisticated rules can dominate both ingredients --- the sample-level analogue of the HPO projection \eqref{eq:hpo-proj} over a two-member family. These are finite-sample statements, distinct from the population conditions of this paper: $1/N$ can dominate the \emph{estimated} tangency portfolio while being far from the \emph{true} one. Theorem~\ref{thm:secondorder} is the population-side bridge: because efficiency is flat across each coincidence set, the heuristic forfeits only second-order Sharpe when its condition nearly holds, while the plug-in optimizer pays first-order weight instability for second-order objective gains.
\end{remark}

\section{Constrained, Costly, and Implementable Optimality}\label{sec:constrained}

The previous sections use the clean population benchmark: full investment, no short-sale constraints unless a rule imposes them internally, no turnover penalty, and no implementation wedge between a mathematical weight and a tradable portfolio. Actual mandates almost never look like that. They impose long-only constraints, concentration caps, sector limits, benchmark bands, turnover limits, and sometimes explicit transaction-cost models. The central message survives, but the exact condition changes from an equality to a \emph{normal-cone} statement: the missing implied returns may be supplied by active constraints.

\subsection{Normal-cone optimality under mandate constraints}

Let $K \subset \R^N$ be a nonempty closed convex feasible set. In applications $K$ usually includes the full-investment constraint, lower and upper bounds, and possibly linear exposure restrictions. Its normal cone at $\w\in K$ is
\begin{equation}
N_K(\w) = \{\mathbf{n}\in\R^N : \mathbf{n}^\top(\mathbf{v}-\w) \le 0\quad \forall\mathbf{v}\in K\}.
\end{equation}

\begin{theorem}[Constrained implied-return principle; \textbf{P}]\label{thm:normalcone}
Fix $\gamma>0$ and a closed convex mandate set $K$. A feasible portfolio $\w^\circ\in K$ solves the constrained mean--variance problem
\begin{equation}\label{eq:constrainedmv}
\max_{\w\in K}\left\{\w^\top\bmu_e - \frac{\gamma}{2}\w^\top\bSigma\w\right\}
\end{equation}
if and only if there exists $\mathbf{n}\in N_K(\w^\circ)$ such that
\begin{equation}\label{eq:normalconecondition}
\bmu_e = \gamma\bSigma\w^\circ + \mathbf{n}.
\end{equation}
Equivalently, the unconstrained implied-return residual
\begin{equation}
\mathbf{a}(\w^\circ) := \bmu_e - \gamma\bSigma\w^\circ
\end{equation}
must be entirely explainable by the shadow prices of the active mandate constraints.
\end{theorem}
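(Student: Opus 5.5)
The plan is to read the problem as minimizing the convex function $f(\w) := -\w^\top\bmu_e + \frac{\gamma}{2}\w^\top\bSigma\w$ over the closed convex set $K$ and to apply the first-order optimality condition for convex programs: $\w^\circ\in K$ minimizes $f$ over $K$ if and only if $-\nabla f(\w^\circ)\in N_K(\w^\circ)$. Since $\nabla f(\w^\circ) = \gamma\bSigma\w^\circ - \bmu_e$, the condition reads $\bmu_e - \gamma\bSigma\w^\circ \in N_K(\w^\circ)$. Setting $\mathbf n := \mathbf a(\w^\circ)$ then gives \eqref{eq:normalconecondition} directly, and conversely any representation \eqref{eq:normalconecondition} forces $\mathbf n = \mathbf a(\w^\circ)$. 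The two formulations in the statement are therefore literally the same set-membership.

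I would prove the variational inequality directly rather than cite it, so the result stays self-contained in the \textbf{P} sense. For sufficiency, suppose $\mathbf a(\w^\circ)\in N_K(\w^\circ)$ and take any $\mathbf v\in K$. Put $\mathbf d = \mathbf v - \w^\circ$. The quadratic expansion is exact:
\[
U(\mathbf v) - U(\w^\circ) = \mathbf a(\w^\circ)^\top\mathbf d - \tfrac{\gamma}{2}\mathbf d^\top\bSigma\mathbf d \le 0,
\]
where the linear term is $\le 0$ by the normal-cone definition and the quadratic term is $\le 0$ because $\bSigma\succ0$. This is the same expansion used in the proof of Theorem~\ref{thm:hpo-riskdecomp}, with the Lagrange term replaced by the normal-cone inequality.

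For necessity, suppose $\w^\circ$ is optimal and take $\mathbf v\in K$. Convexity gives $\w^\circ + t\mathbf d\in K$ for $t\in(0,1]$. Optimality then yields $0 \ge U(\w^\circ + t\mathbf d) - U(\w^\circ) = t\,\mathbf a(\w^\circ)^\top\mathbf d - \tfrac{\gamma}{2}t^2\mathbf d^\top\bSigma\mathbf d$. Dividing by $t$ and letting $t\downarrow0$ gives $\mathbf a(\w^\circ)^\top(\mathbf v-\w^\circ)\le0$ for all $\mathbf v\in K$, which says exactly $\mathbf a(\w^\circ)\in N_K(\w^\circ)$.

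There is no serious obstacle here; the only point needing care is interpretive. The phrase ``shadow prices of active constraints'' should be justified by a remark rather than by the proof itself. When $K$ is polyhedral (budget, bounds, linear exposures), $N_K(\w^\circ)$ is the cone generated by the outward normals of the active constraints, with a free multiplier on the equality $\ones^\top\w=1$. In that case $\mathbf n = \nu\ones + \sum_{j\,\mathrm{active}}\lambda_j \mathbf c_j$ with $\lambda_j\ge0$, recovering the KKT multipliers. For $K$ equal to the budget hyperplane, $N_K = \mathrm{span}(\ones)$, and the result collapses to Theorem~\ref{thm:foc}(ii).
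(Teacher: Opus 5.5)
Your proposal is correct and follows the paper's route. The paper's proof states the first-order variational inequality $(\bmu_e-\gamma\bSigma\w^\circ)^\top(\mathbf v-\w^\circ)\le 0$ for all $\mathbf v\in K$, invokes it by concavity, and reads it as $\mathbf a(\w^\circ)\in N_K(\w^\circ)$; your exact quadratic expansion for sufficiency and your $t\downarrow 0$ directional argument for necessity simply prove that inequality from scratch instead of citing it.
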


\begin{proof}
The objective in \eqref{eq:constrainedmv} is strictly concave because $\bSigma\succ0$. Therefore first-order variational optimality is necessary and sufficient:
\[
(\bmu_e-\gamma\bSigma\w^\circ)^\top(\mathbf{v}-\w^\circ)\le0\qquad \forall \mathbf{v}\in K.
\]
This is precisely $\bmu_e-\gamma\bSigma\w^\circ\in N_K(\w^\circ)$.
\end{proof}

\begin{corollary}[Box-constrained implied returns; \textbf{P}]\label{cor:boxkkt}
Let
\[
K = \{\w: \ones^\top\w=1,\; \ell_i\le w_i\le u_i\ \forall i\}.
\]
Then $\w^\circ$ solves \eqref{eq:constrainedmv} iff there exist $b\in\R$, lower-bound multipliers $\lambda_i\ge0$, and upper-bound multipliers $\xi_i\ge0$ satisfying complementarity
\[
\lambda_i(w_i^\circ-\ell_i)=0,\qquad \xi_i(u_i-w_i^\circ)=0,
\]
and
\begin{equation}\label{eq:boximplied}
\bmu_e = \gamma\bSigma\w^\circ + b\ones - \bm\lambda + \bm\xi .
\end{equation}
Thus a zero weight does not require the asset to have zero optimal unconstrained demand; it only requires that its negative implied-return residual be absorbed by the lower-bound shadow price. Likewise, a capped name may have a positive residual absorbed by an upper-bound shadow price.
\end{corollary}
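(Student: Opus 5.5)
The plan is to reduce Corollary~\ref{cor:boxkkt} to Theorem~\ref{thm:normalcone} by computing the normal cone of the box-plus-budget set $K$ explicitly. Once $N_K(\w^\circ)$ is identified as the set of vectors $b\ones-\bm\lambda+\bm\xi$ with $b\in\R$, $\bm\lambda,\bm\xi\ge0$ and complementary slackness, condition \eqref{eq:normalconecondition} becomes \eqref{eq:boximplied} verbatim. The interpretive sentences about zero weights and capped names then follow by reading off coordinates.

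The key step is the normal-cone formula for a polyhedron. Write
\[
K=\{\w:\ones^\top\w=1,\ -w_i\le-\ell_i,\ w_i\le u_i\}.
\]
The inclusion ``$\supseteq$'' is a direct check. For $\mathbf n=b\ones-\bm\lambda+\bm\xi$ with complementarity and any $\mathbf v\in K$, we have $b\ones^\top(\mathbf v-\w^\circ)=0$. Moreover $-\lambda_i(v_i-w_i^\circ)=-\lambda_i(v_i-\ell_i)\le0$, since $\lambda_i>0$ forces $w_i^\circ=\ell_i$. Similarly $\xi_i(v_i-w_i^\circ)=\xi_i(v_i-u_i)\le0$. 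Hence $\mathbf n^\top(\mathbf v-\w^\circ)\le0$.

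For ``$\subseteq$'', I will rely on the standard fact that the normal cone of a polyhedron $\{\w:A\w\le c,\ E\w=d\}$ at $\w$ equals $\{A_{\mathcal J}^\top\bm\nu+E^\top\bm\kappa:\bm\nu\ge0\}$, where $\mathcal J$ is the set of active rows. This follows from Farkas' lemma, or equivalently from the fact that linear constraints need no constraint qualification. Since it is classical, I will cite it rather than reprove it.

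Alternatively, I can bypass the normal-cone computation and argue directly from the KKT conditions of \eqref{eq:constrainedmv}. The objective is strictly concave and all constraints are affine, so Abadie's constraint qualification holds automatically and KKT conditions are necessary. By concavity they are also sufficient. Stationarity of the Lagrangian
\[
\w^\top\bmu_e-\tfrac{\gamma}{2}\w^\top\bSigma\w-b(\ones^\top\w-1)+\bm\lambda^\top(\w-\bm\ell)-\bm\xi^\top(\w-\mathbf u)
\]
gives $\bmu_e-\gamma\bSigma\w^\circ-b\ones+\bm\lambda-\bm\xi=0$, which is \eqref{eq:boximplied}, with the stated complementarity conditions.

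The main obstacle is only the necessity direction: guaranteeing that multipliers exist. I would handle it by invoking linear-constraint qualification, or by citing the polyhedral normal-cone formula. There is one degenerate case to note, $\ell_i=u_i$. In that case both multipliers may be active and are not unique, but existence is unaffected.

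Finally, for the interpretation I read \eqref{eq:boximplied} coordinatewise. If $w_i^\circ=\ell_i=0<u_i$, then $\xi_i=0$. The residual $\mu_{e,i}-\gamma(\bSigma\w^\circ)_i-b=-\lambda_i\le0$ is therefore absorbed by the lower-bound shadow price. Symmetrically, at a cap the residual equals $\xi_i\ge0$.
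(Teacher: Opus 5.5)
Your proposal is correct and follows the paper's route. The paper likewise identifies $N_K(\w^\circ)$ as the span of $\ones$ plus the outward normals $-\mathbf{e}_i$ and $+\mathbf{e}_i$ of the active lower and upper faces, and then reads off \eqref{eq:boximplied} from Theorem~\ref{thm:normalcone}; your explicit check of the inclusion $\supseteq$, your Farkas/linear-constraint-qualification argument for $\subseteq$, and your equivalent KKT derivation fill in details the paper leaves implicit.
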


\begin{proof}
The normal cone of the equality-plus-box set is the span of $\ones$ plus outward normals to active lower and upper faces. Using the sign convention in Theorem~\ref{thm:normalcone} gives \eqref{eq:boximplied}.
\end{proof}

\begin{remark}[Why this matters for heuristic portfolios]\label{rem:constraintsheuristics}
Theorems~\ref{thm:ew}--\ref{thm:md} and \ref{thm:ercchar} classify exact unconstrained coincidence. Theorem~\ref{thm:normalcone} gives the mandate version. A long-only equal-weight, inverse-volatility, ERC, MD, HRP, or GMV portfolio can be constrained-optimal even when it is not unconstrained-optimal, but only if the residual alphas are located on the active faces of the feasible set. This is the precise population analogue of the finite-sample result of \citet{jagannathanma2003}: constraints do not merely ``regularize'' the optimizer; they add shadow premia.
\end{remark}

\subsection{Transaction-cost wedges}

Suppose the current portfolio is $\w_-$ and the investor pays a quadratic trading penalty with positive semidefinite matrix $\Lambda\succeq0$ and intensity $\kappa\ge0$:
\begin{equation}\label{eq:costobjective}
\max_{\w\in K}\left\{\w^\top\bmu_e - \frac{\gamma}{2}\w^\top\bSigma\w - \frac{\kappa}{2}(\w-\w_-)^\top\Lambda(\w-\w_-)\right\}.
\end{equation}
Quadratic costs are a local model of market impact and turnover aversion, and they are the static one-period analogue of dynamic transaction-cost models such as \citet{garleanuPedersen2013}.

\begin{proposition}[Cost-adjusted implied returns; \textbf{P}]\label{prop:costimplied}
A feasible $\w^\circ\in K$ solves \eqref{eq:costobjective} iff there exists $\mathbf{n}\in N_K(\w^\circ)$ such that
\begin{equation}\label{eq:costimplied}
\bmu_e = \gamma\bSigma\w^\circ + \kappa\Lambda(\w^\circ-\w_-) + \mathbf{n}.
\end{equation}
In the unconstrained full-investment case, this becomes
\begin{equation}\label{eq:costimpliedsimple}
\bmu_e = \gamma\bSigma\w^\circ + \kappa\Lambda(\w^\circ-\w_-) + b\ones.
\end{equation}
\end{proposition}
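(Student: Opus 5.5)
The plan mirrors the proof of Theorem~\ref{thm:normalcone}, with the cost term absorbed into the objective. Write the objective of \eqref{eq:costobjective} as
\[
F(\w)=\w^\top\bmu_e-\tfrac{\gamma}{2}\w^\top\bSigma\w-\tfrac{\kappa}{2}(\w-\w_-)^\top\Lambda(\w-\w_-).
\]
Its Hessian is $-(\gamma\bSigma+\kappa\Lambda)$. Since $\gamma>0$, $\bSigma\succ0$, $\kappa\ge0$ and $\Lambda\succeq0$, we have $\gamma\bSigma+\kappa\Lambda\succ0$. Hence $F$ is strictly concave and differentiable, and $K$ is closed and convex.

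The first step is the standard variational characterization for maximizing a differentiable concave function over a convex set. The point $\w^\circ\in K$ is a maximizer iff
\[
\nabla F(\w^\circ)^\top(\mathbf{v}-\w^\circ)\le0 \quad\text{for all } \mathbf{v}\in K.
\]
Necessity comes from the directional derivative along the feasible segment toward $\mathbf{v}$. Sufficiency comes from the gradient inequality $F(\mathbf{v})\le F(\w^\circ)+\nabla F(\w^\circ)^\top(\mathbf{v}-\w^\circ)$ for concave $F$. By the definition of the normal cone, this condition says exactly that $\nabla F(\w^\circ)\in N_K(\w^\circ)$.

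The second step computes the gradient,
\[
\nabla F(\w^\circ)=\bmu_e-\gamma\bSigma\w^\circ-\kappa\Lambda(\w^\circ-\w_-),
\]
using the symmetry of $\Lambda$ (a positive semidefinite matrix taken as symmetric). Setting $\mathbf{n}:=\nabla F(\w^\circ)$ and rearranging gives \eqref{eq:costimplied}. Conversely, any $\mathbf{n}\in N_K(\w^\circ)$ satisfying \eqref{eq:costimplied} equals $\nabla F(\w^\circ)$, so the variational inequality holds.

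The third step specializes to the hyperplane $K=\{\w:\ones^\top\w=1\}$. For an affine subspace, the normal cone at any point is the orthogonal complement of its direction space, which here is $\{b\ones:b\in\R\}$. Substituting $\mathbf{n}=b\ones$ yields \eqref{eq:costimpliedsimple}.

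No step is a genuine obstacle. The only points needing care are strict concavity when $\Lambda$ is merely semidefinite, which is handled by $\gamma\bSigma\succ0$, and the sign convention of $N_K$, which must match that of Theorem~\ref{thm:normalcone}. Strict concavity also gives uniqueness of the maximizer, though the claim does not require it.
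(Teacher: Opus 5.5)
Your proposal is correct and follows the paper's proof: compute the gradient $\bmu_e-\gamma\bSigma\w^\circ-\kappa\Lambda(\w^\circ-\w_-)$, apply the first-order variational inequality of Theorem~\ref{thm:normalcone}, and use the hyperplane's normal cone $\{b\ones:b\in\R\}$ to get the simplified form. Your extra points, that $\gamma\bSigma+\kappa\Lambda\succ0$ keeps the objective strictly concave and that $\Lambda$ is taken symmetric for the gradient formula, are details the paper leaves implicit.
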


\begin{proof}
The gradient of the objective in \eqref{eq:costobjective} is $\bmu_e-\gamma\bSigma\w-\kappa\Lambda(\w-\w_-)$. Apply the same variational inequality as in Theorem~\ref{thm:normalcone}.
\end{proof}

\begin{remark}[No-trade interpretation]\label{rem:notrade}
If $\w^\circ=\w_-$, the cost wedge vanishes; if the candidate rule demands a large rebalance, the wedge can dominate the statistical signal. Equation~\eqref{eq:costimplied} says that a rule is implementably optimal only when expected excess returns compensate both risk and the marginal cost of moving from the current book. This is why the same population-optimal rule may be undesirable at daily frequency and reasonable at quarterly frequency.
\end{remark}

\begin{remark}[Linear costs, quadratic wedges, and RA-HRP turnover; \textbf{E}]\label{rem:linear-costs-rahrp}
The quadratic wedge in \eqref{eq:costimplied} is the differentiable local approximation to implementation frictions. A one-way linear cost $c\|\w-\w_-\|_1$ gives the same economic message with a nonsmooth subgradient: the implied-return residual must compensate risk plus an element of $c\,\partial\|\w-\w_-\|_1$, together with any mandate normal cone. The RA-HRP evidence is consistent with both descriptions. The reported rolling implementation has monthly two-way turnover of approximately $10.0\%$ for traditional HRP, $11.1\%$ for RA-HRP, and $12.5\%$ for Schur-RA-HRP; under one-way cost stress tests of 5--20 basis points, the RA-HRP net Sharpe remains close to the gross Sharpe \citep{noguer2026rahrp}. Thus the same structural feature that avoids the global inverse --- local, bounded split revisions on a tree --- also keeps trading frictions modest under linear and quadratic cost models.
\end{remark}

\subsection{Benchmark-relative and factor-neutral mandates}

Many institutional portfolios are not judged in absolute variance but relative to a benchmark $\w_b$. Let active weights be $\mathbf{x}=\w-\w_b$, with $\ones^\top\mathbf{x}=0$, active covariance still $\bSigma$, and active expected excess return $\bmu_a$. A tracking-error mandate solves
\begin{equation}\label{eq:active}
\max_{\mathbf{x}\in K_a}\left\{\mathbf{x}^\top\bmu_a - \frac{\gamma}{2}\mathbf{x}^\top\bSigma\mathbf{x}\right\}.
\end{equation}
The identical normal-cone result gives
\begin{equation}\label{eq:activeimplied}
\bmu_a = \gamma\bSigma\mathbf{x}^\circ + \mathbf{n}_a,
\qquad \mathbf{n}_a\in N_{K_a}(\mathbf{x}^\circ).
\end{equation}
Therefore a heuristic active overlay is optimal exactly when it prices active alphas through active covariances, after accounting for active constraints. In particular, the equal-weight theorem has a benchmark-relative form: equal active weights across an eligible sleeve require zero alphas against the active equal-weight sleeve, not against the absolute portfolio.

\section{Estimation Geometry, High-Dimensional Risk, and Empirical Diagnostics}\label{sec:estimation}

The paper has so far separated population truth from finite-sample practice. This section connects them. The exact optimality sets tell us what must be true; the estimation geometry tells us why, even when those conditions are false, a rule that ignores some inputs may dominate an optimizer that estimates them badly.

\subsection{First-order sensitivity of tangency weights}

Let
\[
\mathcal{T}(\bmu_e,\bSigma)=\frac{\bSigma^{-1}\bmu_e}{\ones^\top\bSigma^{-1}\bmu_e}
\]
be the normalized tangency map and write $c=\ones^\top\bSigma^{-1}\bmu_e>0$, $\w_T=\mathcal{T}(\bmu_e,\bSigma)$.

\begin{theorem}[Differential of the tangency map; \textbf{P}]\label{thm:tangencydiff}
For perturbations $(\delta\bmu,\delta\bSigma)$ small enough that $\bSigma+\delta\bSigma\succ0$ and the denominator remains positive, the first differential of the tangency weights is
\begin{equation}\label{eq:tangencydiff}
\delta\w_T
=
\left(I-\w_T\ones^\top\right)
\left(\frac{1}{c}\bSigma^{-1}\delta\bmu-\bSigma^{-1}\delta\bSigma\,\w_T\right)
+o(\|\delta\bmu\|+\|\delta\bSigma\|).
\end{equation}
Consequently, mean errors enter with amplification $1/c$, while covariance errors enter through the already-chosen direction $\w_T$.
\end{theorem}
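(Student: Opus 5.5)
The plan is to write the tangency map as a composition $\mathcal{T}=N\circ x$. Here $x(\bmu_e,\bSigma)=\bSigma^{-1}\bmu_e$ is the unnormalized direction and $N(x)=x/(\ones^\top x)$ is the normalization onto the budget hyperplane. I will differentiate each piece and apply the chain rule. Both maps are smooth on the stated neighborhood: $\bSigma\mapsto\bSigma^{-1}$ is smooth (indeed rational) on the open cone $\mathbb{S}^N_{++}$, and $N$ is smooth wherever $\ones^\top x\neq0$. The assumption that $\bSigma+\delta\bSigma\succ0$ and that the denominator stays positive therefore guarantees Fréchet differentiability, which gives the $o(\|\delta\bmu\|+\|\delta\bSigma\|)$ remainder.

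First, for the inner map I use the standard matrix-inverse differential $\delta(\bSigma^{-1})=-\bSigma^{-1}\delta\bSigma\,\bSigma^{-1}$, which follows from differentiating $\bSigma\bSigma^{-1}=I$. This gives
\[
\delta x=\bSigma^{-1}\delta\bmu-\bSigma^{-1}\delta\bSigma\,\bSigma^{-1}\bmu_e .
\]
Substituting $\bSigma^{-1}\bmu_e=x=c\,\w_T$ yields
\[
\delta x=c\left(\tfrac1c\bSigma^{-1}\delta\bmu-\bSigma^{-1}\delta\bSigma\,\w_T\right).
\]
Second, for the normalization the quotient rule gives
\[
\delta N=\frac{\delta x}{\ones^\top x}-\frac{x\,\ones^\top\delta x}{(\ones^\top x)^2}=\frac{1}{\ones^\top x}\bigl(I-N(x)\ones^\top\bigr)\delta x .
\]
Evaluating at $\ones^\top x=c$ and $N(x)=\w_T$ and inserting $\delta x$, the factor $c$ cancels and I obtain \eqref{eq:tangencydiff}.

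Third, I record the interpretation. The operator $I-\w_T\ones^\top$ is the oblique projection onto $\{\ones^\top\delta\w=0\}$, which keeps the perturbation fully invested. Mean errors pass through $\bSigma^{-1}$ scaled by $1/c$. Covariance errors act only on the fixed vector $\w_T$, so their effect is controlled by $\|\bSigma^{-1}\|\,\|\delta\bSigma\|\,\|\w_T\|$.

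There is no real obstacle here. The only care needed is bookkeeping of the scalar $c$, namely noticing that $\bSigma^{-1}\bmu_e=c\w_T$, and stating the remainder rigorously via smoothness rather than via a second-order expansion.
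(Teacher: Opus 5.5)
Your proposal is correct and follows the paper's proof essentially step for step. Both arguments differentiate $\mathbf{x}=\bSigma^{-1}\bmu_e$ via $\delta(\bSigma^{-1})=-\bSigma^{-1}\delta\bSigma\,\bSigma^{-1}$, substitute $\bSigma^{-1}\bmu_e=c\,\w_T$, and apply the quotient rule to the normalization to get the factor $\tfrac{1}{c}(I-\w_T\ones^\top)$. Your added remarks on smoothness (to justify the $o(\cdot)$ remainder) and on $I-\w_T\ones^\top$ being the oblique projection onto the budget-neutral hyperplane along $\w_T$ are accurate and harmless extras.
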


\begin{proof}
Let $\mathbf{x}=\bSigma^{-1}\bmu_e$ and $c=\ones^\top\mathbf{x}$, so $\w_T=\mathbf{x}/c$. The differential of $\mathbf{x}$ is
\[
\delta\mathbf{x}=\bSigma^{-1}\delta\bmu-\bSigma^{-1}\delta\bSigma\,\bSigma^{-1}\bmu_e
=\bSigma^{-1}\delta\bmu-c\bSigma^{-1}\delta\bSigma\,\w_T.
\]
Since $\delta c=\ones^\top\delta\mathbf{x}$,
\[
\delta\w_T=\frac{\delta\mathbf{x}}{c}-\frac{\mathbf{x}\,\delta c}{c^2}
=\frac{1}{c}(I-\w_T\ones^\top)\delta\mathbf{x},
\]
which gives \eqref{eq:tangencydiff}.
\end{proof}

\begin{remark}[Why means dominate]\label{rem:meansdominate}
Equation~\eqref{eq:tangencydiff} is the infinitesimal version of the empirical findings of \citet{merton1980}, \citet{bestgrauer1991}, \citet{chopraziemba1993}, and \citet{michaud1989}. Expected returns are small in annualized units, so $c=\ones^\top\bSigma^{-1}\bmu_e$ is often small; the inverse factor $1/c$ turns modest premium errors into large allocation errors. A volatility-only rule may be population-suboptimal yet estimation-robust because it deletes the most unstable term in \eqref{eq:tangencydiff}.
\end{remark}

\subsection{Projection onto a heuristic optimality set}

The exact conditions can be turned into diagnostics. Fix $\bSigma$ and a candidate $\w^\circ$. Its tangency-coincidence set is the ray $\{a\bSigma\w^\circ:a>0\}$. Given an estimated premium vector $\hat\bmu_e$, ask how far it is from that ray.

\begin{proposition}[Closest implied-return ray; \textbf{P}]\label{prop:projection}
Let $M\succ0$ define a diagnostic norm $\|\mathbf{z}\|_M^2=\mathbf{z}^\top M\mathbf{z}$ and set $\mathbf{q}=\bSigma\w^\circ$. The best positive-ray approximation to $\hat\bmu_e$ is
\begin{equation}\label{eq:astar}
a^* = \max\left\{0,\frac{\mathbf{q}^\top M\hat\bmu_e}{\mathbf{q}^\top M\mathbf{q}}\right\},
\qquad
\hat\bmu_e^{\mathrm{proj}}=a^*\mathbf{q},
\end{equation}
and the squared distance
\begin{equation}\label{eq:raydistance}
d_M^2(\hat\bmu_e,\mathcal{O}_{\bSigma}(\w^\circ))
=\|\hat\bmu_e-a^*\bSigma\w^\circ\|_M^2
\end{equation}
quantifies the violation of the heuristic's exact optimality condition.
\end{proposition}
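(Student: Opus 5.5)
The plan is to treat the statement as a one-dimensional constrained least-squares problem in the Hilbert space $(\R^N,\langle\cdot,\cdot\rangle_M)$. I will minimize over $a\ge 0$ (the closure of the open ray, so that a minimizer exists) the function
\[
f(a):=\|\hat\bmu_e-a\mathbf{q}\|_M^2
=\|\hat\bmu_e\|_M^2-2a\,\mathbf{q}^\top M\hat\bmu_e+a^2\,\mathbf{q}^\top M\mathbf{q}.
\]
This is a strictly convex quadratic, because $\mathbf{q}^\top M\mathbf{q}>0$: $M\succ0$, and $\mathbf{q}=\bSigma\w^\circ\neq0$ since $\bSigma\succ0$ and $\ones^\top\w^\circ=1$ force $\w^\circ\neq0$.

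The unconstrained minimizer is $\bar a=\mathbf{q}^\top M\hat\bmu_e/\mathbf{q}^\top M\mathbf{q}$, obtained from $f'(\bar a)=0$. The constrained minimizer over $[0,\infty)$ is then the projection of $\bar a$ onto that half-line, namely $\max\{0,\bar a\}$, which gives \eqref{eq:astar}. I will justify this by convexity: if $\bar a\ge0$ it is feasible and globally optimal. If $\bar a<0$, then $f$ is increasing on $[0,\infty)$, so the minimum sits at $a=0$. Substituting $a^*$ back in gives the distance \eqref{eq:raydistance}.

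In the interior case I will also record the Pythagorean form
\[
d_M^2=\|\hat\bmu_e\|_M^2-\frac{(\mathbf{q}^\top M\hat\bmu_e)^2}{\mathbf{q}^\top M\mathbf{q}}.
\]
Taking $M=\bSigma^{-1}$ makes this agree with Theorem~\ref{thm:hpo-defect}, which confirms that the quantity measures violation of Theorem~\ref{thm:foc}(i).

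The one delicate point is interpretive rather than computational. The set $\mathcal{O}_{\bSigma}(\w^\circ)$ is an open ray, so when $\bar a\le0$ the infimum is attained only in the closure, at the origin. I will therefore state that $d_M$ is the distance to the closed ray, which equals the infimum over the open ray by continuity of $f$.

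Finally, I will note that $d_M=0$ if and only if $\hat\bmu_e=a^*\mathbf{q}$ with $a^*>0$, unless $\hat\bmu_e=0$. This is exactly membership in the coincidence set, so the quantity genuinely measures the violation of the heuristic's optimality condition.
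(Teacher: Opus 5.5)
Your proposal is correct and follows the paper's proof: minimize the one-dimensional quadratic $\|\hat\bmu_e-a\mathbf{q}\|_M^2$ over $a\ge0$ and project the unconstrained minimizer onto $[0,\infty)$. Your extra remarks are correct refinements that the paper leaves implicit: strict convexity via $\mathbf{q}\neq0$, reading $d_M$ as the distance to the closed ray since $\mathcal{O}_{\bSigma}(\w^\circ)$ is open, and the Pythagorean form agreeing with Theorem~\ref{thm:hpo-defect} at $M=\bSigma^{-1}$.
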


\begin{proof}
Minimize the one-dimensional quadratic $Q(a)=\|\hat\bmu_e-a\mathbf{q}\|_M^2$ over $a\ge0$. The unconstrained minimizer is $(\mathbf{q}^\top M\hat\bmu_e)/(\mathbf{q}^\top M\mathbf{q})$; projection onto $[0,\infty)$ gives \eqref{eq:astar}.
\end{proof}

\begin{remark}[A useful choice of metric]\label{rem:projectionmetric}
Taking $M=\hat\Omega_\mu^{-1}$, where $\hat\Omega_\mu$ is the estimated covariance matrix of premium estimates, turns \eqref{eq:raydistance} into a Wald-type statistic. Taking $M=\bSigma^{-1}$ measures distance in the same dual metric that appears in $S_{\max}^2=\bmu_e^\top\bSigma^{-1}\bmu_e$. Either way, the diagnostic decomposes the failure of a heuristic into a scalar scale estimate $a^*$ plus an orthogonal residual premium vector.
\end{remark}

\subsection{GRS and spanning tests as optimality tests}

Remark~\ref{rem:grs} states the principle. The usable formula is as follows. For $T$ excess-return observations, regress each asset on the candidate heuristic excess return $r^\circ_{e,t}$:
\begin{equation}
r_{e,i,t}=\alpha_i+\beta_i r^\circ_{e,t}+\epsilon_{i,t}.
\end{equation}
Let $\hat{\bm\alpha}$ be the vector of intercepts, $\hat\Sigma_\epsilon$ the residual covariance matrix, $\bar r^\circ$ the sample mean of the candidate return, and $\hat\sigma_\circ^2$ its sample variance. With one candidate factor, the Gibbons--Ross--Shanken statistic is
\begin{equation}\label{eq:grsformula}
\mathrm{GRS}
=
\frac{T-N-1}{N}
\frac{\hat{\bm\alpha}^\top\hat\Sigma_\epsilon^{-1}\hat{\bm\alpha}}{1+(\bar r^\circ)^2/\hat\sigma_\circ^2}
\sim F_{N,T-N-1}
\end{equation}
under the Gaussian exact-efficiency null \citep{gibbons1989}. Small-sample refinements and related asymptotics are discussed by \citet{jobsonkorkie1980} and \citet{memmel2003}. Mean--variance spanning tests such as \citet{hubermankandel1987} provide the multi-portfolio analogue: a set of heuristic sleeves spans the efficient frontier iff adding the remaining assets does not improve the frontier.

\begin{remark}[Interpreting rejection]\label{rem:grsinterpret}
Rejection of \eqref{eq:grsformula} does not mean the heuristic is useless; it means the exact population condition is statistically inconsistent with the sample. Theorem~\ref{thm:secondorder} then gives the next question: is the rejection economically large? A large $p$-value is evidence of statistical non-rejection; a small angular loss $1-\eta^2$ is evidence of economic adequacy. They answer different questions.
\end{remark}

\subsection{High-dimensional covariance and the inversion tax}

Even a rule with a clean population foundation can fail in finite samples if it requires inverting a noisy covariance matrix. This is the high-dimensional counterpart of mean-estimation fragility.

\begin{proposition}[Sample GMV risk inflation; \textbf{C}]\label{prop:rmtgmv}
Let $\hat\bSigma$ be the sample covariance matrix from $T$ independent Gaussian observations on $N$ assets with true covariance $\bSigma\succ0$, and let $\hat\w_{\mathrm{GMV}}=\hat\bSigma^{-1}\ones/(\ones^\top\hat\bSigma^{-1}\ones)$ whenever $\hat\bSigma$ is nonsingular. In the high-dimensional regime $N/T\to q\in(0,1)$, random-matrix results imply the out-of-sample variance inflation
\begin{equation}\label{eq:rmtinflation}
\frac{\hat\w_{\mathrm{GMV}}^\top\bSigma\hat\w_{\mathrm{GMV}}}
{\w_{\mathrm{GMV}}^\top\bSigma\w_{\mathrm{GMV}}}
\longrightarrow \frac{1}{1-q}
\end{equation}
under standard regularity conditions. This is the \emph{inversion tax}: as the cross-section approaches the sample length, the plug-in GMV portfolio becomes mechanically overfit. See \citet{elkaroui2010} and the shrinkage literature of \citet{ledoitwolf2004}.
\end{proposition}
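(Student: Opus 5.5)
The plan is to reduce the variance ratio in \eqref{eq:rmtinflation} to the ratio of two quadratic forms in $\ones$. Then I would use the classical Wishart identities for the inverse sample covariance, and finally pass to the limit.

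\textbf{Step 1: algebraic reduction.} First whiten. Write $\hat\bSigma=\bSigma^{1/2}\hat W\bSigma^{1/2}$, where $T\hat W$ (with the usual degrees of freedom, $T$ or $T-1$ depending on demeaning) is Wishart with identity scale. Put $\mathbf{b}:=\bSigma^{-1/2}\ones$. Then $\hat\bSigma^{-1}\ones=\bSigma^{-1/2}\hat W^{-1}\mathbf{b}$, and the ratio becomes
\[
\frac{\mathbf{b}^\top\hat W^{-2}\mathbf{b}\,\cdot\,\mathbf{b}^\top\mathbf{b}}{(\mathbf{b}^\top\hat W^{-1}\mathbf{b})^2}.
\]
This step uses $\w_{\mathrm{GMV}}^\top\bSigma\w_{\mathrm{GMV}}=1/\ones^\top\bSigma^{-1}\ones=1/\mathbf{b}^\top\mathbf{b}$. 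The expression is scale-invariant in $\mathbf{b}$. By orthogonal invariance of the Wishart law, I may replace $\mathbf{b}/\|\mathbf{b}\|$ by a fixed unit vector $e_1$. So the ratio equals in law $(\hat W^{-2})_{11}/((\hat W^{-1})_{11})^2$, which does not depend on $\bSigma$.

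\textbf{Step 2: the limit via a Schur/Bartlett decomposition.} I would write $(\hat W^{-1})_{11}=1/(\hat W_{11}-\hat W_{1,\cdot}\hat W_{-1}^{-1}\hat W_{\cdot,1})$. The denominator is a Schur complement distributed as $\chi^2_{T-N+1}/T$, so $(\hat W^{-1})_{11}\to 1/(1-q)$ almost surely. For $(\hat W^{-2})_{11}=\|\hat W^{-1}e_1\|^2$, I would use the known second-moment formulas for the inverse Wishart. Alternatively, I would use the Marchenko--Pastur deterministic-equivalent statement that $e_1^\top f(\hat W)e_1\to\int f\,d\mu_{\mathrm{MP},q}$ for fixed unit vectors, which holds by isotropy. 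This gives $\int x^{-1}d\mu_{\mathrm{MP}}=1/(1-q)$ and $\int x^{-2}d\mu_{\mathrm{MP}}=1/(1-q)^3$. The ratio is therefore $(1-q)^{-3}/(1-q)^{-2}=1/(1-q)$, as claimed. In exact finite-sample form, the Wishart moments give $\E$-ratios of the type $(T-1)/(T-N)$ times corrections, all converging to the same limit.

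\textbf{Main obstacle.} The delicate step is upgrading convergence of moments, or of the spectral distribution, to convergence of the \emph{ratio} of random quadratic forms along the fixed direction $e_1$. This needs concentration of $e_1^\top\hat W^{-k}e_1$ around its deterministic equivalent. It also needs the smallest eigenvalue of $\hat W$ to stay bounded away from zero, namely by $(1-\sqrt q)^2$ via Bai--Yin, so that $x^{-1},x^{-2}$ are bounded continuous on the limiting support. Since the statement is tagged \textbf{C}, I would follow the paper and cite \citet{elkaroui2010} for these regularity facts rather than reprove them, checking only that the isotropy reduction of Step 1 places the problem exactly in that setting.
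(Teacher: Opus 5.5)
Your argument is correct, and it takes a genuinely different route from the paper. The paper tags the result \textbf{C} and gives no argument beyond the pointer to \citet{elkaroui2010}, whereas you reduce it to Wishart facts. Your whitening identity, the orthogonal-invariance replacement of $\mathbf b/\|\mathbf b\|$ by $e_1$, the $\chi^2_{n-N+1}$ Schur complement, and the Marchenko--Pastur values $(1-q)^{-1}$ and $(1-q)^{-3}$ are all right. Your limit $(\hat W^{-1})_{11}\to(1-q)^{-1}$ also yields the companion in-sample statement $(\ones^\top\bSigma^{-1}\ones)/(\ones^\top\hat\bSigma^{-1}\ones)\to 1-q$. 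Your route buys a transparent reduction that shows why the limit cannot depend on $\bSigma$. The citation buys generality: your Step 1 needs exact rotational invariance of a central Wishart (Gaussian data, and demeaning unless the true mean is zero), while the cited random-matrix analysis handles broader data-generating processes and general linear constraints.

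The obstacle you flag can be removed by one more Schur step. Write $\hat W=\begin{pmatrix} w_{11} & c^\top\\ c & B\end{pmatrix}$. Then $\hat W^{-1}e_1=(\hat W^{-1})_{11}\,(1,-(B^{-1}c)^\top)^\top$, so
\[
\frac{(\hat W^{-2})_{11}}{\bigl((\hat W^{-1})_{11}\bigr)^2}=1+\|B^{-1}c\|^2 .
\]
Let $X=[x_1,X_2]$ be the rotated whitened $n\times N$ data matrix with $X^\top X\propto\hat W$. Then $B^{-1}c=(X_2^\top X_2)^{-1}X_2^\top x_1$ is an OLS coefficient, distributed $N\bigl(0,(X_2^\top X_2)^{-1}\bigr)$ conditionally on $X_2$. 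The ratio therefore has exact mean $1+(N-1)/(n-N)=(n-1)/(n-N)$, with $n$ the Wishart degrees of freedom. For demeaned data this is $(T-2)/(T-N-1)$, with no further corrections. Its variance is $\E\bigl[2\,\mathrm{tr}\bigl((X_2^\top X_2)^{-2}\bigr)\bigr]+\Var\bigl(\mathrm{tr}\bigl((X_2^\top X_2)^{-1}\bigr)\bigr)=O(1/T)$. Chebyshev then gives convergence to $1/(1-q)$ in $L^2$, without Bai--Yin, isotropic deterministic equivalents, or the $x^{-2}$ Marchenko--Pastur moment.
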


\begin{remark}[What HRP and shrinkage are buying]\label{rem:rmtbuying}
Equation~\eqref{eq:rmtinflation} clarifies the empirical appeal of HRP, RA-HRP, inverse volatility, ERC, and covariance shrinkage. They give up exact global inversion of $\bSigma$ in exchange for reduced estimation variance. Fixed-tree RA-HRP belongs to this low-inversion family: its Jacobian is a composition of local rational split maps whose denominators are cluster variances, score sums, and floors, rather than the smallest eigenvalue of the full covariance matrix. Schur-RA-HRP restores some cross-cluster information through local block inversions, but it still avoids a single global $\bSigma^{-1}$. Cotton's Schur-complement view of HRP and minimum variance, and subsequent hierarchical minimum-variance constructions, make this trade-off explicit by partially restoring the cross-block covariance information that HRP discards \citep{cotton2024,mograby2025}. Cotton's 2024 paper introduces Schur Complementary Allocation as a bridge between HRP and minimum-variance portfolios, while Mograby's 2025 work develops a recursive Schur-complement framework for hierarchical minimum variance; both directly support the interpretation of HRP as a controlled deletion or restoration of inverse-covariance information.
\end{remark}

\subsection{A reproducible empirical protocol}\label{sec:protocol}

A paper claiming that a heuristic is ``optimal'' should report three layers, not one.

\begin{enumerate}
\item \textbf{Population-condition diagnostic.} Estimate the rule-specific restriction: EW zero alphas against EW; IV Sharpe ratios versus correlation row sums; ERC performance parity; MD equal Sharpe ratios; GMV equal premia; HRP Schur-substitution residuals.
\item \textbf{Statistical test.} Report the GRS statistic \eqref{eq:grsformula}, a spanning test when several sleeves are compared, and robust/HAC versions when returns are serially dependent or heteroskedastic.
\item \textbf{Economic loss.} Report realized Sharpe efficiency, angular loss $1-\eta^2$, turnover, transaction-cost-adjusted Sharpe, drawdown, and CVaR. Statistical rejection with economically negligible angular loss is not a failure of the heuristic; non-rejection with high turnover may still be implementation failure.
\item \textbf{Stability.} Use rolling windows, block bootstrap, and perturbation experiments. Report how often the ranking of rules changes under resampling. This directly measures the estimation terms isolated in Theorem~\ref{thm:tangencydiff}.
\item \textbf{Dimensionality.} Always report $q=N/T$ for the covariance window. Without $q$, a comparison between plug-in optimization and heuristics is uninterpretable, because \eqref{eq:rmtinflation} changes the expected cost of inversion.
\end{enumerate}

\subsection{Computational profile of the allocation ladder}\label{sec:computational-profile}

The exact optimality results should be read together with the numerical burden of the rule. In dense implementations, the usual computational profile is summarized in Table~\ref{tab:complexity}. The point is not that lower complexity is automatically better; it is that every avoided inversion, local score, or tree constraint deletes a specific source of estimation variance and replaces it by a specific structural assumption.

\begin{table}[htbp]
\centering
\small
\renewcommand{\arraystretch}{1.25}
\begin{tabularx}{\textwidth}{p{2.6cm} p{3.2cm} p{3.3cm} X}
\toprule
\textbf{Rule} & \textbf{Typical dense cost} & \textbf{Main numerical bottleneck} & \textbf{Interpretation of the cost} \\
\midrule
EW & $O(N)$ & none & Maximum symmetry; no sampling burden beyond defining the universe. \\
IV / IVar & $O(N)$ after vol estimates & volatility estimation & Uses diagonal risk only; deletes correlation and premium information. \\
GMV & $O(N^3)$ after covariance estimation & global covariance inversion and conditioning & Buys the full covariance signal at the price of the high-dimensional inversion tax. \\
ERC & iterative, often $O(KN^2)$ for $K$ solver steps & nonlinear risk-budget solve & Uses covariance without premia; solver stability depends on conditioning and constraints. \\
HRP & clustering plus recursive local variances & dendrogram stability & Replaces global inversion by tree topology and local variance aggregation. \\
RA-HRP & HRP cost plus local mean/Sharpe scores & local premium noise and floor activation & Spends a controlled amount of mean-estimation risk locally rather than globally. \\
Schur-RA-HRP & RA-HRP plus sibling-block inversions & conditioning of local Schur blocks & Restores cross-cluster hedging information without paying the full global inverse. \\
RLPO residual policy & training-dependent & off-policy error, reward design, turnover and regime shift & Learns dynamic deviations only when continuation value pays for the static HPO defect. \\
\bottomrule
\end{tabularx}
\caption{Computational profile of the HPO--RLPO ladder. Costs are indicative for dense implementations and should be reported with the covariance-window ratio $q=N/T$.}
\label{tab:complexity}
\end{table}

A reproducible empirical implementation should therefore report not only realized returns but also the cost of producing the weights: the length of the estimation window, $q=N/T$, covariance-conditioning diagnostics, tree-stability diagnostics, floor-activation frequencies for RA-HRP, local Schur-block conditioning for Schur-RA-HRP, and post-cost turnover. Without these quantities, an empirical comparison confounds portfolio skill with numerical fragility.

\section{Non-Elliptical Tails, Downside Risk, and Where the Conditions Break}\label{sec:tails}

The elliptical transfer theorem is intentionally sharp. It says that replacing variance by CVaR does not change the optimality conditions when all linear portfolios share the same standardized marginal shape. The converse message is equally important: departures from ellipticity are exactly where different objectives begin to disagree.

\begin{definition}[Tail-shape residual]\label{def:tailshape}
For a portfolio $\w$, define the standardized excess-return residual
\begin{equation}
Z_\w = \frac{\w^\top(\mathbf{r}-\bmu)}{\sqrt{\w^\top\bSigma\w}}.
\end{equation}
The tail-shape residual between two portfolios $\w$ and $\mathbf{v}$ at level $\alpha$ is
\begin{equation}
\Delta_\alpha(\w,\mathbf{v})
= \CVaR_\alpha(-Z_\w)-\CVaR_\alpha(-Z_\mathbf{v}).
\end{equation}
Under ellipticity, $\Delta_\alpha(\w,\mathbf{v})=0$ for all $\w,\mathbf{v},\alpha$.
\end{definition}

\begin{proposition}[First-order downside wedge; \textbf{P}]\label{prop:downsidewedge}
Assume a law-invariant homogeneous downside risk measure admits the representation
\begin{equation}
\varrho(L_\w)= -\w^\top\bmu + k_\w\sqrt{\w^\top\bSigma\w},
\end{equation}
where $k_\w$ is differentiable in a neighborhood of $\w^\circ$. Then the first-order condition for maximizing mean per unit downside risk contains the additional tail-shape term $\nabla k_\w|_{\w=\w^\circ}$. In particular, the mean--variance implied-return condition $\bmu_e\propto\bSigma\w^\circ$ remains sufficient only when $k_\w$ is locally constant along feasible directions.
\end{proposition}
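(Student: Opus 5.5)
We will compute the first-order condition of the generalized Sharpe ratio directly, in the same way the paper proves Theorem~\ref{thm:foc}(i). First we reduce to excess returns. With cash-translativity, $\varrho(L_\w+r_f)=\varrho(L_\w)-r_f$ on the fully invested hyperplane. The denominator of the mean-per-unit-downside objective is then
\[
\rho(\w):=-\w^\top\bmu_e+k_\w\sigma_p(\w).
\]
The natural objective is $F(\w)=\w^\top\bmu_e/\rho(\w)$, or, if the paper intends the ratio to the scale part, $\w^\top\bmu_e/(k_\w\sigma_p(\w))$. We will work with the latter, since it is the direct analogue of $S(\w)$, and note that the former leads to the same stationarity structure after algebra. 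Positive homogeneity of $\varrho$ makes $k_{t\w}=k_\w$ for $t>0$. Hence $F$ is scale-invariant, and we may impose stationarity on all of $\R^N$ instead of the affine budget hyperplane. This removes the Lagrange multiplier, exactly as the ray argument does in Theorem~\ref{thm:foc}.

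Next we differentiate at $\w^\circ$. Set $A=\w^{\circ\top}\bSigma\w^\circ$ and $m=\w^{\circ\top}\bmu_e>0$. Differentiating $\log F$ gives
\[
\frac{\bmu_e}{m}-\frac{\bSigma\w^\circ}{A}-\frac{\nabla k_\w|_{\w^\circ}}{k_{\w^\circ}}=0,
\]
that is,
\[
\bmu_e=\frac{m}{A}\,\bSigma\w^\circ+\frac{m}{k_{\w^\circ}}\,\nabla k_\w|_{\w^\circ}.
\]
This is the claimed extra tail-shape term. It is the mean--variance implied-return vector plus a wedge proportional to $\nabla k$. Euler's identity for degree-zero homogeneous functions gives $\w^{\circ\top}\nabla k=0$, so the wedge is orthogonal to $\w^\circ$. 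This is the consistency check that the scalar $a=m/A$ is unchanged.

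For the ``in particular'' clause we compare this with the condition $\bmu_e=a\bSigma\w^\circ$. If $\nabla k_\w|_{\w^\circ}$ vanishes along feasible directions, that is, on the tangent space $\{\mathbf d:\ones^\top\mathbf d=0\}$ and radially by homogeneity, then the stationarity condition reduces to $\bmu_e\propto\bSigma\w^\circ$ modulo directions that do not affect the objective. Sufficiency of the mean--variance condition therefore survives at the first-order level. Conversely, if $\mathbf d^\top\nabla k\neq0$ for some feasible $\mathbf d$ while $\bmu_e=a\bSigma\w^\circ$, then the directional derivative of $\log F$ along $\mathbf d$ equals $-\mathbf d^\top\nabla k/k\neq0$. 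Hence $\w^\circ$ is not stationary, and the mean--variance condition fails to be sufficient.

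The main obstacle is that ``sufficient'' strictly requires more than stationarity, because $F$ need not be concave once $k_\w$ varies. We will state the result at the first-order level, matching the proposition's ``first-order'' wording. We will observe that when $k$ is locally constant, $F$ agrees locally with $S(\w)/k$, and Theorem~\ref{thm:cosine} then gives local maximality via the angular identity. A secondary technical point is handling the alternative normalization, the ratio to the full $\varrho$. Differentiating $m/(-m+k\sigma_p)$ gives the same equation with the coefficients rescaled by positive factors, so the wedge term is still $\propto\nabla k$ and the conclusion is unchanged.
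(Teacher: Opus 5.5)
Your proposal is correct and follows the paper's own route. You differentiate the ratio and isolate the extra product-rule term $\sigma_p(\w^\circ)\nabla k_\w|_{\w=\w^\circ}$ coming from the denominator $k_\w\sigma_p(\w)$. Along the way you supply details the paper's one-line proof leaves implicit:
\begin{itemize}
\item degree-zero homogeneity of $k_\w$, which removes the budget multiplier;
\item the explicit wedge form $\bmu_e=(m/A)\bSigma\w^\circ+(m/k_{\w^\circ})\nabla k_\w|_{\w=\w^\circ}$, with $\w^{\circ\top}\nabla k_\w|_{\w=\w^\circ}=0$;
\item a directional-derivative proof of the ``only when'' clause, correctly read at first order.
\end{itemize}

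One slip to fix: under the paper's convention $\varrho(L-c)=\varrho(L)-c$, one has $\varrho(L_\w+r_f)=\varrho(L_\w)+r_f$, not $\varrho(L_\w)-r_f$. Your resulting expression $-\w^\top\bmu_e+k_\w\sigma_p(\w)$ on the budget hyperplane is nonetheless right.
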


\begin{proof}
Differentiate the generalized denominator $k_\w(\w^\top\bSigma\w)^{1/2}$. In addition to the variance-gradient term $k_\w\bSigma\w/(\w^\top\bSigma\w)^{1/2}$, one obtains $(\w^\top\bSigma\w)^{1/2}\nabla k_\w$. The latter vanishes exactly when standardized tail shape is locally invariant.
\end{proof}

\begin{remark}[Practical implication]\label{rem:tailpractical}
This identifies the boundary of the paper's transfer claims. If assets have asymmetric crash exposure, volatility targeting, option-like payoffs, or regime-dependent correlations, then a heuristic can be Sharpe-near-optimal and CVaR-poor. The missing state variable is not variance but tail-shape heterogeneity. Empirically, this means the protocol in Section~\ref{sec:protocol} should report both angular Sharpe loss and tail-shape residuals.
\end{remark}

\section{Synthesis: The Optimality Map}\label{sec:synthesis}

Table~\ref{tab:map} collects the exact population conditions; Table~\ref{tab:bayes} collects the belief-level foundations. Together they give the information-set ladder theorem-level content at every rung where such content currently exists, and mark the missing rungs as open.

Table~\ref{tab:structural} separates the same rules by inputs, inversion profile, estimation vulnerability, and topological coincidence boundary. Table~\ref{tab:hpo-rlpo-synthesis} then places those static rules inside the dynamic RLPO hierarchy developed in Section~\ref{sec:rlpo}. Together with the HPO defect and bias--variance identities of Section~\ref{sec:hpo}, this is the practical map: it explains why a rule can be theoretically suboptimal and still empirically robust when the coordinates required by the optimizer are too noisy, and why an RL policy should deviate from that rule only when continuation value pays for the static defect.

\begin{table}[htbp]
\centering
\scriptsize
\setlength{\tabcolsep}{2pt}
\renewcommand{\arraystretch}{1.22}
\begin{tabular}{p{2.05cm} p{2.55cm} p{2.1cm} p{3.65cm} p{4.6cm}}
\toprule
\textbf{Allocation rule} & \textbf{Inputs consumed} & \textbf{Inversion profile} & \textbf{Primary estimation vulnerability} & \textbf{Topological coincidence boundary} \\
\midrule
Equal weight & None & None & Symmetry misspecification, not estimation noise & $\bmu_e\propto\bSigma\ones$ \\
Inverse volatility & Marginal volatilities $\sigma_i$ & None & Volatility estimates and correlation-row-sum mismatch & $S_i\propto\sum_j\rho_{ij}$ \\
Global min variance & Full $\bSigma$ & Global $\bSigma^{-1}$ & High-dimensional inversion tax $1/(1-q)$ and covariance conditioning & $\bmu_e\propto\ones$ for tangency coincidence \\
Traditional HRP & Tree plus cluster IVP variances & No global inverse & Dendrogram instability and Schur-substitution residuals & Diagonal $\bSigma$ under any tree; uncorrelated exchangeable blocks under balanced tree-aligned bisection; plus equal premia for tangency \\
RA-HRP & Tree plus cluster means and IVP variances & No global inverse & Persistence and noise of local cluster-Sharpe scores & Nonlinear fixed point $\bmu_e=a\bSigma\Phi^{\mathrm{RA}}_{\mathcal T,\varepsilon}(\bmu_e,\bSigma)$; nodewise equality of RA splits and Schur-tangency splits \\
Schur-RA-HRP & Tree, cluster premia, and Schur conditional risks & Local block inversions & Cross-cluster conditional-covariance estimation and block conditioning & Collapses to RA-HRP when sibling hedge terms vanish; exact only when conditional Schur scores match Schur-tangency capital masses node by node \\
\bottomrule
\end{tabular}
\caption{Structural comparison of heuristic, hierarchical, and return-adjusted allocation rules. The table distinguishes the economic input set from the numerical inversion profile and from the exact population boundary required for tangency coincidence.}
\label{tab:structural}
\end{table}

\begin{table}[htbp]
\centering
\small
\renewcommand{\arraystretch}{1.25}
\begin{tabular}{p{3.6cm} p{5.0cm} p{5.5cm}}
\toprule
\textbf{Layer} & \textbf{Mathematical object} & \textbf{Role in the research program} \\
\midrule
HPO / exact optimality & static map $w=\Phi(\widehat\mu,\widehat\Sigma,\widehat{\mathcal T})$ and defect $\mathfrak d$ & identifies when a heuristic is locally optimal and how much myopic Sharpe efficiency it loses \\
RA-HRP / Schur-RA-HRP & tree-factorized map with branch probabilities $\{\beta_v\}$ & supplies interpretable hierarchical policy priors and node-level KL attribution \\
RLPO & Markov policy $\pi_\theta(w\mid s)$ with Bellman value & learns dynamic deviations from HPO under costs, regimes, and continuation value \\
\bottomrule
\end{tabular}
\caption{Division of labor between static HPO, hierarchical RA-HRP priors, and dynamic RLPO.}
\label{tab:hpo-rlpo-synthesis}
\end{table}

\begin{table}[ht]
\centering
\small
\renewcommand{\arraystretch}{1.35}
\begin{tabular}{p{2.45cm} p{3.1cm} p{5.6cm} p{3.1cm}}
\toprule
\textbf{Rule} & \textbf{Uses} & \textbf{Exact tangency condition} (this paper) & \textbf{Informational regime} \\
\midrule
Equal weight & nothing & zero alpha vs.\ EW: $\bmu_e \propto \bSigma\ones$ \hfill [P, Thm.~\ref{thm:ew}] & total ambiguity \\
Inverse volatility & $\sigma_i$ & $S_i \propto \sum_j \rho_{ij}$ \hfill [P, Thm.~\ref{thm:iv}] & vols known, correlations exchangeable \\
ERC / risk parity & $\bSigma$ (no $\bmu$) & performance parity: $w_i\mu_{e,i}$ equal \hfill [P, Thm.~\ref{thm:ercchar}]; sufficient: equal $S_i$, constant $\rho$ [P, Cor.~\ref{cor:mrt}] & $\bSigma$ trusted, premia exchangeable per unit risk budget \\
Max.\ diversification & $\bsig$, $\bSigma$ & equal Sharpe ratios: $\bmu_e \propto \bsig$ \hfill [P, Thm.~\ref{thm:md}] & risk priced uniformly per unit vol \\
HRP & tree $+$ cluster variances & GMV-coincidence: diagonal $\bSigma$, any tree [P, Thm.~\ref{thm:hrpdiag}]; uncorrelated exchangeable blocks [P, Prop.~\ref{prop:hrpblocks}]; then $\bmu_e \propto \ones$ [P, Cor.~\ref{cor:hrpmv}]; gap $=$ three Schur substitutions [P, Cor.~\ref{cor:threesubs}] & ordinal correlation structure trusted, $\bSigma^{-1}$ not \\
RA-HRP & tree $+$ cluster Sharpe scores & nonlinear fixed point $\bmu_e=a\bSigma\Phi^{\mathrm{RA}}_{\mathcal T,\varepsilon}(\bmu_e,\bSigma)$ [P, Thm.~\ref{thm:rahrp-coincidence}]; nodewise Schur-tangency split matching; HRP--RA pathwise/KL attribution [P, Thm.~\ref{thm:rahrp-pathwise}] & tree trusted; premia trusted only through local IVP aggregation/flooring; no global inverse \\
Schur-RA-HRP & tree $+$ cluster Sharpe scores $+$ Schur complements & conditional-risk score from Def.~\ref{def:schur-rahrp}; Schur score wedge [P, Prop.~\ref{prop:schur-score-wedge}]; locally exact only when conditional scores reproduce Schur-tangency masses & tree trusted; cross-cluster covariance trusted locally, but global inverse still avoided \\
Minimum variance & $\bSigma$ & $\bmu_e \propto \ones$ \hfill [P, Thm.~\ref{thm:gmv}] & risk forecastable, premia not \\
Tangency & $\bmu_e$, $\bSigma$ & --- (benchmark) & full knowledge \\
\bottomrule
\end{tabular}
\caption{The optimality map. All conditions are also Kelly-coincidence conditions in the diffusion limit [P, Prop.~\ref{prop:kelly}], transfer to mean--CVaR and expected utility under ellipticity [P, Thm.~\ref{thm:elliptical}], and are GRS-testable as zero-alpha restrictions [C, Rem.~\ref{rem:grs}]. One-factor translations are in Prop.~\ref{prop:onefactor}.}
\label{tab:map}
\end{table}

\begin{table}[ht]
\centering
\small
\renewcommand{\arraystretch}{1.35}
\begin{tabular}{p{2.8cm} p{6.3cm} p{5.2cm}}
\toprule
\textbf{Rule} & \textbf{Belief-level foundation} & \textbf{Status} \\
\midrule
Equal weight & Bayes rule under any fully exchangeable belief & proved [P, Thm.~\ref{thm:bayesew}]; also robust high-ambiguity limit [C, Prop.~\ref{prop:pflug}] \\
Inverse volatility & Bayes rule under exchangeable beliefs on standardized returns, $\sigma_i$ known & proved [P, Thm.~\ref{thm:bayesiv}] \\
Minimum variance & Bayes rule under conjugate mean uncertainty with prior mean $\propto \ones$, $\bSigma$ known & proved [P, Prop.~\ref{prop:bayesgmv}]; constraints-as-shrinkage in sample [C, Prop.~\ref{prop:jm}] \\
Max.\ diversification & degenerate belief $\bmu_e \propto \bsig$, $\bSigma$ known & immediate [P, Rem.~\ref{rem:bayesgaps}] \\
ERC / risk parity & robust foundation under Sharpe ambiguity with trusted correlations & \emph{open} [Rem.~\ref{rem:bayesgaps}] \\
HRP & robust foundation under ambiguity confined to inter-cluster blocks & \emph{open} [Rem.~\ref{rem:bayesgaps}] \\
RA-HRP & local cluster-Sharpe shrinkage on a fixed tree; exact symmetry/robust foundation for noisy premia remains unresolved & fixed-tree theory and empirical proof of concept in \citet{noguer2026rahrp}; belief-level foundation \emph{open} [Rem.~\ref{rem:bayesgaps}] \\
Schur-RA-HRP & local conditional-risk version of RA-HRP; robust foundation under uncertainty about cross-cluster hedge terms remains unresolved & Schur extension in Def.~\ref{def:schur-rahrp}; belief-level foundation \emph{open} [Rem.~\ref{rem:bayesgaps}] \\
\bottomrule
\end{tabular}
\caption{Belief-level foundations: symmetry-derived Bayes optimality and robust limits.}
\label{tab:bayes}
\end{table}

\clearpage

\section{Limitations and Open Problems}\label{sec:limitations}

The paper closes several algebraic gaps but leaves important research problems open. These limitations are useful because they identify the exact objects that a referee, implementer, or subsequent paper should attack.

\paragraph{Random dendrogram asymptotics.}
The fixed-tree theory is sharp, but rolling HRP and RA-HRP estimate the tree. A complete asymptotic theory must combine a continuous delta method on the score maps with a discrete stability theorem for merge sets, cophenetic matrices, or tree-edit representations. The correct theorem should condition on a population merge-separation event, prove topology preservation with high probability, and then control off-event portfolio jumps through the pathwise product formula of Theorem~\ref{thm:rahrp-pathwise}.

\paragraph{Robust-control foundations for ERC, HRP, and RA-HRP.}
The Bayes foundations for EW, IV, and GMV are clean because the relevant symmetry groups are clean. Comparable foundations for ERC, HRP, RA-HRP, and Schur-RA-HRP require ambiguity sets that act on risk budgets, inter-cluster covariance blocks, and local Sharpe scores. Corollary~\ref{cor:threesubs}, Theorem~\ref{thm:rahrp-coincidence}, and Theorem~\ref{thm:hpo-schur-tangency} identify the exact algebraic quantities such ambiguity sets must preserve or penalize.

\paragraph{Non-elliptical downside risk.}
The mean--CVaR transfer is exact under ellipticity and fails precisely through standardized tail-shape heterogeneity. A full downside-risk extension would replace the scalar angular defect by a pair consisting of Sharpe-angle loss and tail-shape residual $\Delta_\alpha(w,v)$, then study when local hierarchical scores can control both simultaneously.

\paragraph{Dynamic market impact and partial observability.}
The RLPO bridge uses a discounted Markov formulation with explicit rewards and costs. Realistic execution adds transient impact, hidden liquidity, delayed fills, and partial observability. The natural extension is a belief-state RLPO formulation in which the HPO prior is computed from filtered estimates and the deviation principle is applied to belief-state continuation values.

\paragraph{Uniform empirical validation.}
The theory supplies falsification diagnostics but does not report new numerics. A definitive empirical companion should run the protocol of Section~\ref{sec:protocol} across asset classes, universes, rebalance frequencies, covariance estimators, cost models, and tree-linkage rules; report split-sample GRS tests and angular losses; and attribute RA-HRP or RLPO gains node by node through the KL and alpha decompositions.

\section{Conclusion}\label{sec:conclusion}

Eight structural facts organize everything above.

First, HPO is a mathematical object in its own right. A heuristic portfolio rule is an information-restricted map from a reduced statistic to the simplex; its population error is the implied-return projection defect $\mathfrak D_\Phi$; and Theorem~\ref{thm:hpo-defect} proves that this defect is exactly squared Sharpe inefficiency. The score-tree formalism then shows that HRP, RA-HRP, Schur-RA-HRP, and their interpolations are products of local binary kernels, while Theorem~\ref{thm:hpo-schur-tangency} identifies the exact Schur-tangency mass each node is trying to approximate.

Second, by the implied-return principle, ``is this heuristic optimal'' is never yes-or-no; it is a question about a closed-form set of parameter configurations, and adopting a rule \emph{is} asserting its set. Equal weight asserts zero alphas against itself; inverse volatility asserts Sharpe ratios aligned with correlation row sums; maximum diversification asserts equal Sharpe ratios with no view on correlations; minimum variance asserts equal premia; risk parity asserts that its risk budget is simultaneously a performance budget; HRP asserts that the three pieces of covariance information it discards at every split --- the Schur correction, the budget tilt, and within-cluster correlation --- are genuinely zero; RA-HRP asserts the nonlinear fixed point in which its floored cluster-Sharpe splits match the exact Schur-tangency splits; Schur-RA-HRP asserts the analogous condition after replacing standalone cluster risk by conditional Schur risk. The unit-free HRP--RA-HRP interpolation shows how to move continuously from the risk-only assertion to the return-adjusted assertion without mixing incompatible score units. Every one of these assertions is GRS-testable, and under one-factor structure every one of them is readable off betas and idiosyncratic variances.

Third, the geometry of suboptimality is angular and flat. Sharpe efficiency is exactly the cosine of the $\bSigma$-angle to tangency; the optimality sets are Lebesgue-null, so exact optimality is a probability-zero event under any diffuse belief; and yet the efficiency surface is stationary on each set, so order-$\varepsilon$ violations cost order-$\varepsilon^2$ efficiency with an explicit constant. ``Never optimal, rarely costly'' is two theorems, and the same flatness, read from the other side, is the first-order weight instability that makes plug-in optimization fragile.

Fourth, none of this is about variance per se. Under elliptical returns the identical conditions govern mean--CVaR at every confidence level and expected utility for every risk-averse investor, and in the diffusion limit they are the conditions for fractional-Kelly growth optimality; only tail asymmetry separates the objectives.

Fifth, the deepest sense in which the heuristics are right is not that their parameter conditions hold --- generically they do not --- but that each rule is \emph{exactly} the optimal policy under the symmetry group of its information set. Full exchangeability of beliefs forces equal weight; exchangeability of standardized returns with known volatilities forces inverse volatility; uninformative premia with conjugate uncertainty force minimum variance; and the equal-weight limit also emerges from adversarial robustness as ambiguity diverges. The heuristics are not shortcuts around optimization; they are what optimization itself returns when the inputs it would need are replaced by honest symmetry. The missing rungs --- a robust-control derivation of ERC under Sharpe ambiguity with trusted correlations, of HRP under ambiguity confined to inter-cluster structure, and of RA-HRP under ambiguity-penalized local Sharpe scores --- are stated here as open problems; Corollary~\ref{cor:threesubs} and Theorem~\ref{thm:rahrp-coincidence} suggest the exact ambiguity sets in which to look.

Sixth, RLPO is the dynamic continuation of HPO, not a replacement for it. Proposition~\ref{prop:hpo-stationary-rlpo} shows that every HPO rule is a deterministic stationary policy inside the RLPO Markov decision problem, and the $\gamma=0$ no-friction Bellman problem collapses to static HPO. Proposition~\ref{prop:dynamic-deviation-principle} gives the economic test for dynamic improvement: a learned policy should deviate from the HPO baseline only when its continuation-value gain exceeds the instantaneous reward loss, incremental turnover cost, and added implied-return defect. This makes the HPO defect a principled reward-shaping term rather than an arbitrary regularizer. Proposition~\ref{prop:hpo-value-gap} completes the accounting: the total value gap of acting myopically is exactly the discounted occupancy average of the HPO advantage, so statewise $\varepsilon$-adequacy of the heuristic implies a global $\varepsilon/(1-\gamma)$ guarantee (Corollary~\ref{cor:eps-myopic}) --- the dynamic analogue of ``never optimal, rarely costly.''

Seventh, RA-HRP supplies the natural actor architecture for RLPO. Its weights factorize through node-level branch probabilities, so a learned policy can be centered on the RA-HRP or Schur-RA-HRP split via residual logits. Proposition~\ref{prop:rahrp-actor-feasible} proves that this hierarchical actor remains long-only and fully invested by construction, recovers RA-HRP when residuals vanish, and decomposes the global KL distance from the baseline into node-level binary divergences. This gives RLPO a transparent attribution system: root-level deviations are regime allocation decisions, and deeper-node deviations are local cluster or security tilts. The homotopy calculus of Section~\ref{sec:homotopy-calculus} supplies the matching learning signal: at the prior, the myopic policy gradient with respect to each residual logit is the nodewise alpha the baseline leaves unexplained, damped by the logistic variance of the split (Corollary~\ref{cor:actor-gradient-alpha}).

Eighth, constraints, costs, topology, and tails are the implementation layer. A constrained heuristic is optimal only when its unexplained alpha vector lies in the normal cone of the mandate, and a costly heuristic is optimal only when premia compensate both risk and marginal turnover. For HRP and RA-HRP, the additional object is tree stability: merge preservation, cophenetic distance, and pathwise split perturbations are the bridge between fixed-tree theorems and traded rolling-window strategies. In RLPO these diagnostics become state variables and topology regularizers. Outside ellipticity, tail-shape residuals are the further object that separates Sharpe optimality from downside-risk optimality.

The result is a precise division of labor. Static optimization is the right tool when the information set includes the coordinates it consumes at the accuracy it requires. HPO is the right tool when those coordinates are noisy, unavailable, or economically untrustworthy. RLPO is the right tool when the state is dynamic and deviations from HPO can be paid for by continuation value. The theorems above say, line by line, which beliefs separate the three situations --- and what each situation costs.

\appendix
\section{Technical Appendix: Identities Used Repeatedly}\label{app:identities}

This appendix records the three algebraic identities used throughout the paper. They are included to make the paper self-contained and to separate reusable linear-algebra facts from portfolio-specific interpretation.

\subsection{Block inverse and Schur complement}

If
\[
M=\begin{pmatrix}A&B\\ C&D\end{pmatrix}\succ 0,
\]
with $A$ and $D$ nonsingular, then the Schur complements $S_A=A-BD^{-1}C$ and $S_D=D-CA^{-1}B$ are positive definite and
\[
M^{-1}=
\begin{pmatrix}
S_A^{-1} & -S_A^{-1}BD^{-1}\\
-D^{-1}CS_A^{-1} & D^{-1}+D^{-1}CS_A^{-1}BD^{-1}
\end{pmatrix}
\]
after eliminating the $D$ block, with the symmetric expression obtained after eliminating the $A$ block. The HRP and Schur-RA-HRP identities in Sections~\ref{sec:hrp} and~\ref{sec:hpo} are portfolio interpretations of this formula.

\subsection{Ray projection}

For any inner product induced by $M\succ 0$, the closest point on the positive ray $\{aq:a\ge 0\}$ to a vector $x$ is
\[
a^\star q,
\qquad
 a^\star=\max\left\{0,\frac{q^\top Mx}{q^\top Mq}\right\}.
\]
Choosing $x=\bmu_e$, $q=\bSigma\w$, and $M=\bSigma^{-1}$ gives the implied-return defect and its equality with squared Sharpe inefficiency.

\subsection{Tree products and binary KL}

If two hierarchical portfolios $p$ and $q$ are generated on the same binary tree by local splits $p_v$ and $q_v$, then every leaf probability is a product of branch probabilities and
\[
D_{\mathrm{KL}}(p\|q)
=
\sum_{v\in V_{\mathrm{int}}} P_v
\left[p_v\log\frac{p_v}{q_v}+(1-p_v)\log\frac{1-p_v}{1-q_v}\right],
\]
where $P_v$ is the capital mass reaching node $v$ under $p$. This is the identity behind the HRP--RA-HRP distortion theorem and the hierarchical RLPO regularizer.

\bibliographystyle{plainnat}
\bibliography{mathematics_of_heuristic_portfolio_optimization_HPO_FINAL_refs}

\end{document}